\newtheorem{thrm}{Theorem}[section]
\newtheorem{lem}[thrm]{Lemma}
\newtheorem{defn}[thrm]{Definition}
\newtheorem{rem}[thrm]{Remark}
\newtheorem{conj}[thrm]{Conjecture}
\title{Average-Consensus Algorithms in a Deterministic Framework}
\author{\IEEEauthorblockN{Kevin Topley, Vikram Krishnamurthy}\\
\IEEEauthorblockA{Department of Electrical and Computer Engineering \\
The University of British Columbia, Vancouver, Canada \\
Email: \{kevint, vikramk\}@ece.ubc.ca}}
\date{}
\begin{document}
\maketitle

\begin{keywords} distributed algorithm, linear update, recurring connectivity, least-squares problem, average-consensus \end{keywords}

\begin{abstract} We consider the average-consensus problem in a multi-node network of finite size. Communication between nodes is modeled by a sequence of directed signals with arbitrary communication delays. Four distributed algorithms that achieve average-consensus are proposed. Necessary and sufficient communication conditions are given for each algorithm to achieve average-consensus. Resource costs for each algorithm are derived based on the number of scalar values that are required for communication and storage at each node. Numerical examples are provided to illustrate the empirical convergence rate of the four algorithms in comparison with a well-known ``gossip'' algorithm as well as a randomized information spreading algorithm when assuming a fully connected random graph with instantaneous communication. \end{abstract}

\begin{center} \Large Glossary \end{center}

\begin{framed} 
\begin{itemize}
\item BM, Bench-Mark Algorithm, proposed here in $(\ref{bm1})-(\ref{bminit1})$.
\item DA, Distributed-Averaging Algorithm, proposed here in $(\ref{kda1})-(\ref{dainit1})$.
\item OH, One-Hop Algorithm, proposed here in $(\ref{aaasdasig5})-(\ref{ohinit1})$.
\item DDA, Discretized Distributed-Averaging, proposed here in $(\ref{dda1})-(\ref{ddainit1})$.
\item Gossip, Gossip Algorithm, proposed in \cite{SH06}, also defined in $(\ref{gossipsig})-(\ref{gossipinit1})$.
\item RIS, Randomized Information Spreading, proposed in \cite{SH08}.
\item ARIS, Adapted Randomized Information Spreading, defined in $(\ref{arissig})-(\ref{rislast})$.
\item S$\mathcal{V}$SC, a ``singly $\mathcal{V}$-strongly connected'' communication sequence, defined in $(\ref{scdef})$.
\item S$\mathcal{V}$CC, a ``singly $\mathcal{V}$-completely connected'' communication sequence, defined in $(\ref{symcondef2})$.
\item I$\mathcal{V}$SC, an ``infinitely $\mathcal{V}$-strongly connected'' communication sequence, defined in $(\ref{star1ga})$.
\item I$\mathcal{V}$CC, an ``infinitely $\mathcal{V}$-completely connected'' communication sequence, defined in $(\ref{star1g})$.
\end{itemize}
\end{framed}

\section{Introduction}\label{sec:intro} Average-consensus formation implies that a group of distinct nodes come to agree on the average of their initial values, see \cite{SH06}, \cite{co06}, \cite{HA05}, \cite{SH08}, \cite{MU04}, \cite{zh08}, \cite{co04} for related work. Past results indicate that obtaining average-consensus on a set of of $n$ initial vectors each with dimension $d$ requires at least one of the following assumptions:
 \begin{itemize} \item storage and communication of a set with cardinality upper bounded by $O(nd)$ at each node, e.g. the ``flooding'' method described for example in \cite{SB05}, \cite{mu07} (where $O$ denotes the Landau big Oh) 
\item construction of directed acyclical graphs in the network topology \cite{co04}, \cite{ba03}, \cite{mu07} \item knowledge at the transmitting node of the receiving node's identity \cite{mo06}, \cite{co04}, \cite{ba03}, \cite{mu07}
\item instantaneous communication as well as knowledge at the transmitting node of its out-degree \cite{ke03}, \cite{fr09}
\item strictly bi-directional and instantaneous communication \cite{SB05a}, \cite{cy89}, \cite{co06}, \cite{HA05}, \cite{ka09}, \cite{zh08} \item instantaneous communication and symmetric probabilities of node communication \cite{HA05}, \cite{ke03} \item an approximate average-consensus based on randomized number generation \cite{SH08}, \cite{co04}, \cite{ba03} \item pre-determined bounds on the communication delays as well as the use of averaging-weights that are globally balanced and pre-determined 
off-line \cite{zh10}, \cite{li08}, \cite{ta09}, \cite{VK09}, \cite{ja08}, \cite{MU04}. \end{itemize}

This paper proposes four algorithms to solve the average-consensus problem under weaker communication conditions than those listed above. We denote the four algorithms { \em Bench-Mark (BM), Distributed-Averaging (DA), One-Hop (OH)},  and { \em Discretized Distributed-Averaging (DDA) }. \\ (i) The BM algorithm is based on the ``flooding'' method described in \cite{SB05}, \cite{mu07}. We show in Theorems $\ref{bmthm},\ref{bmcor}$ 
that the BM algorithm achieves average-consensus given the weakest communication condition necessary for average-consensus under any distributed algorithm (this is the $\mbox{S$\mathcal{V}$SC}$ condition, defined in Sec.IV). \\ (ii) As the main result, we show in Theorem $\ref{thm6}$ that the DA algorithm (a reduction of the BM algorithm) can achieve average-consensus under a \emph{recurring} $\mbox{S$\mathcal{V}$SC}$ condition (this is the $\mbox{I$\mathcal{V}$SC}$ condition, defined in Sec.IV). By ``recurring'' we mean that for an infinite set of disjoint time intervals, the $\mbox{S$\mathcal{V}$SC}$ condition occurs on each interval. Previous results based on iterative averaging (e.g. \cite{SH06},\cite{HA05},\cite{ne09},\cite{MU04},\cite{ja08}) require special cases of the $\mbox{I$\mathcal{V}$SC}$ condition. \\ (iii) The OH and DDA algorithms can be viewed respectively as simplified versions of the BM and DA algorithms. We will show that analogous results hold under these algorithms with respect to the $\mbox{S$\mathcal{V}$CC}$ and $\mbox{I$\mathcal{V}$CC}$ conditions defined in Sec.IV.

In contrast to earlier work, the main results of this paper show that under general uni-directional connectivity conditions on the communication sequence, each proposed algorithm achieves average-consensus in the presence of \begin{itemize} \item arbitrary communication delays, \item arbitrary link failures. \end{itemize} Another distinct contrast each of the proposed algorithms has with previously considered consensus algorithms is, \begin{itemize} \item each node will know exactly when the true average-consensus estimate has been locally obtained, regardless of the communication pattern between nodes. \end{itemize} Of course our general results come at a price. The main drawback of three of our proposed algorithms (DA, DDA, OH) is that they require local storage and transmission upper bounded by $O(n+d)$, where we recall that $n$ is the network size and $d$ is the dimension of the initial consensus vectors. Most average-consensus algorithms in the literature require only $O(d)$ costs, however they also require assumptions such as instantaneous communication, pre-determined averaging weights, or control at the transmitting node of the receiver identity. Our algorithms (DA, DDA, OH) require none of these assumptions and are particularly advantageous for average-consensus involving $d \geq n$ distinct scalars. In this case the transmission and storage cost of each algorithm is $O(d)$, hence the assumption of relatively weak communication conditions can be leveraged against past algorithms. There are several examples in the literature where $d \geq n$. For instance, if each node observes a $\sqrt{n}$ dimensional process or parameter in noise, then a distributed Kalman filter or maximum-likelihood estimate requires an average-consensus on $d = n + \sqrt{n}$ scalars (see \cite{ol05}, \cite{SB05}).

The first algorithm we consider (BM) is an obvious solution and is presented here only because (i) it serves as a bench-mark for the other algorithms (ii) the communication conditions necessary for its convergence will be used in our main results, and (iii) its formal description has many of the same properties as the other three proposed algorithms. The BM algorithm requires local storage and transmission of $O(nd)$ and has optimal convergence rate (see Theorem $\ref{bmthm}$-$\ref{bmcor}$, in Sec.IV).

\subsection{Relation to Past Work}\label{sec:review}

Since the literature in consensus formation is substantial, we now give an overview of existing results and compare them with the results of this paper. For sufficiently small $\epsilon > 0$, if $x_k = [ x^1_k ,x^2_k , \ldots, x^n_k] \in \mathbb{R}^n$ denotes the network state at time $k$, then \cite{MU04} proves that each element in the sequence $\{ x_k = (I - \epsilon \mathcal{L}_{k}) x_{k-1} \ , \ k = 1,2, \ldots \}$ converges asymptotically to $\sum_{i = 1}^n x^i_0/n$ if each graph Laplacian $\mathcal{L}_k$ in the sequence $\{ \mathcal{L}_k \in \mathbb{R}^{n \times n} \ : \ k = 1, 2, \ldots \}$ is balanced and induces a strongly connected graph. The work \cite{ne09} generalizes this result by allowing $\epsilon$ to decrease at a sufficiently slow rate, and assuming only that there exists some integer $\beta$ such that the union of all graph Laplacians over the interval $[k \beta, (k+1)\beta - 1]$ induces a strongly connected graph for each $k \geq 1$. However, each graph Laplacian in \cite{ne09} is still assumed to be balanced, and as is typical of many interesting papers on average-consensus, neither \cite{MU04} nor \cite{ne09} explain any method by which the nodes can \emph{distributively} ensure each $\mathcal{L}_k$ is balanced \emph{while} the sequence $\{ \mathcal{L}_k \ : \ k = 1, 2, \ldots \}$ is occurring. Hence the results of these works assume that all averaging weights are globally pre-determined off-line, or in other words, every node in the network is assumed to know what averaging weights they should locally be using at each iteration to guarantee the resulting $\mathcal{L}_k$ is globally balanced.

In contrast, \cite{SB05a} proposes a ``Metropolis'' algorithm that requires a ``two-round'' signaling process wherein nodes can \emph{distributively} compute local averaging weights. It is shown in \cite{SB05a} that each element in the sequence $\{ x_k = (I - \epsilon \mathcal{L}_{k}) x_{k-1} \ , \ k = 1,2, \ldots \}$ converges asymptotically to $\sum_{i = 1}^n x^i_0/n$ under mild connectivity conditions on the sequence $\{ \mathcal{L}_k \ : \ k = 1, 2, \ldots \}$. The work \cite{fr09} also proposes a distributed algorithm that does not require pre-determined averaging weights, however \cite{fr09} assumes each transmitting node knows the number of nodes that will receive its message, even before the message is transmitted. Similarly, the algorithm in \cite{SB05a} assumes bi-directional communication, and furthermore each of the stated results in \cite{MU04}, \cite{ne09}, \cite{SB05a}, \cite{fr09} assume that the communication is instantaneous. In contrast, only \cite{MU04}, \cite{fr09} and \cite{SB05a} assume the communication is noiseless.

The results in this paper do not assume instantaneous or bi-directional communication, nor do they assume that the transmitting node knows what node will receive their message or when. However, our results do require noiseless communication. The issue of noisy communication can be treated as a ``meta-problem'' that may be super-imposed upon the framework considered here. Similar to our current approach, there has been much research that assumes noiseless communication (for instance \cite{ke03}, \cite{mo06}, \cite{SH08}). Conversely, there is a growing body of work that considers average-consensus formation in specifically noisy communication settings \cite{ka09}, \cite{ne09}, \cite{ta09}.

Works such as \cite{SH06}, \cite{MO05}, \cite{ts84}, \cite{li08}, \cite{re05}, \cite{bl05}, \cite{st07}, \cite{MU04}, \cite{ja08}, \cite{li08} require node communication properties that are special cases of the $\mbox{I$\mathcal{V}$SC}$ condition defined in Sec.$\ref{sec:nscon}$. However, besides the flooding BM method and the DA algorithm proven to converge here, the only other known algorithm that can even be conjectured to almost surely obtain average-consensus under all $\mbox{I$\mathcal{V}$SC}$ sequences is a specific adaptation of the randomized information spreading (RIS) algorithm proposed in \cite{SH08}. Our adaptation of this algorithm is referred to as ARIS and is detailed in Sec.$\mbox{VII-B}$ of Appendix VII. We note, however, that the lower bound on convergence rate derived in \cite{SH08} assumes instantaneous bi-directional communication, and furthermore, any version of the RIS algorithm assumes that the initial consensus variables are all positive valued. The Gossip algorithm proposed in \cite{SH06} as well as the ARIS algorithm are used as points of reference for the four algorithms proposed in this paper. In Sec.$\ref{sec:rescost}$ we compare the resource costs of all six algorithms, and in Sec.$\ref{sec:figures}$ the performance of each algorithm is illustrated by simulation under various randomized communication sequences when assuming a full network graph.

We note that \cite{MO05}, \cite{ts84}, \cite{re05}, \cite{bl05}, \cite{st07} do not guarantee the final consensus value will equal the initial average; \cite{SH06} proposes an update rule that assumes instantaneous bi-directional communication; and \cite{ja08}, \cite{li08}, \cite{MU04} assume pre-determined bounds on the communication delays as well as averaging-weights that are globally balanced at each iteration (see Theorem 5, \cite{ja08}). In contrast, Theorem $\ref{thm6}$ in Sec.IV states that the DA algorithm will guarantee average-consensus under \emph{all} $\mbox{I$\mathcal{V}$SC}$ sequences, regardless of the communication delays and without requiring any pre-determined balancing of the averaging weights. We note that a distinct feature that all four of the proposed algorithms have is that they each require the local initial consensus vector to be stored in the respective database of each node until average-consensus is locally obtained at that node. Without this property, each algorithm could still ensure a ``consensus formation'' under the exact same communication conditions assumed in our main results, however the final consensus value would not necessarily equal the initial average, which is desirable in most applications \cite{SH06}, \cite{SB05}, \cite{co04}, \cite{ba03}, \cite{mu07}, \cite{ol05}. In further contrast to past results, the proofs of convergence used in this work do not rely on matrix theory \cite{SB05a}, \cite{MU04}, Lyapunov techniques \cite{MO05}, \cite{zh10}, or stochastic stability \cite{HA05}, \cite{ja08}, \cite{VK09}; instead, for our main results we obtain a variety of lower bounds on the ``error'' reduction and show that under (deterministic) recurring connectivity conditions an average-consensus will asymptotically obtain in the $L^2$ norm. As a final note, we clarify that the communication is assumed to be causal, hence a signal cannot be received at node $i$ before it has left node $j$. Given this assumption, our framework considers \emph{every} possible sequence of signals, hence any realization of a (causal) stochastic communication model is a special case of our deterministic framework.

\subsection{Outline}\label{sec:out} Sec.II formulates the problem statement as well as our assumptions regarding the node communication and algorithm framework. Sec.III defines the four proposed algorithms. Sec.IV states the communication conditions that are necessary and sufficient for each algorithm to obtain average-consensus. Sec.V considers the numerical implementation of the four algorithms together with the two comparison algorithms Gossip and ARIS. Resource costs are given in Sec.$\ref{sec:rescost}$, and numerical simulations are presented in Sec.$\ref{sec:figures}$. A summary of the results and suggested future work is provided Sec.VI. The Appendix VII presents the proofs of all theorems, the two comparison algorithms, four conjectures, the resource cost derivations, and an important example.

\section{The Average-Consensus Problem and Algorithm Assumptions}\label{sec:prob} 

This section formulates the average-consensus problem and lists the assumptions regarding communication between nodes. Sec.$\ref{sec:probstat}$ below defines the graph theoretic model for consensus formation that will be subsequently analyzed, and Sec.$\ref{sec:commod1}$ defines the class of the distributed algorithms we consider. Sec.$\ref{sec:commassump}$ details the remaining assumptions that will be made on the node communication and computational abilities, and also explains the technique by which the four proposed algorithms will obtain average-consensus.

\subsection{Problem Formulation}\label{sec:probstat} Let $t \geq 0$ denote time (the results of this paper assume a continuous time framework; any discrete time index is a special case of this framework). At initial time $t = 0$, consider a finite set of arbitrarily numbered nodes $\mathcal{V} = \{ 1 , \ldots, n \}$ and a set of $d$-dimensional vectors $\{s_i(0) \in \mathbb{R}^d \ : \ i \in \mathcal{V} \}$. The set $\{s_i(0) \in \mathbb{R}^d \ : \ i \in \mathcal{V} \}$ is referred to as the set of ``initial consensus vectors''. Suppose each node can locally store a ``knowledge set'' $\mathcal{K}_i(t)$ that consists of a group of scalars each with a distinct meaning. \begin{itemize} \item (A1): (knowledge set assumption) At any time $t \geq 0$, each node $i \in \mathcal{V}$ is equipped with a device that can update and store a ``knowledge set'' $\mathcal{K}_i(t)$. For each $i \in \mathcal{V}$, the knowledge set $\mathcal{K}_i(t)$ may have a time-varying cardinality. \end{itemize} Next we assume that a set $S^{i j} ( t^{i j}_0, t^{i j}_1 )$ can be transmitted from node $j$ at a time denoted $t^{i j}_0 \geq 0$, and received at node $i$ at a time denoted as $t^{i j}_1$, where due to causality $t^{i j}_1 \geq t^{i j}_0$. We refer to $S^{i j} ( t^{i j}_0, t^{i j}_1 )$ as a ``signal'', or ``signal set''. \begin{itemize} \item (A2): (signal set assumption) At any time $t^{i j}_0 \geq 0$, each node $j \in \mathcal{V}$ has the ability to transmit a ``signal set'' $S^{i j} ( t^{i j}_0, t^{i j}_1 ) \subseteq \mathcal{K}_j(t^{i j}_0)$ that will be received at some node $i \in \mathcal{V}$ at time $t^{i j}_1 \geq t^{i j}_0$. \end{itemize} As our final condition, we assume that at $t = 0$ each node $i \in \mathcal{V}$ will ``know'' its unique node identifier value $i$, the network size $n$, and the respective initial consensus vector $s_i(0)$. This is formalized as, \begin{itemize}  \item (A3): At time $t = 0$, the knowledge set $\mathcal{K}_i(0)$ of each node $i \in \mathcal{V}$ satisfies $\mathcal{K}_i(0) \supseteq \{ i, n , s_i(0) \}$. \end{itemize} \begin{defn}\label{def1} Under (A1)-(A3), the average-consensus problem is solved at some time instant $t$ if and only if (iff) the \emph{average} of the initial consensus vectors, \begin{equation}\label{avg0} \bar{s}(0) = \frac{1}{n} \sum_{i = 1}^n s_i(0) \ , \end{equation} is contained in the knowledge set $\mathcal{K}_i(t)$ of all nodes $i \in \mathcal{V}$. We say that a specific node $i$ has obtained average-consensus at time $t$ iff $\bar{s}(0) \in \mathcal{K}_i(t)$. $\quad \quad \quad \quad \quad \quad \quad \quad \quad \quad \ \ \ \ \ \ \ \ \ \ \ \ \ \quad \quad \ \ \ \ \ \ \ \ \ \quad \quad \quad \quad  \ \ \ \ \ \ \ \ \quad \quad  \ \ \ \ \ \ \quad \quad \ \Box$\end{defn}

The four algorithms analyzed in this paper can be adapted so that the size of the network $n$ and unique node identifier $i \in \mathcal{V}$ can be computed during the averaging process, and thus they need not be initially known at any node. For simplicity, however, it will be assumed that the network size $n$ and unique node identifier $i$ are initially known at each node $i \in \mathcal{V}$.

\subsection{The Distributed Algorithm and Local Consensus Estimates}\label{sec:commod1} To provide a unifying theme for the algorithms discussed in this paper, we first define the class of distributed algorithms that will be considered for consensus formation. Given the assumptions $(A1)-(A2)$, we define a ``distributed algorithm'' in terms of its ``knowledge set updating rule'' $f_{\mathcal{K}} \{ \cdot \}$ together with a ``signal specification rule'' $f_S \{ \cdot \}$. The knowledge set updating rule $f_{\mathcal{K}}\{\cdot\}$ defines the effect that a set of signals has on the knowledge set $\mathcal{K}_i( t^{ij}_1 )$ at the receiving node $i$, whereas the signal specification rule $f_S \{\cdot\}$ defines the elements contained in the signal $S^{ij}(t^{ij}_0, t^{ij}_1 )$ given as a function of the knowledge set $\mathcal{K}_j(t^{ij}_0)$ at the transmitting node $j$. If we assume that upon reception of a set of signals $\Delta t > 0$ time elapses before the knowledge set is fully updated, then the class of distributed algorithms we consider may be defined as follows.

\begin{framed} $$ \text{Class of Distributed Algorithms under (A1)-(A2):} $$
\begin{equation}\label{update10} \mbox{Knowledge Set Updating Rule:} \ \ \ \ \ \ \ \ \ \ f_\mathcal{K} \ : \  \mathcal{K}_i(t^{ij}_1) \bigcup_{t^{ij}_0 \leq t^{ij}_1 \ : \ j \in \mathcal{V}} S^{ij}(t^{ij}_0,t^{ij}_1) \ \longmapsto  \ \mathcal{K}_i(t^{ij}_1 + \Delta t) \ \ \ \ \ \ \ \ \ \ \ \ \end{equation}
\begin{equation}\label{update0} \mbox{Signal Specification Rule:} \ \ \ \ \ \ \ \ \ \ \ \ \ \ \ \ \ \ \ \ f_S \ : \ \mathcal{K}_j(t^{ i j}_0 ) \  \longmapsto S^{ij}(t^{ij}_0, t^{ij}_1 )  \ \ \ \ \ \ \ \ \ \ \ \ \ \ \ \ \ \ \ \ \ \ \ \ \ \ \ \ \ \ \ \ \ \ \ \ \ \ \ \ \ \ \ \ \ \ \ \ \ \ \ \ \ \  \end{equation} \end{framed}

The algorithms in this paper will assume that every knowledge set $\mathcal{K}_i(t)$ contains a local ``consensus estimate'' $\hat{s}_i(t) \in \mathbb{R}^d$ that represents the ``belief'' of node $i$ in regard to the average-consensus $\bar{s}(0)$ defined in $(\ref{avg0})$, \begin{equation}\label{a0a} \mathcal{K}_i(t) \supseteq \{ \hat{s}_i(t) \} \ , \ \forall \ i \in \mathcal{V} \ , \ \forall \ t \geq 0 \ . \end{equation} Notice that using a local consensus estimate $\hat{s}_i(t)$ is necessary for any algorithm seeking to solve the average-consensus problem; if there is no local consensus estimate then there is no means by which the knowledge set of any node can contain $\bar{s}(0)$ and hence the problem stated in Definition $\ref{def1}$ is ill-posed.

In contrast from any known past consensus algorithm, the four proposed algorithms here will also assume that at any time $t \geq 0$ the knowledge set $\mathcal{K}_i(t)$ of each node $i \in \mathcal{V}$ will contain a local ``normal consensus estimate'' $\hat{v}_i(t) \in \mathbb{R}^n$, \begin{equation}\label{a0b} \mathcal{K}_i(t) \supseteq \{ \hat{v}_i(t) \} \ , \ \forall \ i \in \mathcal{V} \ , \ \forall \ t \geq 0 \ . \end{equation} Combining $(A3)$, $(\ref{a0a})$, and $(\ref{a0b})$ we then have the initial knowledge set for each of the four proposed algorithms, \begin{itemize} \item $(A4)$: $\ \ \ \mathcal{K}_i(0) = \{ i, n, s_i(0) , \hat{s}_i(0) , \hat{v}_i(0) \} \ , \ \ \forall \ i \in \mathcal{V} \ .$ \end{itemize}

Define the ``error'' of the consensus estimate $\hat{s}_i(t)$ as follows, $$ E_{\bar{s}(0)} \big( \hat{s}_i(t) \big) = | \sqrt{ \big( \hat{s}_i(t) - \bar{s}(0) \big)^2 } | \ , $$ where $\bar{s}(0)$ is defined in $(\ref{avg0})$. Denote the ``network consensus error'' $\sum_{i=1}^n E_{\bar{s}(0)} \big( \hat{s}_i(t) \big)$. We conclude this section with the following definition of what constitutes the solution to the average-consensus problem.

\begin{defn}\label{def10} The average-consensus problem is solved at time $t$ iff the network consensus error is zero, that is, \begin{equation}\label{acdef} \sum_{i=1}^n E_{\bar{s}(0)} \big( \hat{s}_i(t) \big) = 0 \ . \end{equation} $ \ \ \ \ \ \ \ \quad \quad  \quad \quad  \quad \quad  \ \ \ \ \ \ \ \ \ \quad \quad \quad \quad  \ \ \ \ \ \ \ \ \quad \quad  \ \ \ \ \ \ \quad \quad \ \ \ \ \ \ \ \ \quad \quad  \quad \quad  \quad \quad  \ \ \ \ \ \ \ \ \ \quad \quad \quad \quad  \ \ \ \ \ \ \ \ \quad \quad  \ \ \ \ \ \ \quad \quad \ \Box$  \end{defn}

\subsection{Node Communication and Update Assumptions}\label{sec:commassump} Below we detail the node communication and update assumptions that will be used by each of the four proposed algorithms. The final update condition we propose will explain the technique by which all four algorithms achieve average-consensus.

For any $t \geq 0$, define $t(+)$ as the right-hand limit of $t$, that is $t(+) = \lim_{ T \rightarrow t^+} T$. To construct a suitable average-consensus algorithm we assume the following conditions on the node communication and knowledge set updates: \begin{itemize} \item (A5): at no time does any node $j \in \mathcal{V}$ have the ability to know \emph{when} the signal $S^{ij}(t^{ij}_0,t^{ij}_1)$ is transmitted, \emph{when} the signal $S^{ij}(t^{ij}_0,t^{ij}_1)$ is received, or \emph{what} node $i \in \mathcal{V}$ will receive the signal $S^{ij}(t^{ij}_0,t^{ij}_1)$. \item (A6): at no time does any node $i \in \mathcal{V}$ have the ability to control \emph{when} the signal $S^{ij}(t^{ij}_0,t^{ij}_1)$ is received, \emph{what} node $j \in \mathcal{V}$ transmitted the signal $S^{ij}(t^{ij}_0,t^{ij}_1)$, or \emph{when} the signal $S^{ij}(t^{ij}_0,t^{ij}_1)$ was transmitted. \item (A7): each knowledge set satisfies $\mathcal{K}_i(t(+)) = \mathcal{K}_i(t)$ at any time $t \geq 0$ that node $i$ does not receive a signal (recall that $t(+)$ denotes the right-hand limit of $t$). \item (A8): when a signal $S^{ij}(t^{ij}_0,t^{ij}_1)$ is received, the knowledge set $\mathcal{K}_i(t^{ij}_1)$ of the receiving node is updated in an arbitrarily small amount of time. \item (A9): at most one signal can be received and processed by a given node at any given time instant. \end{itemize} Note that (A5)-(A6) imply the communication process is \emph{a priori} unknown at every node. The assumption (A7) implies that the node knowledge set $\mathcal{K}_i(t)$ can only change if a signal is received at node $i$, and (A8) implies that any signal $S^{ij}(t^{ij}_0,t^{ij}_1)$ transmitted from node $j$ is allowed to contain information that has been updated by a signal received at node $j$ at any time preceding $t^{ij}_0$. Notice that (A8) is realistic since all four of the proposed algorithms will require only a few arithmetic operations in the update process. Assumption (A9) is a technical requirement that simplifies the proofs of convergence. Together (A8) and (A9) imply the knowledge set updating rule $f_\mathcal{K} \{ \cdot \}$ defined in $(\ref{update10})$ reduces to, \begin{equation}\label{update1} f_\mathcal{K} \ : \  \mathcal{K}_i(t^{ij}_1) \cup S^{ij}(t^{ij}_0,t^{ij}_1) \ \longmapsto  \ \mathcal{K}_i(t^{ij}_1 (+)) \ . \end{equation} Each algorithm we propose can be easily adapted if (A8)-(A9) were relaxed, however this would be at the expense of simplicity in both our framework and analysis.

As our final update condition, we require that the local consensus estimate $\hat{s}_i(t^{ij}_1)$ at the receiving node $i$, as defined above $(\ref{a0a})$, is updated based on the updated normal consensus estimate $\hat{v}_i(t^{ij}_1(+))$ via the relation, \begin{equation}\label{a10} \hat{s}_i(t^{ij}_1(+)) = \mathbf{S} \hat{v}_i(t^{ij}_1(+)) \ , \end{equation} where $\mathbf{S} = [ s_1(0) ; s_2(0) ; \cdots ; s_n(0) ] \in \mathbb{R}^{d \times n}$. Under $(\ref{a10})$ it is clear that $\hat{s}_i(t^{ij}_1(+)) = \bar{s}(0)$ if $\hat{v}_i(t^{ij}_1(+)) = \frac{1}{n} \mathbf{1}_n$, where $\mathbf{1}_n \in \mathbb{R}^n$ denotes a vector that consists only of unit-valued elements. Thus, in terms of Definition $\ref{def10}$, under $(\ref{a10})$ and $(\ref{a0b})$ the average-consensus problem is solved at time $t$ if $\hat{v}_i(t) = \frac{1}{n} \mathbf{1}_n$ for all nodes $i \in \mathcal{V}$. Motivated by this fact, we propose for each of the four algorithms that the normal consensus estimate $\hat{v}_i(t)$ is updated based on the following optimization problem, \begin{equation}\label{optllca}\begin{array}{llll} \hat{v}_i ( t^{ij}_1(+)) = & \arg_{\mbox{$\tilde{v}$}} \ \min \ \big( \tilde{v} - \frac{1}{n} \mathbf{1}_n \big) ^2 , \\ & \ \mbox{s.t. $(\ref{a10})$ holds, given $\mathcal{K}_i(t^{ij}_1) \bigcup S^{ij}(t^{ij}_0,t^{ij}_1 )$.} \end{array}\end{equation} Note that from $(\ref{a0a})$ and $(\ref{a0b})$ each node $i \in \mathcal{V}$ will know it has obtained the true average-consensus value $\hat{s}_i(t) = \bar{s}(0)$ when the local normal consensus estimate $\hat{v}_i(t)$ satisfies the condition $\hat{v}_i(t) = \frac{1}{n} \mathbf{1}_n$. In the next section we will define the knowledge set updating rule $f_{\mathcal{K}} \{ \cdot \}$ and signal specification rule $f_S \{ \cdot \}$ for each of the four algorithms. We shall find that for each algorithm the update problem $(\ref{optllca})$ reduces to a least-squares optimization and a closed-form expression can be obtained for updated normal consensus estimate $\hat{v}_i(t^{ij}_1(+))$.

\section{Distributed Average-Consensus Algorithms}\label{sec:alg} With the above definitions, we are now ready to describe the four distributed algorithms that achieve average-consensus. The details of the two comparison algorithms can be found in Sec.$\mbox{VII-B}$ of Appendix VII. All six algorithms (the four presented below and the two comparison algorithms in Sec.$\mbox{VII-B}$) are defined using the abstract definition of the ``distributed algorithm'' given in Sec.II, that is $(\ref{update0}),(\ref{update1})$. This section sets the stage for the convergence theorems presented in Sec.IV.

\subsection{Algorithm 1: Bench-Mark (BM)}\label{sec:ben} The BM algorithm obtains average-consensus trivially; we propose it formally because the remaining three algorithms are specific reductions of it, and also because it requires communication conditions that will be used in our main result. The BM algorithm implies that the initial consensus vectors $\{ s_i(0) \ : \ i \in \mathcal{V} \}$ are essentially flooded through-out the network. In \cite{SB05}, \cite{mu07} the general methodology of the BM algorithm is discussed wherein it is referred to as ``flooding''. For completeness of our results, we show in Theorems $\ref{bmthm}$,$\ref{bmcor}$ that regardless of the communication pattern between nodes, there exists no distributed algorithm $(\ref{update10}),(\ref{update0})$ that can obtain average-consensus before the BM algorithm. This is why we have named it the ``bench-mark'' algorithm.

Let $\delta [ \cdot ]$ denote the Kronecker delta function applied element-wise, and $e_i$ denote the $i^{th}$ standard unit vector in $\mathbb{R}^n$. The BM signal specification $(\ref{update0})$ and knowledge set update $(\ref{update1})$ are respectively defined as $(\ref{bm1})$ and $(\ref{bm2})$ below.   \begin{framed} $$ \mbox{Algorithm 1: Bench-Mark (BM)} $$ \begin{equation}\label{bm1} \mbox{Signal Specification: } \ \ S^{ij}(t^{ij}_0,t^{ij}_1 ) = \left \{ \begin{array}{ll} \mathcal{K}_j(t^{ij}_0) \setminus \{ j, n , \hat{s}_j(t^{ij}_0 ) \} & \mbox{if $\hat{v}_j(t^{ij}_0) \neq \frac{1}{n} \mathbf{1}_n$} \\ \{ \hat{v}_j(t^{ij}_0) , \hat{s}_j(t^{ij}_0) \} & \mbox{if $\hat{v}_j(t^{ij}_0) = \frac{1}{n} \mathbf{1}_n$} \end{array} \right.  \ \ \ \ \ \ \ \ \ \end{equation} $$ \text{Knowledge Set Update:} \ \ \ \ \ \ \ \ \ \ \ \ \ \ \ \ \ \ \ \ \ \ \ \ \ \ \ \ \ \ \ \ \ \ \ \ \ \ \ \ \ \ \ \ \ \ \ \ \ \ \ \ \ \ \ \ \ \ \ \ \ \ \ \ \ \ \ \ \ \ \ \ \ \ \ \ \ \ \ \ \ \ $$ $$ v^{ij} ( t^{ij}_1 , t^{ij}_0) \equiv \mathbf{1}_n - \delta [ \hat{v}_i( t^{ij}_1) + \hat{v}_j( t^{ij}_0) ] $$ \begin{equation}\label{bm2} \mathcal{K}_i(t^{ij}_1(+)) = \left \{ \begin{array}{ll} \{ i, n , \hat{v}_i(t^{ij}_1(+)) , \hat{s}_i(t^{ij}_1(+)) , s_\ell( 0) \ \forall \ \ell \ \mbox{s.t. } v^{ij}_\ell ( t^{ij}_1 , t^{ij}_0) = 1 \} \ \ \mbox{if $\hat{v}_i(t^{ij}_1(+)) \neq \frac{1}{n} \mathbf{1}_n$} \\ \{ \hat{v}_i(t^{ij}_1(+)) , \hat{s}_i(t^{ij}_1(+)) \} \ \ \ \ \ \ \ \ \ \ \ \ \ \ \ \ \ \ \ \ \ \ \ \ \ \ \ \ \ \ \ \ \ \ \ \ \ \ \ \ \ \ \ \mbox{if $\hat{v}_i(t^{ij}_1(+)) = \frac{1}{n} \mathbf{1}_n$} \end{array} \right.\end{equation}  \begin{equation}\label{bm3} \mbox{Normal Consensus Estimate Update: } \ \ \hat{v}_i (t^{ij}_1(+)) = \frac{1}{n} v^{ij} ( t^{ij}_1 , t^{ij}_0 )  \ \ \ \ \ \ \ \ \ \ \ \ \ \ \ \ \ \ \ \ \ \ \ \ \ \ \ \ \ \ \ \ \end{equation}  \begin{equation}\label{bmconsest} \mbox{Consensus Estimate Update: } \hat{s}_i(t^{ij}_1(+)) = \left\{ \begin{array}{l l} \sum_{\ell = 1}^n \hat{v}_{i \ell} (t^{ij}_1(+)) s_\ell (0) & \mbox{if $\hat{v}_i(t^{ij}_1(+)) \neq \frac{1}{n} \mathbf{1}_n$} \\ \bar{s}(0) & \mbox{if $\hat{v}_i(t^{ij}_1(+)) = \frac{1}{n} \mathbf{1}_n$} \end{array} \right. \end{equation}   \begin{equation}\label{bminit1} \mbox{Estimate Initialization: } \ \ \hat{v}_i(0) = \frac{1}{n} e_i \ , \ \hat{s}_i(0) = \frac{1}{n} s_i(0) \ . \ \ \ \ \ \ \ \ \ \ \ \ \ \ \ \ \ \ \ \ \ \ \ \ \ \ \ \ \ \ \ \ \ \ \ \ \ \ \ \ \ \end{equation} \end{framed} In Lemma $\ref{lembm}$ and Lemma $\ref{lembm0}$ of Appendix VII we will prove that $(\ref{bm3})$ is the unique solution to $(\ref{optllca})$ under $(\ref{bm1})$ and $(\ref{bm2})$, and that $(\ref{bminit1})$ is the unique solution to $(\ref{optllca})$ under the initial knowledge set $(A4)$. The update $(\ref{bmconsest})$ follows immediately from $(\ref{bm3})$ and the relation $(\ref{a10})$. Notice that the BM algorithm updates $(\ref{bm2})$, $(\ref{bm3})$ together with the signal specification $(\ref{bm1})$ imply $\bar{s}(0) \in \mathcal{K}_i(t^{ij}_1(t^{ij}_1(+))$ iff $\hat{v}_i (t^{ij}_1(+)) = \frac{1}{n} \mathbf{1}_n$, and likewise $s_\ell(0) \in \mathcal{K}_i(t^{ij}_1(+))$ iff $\hat{v}_{i \ell}(t^{ij}_1(+)) = \frac{1}{n}$ and $\hat{v}_i(t^{ij}_1(+)) \neq \frac{1}{n} \mathbf{1}_n$. It thus follows that $(\ref{bm1})$, $(\ref{bm2})$ and $(\ref{bm3})$ imply that the consensus estimate update $\hat{s}_i(t^{ij}_1(+))$ defined in $(\ref{bmconsest})$ can be locally computed at node $i$ based only on $\mathcal{K}_i(t^{ij}_1)$ and the received signal $S^{ij}( t^{ij}_0, t^{ij}_1)$.

Besides flooding the initial consensus vectors, the BM algorithm $(\ref{bm1})-(\ref{bminit1})$ has an additional feature that is not necessary but is rather practical: if $\hat{v}_j(t) = \frac{1}{n} \mathbf{1}_n$ then all $n$ of the initial consensus vectors are stored in $\mathcal{K}_j(t)$, when this occurs the consensus estimate $\hat{s}_j(t)$ defined in $(\ref{bmconsest})$ will equal $\bar{s}(0)$ and thus node $j$ has obtained average-consensus. In this case all of the initial consensus vectors are deleted from $\mathcal{K}_j(t)$, and any signal transmitted from $j$ contains only the consensus estimate $\hat{s}_j(t) = \bar{s}(0)$ and the vector $\hat{v}_j(t) = \frac{1}{n} \mathbf{1}_n$. Upon reception of a signal containing $\hat{v}_j(t) = \frac{1}{n} \mathbf{1}_n$, the receiving node $i$ can delete all of their locally stored initial consensus vectors and set $\hat{s}_i(t) = \hat{s}_j(t) = \bar{s}(0)$. In this way the average-consensus value $\bar{s}(0)$ can be propagated through-out the network without requiring all $n$ initial consensus vectors to be contained in every signal. See Conjecture $\ref{conject0}$ in Sec.$\ref{sec:app1a}$ for a conjecture regarding the BM algorithm.

\subsection{Algorithm 2: Distributed Averaging (DA)}\label{sec:kda} The DA algorithm that we now introduce, to the best of our knowledge, is new. To define the DA update procedure, let $V^+$ denote the pseudo-inverse of an arbitrary matrix V. The DA signal specification $(\ref{update0})$ and knowledge set update $(\ref{update1})$ are defined as $(\ref{kda1})$ and $(\ref{kda2})$ below. \begin{framed} $$ \mbox{Algorithm 2: Distributed Averaging (DA)} $$  \begin{equation}\label{kda1}\mbox{Signal Specification: } \ \ \ \ \ S^{ij}(t^{ij}_0,t^{ij}_1 ) = \mathcal{K}_j(t^{ij}_0) \setminus \{ j, n , s_j(0 ) \} \ \ \ \ \ \ \ \ \ \ \ \ \ \ \ \ \ \ \ \ \ \ \ \ \ \ \ \ \ \ \ \ \ \ \ \ \ \ \ \ \ \ \ \ \ \end{equation} \begin{equation}\label{kda2} \mbox{Knowledge Set Update: } \ \  \mathcal{K}_i(t^{ij}_1(+)) = \{ i, n , \hat{v}_i(t^{ij}_1(+)) , \hat{s}_i(t^{ij}_1(+)) , s_i(0) \} \ \ \ \ \ \ \ \ \ \ \ \ \ \ \ \ \ \ \ \ \ \ \ \ \ \ \ \ \ \ \end{equation} \begin{equation}\label{optkda2}  \mbox{Normal Consensus Estimate Update: }  \hat{v}_i ( t^{ij}_1(+)) = V_{(DA)} V_{(DA)}^+ \frac{1}{n} \mathbf{1}_n \ , \ V_{(DA)} = [ \hat{v}_i(t^{ij}_1) , \hat{v}_j(t^{ij}_0), \frac{1}{n} e_i ] \end{equation}  \begin{equation}\label{rr1a}\mbox{Consensus Estimate Update: } \ \ \hat{s}_i(t^{ij}_1(+)) = V_s V_{(DA)}^{+} \frac{1}{n} \mathbf{1}_{n} \ , \ V_s = [ \hat{s}_i( t^{ij}_1 ), \hat{s}_j( t^{ij}_0 ) , \frac{1}{n} s_i(0) ] \ \ \ \ \ \ \ \ \ \ \ \ \ \end{equation}  \begin{equation}\label{dainit1} \mbox{Estimate Initialization: } \ \ \ \ \ \hat{v}_i(0) = \frac{1}{n} e_i \ , \ \hat{s}_i(0) = \frac{1}{n} s_i(0) \ . \ \ \ \ \ \ \ \ \ \ \ \ \ \ \ \ \ \ \ \ \ \ \ \ \ \ \ \ \ \ \ \ \ \ \ \ \ \ \ \ \ \ \ \ \ \ \ \end{equation} \end{framed} The Lemma $\ref{lembm0}$ and Lemma $\ref{lemkda}$ in Appendix VII, respectively, prove that $(\ref{dainit1})$ is the unique solution of $(\ref{optllca})$ under $(A4)$, and that $(\ref{optkda2})$ is the unique solution to $(\ref{optllca})$ under $(\ref{kda1})$ and $(\ref{kda2})$. Based on the normal consensus update $(\ref{optkda2})$ the relation $(\ref{a10})$ reduces to $(\ref{rr1a})$. From $(\ref{rr1a})$ it is clear that the DA consensus estimate update $\hat{s}_i(t^{ij}_1(+))$ can be locally computed at node $i$ based only on $\mathcal{K}_i(t^{ij}_1)$ and the received signal $S^{ij}( t^{ij}_0, t^{ij}_1)$. Notice that $V_{(DA)}$ is a $n \times 3$ matrix, thus the pseudo-inverse in $(\ref{optkda2})$ will have an immediate closed-form expression, see $(\ref{fin1})$, $(\ref{matinv1})$ and $(\ref{pp1condproof})$ in Lemma $\ref{lemkda3}$. Also note that under the DA algorithm every signal contains only the local consensus estimate $\hat{s}_j(t)$ together with the local normal consensus estimate $\hat{v}_j(t)$.

\subsection{Algorithm 3: One-Hop (OH)}\label{aaasec:sda2} Under the OH algorithm each signal $S^{ij}(t^{ij}_0,t^{ij}_1 )$ will either contain the local initial consensus vector $s_j(0)$ and transmitting node identity $j$, or the average-consensus value $\bar{s}(0)$ and a scalar $0$ to indicate that the transmitted vector is the true average-consensus value. For this reason the conditions for average-consensus under the OH algorithm are relatively straight-forward to derive (see Theorem $\ref{aaacor5}$ in Sec.IV and the proof in Appendix VII). The OH algorithm signal specification and knowledge set update are respectively defined by $(\ref{aaasdasig5})$ and $(\ref{aaais2})$ below. \begin{framed} $$ \mbox{Algorithm 3: One-Hop (OH)} $$  \begin{equation}\label{aaasdasig5} \mbox{Signal Specification: } \ \ S^{ij}(t^{ij}_0,t^{ij}_1 ) = \left\{ \begin{array}{l l} \{ j , s_j(0) \} \ \ \ \ \ \mbox{if $\hat{v}_j(t^{ij}_0) \neq \frac{1}{n} \mathbf{1}_n$} \\ \{ 0 , \hat{s}_j(t^{ij}_0) \} \ \ \ \mbox{if $\hat{v}_j(t^{ij}_0) = \frac{1}{n} \mathbf{1}_n$} \end{array} \right. \ \ \ \ \ \ \ \ \ \ \ \ \ \ \ \ \ \ \ \ \ \ \ \ \ \ \ \ \ \ \ \ \ \end{equation}   \begin{equation}\label{aaais2}\mbox{Knowledge Set Update: } \mathcal{K}_i(t^{ij}_1(+)) = \left \{ \begin{array}{ll} \{ i, n , \hat{v}_i(t^{ij}_1(+)) , \hat{s}_i(t^{ij}_1(+)) , s_i(0) \} \ \ \ \mbox{if $\hat{v}_i(t^{ij}_1(+)) \neq \frac{1}{n} \mathbf{1}_n$} \\ \{ \hat{v}_i(t^{ij}_1(+)), \hat{s}_i(t^{ij}_1(+)) \} \ \ \ \ \ \ \ \ \ \ \ \ \ \ \ \mbox{if $\hat{v}_i(t^{ij}_1(+)) = \frac{1}{n} \mathbf{1}_n$} \end{array} \right.\end{equation} $$ \text{Normal Consensus Estimate Update: } \ \ \ \ \ \ \ \ \ \ \ \ \ \ \ \ \ \ \ \ \ \ \ \ \ \ \ \ \ \ \ \ \ \ \ \ \ \ \ \ \ \ \ \ \ \ \ \ \ \ \ \ \ \ \ \ \ \ \ \ \ \ \ \ \ \ \ \ \ \ \ \ \ \ \ \ \ \ \ \ $$ $$ v^{ij} ( t^{ij}_1 , t^{ij}_0) \equiv \mathbf{1}_n - \delta [ \hat{v}_i( t^{ij}_1) + e_j ] $$ \begin{equation}\label{aaaisvhat} \hat{v}_i( t^{ij}_1 (+) ) = \left\{ \begin{array}{l l} \frac{1}{n} v^{ij} ( t^{ij}_1 , t^{ij}_0 ) \ \ \ \ \mbox{if $0 \neq S_1^{ij}(t^{ij}_0,t^{ij}_1 )$} \\ \frac{1}{n} \mathbf{1}_n \ \ \ \ \ \ \ \ \ \ \ \ \ \ \mbox{if $0 = S^{ij}_1(t^{ij}_0,t^{ij}_1 )$} \end{array} \right. \end{equation}  \begin{equation}\label{aaaredsda2} \mbox{Consensus Estimate Update: } \ \ \ \hat{s}_i( t^{ij}_1 (+) ) = \left\{ \begin{array}{l l} \hat{s}_i ( t^{ij}_1 ) + \big( \frac{1}{n} - \hat{v}_{ij} ( t^{ij}_1 ) \big) s_j(0) \ \ \ \ \mbox{if $0 \neq S^{ij}_1(t^{ij}_0,t^{ij}_1 )$} \\ \bar{s}(0) \ \ \ \ \ \ \ \ \ \ \ \ \ \ \ \ \ \ \ \ \ \ \ \ \ \ \ \ \ \ \ \ \ \mbox{if $0 = S^{ij}_1(t^{ij}_0,t^{ij}_1 )$} \end{array} \right. \end{equation}  \begin{equation}\label{ohinit1} \mbox{Estimate Initialization: } \ \ \ \ \ \ \ \ \ \hat{v}_i(0) = \frac{1}{n} e_i \ , \ \hat{s}_i(0) = \frac{1}{n} s_i(0) \ . \ \ \ \ \ \ \ \ \ \ \ \ \ \ \ \ \ \ \ \ \ \ \ \ \ \ \ \ \ \ \ \ \ \ \ \ \ \ \ \ \ \ \ \ \end{equation} \end{framed} The Lemma $\ref{lembm0}$ and Lemma $\ref{aaalemsda2}$ in Appendix VII respectively prove that $(\ref{ohinit1})$ is the unique solution to $(\ref{optllca})$ under $(A4)$, and that $(\ref{aaaisvhat})$ is the unique solution to $(\ref{optllca})$ under $(\ref{aaasdasig5})$ and $(\ref{aaais2})$. Notice that $(\ref{aaaisvhat})$ implies the relation $(\ref{a10})$ reduces to $(\ref{aaaredsda2})$. Given $(\ref{aaasdasig5})$, $(\ref{aaais2})$ and $(\ref{aaaisvhat})$ it follows that the OH consensus estimate update $\hat{s}_i(t^{ij}_1(+))$ defined in $(\ref{aaaredsda2})$ can be locally computed at node $i$ based only on $\mathcal{K}_i(t^{ij}_1)$ and the received signal $S^{ij}( t^{ij}_0, t^{ij}_1)$.

\subsection{Algorithm 4: Discretized Distributed-Averaging (DDA)}\label{aaasec:kda1} For the DDA algorithm let the discrete set of vectors $\mathbb{R}^n_{0,\frac{1}{n}}$ be defined, \begin{equation}\label{aaaaa2} \mathbb{R}^n_{0,\frac{1}{n}} = \{ v \in \mathbb{R}^n \ : \ v_\ell \in \{ 0, 1/n \} \ , \ \forall \ \ell = 1, 2, \ldots, n \ \} . \end{equation} The discretized version of $(\ref{optllca})$ that we consider under the DDA algorithm is, \begin{equation}\label{aaaoptdisc}\begin{array}{llll} \hat{v}_i ( t^{ij}_1(+)) = & \arg_{\mbox{$\tilde{v}$}} \ \min \big( \tilde{v} - \frac{1}{n} \mathbf{1}_n \big) ^2 , \\ & \mbox{s.t. $(\ref{a10})$ holds and $\tilde{v} \in \mathbb{R}^n_{0,\frac{1}{n}}$, given $\mathcal{K}_i(t^{ij}_1) \bigcup S^{ij}(t^{ij}_0,t^{ij}_1 )$.} \end{array}\end{equation} To define the DDA normal consensus update it is convenient to denote $v^{-i} \in \mathbb{R}^{n-1}$ as the vector $v \in \mathbb{R}^n$ with element $v_i$ deleted. The DDA signal specification and knowledge set update are defined below, \begin{framed}$$ \mbox{Algorithm 4: Discretized Distributed-Averaging (DDA)} $$   \begin{equation}\label{dda1}\mbox{Signal Specification: } \ \ \ \ \ S^{ij}(t^{ij}_0,t^{ij}_1 ) = \mathcal{K}_j(t^{ij}_0) \setminus \{ j, n , s_j(0 ) \} \ \ \ \ \ \ \ \ \ \ \ \ \ \ \ \ \ \ \ \ \ \ \ \ \ \ \ \ \ \ \ \ \ \ \ \ \ \ \ \  \end{equation}  \begin{equation}\label{dda2} \mbox{Knowledge Set Update: } \ \ \ \mathcal{K}_i(t^{ij}_1(+)) = \{ i, n , \hat{v}_i(t^{ij}_1(+)) , \hat{s}_i(t^{ij}_1(+)) , s_i(0) \} \ \ \ \ \ \ \ \ \ \ \ \ \ \ \ \ \ \ \ \ \ \ \ \ \end{equation} $$ \text{Normal Consensus Estimate Update: } \ \ \ \ \ \ \ \ \ \ \ \ \ \ \ \ \ \ \ \ \ \ \ \ \ \ \ \ \ \ \ \ \ \ \ \ \ \ \ \ \ \ \ \ \ \ \ \ \ \ \ \ \ \ \ \ \ \ \ \ \ \ \ \ \ \ \ \ \ \ \ \ \ \ $$ \begin{equation}\label{aaaddaveqa} \hat{v}_i ( t^{ij}_1 (+) ) = \hat{a} \hat{v} _i ( t^{ij}_1 ) + \hat{b} \hat{v}_j( t^{ij}_0 ) + \hat{c} e_i \end{equation} \begin{equation}\label{aaaddaveqb} \begin{array}{llll} ( \hat{a}, \hat{b}, \hat{c} ) = \big( \ 1 , \ 1 , \ - \hat{v}_{ji} (t^{ij}_0 ) \big) \ & \ \mbox{if $\hat{v}^{-i} _i ( t^{ij}_1 ) ' \hat{v}^{-i} _j ( t^{ij}_0 ) = 0$} \\ ( \hat{a}, \hat{b}, \hat{c} ) = \big( \ 0 , \ 1 , \ \frac{1}{n} - \hat{v}_{ji} (t^{ij}_0 ) \big) \ & \ \mbox{if $\hat{v}^{-i} _i ( t^{ij}_1 ) ' \hat{v}^{-i} _j ( t^{ij}_0 ) > 0 \ , \ \hat{v}^{-i} _i ( t^{ij}_1 ) ^2 < \hat{v}^{-i} _j ( t^{ij}_0 )^2 $} \\ ( \hat{a}, \hat{b}, \hat{c} ) = \big( \ 1 , \ 0 , \ 0 \big) \ & \ \mbox{if $\hat{v}^{-i} _i ( t^{ij}_1 ) ' \hat{v}^{-i} _j ( t^{ij}_0 ) > 0 \ , \ \hat{v}^{-i} _i ( t^{ij}_1 ) ^2 \geq \hat{v}^{-i} _j ( t^{ij}_0 )^2 $} \end{array} \end{equation} \begin{equation}\label{aaaredkda1} \mbox{Consensus Estimate Update: } \ \ \  \hat{s}_i( t^{ij}_1 (+) ) = \hat{a} \hat{s} _i ( t^{ij}_1 ) + \hat{b} \hat{s}_j( t^{ij}_0 ) + \hat{c} s_i(0) \ \ \ \ \ \ \ \ \ \ \ \ \ \ \ \ \ \ \ \ \ \ \ \ \ \ \ \end{equation} \begin{equation}\label{ddainit1} \mbox{Estimate Initialization: } \ \ \ \ \ \ \hat{v}_i(0) = \frac{1}{n} e_i \ , \ \hat{s}_i(0) = \frac{1}{n} s_i(0) \ . \ \ \ \ \ \ \ \ \ \ \ \ \ \ \ \ \ \ \ \ \ \ \ \ \ \ \ \ \ \ \ \ \ \ \ \ \ \ \ \ \ \end{equation} \end{framed} Notice that the DDA signal specification $(\ref{dda1})$ and knowledge set update $(\ref{dda2})$ are identical to those of the DA algorithm. Also notice that under $(\ref{aaaddaveqa})$ the relation $(\ref{a10})$ implies $(\ref{aaaredkda1})$. The Lemma $\ref{lembm0}$ and Lemma $\ref{aaalemkda1}$ in Appendix VII respectively prove that $(\ref{ddainit1})$ is the unique solution to $(\ref{aaaoptdisc})$ under $(A4)$, and that $(\ref{aaaddaveqa})-(\ref{aaaddaveqb})$ is a global solution to $(\ref{aaaoptdisc})$ under $(\ref{dda1})$ and $(\ref{dda2})$. From $(\ref{dda1})$ and $(\ref{dda2})$ it is clear that the DDA consensus estimate update $\hat{s}_i(t^{ij}_1(+))$ defined in $(\ref{aaaredkda1})$ can be locally computed at node $i$ based only on $\mathcal{K}_i(t^{ij}_1)$ and the received signal $S^{ij}( t^{ij}_0, t^{ij}_1)$.

\emph{Summary:} We have defined four algorithms that in Sec.IV will be shown to achieve average-consensus. The algorithms were derived as special cases of the distributed algorithm $(\ref{update0}),(\ref{update1})$, where $(\ref{update1})$ is a special case of $(\ref{update10})$. Computationally, each algorithm requires only a few elementary arithmetic operations. In Sec.$\ref{sec:rescost}$ we discuss the storage and communication costs of these four algorithms.

\section{Main Results: Analysis of the Average-Consensus Algorithms}\label{sec:main} This section proves that Algorithms $1-4$ described in Sec.III achieve average-consensus under different communication conditions, we state this below in five theorems. Under assumptions (A1)-(A9) listed in Sec.$\ref{sec:probstat}-\ref{sec:commod1}$, these theorems provide necessary and sufficient conditions on the communication between nodes for average-consensus formation under Algorithms $1-4$ (recall Definition $\ref{def10}$ defines the solution to the average-consensus problem). The main implication of these results is that, with or without flooding, an average-consensus can be achieved in the presence of arbitrary communication delays and link failures, provided only that there is a uni-directional connectivity condition among nodes.

Each theorem below will assume a certain condition on the communication among nodes. To specify these conditions we require the following two definitions. For any $t_1 \geq t_0 \geq 0$, an arbitrary ``communication sequence'' $C_{(t_0,t_1 )}$ is defined as the set of all signals transmitted after time $t_0$ and received before time $t_1$, that is, $$ C_{[t_0, t_1 ]} = \{ S^{i_1 j_1} , S^{i_2 j_2 }  , S^{i_3 j_3 } , \ldots \} $$ where we have omitted the time indices but it is understood that the transmission time $t^{i_\ell j_\ell}_0 $ and reception time $t^{i_\ell j_\ell}_1$ of each signal $S^{i_\ell j_\ell}$ belong to the interval $[t_0, t_1]$. Recall that a signal $S^{i j} ( t^{i j}_0, t^{i j}_1 )$ as stated in (A2) implies that a well-defined subset of $\mathcal{K}_j(t^{i j}_0)$ leaves node $j$ at time $t^{i j}_0$ and is received at node $i$ at time $t^{i j}_1$ (the specific subset depends on the algorithm considered).

Next we define the notion of a ``communication path''. Intuitively, a communication path from node $j$ to $i$ implies that node $j$ transmits a signal received by some node $\ell_1$, and \emph{then} node $\ell_1$ sends a signal received by some node $\ell_2$, and \emph{then} node $\ell_2$ sends a signal received by some node $\ell_3$, and so on, until node $i$ receives a signal from node $\ell_{k(ij)}$. Technically, we say that $C_{[t_0,t_1]}$ contains a ``communication path'' from node $j$ to node $i$ iff $C_{[t_0,t_1]}$ contains a sub-sequence $C^{ij}_{[t_0(ij),t_1(ij)]}$ with the following connectivity property, \begin{equation}\label{pairs}\begin{array}{llll} & C^{ij}_{[t_0(ij),t_1(ij)]} \supseteq \{ S^{\ell_1 j}  , S^{\ell_2 \ell_1} ,  S^{\ell_3 \ell_2} , \\ & \ \ \ \ \ \ \ \ \ \ \ \ \ \ \ \ \ \ \ \ \ldots, S^{\ell_{k(ij)} \ell_{k(ij)-1}} , S^{i \ell_{k(ij)}}  \}  \end{array}\end{equation} where again we have omitted the time indices but it is understood that the transmission time $t^{\ell_{q+1} \ell_{q}}_0$ of each signal $S^{\ell_{q+1} \ell_{q}}$ occurs \emph{after} the reception time $t^{\ell_q \ell_{q-1}}_1$ of the preceding signal $S^{\ell_q \ell_{q-1}}$. Note that the communication path $C^{ij}_{[t_0(ij),t_1(ij)]}$ has a cardinality $|C^{ij}_{[t_0(ij),t_1(ij)]} | \geq k(ij) + 1$.

\subsection{Necessary and Sufficient Conditions for Average-Consensus}\label{sec:nscon} We are now ready to present the main convergence theorems for Algorithms $1-4$.

\subsubsection{Algorithm 1 (BM)}\label{sec:sbm} To prove convergence of the BM algorithm, consider the following communication condition. Let us denote $\mathcal{V}_{-i} = \mathcal{V} \setminus \{ i \}$ for an arbitrary node $i \in \mathcal{V}$. \begin{defn} (S$\mathcal{V}$SC) A communication sequence $C_{[0,t_1]}$ is ``singly $\mathcal{V}$-strongly connected'' (S$\mathcal{V}$SC) iff there exists a communication path from each node $i \in \mathcal{V}$ to every node $j \in \mathcal{V}_{-i}$. $\quad \quad \quad \quad \quad \quad \ \ \ \ \ \ \ \ \ \ \ \ \ \ \ \ \ \ \ \ \ \ \ \ \ \ \ \ \ \ \Box$ \end{defn} We will let ``$C_{[t_0,t_1]} \in \mbox{S$\mathcal{V}$SC}$'' denote that a sequence $C_{[t_0,t_1]}$ is S$\mathcal{V}$SC. The definition $(\ref{pairs})$ implies that $C_{[t_0,t_1]} \in \mbox{S$\mathcal{V}$SC}$ iff, \begin{equation}\label{scdef} C^{ij}_{[t_0(ij),t_1(ij)]} \subset C_{[0,t_1]} \ , \ \ \forall \ j \in \mathcal{V}_{-i} \ , \ \ \forall \ i \in \mathcal{V} \ . \end{equation} The following theorem establishes the sufficient communication conditions for the BM algorithm $(\ref{bm1})-(\ref{bminit1})$.

\begin{thrm}\label{bmthm}(BM Sufficient Conditions) Consider Algorithm 1, namely the BM algorithm $(\ref{bm1})-(\ref{bminit1})$. Then the average-consensus $(\ref{acdef})$ is achieved at time $t=t_1(+)$ for any communication sequence $C_{[0,t_1]}$ satisfying the $\mbox{S$\mathcal{V}$SC}$ condition $(\ref{scdef})$. \end{thrm}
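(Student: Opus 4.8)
The plan is to track how the vector $\hat v_i(t)$ evolves and to show that the S$\mathcal V$SC condition forces every $\hat v_i$ to reach $\frac1n\mathbf 1_n$ by time $t_1(+)$. The key invariant is that, under the BM updates $(\ref{bm1})$--$(\ref{bminit1})$, each coordinate $\hat v_{i\ell}(t)$ is always either $0$ or $\frac1n$, and $\hat v_{i\ell}(t)=\frac1n$ holds precisely when node $i$ currently stores the initial vector $s_\ell(0)$ in $\mathcal K_i(t)$. I would first verify this by induction on the sequence of reception events: it holds at $t=0$ by $(\ref{bminit1})$ since $\hat v_i(0)=\frac1n e_i$ and $\mathcal K_i(0)\supseteq\{s_i(0)\}$; and if it holds before a reception at node $i$, then the update $v^{ij}=\mathbf 1_n-\delta[\hat v_i(t^{ij}_1)+\hat v_j(t^{ij}_0)]$ together with $(\ref{bm3})$ gives $\hat v_{i\ell}(t^{ij}_1(+))=\frac1n$ iff $\hat v_{i\ell}(t^{ij}_1)=\frac1n$ or $\hat v_{j\ell}(t^{ij}_0)=\frac1n$, i.e. iff node $i$ had $s_\ell(0)$ or node $j$ had it; and $(\ref{bm2})$ updates $\mathcal K_i$ to contain exactly the corresponding $s_\ell(0)$'s, so the invariant is preserved (the second branches of $(\ref{bm1})$--$(\ref{bm2})$, active once $\hat v_j=\frac1n\mathbf 1_n$, are consistent with this since then all $s_\ell(0)$ are effectively known and propagated).

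Next I would define, for each ordered pair $(i,\ell)$ and each time $t$, the predicate $P_{i\ell}(t)$: ``$s_\ell(0)\in\mathcal K_i(t)$'' (equivalently $\hat v_{i\ell}(t)=\frac1n$), and observe it is monotone nondecreasing in $t$ once acquired it is never deleted unless $\hat v_i$ has already reached $\frac1n\mathbf 1_n$, in which case $\hat s_i=\bar s(0)$ and we are done for node $i$. The core claim is a ``diffusion along communication paths'' lemma: if $C_{[0,t_1]}$ contains a communication path from node $\ell$ to node $i$ in the sense of $(\ref{pairs})$, then $P_{i\ell}(t_1(+))$ holds. This follows by induction along the path $\ell\to\ell_1\to\ell_2\to\cdots\to i$: node $\ell$ has $s_\ell(0)$ at time $0$ hence at the transmission time of $S^{\ell_1\ell}$; by the ordering of times built into $(\ref{pairs})$ each successive node acquires $s_\ell(0)$ at the reception time of its incoming signal on the path, using the update rule analysis from the first step; hence node $i$ holds $s_\ell(0)$ by $t_1$.

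Applying this to the S$\mathcal V$SC hypothesis $(\ref{scdef})$ — which gives a communication path from every $\ell\in\mathcal V$ to every $i\in\mathcal V_{-\ell}$, and trivially $P_{ii}$ holds for all $i$ from the start — yields $P_{i\ell}(t_1(+))$ for all $i,\ell\in\mathcal V$, i.e. $\hat v_i(t_1(+))=\frac1n\mathbf 1_n$ for every $i$. Then $(\ref{bmconsest})$ (or equivalently $(\ref{a10})$ with $\mathbf S\cdot\frac1n\mathbf 1_n=\bar s(0)$) gives $\hat s_i(t_1(+))=\bar s(0)$ for all $i$, so the network consensus error $(\ref{acdef})$ is zero at $t=t_1(+)$, as required.

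The main obstacle is bookkeeping the timing correctly: one must be careful that the ordering conditions embedded in $(\ref{pairs})$ (each $S^{\ell_{q+1}\ell_q}$ is transmitted after $S^{\ell_q\ell_{q-1}}$ is received) are exactly what is needed so that, when node $\ell_{q+1}$ transmits along the path, node $\ell_q$ has already acquired $s_\ell(0)$; assumption (A8) (instantaneous-enough updates) and (A9) (one signal at a time) are what make this clean, and one should note that intermediate nodes along the path may themselves reach $\frac1n\mathbf1_n$ and switch to the second branch of $(\ref{bm1})$ before forwarding, but that only helps since they then transmit $\hat v_j=\frac1n\mathbf 1_n$ which still propagates all the $s_\ell(0)$-information. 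A secondary, minor point is confirming that the deletion of initial vectors in the $\hat v_i=\frac1n\mathbf1_n$ branch of $(\ref{bm2})$ never destroys information a still-needed path relies on, which holds because once a node's estimate is $\frac1n\mathbf 1_n$ its outgoing signals carry $\bar s(0)$ and the flag $\hat v_j=\frac1n\mathbf 1_n$ that lets the receiver set $\hat v_i=\frac1n\mathbf 1_n$ directly.
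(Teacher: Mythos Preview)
Your proposal is correct and follows essentially the same approach as the paper: both arguments show that under the BM updates each node accumulates the initial vectors $s_\ell(0)$ along communication paths, and the S$\mathcal V$SC condition then forces $\hat v_i(t_1(+))=\frac1n\mathbf 1_n$ for all $i$. The paper's proof is a two-sentence sketch of exactly this idea, whereas you have unpacked the invariant and the path-induction carefully (and correctly handled the branch where a node has already reached $\frac1n\mathbf 1_n$); one small slip in your timing paragraph is that the signal $S^{\ell_{q+1}\ell_q}$ is transmitted by $\ell_q$, not by $\ell_{q+1}$, but you clearly have the ordering right elsewhere.
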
 \begin{proof} See Appendix VII. \end{proof}

The BM algorithm implies a ``flooding'' of the initial consensus vectors $\{ s_i(0) \ : \ i \in \mathcal{V} \}$ through-out the network. There is no other algorithm in the literature that does not use a protocol ``equivalent'' to the flooding technique and still guarantees average-consensus for every communication sequence $C_{[0,t_1]}$ satisfying $(\ref{scdef})$, we state this formally as Conjecture $\ref{conject0}$ in Sec.$\ref{sec:app1a}$. Also related to this is Conjecture $\ref{conject1}$ in Sec.$\mbox{VII-B}$ of Appendix VII.

When combined with Theorem $\ref{bmthm}$, the following theorem implies that the communication condition \emph{sufficient} for average-consensus under the BM algorithm is the exact same communication condition that is \emph{necessary} for average-consensus under any algorithm. This is why the BM algorithm can be said to possess optimal convergence properties.

\begin{thrm}\label{bmcor}(BM,DA,OH and DDA Necessary Conditions) If a communication sequence $C_{[0,t_1]}$ does not satisfy the $\mbox{S$\mathcal{V}$SC}$ condition $(\ref{scdef})$, then no distributed algorithm $(\ref{update10}),(\ref{update0})$ can achieve average-consensus $(\ref{acdef})$ at time $t=t_1(+)$. \end{thrm}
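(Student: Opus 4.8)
The plan is to argue by contraposition via an information-theoretic / indistinguishability argument, showing that if the $\mbox{S$\mathcal{V}$SC}$ condition fails, then some node's final knowledge set cannot depend on some other node's initial consensus vector, and hence cannot in general equal $\bar{s}(0)$. First I would observe that, since $C_{[0,t_1]}$ does not satisfy $(\ref{scdef})$, there exist nodes $i, j \in \mathcal{V}$ with $j \in \mathcal{V}_{-i}$ such that $C_{[0,t_1]}$ contains \emph{no} communication path from $j$ to $i$ in the sense of $(\ref{pairs})$. Let $\mathcal{R}_j \subseteq \mathcal{V}$ be the set of nodes reachable from $j$ along causal communication paths within $[0,t_1]$ (including $j$ itself); by assumption $i \notin \mathcal{R}_j$. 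The key structural fact is that, because communication is causal and by the update rule $(\ref{update10})$ the knowledge set $\mathcal{K}_i(t)$ at any time $t \le t_1(+)$ is a function only of $\mathcal{K}_i(0)$ and the signals in the sub-sequence of $C_{[0,t_1]}$ that reach $i$, and each such signal is in turn (recursively, by $(\ref{update0})$) a function only of initial knowledge sets of nodes lying on causal paths terminating at $i$ — none of which pass through $j$ — the final knowledge set $\mathcal{K}_i(t_1(+))$ is a function of $\{ \mathcal{K}_\ell(0) : \ell \notin \mathcal{R}_j \}$ alone, and in particular does not depend on $s_j(0)$ (which by $(A3)$–$(A4)$ enters the system only through $\mathcal{K}_j(0)$).

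The second step is a perturbation argument. Fix all initial consensus vectors and consider two scenarios: in the first, node $j$ holds $s_j(0)$; in the second, node $j$ holds $s_j(0) + \delta$ for some nonzero $\delta \in \mathbb{R}^d$, while every other node's initial data and the entire communication sequence $C_{[0,t_1]}$ (including all transmission and reception times) are held fixed. Since (by the previous paragraph) $\mathcal{K}_i(t_1(+))$ is identical in both scenarios, the consensus estimate $\hat{s}_i(t_1(+))$ — which is an element of $\mathcal{K}_i(t_1(+))$ per $(\ref{a0a})$ — is also identical in both scenarios. But the true average changes: $\bar{s}(0)$ becomes $\bar{s}(0) + \delta/n$. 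Hence in at least one of the two scenarios $\hat{s}_i(t_1(+)) \neq \bar{s}(0)$, so $E_{\bar{s}(0)}(\hat{s}_i(t_1(+))) > 0$ and the network consensus error $(\ref{acdef})$ is strictly positive in that scenario. Since a genuine average-consensus algorithm must solve $(\ref{acdef})$ for every choice of initial consensus vectors, no distributed algorithm $(\ref{update10}),(\ref{update0})$ can achieve average-consensus at $t = t_1(+)$ under $C_{[0,t_1]}$.

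The main obstacle is making the recursive dependence argument rigorous, i.e.\ precisely formalizing the claim that $\mathcal{K}_i(t_1(+))$ depends only on the initial data of nodes outside $\mathcal{R}_j$. I would handle this by induction on the number of signal receptions in $C_{[0,t_1]}$ ordered by reception time (using $(A9)$ to linearly order them, and $(A7)$ so that knowledge sets are piecewise constant between receptions): one shows that after the $k$-th reception, the knowledge set of any node $\ell \notin \mathcal{R}_j$ is a function of $\{\mathcal{K}_m(0) : m \notin \mathcal{R}_j\}$ only, using the fact that $f_S$ applied at a node $m \notin \mathcal{R}_j$ produces a signal depending only on $\mathcal{K}_m$, and that no node in $\mathcal{R}_j$ ever transmits a signal received by a node outside $\mathcal{R}_j$ within $[0,t_1]$ (otherwise that recipient, and hence $i$ if reachable from it, would lie in $\mathcal{R}_j$, contradicting $i \notin \mathcal{R}_j$ for the recipients on a path to $i$ — more carefully, $\mathcal{R}_j$ is closed under forward reachability by construction, so a node outside it never receives from a node inside it). The base case is $(A4)$. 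This induction, combined with the perturbation argument above, completes the proof.
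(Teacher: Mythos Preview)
Your approach is exactly the paper's: the paper's proof is a three-sentence version of your argument, asserting without detailed justification that if no path exists from some node to another, then the latter's knowledge set cannot depend on the former's initial vector, and hence cannot equal $\bar{s}(0)$ for arbitrary initial data. Your perturbation argument makes explicit what the paper leaves implicit.

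There is, however, a genuine slip in your inductive step. The claim ``$\mathcal{R}_j$ is closed under forward reachability, so a node outside it never receives from a node inside it'' is false as stated, because membership in $\mathcal{R}_j$ ignores timing. Concretely: take $n=3$, signals $S^{32}(0,1)$ and $S^{21}(2,3)$, with $j=1$, $i=3$. Then $\mathcal{R}_1=\{1,2\}$, yet node $2\in\mathcal{R}_1$ transmits to node $3\notin\mathcal{R}_1$. Your induction hypothesis (``the knowledge set of any $\ell\notin\mathcal{R}_j$ depends only on $\{\mathcal{K}_m(0):m\notin\mathcal{R}_j\}$'') therefore cannot be pushed through using forward-closure alone, since the sender of a signal received by $\ell\notin\mathcal{R}_j$ may well lie in $\mathcal{R}_j$.

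The fix is to time-index reachability. Let $\mathcal{P}_\ell(t)$ be the set of nodes $m$ (including $\ell$) for which a communication path from $m$ to $\ell$ exists with final reception at or before $t$. Prove by induction on receptions that $\mathcal{K}_\ell(t)$ is a function of $\{\mathcal{K}_m(0):m\in\mathcal{P}_\ell(t)\}$; the inductive step works because if $\ell$ receives $S^{\ell m}(t_0^{\ell m},t_1^{\ell m})$ then every $p\in\mathcal{P}_m(t_0^{\ell m})$ has its path to $m$ extendable by this signal, giving $p\in\mathcal{P}_\ell(t_1^{\ell m})$. Applying this at $\ell=i$, $t=t_1$, with $j\notin\mathcal{P}_i(t_1)$, yields the dependence claim you need; your perturbation argument then finishes the proof. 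In the counterexample above this is consistent: node $2$'s knowledge at the time it transmits to $3$ (namely $t=0$) does not yet depend on $s_1(0)$.
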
 \begin{proof} See Appendix VII. \end{proof}

Although Theorem $\ref{bmcor}$ is somewhat obvious, it is also a valid necessary communication condition and will be referred to through-out the paper. Note that Theorem $\ref{aaacor5}$ below states that the OH algorithm requires even stronger necessary conditions than the $\mbox{S$\mathcal{V}$SC}$ condition $(\ref{scdef})$ implied by Theorem $\ref{bmcor}$. However, the DA and DDA algorithms do not; there exist many $\mbox{S$\mathcal{V}$SC}$ sequences under which the DA and DDA algorithms will obtain average-consensus at the same instant as the BM algorithm. An example is the ``unit-delay double cycle sequence'' defined in $(\ref{double})$ of Appendix VII. Together with Theorem $\ref{bmcor}$, the example $(\ref{double})$ implies that the DA and DDA algorithms possess the weakest possible necessary conditions for average-consensus that any algorithm can have, this is illustrated in Fig.$\ref{simpfig2}$ below.

\subsubsection{Algorithm 2 (DA)}\label{sec:sda} The convergence of Algorithm 2 depends on the following I$\mathcal{V}$SC condition.

\begin{defn} (I$\mathcal{V}$SC) A communication sequence $C_{[0,t_1]}$ is ``infinitely $\mathcal{V}$-strongly connected'' (I$\mathcal{V}$SC) iff for each time instant $t \in [0,t_1)$ there exists a finite span of time $T_t \in (0, t_1 - t)$ such that $C_{[t,t + T_t]}$ satisfies the S$\mathcal{V}$SC condition $(\ref{scdef})$. $ \quad \quad \quad \quad \quad \quad \quad \ \ \ \ \ \ \ \ \ \ \ \ \ \ \ \ \ \ \ \ \ \ \ \ \ \ \ \ \ \ \ \ \ \ \ \ \ \ \ \ \ \ \ \ \ \ \ \ \ \ \ \ \ \ \ \ \ \ \ \ \ \ \ \ \ \ \ \ \ \ \ \ \ \ \ \ \ \ \ \ \ \ \ \ \ \ \ \ \ \ \ \ \ \ \ \ \ \ \ \ \ \ \ \ \ \ \ \ \ \Box$ It follows that a sequence $C_{[0,t_1]}$ is I$\mathcal{V}$SC iff we can define the infinite set of ordered pairs $\{ (t^\ell_0, t^\ell_1 )  \ : \ \ell \in \mathbb{N} \}$, $$ \begin{array}{llll} & t^0_0 = 0 \ , \ t^\ell_1 = \mbox{arg min}_t \{ C_{[t^\ell_0,t]} \in \mbox{S$\mathcal{V}$SC} \} \\ & t^\ell_0 = t^{\ell-1}_1 (+) \ , \ \ell = 1,2, \ldots . \end{array} $$ A sequence $C_{[0,t_1]}$ being I$\mathcal{V}$SC is then equivalent to the condition, \begin{equation}\label{star1ga}\begin{array}{llll} & C_{[0,t_1]} = \bigcup_{  \ell \in \mathbb{N} }  \ C_{[t^\ell_0, t^\ell_1]}  \\ &  C_{[t^\ell_0, t^\ell_1]} \in \mbox{S$\mathcal{V}$SC} \ , \ \forall \ \ell \in \mathbb{N} \ . \end{array}\end{equation} \end{defn} We now proceed to the convergence of the DA algorithm. The following theorem is our main result.

\begin{thrm}\label{thm6}(DA Sufficient Conditions) Consider Algorithm 2, namely the DA algorithm $(\ref{kda1})-(\ref{dainit1})$. Then average-consensus $(\ref{acdef})$ is achieved at time $t=t_1(+)$ for any communication sequence $C_{[0,t_1]}$ satisfying the I$\mathcal{V}$SC condition $(\ref{star1ga})$. \end{thrm}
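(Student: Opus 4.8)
The plan is to track the set of normal consensus estimates $\{\hat v_i(t)\}_{i\in\mathcal V}$ and show that under the I$\mathcal V$SC decomposition $(\ref{star1ga})$ the network consensus error is driven to zero, in fact to exactly zero in finite time. The key structural observation is that each DA update $(\ref{optkda2})$ replaces $\hat v_i$ by the orthogonal projection of $\tfrac1n\mathbf 1_n$ onto the span of $\{\hat v_i(t^{ij}_1),\hat v_j(t^{ij}_0),\tfrac1n e_i\}$; consequently, by the variational characterization $(\ref{optllca})$ and the Lemmas cited for $(\ref{rr1a})$, the ``error'' $E_{\bar s(0)}(\hat s_i(t))=|\hat s_i(t)-\bar s(0)|$ is \emph{monotone nonincreasing} at every receiving node and is unchanged at every non-receiving node (by (A7)). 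So the whole argument reduces to showing that one S$\mathcal V$SC block $C_{[t^\ell_0,t^\ell_1]}$ forces a definite collapse, and that the only fixed point of the dynamics compatible with S$\mathcal V$SC connectivity is $\hat v_i=\tfrac1n\mathbf 1_n$ for all $i$.

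First I would set up an invariant on the support pattern. Writing $\hat v_i(t)\in\mathbb R^n$, I claim that at all times each $\hat v_i(t)$ lies in the affine structure where the nonzero coordinates are determined by "which initial vectors node $i$ has effectively incorporated," and that coordinate $i$ of $\hat v_i$ is always $\ge$ some positive floor. More precisely I would prove, by induction over the (finitely many, by (A9)) reception events in $C_{[0,t_1]}$, a lemma of the form: if $C^{ij}_{[\cdot,\cdot]}\subset C_{[0,t_1]}$ is a communication path from $j$ to $i$, then after the last signal of that path is processed, the $j$-th coordinate $\hat v_{ij}$ has become positive — i.e. information about $s_j(0)$ has propagated into node $i$'s estimate. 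This is the DA analogue of the flooding argument behind Theorem~\ref{bmthm}: in BM the vector $s_j(0)$ literally travels along the path; here it is the $j$-th coordinate of $\hat v_i$ (and hence the $s_j(0)$-component of $\hat s_i=\mathbf S\hat v_i$) that travels. The projection formula $(\ref{optkda2})$ must be shown not to annihilate a coordinate that is positive in \emph{both} $\hat v_i(t^{ij}_1)$ and $\hat v_j(t^{ij}_0)$, and to create a positive coordinate wherever exactly one of the two inputs has it — this "no spurious cancellation / monotone spreading of support" property is the technical heart and the step I expect to be the main obstacle, since the pseudo-inverse of the $n\times 3$ matrix $V_{(DA)}$ has to be analyzed via its closed form (the $3\times3$ Gram matrix inverse referenced as $(\ref{fin1})$–$(\ref{pp1condproof})$), and one must rule out sign changes that would make a coordinate pass through zero.

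Granting that support lemma, a single S$\mathcal V$SC block $C_{[t^\ell_0,t^\ell_1]}$ guarantees a path from every $j$ to every $i$, so by the end of that block every $\hat v_i$ has full support (all $n$ coordinates positive). Now I would invoke the least-squares optimality $(\ref{optllca})$ once more at the "entropy" level: among vectors $\tilde v$ consistent with $\mathbf S\tilde v$ being the available estimate, the one minimizing $(\tilde v-\tfrac1n\mathbf 1_n)^2$ that has full support and still satisfies the invariant "$\sum_\ell \hat v_{i\ell}=1/n\cdot(\text{count})$"-type conservation law forces $\hat v_i\to\tfrac1n\mathbf 1_n$; equivalently, show the error $E_{\bar s(0)}(\hat s_i)$ is strictly decreased by any reception along a path unless the node already holds $\tfrac1n\mathbf 1_n$, and that "full support plus optimality" leaves only $\tfrac1n\mathbf 1_n$. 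Since $C_{[0,t_1]}$ decomposes into such blocks and the error is monotone across block boundaries, after the first block all estimates equal $\tfrac1n\mathbf 1_n$ and $(\ref{acdef})$ holds at $t=t_1(+)$ — actually already at the end of the first block, which I would remark is consistent with the optimality discussion following Theorem~\ref{bmcor}. Finally I would push the details of the per-path analysis and the pseudo-inverse bookkeeping to Appendix~VII, as the theorem statement promises.
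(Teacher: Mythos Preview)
Your proposal has a fundamental gap: you are trying to prove finite-time convergence after a \emph{single} S$\mathcal V$SC block, but the DA algorithm does not have this property. The remark after Theorem~\ref{bmcor} says only that \emph{some} S$\mathcal V$SC sequences (such as the double-cycle example) yield finite-time agreement for DA; it does not say all of them do. Indeed, the whole point of requiring the I$\mathcal V$SC condition rather than a single S$\mathcal V$SC block is that DA convergence is genuinely asymptotic: the paper's proof shows $\sum_i E^2_{\frac1n\mathbf 1_n}(\hat v_i(t))\to 0$ only in the limit as $t\to t_1$, by exploiting the \emph{infinite} decomposition $(\ref{star1ga})$.

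Both of your key structural claims break down. First, the ``monotone support spreading'' lemma is not available for DA: the projection $V_{(DA)}V_{(DA)}^+\frac1n\mathbf 1_n$ onto a $3$-dimensional subspace of $\mathbb R^n$ can (and does) produce coordinates of either sign, so there is no reason a coordinate positive in both $\hat v_i$ and $\hat v_j$ remains positive after the update. You flag this as ``the main obstacle'' but do not resolve it, and in fact it is false in general. Second, and more seriously, even granting full support of every $\hat v_i$, your inference that ``full support plus optimality leaves only $\frac1n\mathbf 1_n$'' is incorrect: the DA update at node $i$ minimizes over $\mathrm{span}\{\hat v_i,\hat v_j,e_i\}$, which is at most $3$-dimensional, and $\frac1n\mathbf 1_n$ need not lie in that span regardless of the support of $\hat v_i$. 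A vector like $\hat v_i=(\tfrac1n,\alpha_2,\ldots,\alpha_n)$ with all $\alpha_\ell>0$ but $\alpha_\ell\neq\tfrac1n$ is perfectly compatible with all the invariants, and one more reception will not collapse it to $\frac1n\mathbf 1_n$.

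The paper's actual argument is quite different. It establishes two scalar invariants, the normalization $\hat v_i^2=\tfrac1n\hat v_i'\mathbf 1_n$ and the zero-local-error property $\hat v_{ii}=\tfrac1n$, and uses the first to show that $E^2_{\frac1n\mathbf 1_n}(\hat v_i)=\tfrac1n-\hat v_i^2$ is nonincreasing along receptions. Since the total error reduction over $C_{[0,t_1]}$ is bounded by $(n-1)/n$ and decomposes as an infinite sum over the S$\mathcal V$SC blocks, the per-block reduction $\mathbf E^2(C_{[t^\ell_0,t^\ell_1]})$ must tend to zero. A quantitative lower bound on the per-signal error reduction (Lemma~\ref{lemkda4}) then forces $(\hat v_i-\hat v_j)^2\to 0$ for every signal in late blocks, and the triangle inequality propagated along communication paths, together with $\hat v_{jj}=\tfrac1n$, pins each coordinate $\hat v_{ij}$ to $\tfrac1n$. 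You should rebuild your argument along these asymptotic lines rather than attempting a one-block support argument.
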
 \begin{proof} See Appendix VII. \end{proof}

The above result is interesting because the $\mbox{I$\mathcal{V}$SC}$ condition $(\ref{star1ga})$ assumes a weak recurring connectivity between nodes, and also because the resource costs of the DA algorithm are significantly lower than the BM algorithm (see Sec.$\ref{sec:rescost}$). Many papers on average-consensus formation such as \cite{mo06},\cite{SH06},\cite{HA05},\cite{ne09},\cite{MU04},\cite{ja08} assume communication conditions that are special cases of the IVSC condition. See also Conjecture $\ref{conject2}$ in Sec.$\mbox{VII-B}$ of Appendix VII.

\subsubsection{Algorithm 3 (OH)}\label{sec:soh} The following S$\mathcal{V}$CC condition will be shown sufficient for convergence of the OH algorithm. \begin{defn} (S$\mathcal{V}$CC) A communication sequence $C_{[0,t_1]}$ is ``singly $\mathcal{V}$-completely connected'' (S$\mathcal{V}$CC) iff there exists a node $\hat{i} \in \mathcal{V}$ and a time instant $t_{1/2} \in (0, t_1)$ such that, \begin{equation}\label{symcondef2}\begin{array}{llll} & S^{\hat{i} j}( t^{\hat{i} j}_0, t^{\hat{i} j}_1 ) \in C_{[0, t_{1/2} ] } \ \ , \ \ \forall \ j \in \mathcal{V}_{-\hat{i}} \\ & S^{j \hat{i}}( t^{j \hat{i} }_0, t^{j \hat{i}}_1) \in C_{(t_{1/2} ,t_1]} \ , \ \ \forall \ j \in \mathcal{V}_{-\hat{i}} \ . \end{array}\end{equation} $\quad \quad \quad \quad \quad \quad \quad \quad \quad \quad \quad \quad \quad \quad \quad \quad \quad \quad \quad \quad \quad \quad \quad \quad \quad \quad \quad \quad \quad \quad \quad \quad \quad \quad \quad \quad \quad \ \ \ \ \ \ \ \ \ \ \ \ \ \ \ \ \ \ \ \ \ \ \ \ \ \ \ \ \ \ \ \Box$ \end{defn} The first line in $(\ref{symcondef2})$ implies that during the interval $[0, t_{1/2} ]$ every node $j \in \mathcal{V}_{-\hat{i}}$ will have transmitted a signal that was received by node $\hat{i}$. The second line in $(\ref{symcondef2})$ implies that during the interval $(t_{1/2},t_1]$ the node $\hat{i}$ will have transmitted a signal that is received by each node $j \in \mathcal{V}_{-\hat{i}}$. We will let ``$C_{[t_0,t_1]} \in \mbox{S$\mathcal{V}$CC}$'' denote that a sequence $C_{[t_0,t_1]}$ is S$\mathcal{V}$CC. Note that the identity of node $\hat{i}$ need not be known by any node. We now consider convergence of the OH algorithm.

\begin{thrm}\label{aaacor5}(OH Necessary and Sufficient Conditions) Consider Algorithm 3, namely the OH algorithm $(\ref{aaasdasig5})-(\ref{ohinit1})$. Then average-consensus $(\ref{acdef})$ is achieved at time $t = t_1(+)$ iff the communication sequence $C_{[0,t_1]}$ satisfies the following condition: \begin{itemize} \item (C): for each node $i$ for which there exists a node $j \in \mathcal{V}_{-i}$ such that $S^{ij}( t^{ij}_0, t^{ij}_1) \notin C_{[0,t_1]}$ for all $(t^{ij}_0, t^{ij}_1) \in \mathbb{R}^2$, there exists a communication path $C^{i \ell}_{[t_0(i \ell),t_1(i \ell)]} \in C_{[0,t_1]}$ from at least one node $\ell$ such that $S^{\ell j}( t^{\ell j}_0, t^{\ell j}_1) \in C_{[0,t_0(i \ell))}$ for all $j \in \mathcal{V}_{-\ell}$.\end{itemize} \end{thrm}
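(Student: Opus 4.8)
The plan is to reduce the theorem to a handful of structural facts about the OH normal-consensus update and then establish sufficiency by a forward induction along communication paths and necessity by a backward tracing of the ``origin'' of each node's consensus. First I would record, by induction on the number of signals a node has received, the following consequences of $(\ref{aaasdasig5})$--$(\ref{ohinit1})$: every $\hat v_i(t)$ has all entries in $\{0,\tfrac1n\}$ with $\hat v_{ii}(t)=\tfrac1n$; for $\ell\neq i$, $\hat v_{i\ell}(t)=\tfrac1n$ iff by time $t$ node $i$ has received either a direct signal $\{\ell,s_\ell(0)\}$ from $\ell$ or some consensus signal (a received signal whose first entry equals $0$); hence $\hat v_i(t)=\tfrac1n\mathbf 1_n$ iff by time $t$ node $i$ has received a consensus signal or direct signals from every node of $\mathcal V_{-i}$. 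Using $(\ref{aaaisvhat})$--$(\ref{aaaredsda2})$, $(\ref{ohinit1})$ and (A7) one also gets that the event $\hat v_i=\tfrac1n\mathbf 1_n$ is absorbing and forces $\hat s_i=\bar s(0)$, and from $(\ref{aaasdasig5})$ with (A8) (a transmitted signal reflects only signals received \emph{strictly} before its transmission instant) that a node emits a consensus signal at time $t^{ij}_0$ iff it attained $\hat v=\tfrac1n\mathbf 1_n$ at some instant strictly before $t^{ij}_0$. Call node $i$ \emph{deficient} if some $j\in\mathcal V_{-i}$ has $S^{ij}\notin C_{[0,t_1]}$ for all $(t^{ij}_0,t^{ij}_1)$; a non-deficient node necessarily attains $\hat v=\tfrac1n\mathbf 1_n$ by time $t_1$, hence has $\hat s_i(t_1(+))=\bar s(0)$.

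For sufficiency, assume (C) and fix $i$. If $i$ is not deficient we are done by the previous remark. If $i$ is deficient, (C) supplies a node $\ell$ and a path $C^{i\ell}_{[t_0(i\ell),t_1(i\ell)]}\subseteq C_{[0,t_1]}$ with $S^{\ell j}\in C_{[0,t_0(i\ell))}$ for all $j\in\mathcal V_{-\ell}$; then $\ell$ has $\hat v=\tfrac1n\mathbf 1_n$ at some instant strictly before $t_0(i\ell)$, so the first signal of the path (transmitted at $t_0(i\ell)$) is a consensus signal and its recipient attains $\hat v=\tfrac1n\mathbf 1_n$. Since along a path of the form $(\ref{pairs})$ each successive signal is transmitted strictly after the previous one is received, an induction along the path shows $i$ receives a consensus signal at time $t_1(i\ell)\le t_1$, hence $\hat s_i(t_1(+))=\bar s(0)$. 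Thus $(\ref{acdef})$ holds at $t_1(+)$.

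For necessity, assume (C) fails: there is a deficient node $i$ such that for every node $\ell$ and every path $C^{i\ell}\subseteq C_{[0,t_1]}$ into $i$ some $j\in\mathcal V_{-\ell}$ has $S^{\ell j}\notin C_{[0,t_0(i\ell))}$. Suppose toward a contradiction that $i$ attains $\hat v=\tfrac1n\mathbf 1_n$ by $t_1(+)$, and let $\tau\le t_1$ be the first such instant. Since $i$ is deficient this cannot arise from direct receptions, so $i$ receives at $\tau$ a consensus signal from some $m_1$, and $m_1$ attained $\hat v=\tfrac1n\mathbf 1_n$ at some $\tau_1<t^{im_1}_0=:t_0(im_1)$; take $P_1=\{S^{im_1}\}$, a path from $m_1$ to $i$ in $C_{[0,t_1]}$ with $\tau_1<t_0(im_1)$. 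Iterate: given $m_q$ with first consensus instant $\tau_q$ and a path $P_q\subseteq C_{[0,t_1]}$ from $m_q$ to $i$ with $\tau_q<t_0(im_q)$, either $m_q$'s consensus arose by direct accumulation --- and then, since $m_q$ had not earlier received a consensus signal, it had received direct signals from all of $\mathcal V_{-m_q}$ by $\tau_q<t_0(im_q)$, so $S^{m_q j}\in C_{[0,t_0(im_q))}$ for every $j\in\mathcal V_{-m_q}$, contradicting the failure of (C) via $\ell=m_q$ and $P_q$ --- or it arose from a consensus signal received from some $m_{q+1}$ at $\tau_q$, whence $m_{q+1}$ attained $\hat v=\tfrac1n\mathbf 1_n$ at some $\tau_{q+1}<t^{m_q m_{q+1}}_0$, and prepending $S^{m_q m_{q+1}}$ to $P_q$ (legitimate, since $\tau_q<t_0(im_q)$ means this signal is received before the first signal of $P_q$ is transmitted) yields $P_{q+1}$ with $t_0(im_{q+1})=t^{m_q m_{q+1}}_0>\tau_{q+1}$. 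The $\tau_q$ strictly decrease and each node has a unique consensus instant, so the $m_q$ are distinct and the iteration halts within $n$ steps --- necessarily in the ``direct accumulation'' alternative, the desired contradiction. Hence $i$ never attains $\tfrac1n\mathbf 1_n$, so (excluding degenerate initial vectors) $\hat s_i(t_1(+))\neq\bar s(0)$ and $(\ref{acdef})$ fails.

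The main obstacle is the timing bookkeeping. The delicate point that produces the \emph{open} interval $C_{[0,t_0(i\ell))}$ in (C) is the use of (A8) to show a node emits a consensus signal only after it has \emph{strictly} attained $\hat v=\tfrac1n\mathbf 1_n$; one must also check, at each recursion step of the necessity argument, that the prepended signal genuinely extends a path in the sense of $(\ref{pairs})$ (its reception precedes the next transmission) and that the chain of consensus ``origins'' $m_1,m_2,\dots$ cannot cycle, so that the recursion terminates. Beyond these points the argument is the elementary induction of the first step.
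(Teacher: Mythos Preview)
Your proposal is correct and follows essentially the same approach as the paper (Lemmas~\ref{aaalemsda2}--\ref{aaalemis2}): structural facts about the OH update, sufficiency by forward propagation of the consensus value along a path, and necessity by the contrapositive. Your backward recursion for necessity---tracing a chain of strictly earlier first-consensus instants $\tau_1>\tau_2>\cdots$ until it terminates at a node that accumulated direct signals from everyone---is considerably more explicit than the paper's one-sentence assertion in the necessity half of Lemma~\ref{aaalemis2}, but the underlying idea is the same.
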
 \begin{proof} See Appendix VII. \end{proof}

Notice that any communication sequence that satisfies the S$\mathcal{V}$CC condition $(\ref{symcondef2})$ will also satisfy the condition (C) stated in Theorem $\ref{aaacor5}$; take $\ell = \hat{i}$, $t_0(j \ell) = t_{1/2}(+)$, and $C^{j \ell}_{[t_0(j \ell),t_1(j \ell)]} = S^{j \hat{i}}( t^{j \hat{i} }_0, t^{j \hat{i}}_1)$ for all $j \in \mathcal{V}_{-\ell}$. This relation motivates the position of the OH algorithm in the Venn diagrams presented in Fig.$\ref{simpfig2}$. See also Remark $\ref{rem1}$ in Sec.$\ref{sec:app1a}$. The condition (C) is more general than S$\mathcal{V}$CC, it implies that each node $i \in \mathcal{V}$ will either receive a signal directly from every other node $j \in \mathcal{V}_{-i}$, or have a communication path from some node $\ell$ after the node $\ell$ has received a signal directly from every other node $j \in \mathcal{V}_{-\ell}$. We have defined the S$\mathcal{V}$CC condition not only because it is sufficient for average-consensus under the OH algorithm, but also because it is necessary for the definition of the communication condition I$\mathcal{V}$CC described next.

\subsubsection{Algorithm 4 (DDA)}\label{sec:sdda} The sufficient conditions for convergence under Algorithm 4 requires the following definition. \begin{defn} (I$\mathcal{V}$CC) A communication sequence $C_{[0,t_1]}$ is ``infinitely $\mathcal{V}$-completely connected'' (I$\mathcal{V}$CC) iff for each time instant $t \in [0,t_1)$ there exists a finite span of time $T_t \in (0, t_1 - t)$ such that $C_{[t,t + T_t]}$ satisfies the $\mbox{S$\mathcal{V}$CC}$ condition $(\ref{symcondef2})$. $\quad \quad \quad \quad \quad \quad \quad \quad \quad \quad \quad \quad \quad \quad \quad \quad \quad \quad \quad \quad  \quad \quad \quad \quad \quad \quad  \quad \quad \quad \quad \quad \quad \ \ \ \ \ \ \ \ \ \ \ \ \ \ \ \ \ \ \ \ \ \ \ \ \ \ \Box$ \end{defn} It follows that a sequence $C_{[0,t_1]}$ is I$\mathcal{V}$CC iff we can define the infinite set of ordered pairs $\{ (t^\ell_0, t^\ell_1 )  \ : \ \ell \in \mathbb{N} \}$, $$ \begin{array}{llll} & t^0_0 = 0 \ , \ t^\ell_1 = \mbox{arg min}_t \{ C_{[t^\ell_0,t]} \in \mbox{S$\mathcal{V}$CC} \} \\ & t^\ell_0 = t^{\ell-1}_1 (+) \ , \ \ell = 1,2, \ldots . \end{array} $$ A sequence $C_{[0,t_1]}$ being I$\mathcal{V}$CC is then equivalent to the condition, \begin{equation}\label{star1g}\begin{array}{llll} & C_{[0,t_1]} = \bigcup_{  \ell \in \mathbb{N} }  C_{[t^\ell_0, t^\ell_1]} \\ &  C_{[t^\ell_0, t^\ell_1]} \in \mbox{S$\mathcal{V}$CC} \ , \ \forall \ \ell \in \mathbb{N} \ . \end{array}\end{equation} We note that the specific node $\hat{i}$ can vary between each $C_{[t^{\ell}_0,t^{\ell}_1]}$ sequence, and furthermore we do not assume that any node knows the identity of $\hat{i}$. The following theorem deals with the convergence of Algorithm 4.

\begin{thrm}\label{aaathm5}(DDA Sufficient Conditions) Consider Algorithm 4, namely the DDA algorithm $(\ref{dda1})-(\ref{ddainit1})$. Then average-consensus $(\ref{acdef})$ is achieved at some time $t \in (0, t_1)$ for any communication sequence $C_{[0,t_1]}$ satisfying the $\mbox{I$\mathcal{V}$CC}$ condition $(\ref{star1g})$. \end{thrm}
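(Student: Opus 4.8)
The plan is to track how the normal consensus estimates $\hat v_i(t) \in \mathbb R^n_{0,1/n}$ evolve, and to show that under the $\mathrm{I}\mathcal V\mathrm{CC}$ condition every $\hat v_i$ eventually equals $\frac1n\mathbf 1_n$. The key structural observation, to be established first, is that under the DDA update $(\ref{aaaddaveqa})$–$(\ref{aaaddaveqb})$ each $\hat v_i(t)$ remains in the discrete set $\mathbb R^n_{0,1/n}$ for all $t$ (this follows from Lemma $\ref{aaalemkda1}$, since $(\ref{aaaoptdisc})$ constrains $\tilde v \in \mathbb R^n_{0,1/n}$ and $(\ref{ddainit1})$ starts there), so $\hat v_i(t)$ is determined by its \emph{support} $\mathrm{supp}(\hat v_i(t)) \subseteq \mathcal V$, with $\hat v_i(0) = \{i\}$ and the target being $\mathrm{supp}(\hat v_i) = \mathcal V$. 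I would then reinterpret the three cases of $(\ref{aaaddaveqb})$ in terms of supports: when the supports of $\hat v_i^{-i}(t^{ij}_1)$ and $\hat v_j^{-i}(t^{ij}_0)$ are disjoint, the new support is the union $\mathrm{supp}(\hat v_i) \cup \mathrm{supp}(\hat v_j)$; otherwise the update either keeps node $i$'s support or replaces it with node $j$'s support plus possibly a correction at coordinate $i$ — in all cases the new support contains the larger of the two old supports (on $\mathcal V_{-i}$) together with $\{i\}$, so \textbf{the support of $\hat v_i$ is monotone non-decreasing in time along the reception order at node $i$}, and it never loses the index $i$.

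Next I would prove a "merge lemma": if node $\hat i$ receives signals from every $j \in \mathcal V_{-\hat i}$ during $[0, t_{1/2}]$, as guaranteed by the first line of the $\mathrm{S}\mathcal V\mathrm{CC}$ condition $(\ref{symcondef2})$, then after processing all of these, $\mathrm{supp}(\hat v_{\hat i}(t_{1/2}(+))) = \mathcal V$ and hence $\hat v_{\hat i} = \frac1n\mathbf 1_n$, i.e. node $\hat i$ has obtained average-consensus. The subtlety here is that node $\hat i$'s own consensus estimate may get "overwritten" by node $j$'s in the second or third case of $(\ref{aaaddaveqb})$, so I cannot simply say the support grows by one index per reception; instead I need the invariant that at every reception, coordinate $\hat i$ stays in the support (true because $\hat v_{i i}(t) = \frac1n$ is preserved — this should be checkable directly from $(\ref{aaaddaveqb})$, since in every case $\hat c$ is chosen so that $\hat v_{ii}(t^{ij}_1(+)) = \frac1n$) and coordinate $j$ enters the support after receiving from $j$ (the point of the correction term and the case split). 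Chaining these inclusions over the $n-1$ receptions gives $\mathrm{supp} = \mathcal V$. Then the second line of $(\ref{symcondef2})$ sends $\hat i$'s signal — now carrying $\hat v_{\hat i} = \frac1n\mathbf 1_n$ — to every other node $j$; by the signal specification $(\ref{dda1})$ and update $(\ref{aaaddaveqb})$, when node $j$ receives from a node whose normal estimate is $\frac1n\mathbf 1_n$, the disjoint-support case cannot reduce $j$'s estimate and the union becomes $\mathcal V$, so $\hat v_j(t_1(+)) = \frac1n\mathbf 1_n$ for all $j$. Hence a single $\mathrm{S}\mathcal V\mathrm{CC}$ block already drives \emph{all} nodes to consensus — so in fact the $\mathrm{I}\mathcal V\mathrm{CC}$ hypothesis is stronger than needed and the conclusion "at some time $t \in (0,t_1)$" follows from the very first block $C_{[t^0_0, t^0_1]} \in \mathrm{S}\mathcal V\mathrm{CC}$.

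There is one gap to close carefully: when node $j$ receives $\hat v_{\hat i} = \frac1n\mathbf 1_n$ but $\hat v_j$ still has small support, the case split in $(\ref{aaaddaveqb})$ is governed by $\hat v_j^{-j}(t^{ij}_1)'\,\hat v_{\hat i}^{-j}(t^{ij}_0)$ and the norms; I must verify that whichever case fires, the output support is all of $\mathcal V$. If the supports (restricted to $\mathcal V_{-j}$) are disjoint, case one gives the union, which is $\mathcal V$. If they intersect, then since $\hat v_{\hat i}^{-j}$ has full support $\mathcal V_{-j}$ we have $(\hat v_j^{-j})^2 \le (\hat v_{\hat i}^{-j})^2$ with equality only if $\hat v_j$ also had full support on $\mathcal V_{-j}$; in the strict case, case two fires with $(\hat a, \hat b, \hat c) = (0, 1, \frac1n - \hat v_{j \hat i \text{-coordinate}})$ wait — I must recheck indices here — the upshot being $\hat v_j \leftarrow \hat v_{\hat i} + (\text{correction at coordinate } j)$, which again has support $\mathcal V$; in the equality case $\hat v_j$ was already $\frac1n\mathbf 1_n$. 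So in every branch node $j$ reaches $\frac1n\mathbf 1_n$. \textbf{The main obstacle} is precisely this bookkeeping of which coordinate the correction term $\hat c e_i$ acts on in each case of $(\ref{aaaddaveqb})$ and confirming that it exactly repairs coordinate $i$ without disturbing the union structure on $\mathcal V_{-i}$ — everything else is an induction on the length of a communication path and a finite iteration over the $n-1$ in/out transmissions in the $\mathrm{S}\mathcal V\mathrm{CC}$ block. I would organize the write-up as: (1) invariance of $\mathbb R^n_{0,1/n}$ and of coordinate $i$; (2) support-monotonicity along receptions; (3) the $\hat i$-merge lemma; (4) propagation of $\frac1n\mathbf 1_n$ from $\hat i$ outward; (5) conclude from the first $\mathrm{S}\mathcal V\mathrm{CC}$ block.
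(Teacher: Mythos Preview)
Your proposal has a genuine gap: the claim that ``the support of $\hat v_i$ is monotone non-decreasing in time along the reception order at node $i$'' is false, and so is the derived claim that ``coordinate $j$ enters the support after receiving from $j$.'' What is true is that the \emph{size} of the support (equivalently $\hat v_i(t)^2$) is non-decreasing, but the support as a set can lose elements under case two of $(\ref{aaaddaveqb})$, and in case three the received index $j$ need not enter at all. Concretely, take $n=5$, $\hat i = 1$, and the following signals (all allowed inside an $\mathrm{S}\mathcal V\mathrm{CC}$ block, which only prescribes that certain signals exist and forbids none):
\[
1\!\leftarrow\!2,\quad 1\!\leftarrow\!3,\quad 4\!\leftarrow\!5,\quad 4\!\leftarrow\!3,\quad 1\!\leftarrow\!4,\quad 1\!\leftarrow\!5.
\]
After the first two steps $\mathrm{supp}(\hat v_1)=\{1,2,3\}$; after the next two $\mathrm{supp}(\hat v_4)=\{3,4,5\}$. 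At $1\!\leftarrow\!4$ the supports on $\mathcal V_{-1}$ intersect in $\{3\}$ with $(\hat v_4^{-1})^2=3/25>2/25=(\hat v_1^{-1})^2$, so case two fires and $\mathrm{supp}(\hat v_1)$ becomes $\{1,3,4,5\}$ --- index $2$ is lost. At $1\!\leftarrow\!5$ the supports intersect in $\{5\}$ with $(\hat v_5^{-1})^2=1/25<3/25$, so case three fires and nothing changes; index $5$ does \emph{not} enter because it was already there, but more to the point index $2$ never returns. Thus node $1$ has received from every $j\in\mathcal V_{-1}$ yet $\hat v_1\neq\frac1n\mathbf 1_n$, so your ``merge lemma'' fails. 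Completing the $\mathrm{S}\mathcal V\mathrm{CC}$ block by sending $1\to 2,3,4,5$ leaves nodes $1,3,4,5$ with support $\{1,3,4,5\}$, so a single $\mathrm{S}\mathcal V\mathrm{CC}$ block does \emph{not} force consensus.

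This is exactly why the paper genuinely needs the $\mathrm{I}\mathcal V\mathrm{CC}$ hypothesis and argues quite differently. Its proof first shows that the squared error $E^2_{\frac1n\mathbf 1_n}(\hat v_i)$ is non-increasing and, by discreteness ($\hat v_i\in\mathbb R^n_{0,1/n}$), any strict decrease is at least $1/n^2$ (Lemma~$\ref{aaalemdda4c}$). Since the total error is bounded, summing over infinitely many $\mathrm{S}\mathcal V\mathrm{CC}$ blocks forces the per-block error reduction to hit exactly zero from some block onward (Lemma~$\ref{aaalemdda4d}$). Only then --- once the estimates are \emph{frozen} across an entire $\mathrm{S}\mathcal V\mathrm{CC}$ block (Lemma~$\ref{aaalemdda3}$) --- do the two-way inequalities from Lemma~$\ref{aaalemdda4}$ and Lemma~$\ref{aaalemdda5}$, combined with the zero-local-error property $\hat v_{ii}=1/n$, pin down $\hat v_{\hat i}=\frac1n\mathbf 1_n$ and hence all $\hat v_j=\frac1n\mathbf 1_n$. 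Your combinatorial support-tracking cannot replace this analytic step, because the update does not preserve the support inclusion you need.
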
 \begin{proof} See Appendix VII. \end{proof}

The above result is not quite as interesting as Theorem $\ref{thm6}$, since even though the DDA algorithm is a discretized version of the DA algorithm, the $\mbox{I$\mathcal{V}$CC}$ condition $(\ref{star1g})$ is far stronger than the $\mbox{I$\mathcal{V}$SC}$ condition $(\ref{star1ga})$. Also observe that the OH algorithm obtains average-consensus for any $C_{[0,t_1]} \in \mbox{S$\mathcal{V}$CC}$, whereas the DDA algorithm will obtain average-consensus under the much stronger condition that $C_{[0,t_1]}$ satisfies $(\ref{star1g})$. However, as mentioned above, due to example $(\ref{double})$ of Appendix VII the DDA and DA algorithms both can obtain average-consensus under suitable $\mbox{S$\mathcal{V}$SC}$ sequences, thus they possess much weaker necessary conditions than the OH algorithm. In fact, any communication sequence $C_{[0,t_1]}$ that strictly satisfies $(\ref{symcondef2})$ implies Algorithms $1-4$ all obtain average-consensus at the exact same time instant. This is noteworthy because many past algorithms can only achieve average-consensus asymptotically (e.g. most iterative averaging schemes), in contrast all four of the algorithms considered here can achieve the (finite) bench-mark time for average-consensus under appropriate communication sequences (e.g. any sequence $C_{[0,t_1]}$ that strictly satisfies $(\ref{symcondef2})$).

\emph{Summary.} Theorems $\ref{bmthm},\ref{bmcor},\ref{thm6},\ref{aaacor5},\ref{aaathm5}$ above state necessary and sufficient communication conditions for average-consensus under Algorithms $1-4$ given the assumptions (A1)-(A9). Each theorem is associated with one of four connectivity assumptions on the communication sequence $C_{[0,t_1]}$, denoted by $\{$ $\mbox{S$\mathcal{V}$SC}$, $\mbox{I$\mathcal{V}$SC}$, $\mbox{S$\mathcal{V}$CC}$, $\mbox{I$\mathcal{V}$CC}$ $\}$ defined in $(\ref{scdef})$,$(\ref{star1ga})$,$(\ref{symcondef2})$, and $(\ref{star1g})$. Observe that assumptions $\mbox{I$\mathcal{V}$SC}$ and $\mbox{S$\mathcal{V}$CC}$ are sufficient conditions for $\mbox{S$\mathcal{V}$SC}$. Furthermore, $\mbox{I$\mathcal{V}$CC}$ implies that both $\mbox{I$\mathcal{V}$SC}$ and $\mbox{S$\mathcal{V}$CC}$ are satisfied, see Fig.$\ref{simpfig2}$. Notice that each connectivity condition assumes a set of directed signals with an arbitrary delay in the transmission time of each signal. This is significant because, apart from the flooding technique, no other consensus protocol in the current literature can ensure average-consensus in the presence of arbitrary delays in the transmission time of each signal for \emph{a priori} unknown communication sequences. Of course if the communication sequence is known \emph{a priori}, then specific update protocols can always be constructed that guarantee average-consensus at the same instant as the BM algorithm. Current results on average-consensus that do allow communication delays assume the delays have pre-determined upper-bounds, and also require the use of averaging weights that are globally balanced and pre-determined off-line (see for example \cite{MU04}, \cite{li08}, \cite{zh10}). On a related note, besides flooding there appears to be no consensus protocol in the literature that has been proven to guarantee average-consensus for \emph{every} communication sequence $C_{[0 , t_1]}$ satisfying any one of the conditions $(\ref{scdef})$,$(\ref{star1ga})$,$(\ref{symcondef2})$, or $(\ref{star1g})$. The majority of past results on average-consensus either require special cases of the $\mbox{I$\mathcal{V}$SC}$ condition, or else can only  guarantee approximate average-consensus under the $\mbox{S$\mathcal{V}$SC}$ condition \cite{SH08}, \cite{co04}, \cite{ba03}. On the other hand, the two non-trivial algorithms DA and DDA require a set with cardinality upper-bounded bounded by $O(n+d)$ to be communicated and stored at each node, and all previous algorithms besides the flooding and randomized protocols do not possess this drawback.

The Venn diagrams in Fig.$\ref{simpfig2}$ summarize Theorems $\ref{bmthm},\ref{bmcor},\ref{thm6},\ref{aaacor5},\ref{aaathm5}$, as well as their relation to the Gossip and ARIS algorithms that will be used as comparisons to the four proposed algorithms. It remains an open problem whether any protocol exists that guarantees average-consensus for all $\mbox{I$\mathcal{V}$SC}$ sequences without requiring a set with cardinality upper bounded by $O(n+d)$ to be stored and communicated at each node.

\begin{figure}[htb]
\includegraphics[width=0.7\linewidth]{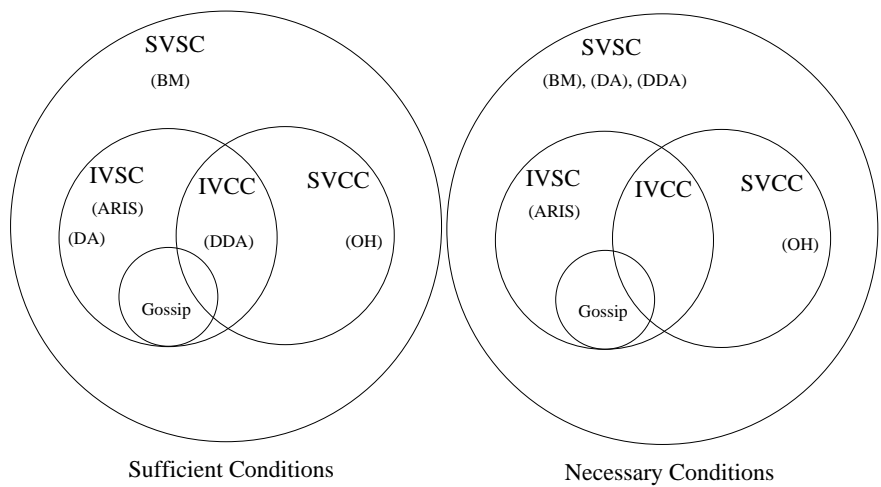}
\caption{Venn Diagram of Sufficient and Necessary Conditions for Algorithms 1-4 as well as the Comparison Algorithms Gossip and ARIS to Achieve Average-Consensus. The condition (C) in Theorem $\ref{aaacor5}$ is omitted for simplicity of presentation.}
\label{simpfig2}
\end{figure}

\section{Numerical Implementation and Examples}\label{sec:numimp} This section presents the communication and storage costs of the various algorithms proposed above. In Sec.$\ref{sec:rescost}$ below we define the resource costs for the six algorithms compared in this paper (four proposed in Sec.III and two others in Sec.$\mbox{VII-B}$). In Sec.$\ref{sec:figures}$ we present numerical simulations of the six algorithms under various randomized full network graphs when assuming instantaneous communication.

\subsection{Algorithm Resource Costs}\label{sec:rescost} Each of the four average-consensus algorithms as well as the two comparison algorithms presented in Sec.$\mbox{VII-B}$ require that the knowledge set $\mathcal{K}_i(t)$ and signal set $S^{i j }(t^{i j}_0, t^{i j}_1 )$ are respectively defined by a set of scalars, where each scalar has a particular meaning. We are thus motivated to quantify the ``resource cost'' of each algorithm in terms of the total number of scalar values that are required to define the two sets $\mathcal{K}_i(t)$ and $S^{i j }(t^{i j}_0, t^{i j}_1 )$. In particular, for any $t \geq 0$ we define the ``storage cost'' of an arbitrary node $i \in \mathcal{V}$ as $\phi_i(t)$, \begin{equation}\label{phidef}\begin{array}{llll} & \phi_i(t) = \mbox{the minimum number of scalars} \\ & \ \ \ \ \ \ \ \mbox{required to define the knowledge set $\mathcal{K}_i(t)$.} \end{array}\end{equation} Likewise we define the ``communication cost'' $\rho_{ij} (t^{ij}_0)$ of an arbitrary signal $S^{ij}(t^{ij}_0, t^{ij}_1)$ as follows, \begin{equation}\label{xidef}\begin{array}{llll} & \rho_{ij} (t^{ ij}_0) = \mbox{the minimum number of scalars} \\ & \ \ \ \ \ \ \ \ \ \mbox{required to define the set $S^{ij}(t^{ij}_0, t^{ij}_1)$.} \end{array}\end{equation} Next define the total resource cost of node $i$ at time $t \geq 0$ as $\omega_i(t)$,\begin{equation}\label{phixidef} \omega_i (t) = \phi_i(t) + \rho_{ij} (t) \ . \end{equation} Based on the knowledge set and signal specifications defined in Sec.III, the Table I presents the resource cost computations of each algorithm when using the definitions $(\ref{phidef})$, $(\ref{xidef})$, and $(\ref{phixidef})$. The entries of Table I are derived in Sec.$\ref{sec:rescostapp}$ of Appendix VII.

\begin{table}\label{tab1}\begin{tabular}{| l | l | l | l | l | l | l |} \hline Algorithm & BM & DA & OH & DDA & Gossip & ARIS \\ \hline $\min (\phi_i(t))$ & $4d+5$ & $4d+6$ & $4d+5$ & $4d+5$ & $2d$ & $7+2(r+2)d$ \\ \hline $\max (\phi_i(t))$ & $2nd+4+\lfloor \frac{n}{2} \rfloor$ & $4d+2n+4$ & $4d+4+\lfloor \frac{n}{2} \rfloor$ & $4d+4+\lfloor \frac{n}{2} \rfloor$ & $2d$ & $\lfloor \frac{n}{2} \rfloor+6+2(r+2)d$ \\ \hline $\min (\rho_{ij}(t))$ & $2d+1$ & $2d+1$ & $2d+2$ & $2d+1$ & $2d$ & $3+2(r+1)d$ \\ \hline $\max (\rho_{ij} (t))$ & $2(n-1)d+\lfloor \frac{n}{2} \rfloor$ & $2d+2n$ & $2d+2$ & $2d+\lfloor \frac{n}{2} \rfloor$ & $2d$ & $\lfloor \frac{n}{2} \rfloor+2+2(r+1)d$ \\ \hline $\min (\omega_i(t))$ & $6d+6$ & $6d+8$ & $6d+7$ & $6d+6$ & $4d$ & $10+4(r+\frac{3}{2})d$ \\ \hline $\max (\omega_i(t))$ & $2(2n-1)d+4+2\lfloor \frac{n}{2} \rfloor$ & $6d+4n+4$ & $6d+6+\lfloor \frac{n}{2} \rfloor$ & $6d+4+2\lfloor \frac{n}{2} \rfloor$ & $4d$ & $2 \lfloor \frac{n}{2} \rfloor+8+4(r+\frac{3}{2})d$ \\ \hline $O( \phi )$ & $nd$ & $n+d$ & $n+d$ & $n+d$ & $d$ & $rd+n$ \\ \hline $O(\rho)$ & $nd$ & $n+d$ & $d$ & $n+d$ & $d$ & $rd+n$ \\ \hline \end{tabular} \caption{Minimum and Maximum Resource Costs obtained by applying each Algorithm in Sec.III to $(\ref{phidef})$, $(\ref{xidef})$, and $(\ref{phixidef})$. Here $d$ denotes the dimension of the initial consensus vectors, $n$ denotes the number of nodes, $\rho_{ij}(t)$ denotes the signal set dimension, $\phi_{i}(t)$ the knowledge set dimension, and $\omega_i(t)$ is the sum of $\rho_{ij}(t)$ and $\phi_{i}(t)$. We let $\lfloor \cdot \rfloor$ denote the ``floor'' operation.} \end{table}

Note that the storage cost $\phi_{i}(t)$ is defined per node, and the communication cost $\rho_{ij}(t)$ is defined per signal. The total maximum resource costs of the BM algorithm increase on the order $O(nd)$, whereas the total maximum resource costs of the DA and DDA algorithm increase on the order $O(n+d)$. Although the total maximum resource costs of the OH algorithm increase on the order $O(n+d)$, the maximum resource cost of each signal under the OH algorithm increases only on the order $O(d)$. However, the communication conditions necessary for average-consensus under the OH algorithm are much stronger than the conditions necessary under the BM, DA, and DDA algorithms. This disparity makes it difficult to state definitive results regarding the least costly algorithm under general communication sequences. If condition (C) in Theorem $\ref{aaacor5}$ is known to hold \emph{a priori}, then the OH algorithm may be preferable to the DA and DDA due to the lower communication costs $O(d) \leq O(d+n)$. On the other hand, under example $(\ref{double})$ in Appendix VII the DDA algorithm is preferable to both the DA and OH algorithm, since the former implies larger resource costs and the latter will not obtain average-consensus.

The Gossip algorithm in \cite{SH06} has total resource costs that increase on the order $O(d)$, however this algorithm requires strictly bi-directional and instantaneous communication, see Sec.$\mbox{VII-B}$ as well as Fig.$\ref{simpfig1}$ in Sec.$\ref{sec:figures}$. The total resource costs of the ARIS algorithm increase on the order $O(rd + n)$, where $r$ is an ARIS algorithm parameter explained in Sec.$\mbox{VII-B}$. If $r \geq n$ then ARIS is more costly than the BM algorithm. For this reason, the simulations presented in Sec.$\ref{sec:figures}$ assume $r = n$. The RIS algorithm proposed in \cite{SH08} requires $r$ random variables to be initially generated at each node for each element of the respective local initial consensus vector $s_i(0) \in \mathbb{R}^d$, the RIS algorithm also requires these random variables to be communicated between nodes, thus both the storage and communication costs of the RIS algorithm increase on the order $O(rd)$. 

\subsection{Numerical Results}\label{sec:figures}

We present here numerical simulations of the four proposed algorithms together with the two comparison algorithms. The algorithm parameters were chosen as $n = 80$ (number of nodes), $d = 1$ (dimension of the initial consensus vectors), $r = n$ (number of ARIS random variables generated per initial consensus element), and $s_i(0) = i , \ \forall \ i \in \mathcal{V}$ (initial consensus vector values). The node communication is assumed to be instantaneous and in discrete-time, using the following randomized protocol: \begin{itemize} \item at each $k \in \mathbb{N}$, two nodes $(i,j) \in \mathcal{V}^2$ are uniformly chosen at random such that $i \neq j$, \item with probability one, the node $i$ sends a signal to the node $j$ at time $k$, \item with probability $p$ the node $j$ sends a signal to the node $i$ at time $k$. \end{itemize} We compare four choices of $p$, namely $p \in \{ 1,  \frac{1}{2},  \frac{1}{4}, 0 \}$. Note that $p=1$ implies instantaneous bi-directional communication. As $p$ decreases there will be fewer expected signals per time instant, thus we expect that each algorithm will have slower convergence for lower values of $p$. 

\begin{figure}[htbp!]
\includegraphics[width=1 \linewidth]{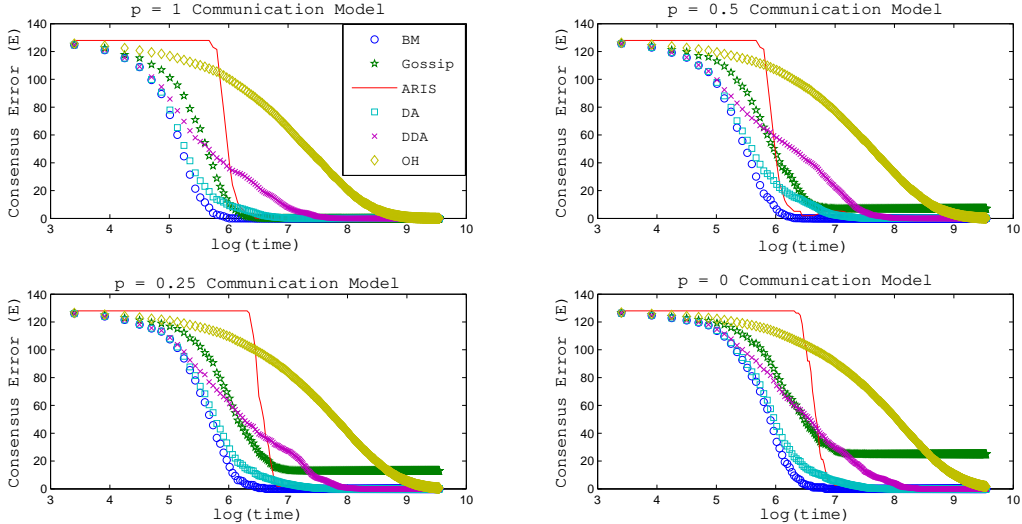}
\caption{Convergence of Consensus Algorithms for Communication Probabilities $p \in \{ 1, \frac{1}{2} ,\frac{1}{4} ,0 \}$ (top left to bottom right). The consensus error $E$ is defined above $(\ref{acdef})$. Observe that the Gossip consensus error vanishes only for $p = 1$, because this is the only model that implies strictly bi-directional communication. All algorithms converge moderately slower as $p$ decreases.}
\label{simpfig1}
\end{figure}

Fig.$\ref{simpfig1}$ shows the network consensus error under each algorithm for each value of $p \in \{ 1,  \frac{1}{2},  \frac{1}{4}, 0 \}$. It is clear that the Gossip algorithm only converges to average-consensus for $p=1$. For $p \in \{ \frac{1}{2} , \frac{1}{4} , 0 \}$ the Gossip algorithm converges to a consensus that is increasingly distant from the average-consensus. The BM algorithm converges fastest in all simulations, as expected from Theorems $\ref{bmthm},\ref{bmcor}$. Initially the DA algorithm can be observed to almost match the BM algorithm in all simulations, however as time proceeds for $p =1$ the Gossip and ARIS algorithm eventually overcome DA, and in $p \in \{ \frac{1}{2},\frac{1}{4}, 0\}$ the ARIS algorithm eventually overcomes DA. However, the resource costs of ARIS in our simulations is greater than even the BM algorithm, thus further research is needed to objectively evaluate the trade-off between resource cost and network consensus error for each algorithm under various communication conditions.

\section{Conclusion}\label{sec:conc} This paper has described and analyzed four distributed algorithms designed to solve the average-consensus problem under general uni-directional connectivity conditions. We derived necessary and sufficient conditions for the convergence to average-consensus under each respective algorithm. The conditions for convergence were based on two types of connectivity conditions, namely the singly $\mathcal{V}$-strongly connected sequence (defined in Sec.$\ref{sec:sbm}$), and the singly $\mathcal{V}$-completely connected sequence (defined in Sec.$\ref{sec:soh}$). Both connectivity conditions allow arbitrary delays in the transmission time of each signal, and we did not assume that the sending node knows the identity of the receiving node. The resource costs for each of the algorithms were derived and shown to differ in regard to their order of magnitude with respect to the parameters $n$ (number of nodes) and $d$ (dimension of the initial consensus vectors). Comparisons were made with two known consensus algorithms, referred to as the Gossip algorithm and the adapted randomized information spreading (ARIS) algorithm. Simulations were provided as well as Venn diagrams of the connectivity conditions required for average-consensus under each algorithm. The non-trivial algorithms considered here are relatively advantageous under weak communication conditions if the dimension $d$ of the initial consensus vectors exceeds the network size $n$. The works \cite{SB05} and \cite{ol05} provide two practical examples of when $d \geq n$ might typically be the case, e.g. distributed inference regarding a $\sqrt{n}$ dimensional process or parameter in noise.

The four communication conditions we proposed were deterministic; there were no stochastic properties assumed in regard to the signal process between nodes. However, our framework allowed directed signals as well as arbitrary delays in transmission time, hence \emph{every} causal signal sequence is a special case of our framework. This suggests that the four proposed algorithms can be applied to stochastic communication models for which there is a non-zero probability of consensus under any distributed algorithm, however future work is needed in this direction.

Future work is also needed to obtain a lower bound on the convergence rate to average-consensus for given characterizations of the communication sequence under each proposed algorithm, as well as improved algorithms designed specifically for particular communication sequences. For an objective evaluation of the various average-consensus algorithms, additional research is needed to compare the evolution of a cost function of the resource cost and network consensus error under a variety of communication sequences. Lastly, an interesting generalization of the average-consensus problem is to allow the initial consensus vectors $s_i(0)$ to vary with time, as discussed for instance by \cite{ol05}, \cite{cy89}, \cite{SP05}, \cite{VK09}, \cite{zh08}, \cite{mu07}. Applying the algorithms and communication conditions proposed in this work could yield further results with regard to ensuring the average $\bar{s}(t)$ is contained in the knowledge set $\mathcal{K}_i(t)$ of each node $i \in \mathcal{V}$ at some time instants $t$ for given dynamic models of the set $\{ s_i(t) \ : \ i \in \mathcal{V} \}$.

As a final note, it is worth mentioning that each of the four proposed algorithms can obtain a consensus on any \emph{linear combination} of the initial consensus vectors $\{s_i(0) \in \mathbb{R}^d \ : \ i \in \mathcal{V} \}$ under the exact same communication conditions as stated in the main results. In other words, suppose a vector $w \in \mathbb{R}^n$ was initially known at each node, then if each algorithm updated the normal consensus estimate based on $(\ref{optllca})$ with $\frac{1}{n} \mathbf{1}_n$ replaced by the vector $w$, the same conditions stated in Sec.IV will imply the respective algorithms ensure $\hat{s}_i(t_1(+)) = \sum_{i \in \mathcal{V}} s_i(0) w_i$ for all $i \in \mathcal{V}$. The proofs of these results follow by identical arguments to those presented in this work, simply by replacing $\frac{1}{n} \mathbf{1}_n$ by the vector $w$.

\section{Appendix}\label{sec:app1} In Sec.$\ref{sec:app1a}$ of this appendix we derive the consensus estimate initialization for each of the four proposed algorithms, as well as the proofs for Theorems $\ref{bmthm}$,$\ref{bmcor}$,$\ref{thm6}$,$\ref{aaacor5}$, and $\ref{aaathm5}$. In Sec.$\ref{sec:compalg}$ we define the Gossip algorithm proposed in \cite{SH06} in terms of the class of distributed algorithms $(\ref{update0}),(\ref{update1})$, and then we define the ARIS algorithm in these terms as well. In Sec.$\ref{sec:rescostapp}$ we derive the resource costs presented in Table I of Sec.$\ref{sec:rescost}$, and in Sec.$\ref{sec:examp}$ we define the ``unit-delay double cycle sequence'' as an example of a $\mbox{S$\mathcal{V}$SC}$ sequence that implies the DA, DDA, and BM algorithm all obtain average-consensus at the same instant. We present two conjectures in Sec.$\ref{sec:app1a}$, and two conjectures in Sec.$\ref{sec:compalg}$.

Through-out the appendix we denote the ``error'' of the normal consensus estimate $\hat{v}_i(t)$ as, \begin{equation}\label{newdef2} E_{\frac{1}{n} \mathbf{1}_n } (\hat{v}_i(t)) = \big| \sqrt{ \big( \hat{v}_i(t) - \frac{1}{n} \mathbf{1}_n \big) ^2 } \big| \ . \end{equation} The total reduction in normal consensus squared error resulting from the sequence $C_{[t_0,t_1]}$ is then, \begin{equation}\label{newdef0} \mathbf{E}^2 ( C_{[t_0,t_1]} ) \equiv \sum_{S^{i j} ( t^{i j}_0, t^{i j}_1 ) \in C_{[t_0,t_1]} } \mathbf{E}^2 \big( S^{i j} ( t^{i j}_0, t^{i j}_1 ) \big)  \ , \end{equation} where $\mathbf{E}^2 \big( S^{i j} ( t^{i j}_0, t^{i j}_1 ) \big)$ is defined using the normal consensus error $E_{\frac{1}{n} \mathbf{1}_n} \big( \hat{v}(t) \big)$ in $(\ref{newdef2})$, \begin{equation}\label{newdef1} \mathbf{E}^2 \big( S^{i j} ( t^{i j}_0, t^{i j}_1 ) \big) \equiv E^2_{\frac{1}{n} \mathbf{1}_n} \big( \hat{v}_i(t^{ij}_1) \big) - E^2_{\frac{1}{n} \mathbf{1}_n} \big( \hat{v}_i(t^{ij}_1(+) \big) \ .  \end{equation} 

It is convenient to use the following definition.

\begin{defn}\label{def11} Under $(\ref{a10})$, $(\ref{a0a})$, and $(\ref{a0b})$, the average-consensus problem is solved at time $t$ iff the ``network normal consensus error'' $\sum_{i=1}^n E_{\frac{1}{n} \mathbf{1}_n } \big( \hat{v}_i(t) \big)$ is zero, that is, \begin{equation}\label{acdef1} \sum_{i=1}^n E_{\frac{1}{n} \mathbf{1}_n } \big( \hat{v}_i(t) \big) = 0 \ , \end{equation} where $E_{\frac{1}{n} \mathbf{1}_n } \big( \hat{v}_i(t) \big)$ is defined in $(\ref{newdef2})$. \end{defn}

\subsection{Algorithm Convergence Proofs}\label{sec:app1a}

\begin{lem}\label{lembm0}(Consensus Estimate Initialization) Given the initial knowledge state (A4), the unique solution to both $(\ref{optllca})$ and $(\ref{aaaoptdisc})$ is, \begin{equation}\label{initsol} \hat{v}_i ( 0 ) = \frac{1}{n} e_i \ , \ \hat{s}_i ( 0 ) = \frac{1}{n} s_i(0) \ . \end{equation} \end{lem}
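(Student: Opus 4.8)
The plan is to verify directly that the optimization problems $(\ref{optllca})$ and $(\ref{aaaoptdisc})$, when specialized to the initial knowledge set $(A4)$, both have the claimed unique minimizer. The key observation is that at $t=0$ the only data available at node $i$ is $\mathcal{K}_i(0) = \{ i, n, s_i(0), \hat{s}_i(0), \hat{v}_i(0)\}$, so the constraint ``$(\ref{a10})$ holds'' forces $\hat{s}_i(0) = \mathbf{S}\hat{v}_i(0)$, and the only column of $\mathbf{S}$ that node $i$ is entitled to use is its own initial consensus vector $s_i(0)$ (indeed no other $s_\ell(0)$ is in $\mathcal{K}_i(0)$). Hence the feasible normal consensus estimates $\tilde v$ are exactly the vectors supported on coordinate $i$, i.e. $\tilde v = \alpha e_i$ for some scalar $\alpha \in \mathbb{R}$ in the case of $(\ref{optllca})$, and $\tilde v \in \{ 0, e_i/n\}$ — equivalently $\alpha \in \{0, 1/n\}$ — in the discretized case $(\ref{aaaoptdisc})$.

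First I would make precise the reduction of the feasible set: under $(A4)$ the pair $(\tilde s, \tilde v)$ is feasible for $(\ref{optllca})$ iff $\tilde v$ lies in the span of $e_i$ and $\tilde s = \mathbf{S}\tilde v = (\tilde v)_i \, s_i(0)$; I would note this is the correct interpretation of the phrase ``given $\mathcal{K}_i(t^{ij}_1)\bigcup S^{ij}$'' applied at the initialization step, where there is no incoming signal. Second, I would substitute $\tilde v = \alpha e_i$ into the objective and compute
\begin{equation}\label{initobj}
\big( \alpha e_i - \tfrac{1}{n}\mathbf{1}_n \big)^2 = \big(\alpha - \tfrac1n\big)^2 + \frac{n-1}{n^2},
\end{equation}
which is a strictly convex quadratic in $\alpha$ with unique minimizer $\alpha = 1/n$. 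This gives $\hat v_i(0) = \tfrac1n e_i$, and then $(\ref{a10})$ yields $\hat s_i(0) = \mathbf{S}\,\tfrac1n e_i = \tfrac1n s_i(0)$, establishing $(\ref{initsol})$ for $(\ref{optllca})$. Third, for the discretized problem $(\ref{aaaoptdisc})$ the extra constraint $\tilde v \in \mathbb{R}^n_{0,\frac1n}$ restricts $\alpha$ to $\{0, 1/n\}$; comparing the two values of $(\ref{initobj})$, namely $1/n^2 + (n-1)/n^2 = 1/n$ at $\alpha=0$ versus $(n-1)/n^2 < 1/n$ at $\alpha = 1/n$, shows the minimum is again uniquely attained at $\alpha = 1/n$, so $(\ref{initsol})$ holds for $(\ref{aaaoptdisc})$ as well.

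I do not anticipate a serious obstacle here — this is essentially a one-variable convex minimization once the feasible set is correctly identified. The one point requiring care, and the step I would emphasize, is the justification that the feasible $\tilde v$ must be supported only on coordinate $i$: this is where assumption $(A4)$ (and the fact that $\mathcal{K}_i(0)$ contains $s_i(0)$ but none of the other initial vectors) does the real work, and it must be argued that $(\ref{a10})$ together with the available knowledge cannot certify any relation involving $s_\ell(0)$ for $\ell \neq i$. Once that is granted, strict convexity of the quadratic in $(\ref{initobj})$ (respectively the strict inequality between its two admissible values in the discretized case) delivers uniqueness immediately.
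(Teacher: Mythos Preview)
Your proposal is correct and follows essentially the same approach as the paper: both identify the feasible set under $(A4)$ as $\mathrm{span}\{e_i\}$ and then minimize the quadratic objective over that one-dimensional set. The only cosmetic differences are that the paper writes the minimizer via the pseudo-inverse $V V^+ \tfrac{1}{n}\mathbf{1}_n$ with $V=[e_i]$ rather than your direct scalar computation, and for $(\ref{aaaoptdisc})$ the paper argues by containment of feasible sets (the continuous optimum lies in $\mathbb{R}^n_{0,1/n}$, hence is also the discrete optimum) rather than by your explicit comparison of the two admissible values.
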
 \begin{proof} Under (A4) the update $(\ref{optllca})$ becomes, \begin{equation}\label{bminit}\begin{array}{llll} \hat{v}_i ( 0 ) = & \arg_{\mbox{$\tilde{v}$}} \ \min \ \big( \tilde{v} - \frac{1}{n} \mathbf{1}_n \big)^2 \ , \\ & \mbox{s.t. $(\ref{a10})$ holds, given $\mathcal{K}_i(0) \supseteq\{ i, n, s_i(0) \}$.} \end{array}\end{equation} Observe that the vector $e_i$ can be locally constructed at each node $i \in \mathcal{V}$ based only on the data $\{ i, n \}$. Next observe that $s_i(0) = \mathbf{S} e_i$ for all $i \in \mathcal{V}$. Given that $s_i (0) = \mathbf{S} e_i$ is known by node $i$ at $t = 0$, the set of vectors $\hat{v}_i(0)$ for which $(\ref{a10})$ holds is $\mbox{span} \{ e_i \}$, thus $(\ref{bminit})$ becomes, \begin{equation}\label{optbm2n}\begin{array}{llll} \hat{v}_i (0) & = \arg_{\mbox{$\tilde{v}$}} \ \min_{\mbox{$\tilde{v} \in \mbox{span} \{ e_i \}$}} \ \big( \tilde{v} - \frac{1}{n} \mathbf{1}_n \big) ^2 \ , \\ & = V V^{+} \frac{1}{n} \mathbf{1}_n \ \ , \ \ V = [ e_i ] \\ & = V ( V ' V)^{-1} V ' \frac{1}{n} \mathbf{1}_n  = \frac{1}{n} e_i \ . \end{array}\end{equation} If $\hat{v}_i (0) = \frac{1}{n} e_i$ then $(\ref{a10})$ implies $\hat{s}_i (0) = \frac{1}{n} s_i(0)$. Note that $\frac{1}{n} e_i$ and $\frac{1}{n} s_i(0)$ can both be initially computed at node $i$ given that $\mathcal{K}_i(0) \supseteq\{ i, n, s_i(0) \}$. The solution $(\ref{initsol})$ implies $\hat{v}_i ( 0 ) \in \mathbb{R}^n_{0,\frac{1}{n}}$ and thus $(\ref{initsol})$ is a feasible solution to $(\ref{aaaoptdisc})$. It then follows that $(\ref{initsol})$ is also the unique solution to $(\ref{aaaoptdisc})$ since the feasible space of $(\ref{aaaoptdisc})$ is contained in the feasible space of $(\ref{optllca})$, and both $(\ref{optllca})$ and $(\ref{aaaoptdisc})$ minimize the same objective function. \end{proof}

\begin{lem}\label{lembm}(BM Normal Consensus Estimate Update) Applying $(\ref{bm1})$ and $(\ref{bm2})$ to $(\ref{optllca})$ yields the BM normal consensus estimate update $(\ref{bm3})$. \end{lem}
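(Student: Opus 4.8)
The plan is to show that the optimization problem $(\ref{optllca})$, when specialized to the BM signal specification $(\ref{bm1})$ and knowledge set update $(\ref{bm2})$, is a least-squares problem whose unique minimizer is exactly $(\ref{bm3})$. First I would identify the feasibility set: by $(\ref{a10})$, the constraint is that $\hat v_i(t^{ij}_1(+))$ be a vector $\tilde v \in \mathbb{R}^n$ with $\mathbf{S}\tilde v = \hat s_i(t^{ij}_1(+))$, where the data available at node $i$ after reception is $\mathcal{K}_i(t^{ij}_1) \cup S^{ij}(t^{ij}_0,t^{ij}_1)$. The key observation is that, under $(\ref{bm1})$ (the nontrivial branch, $\hat v_j \neq \tfrac1n\mathbf{1}_n$), the received signal delivers $\hat v_j(t^{ij}_0)$ together with every initial vector $s_\ell(0)$ for which $\hat v_{j\ell}(t^{ij}_0) \neq 0$; combined with what node $i$ already holds, node $i$ therefore knows $s_\ell(0)$ precisely for those $\ell$ in the support of $\hat v_i(t^{ij}_1) + \hat v_j(t^{ij}_0)$, i.e.\ exactly the $\ell$ with $v^{ij}_\ell(t^{ij}_1,t^{ij}_0)=1$ in the notation of $(\ref{bm2})$. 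So the columns of $\mathbf{S}$ that are "known" to node $i$ are indexed by $I \equiv \{\ell : v^{ij}_\ell = 1\}$.

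Next I would argue that, since for $\ell \notin I$ the vector $s_\ell(0)$ is entirely unknown to node $i$, the only way the constraint $\mathbf{S}\tilde v = \hat s_i(t^{ij}_1(+))$ can be guaranteed to hold \emph{locally computably} — i.e.\ for the update to be well defined as a function of $\mathcal{K}_i(t^{ij}_1)\cup S^{ij}$ alone — is to require $\tilde v_\ell = 0$ for all $\ell \notin I$; this mirrors the argument already used in Lemma~\ref{lembm0}, where the feasible set was reduced to $\mathrm{span}\{e_i\}$ because only $s_i(0)=\mathbf{S}e_i$ was known. Thus the feasible set of $(\ref{optllca})$ is $\mathrm{span}\{e_\ell : \ell \in I\}$, and the problem becomes
\begin{equation*}
\hat v_i(t^{ij}_1(+)) = \arg\min_{\tilde v \in \mathrm{span}\{e_\ell : \ell \in I\}} \big(\tilde v - \tfrac1n\mathbf{1}_n\big)^2 .
\end{equation*}
This is an orthogonal projection of $\tfrac1n\mathbf{1}_n$ onto a coordinate subspace, whose unique solution is the vector that equals $\tfrac1n$ on $I$ and $0$ off $I$ — that is, $\tfrac1n v^{ij}(t^{ij}_1,t^{ij}_0)$, which is precisely $(\ref{bm3})$. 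The trivial branch ($\hat v_j = \tfrac1n\mathbf{1}_n$, so the signal carries $\hat v_j = \tfrac1n\mathbf{1}_n$ and node $i$ can set $\hat v_i = \tfrac1n\mathbf{1}_n$) I would dispatch separately by noting $\tfrac1n\mathbf{1}_n$ is then feasible and attains objective value $0$.

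The main obstacle I anticipate is the \emph{feasibility/identifiability} step — making rigorous the claim that node $i$ cannot do better than the coordinate subspace $\mathrm{span}\{e_\ell : \ell\in I\}$. One has to argue that if $\tilde v_\ell \neq 0$ for some $\ell \notin I$, then there is no way to certify $\mathbf{S}\tilde v = \hat s_i(t^{ij}_1(+))$ using only the locally available data, because $s_\ell(0)$ could be anything; this requires being careful that the update rule must be a fixed function of the knowledge set and signal (as formalized by $(\ref{update1})$), not something that depends on the actual unknown values $s_\ell(0)$. Once that identifiability reduction is in place, the remaining least-squares computation is immediate (it is the same one-line projection as in $(\ref{optbm2n})$ of Lemma~\ref{lembm0}, now over a larger but still coordinate subspace), and uniqueness follows from strict convexity of the quadratic objective on the subspace.
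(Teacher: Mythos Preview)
Your proposal is correct and follows essentially the same route as the paper's proof: both reduce the constrained problem $(\ref{optllca})$ to an orthogonal projection of $\tfrac1n\mathbf{1}_n$ onto the coordinate subspace $\mathrm{span}\{e_\ell : v^{ij}_\ell = 1\}$, handle the branch $\hat v_j(t^{ij}_0)=\tfrac1n\mathbf{1}_n$ separately, and read off $(\ref{bm3})$ from the explicit projection. The paper additionally singles out the case $\hat v_i(t^{ij}_1)=\tfrac1n\mathbf{1}_n$ (no update needed) and writes the projection via the pseudo-inverse $V_{(BM)}V_{(BM)}^+\tfrac1n\mathbf{1}_n$, but these are cosmetic differences; your identifiability discussion is in fact a slightly more explicit version of what the paper asserts in one line when passing from $(\ref{optbm1})$ to $(\ref{optbm2})$.
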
 \begin{proof} If $\hat{v}_i(t^{ij}_1) = \frac{1}{n} \mathbf{1}_n$ then $(\ref{a10})$ implies $\hat{s}_i(t^{ij}_1) = \bar{s}(0)$ and thus node $i$ need not update its knowledge state regardless of the signal $S^{ij}(t^{ij}_0,t^{ij}_1 )$. If $\hat{v}_j(t^{ij}_0) = \frac{1}{n} \mathbf{1}_n$ then $(\ref{bm1})$ together with $(\ref{a10})$ implies, \begin{equation}\label{acbm1}\begin{array}{llll} S^{ij}(t^{ij}_0,t^{ij}_1 ) & \supseteq \hat{s}_j(t^{ij}_0) \ , \\ & \supseteq \bar{s}(0) \ . \end{array} \end{equation} In this case $(\ref{optllca})$ can be re-written, \begin{equation}\label{optbmac1}\begin{array}{llll}  \hat{v}_i ( t^{ij}_1(+)) = & \arg_{\mbox{$\tilde{v}$}} \ \min \ \big( \tilde{v} - \frac{1}{n} \mathbf{1}_n \big)^2 \ , \\ & \mbox{s.t. $(\ref{a10})$ holds, given $\hat{s}_j(t^{ij}_0) = \bar{s}(0)$.} \end{array}\end{equation} Since $\hat{s}_j(t^{ij}_0) = \bar{s}(0)$ is known we can let $\hat{s}_i(t^{ij}_1(+)) = \bar{s}(0)$ and thus obtain $\hat{v}_i(t^{ij}_1(+)) = \frac{1}{n} \mathbf{1}_n$ as the unique global solution to $(\ref{optbmac1})$ (note that $\frac{1}{n} \mathbf{1}_n$ can be computed since (A4) and (A7) imply $\mathcal{K}_i(t^{ij}_1) \supseteq \{ n \}$). Notice that this solution coincides with $(\ref{bm3})$. Next assume that $\hat{v}_i(t^{ij}_1) \neq \frac{1}{n} \mathbf{1}_n$ and $\hat{v}_j(t^{ij}_0) \neq \frac{1}{n} \mathbf{1}_n$. Under $(\ref{bm1})$ and $(\ref{bm2})$ we can then re-write $(\ref{optllca})$ as, \begin{equation}\label{optbm1}\begin{array}{llll} \hat{v}_i ( t^{ij}_1(+)) = & \arg_{\mbox{$\tilde{v}$}} \ \min \ \big( \tilde{v} - \frac{1}{n} \mathbf{1}_n \big)^2 , \\ & \mbox{s.t. $(\ref{a10})$ holds, given $s_\ell (0) = \mathbf{S} e_\ell$,} \\ & \mbox{for all $\{ \ell \ : \ v^{ij}_\ell ( t^{ij}_1 , t^{ij}_0) = 1 \}$.} \end{array}\end{equation} Given that $s_\ell (0) = \mathbf{S} e_\ell$ for all $\{ \ell \ : \ v^{ij}_\ell ( t^{ij}_1 , t^{ij}_0) = 1 \}$ is known, the set of vectors $\hat{v}_i(t^{ij}_1(+))$ for which $(\ref{a10})$ holds is $\mbox{span} \{ e_\ell \ : \ v^{ij}_\ell ( t^{ij}_1 , t^{ij}_0) = 1 \}$, thus $(\ref{optbm1})$ becomes, \begin{equation}\label{optbm2}\begin{array}{llll} \hat{v}_i ( t^{ij}_1(+)) & = \arg_{\mbox{$\tilde{v}$}} \ \min_{\mbox{$\tilde{v} \in \mbox{span} \{ e_\ell \ : \ v^{ij}_\ell ( t^{ij}_1 , t^{ij}_0) = 1 \}$}} \ \big( \tilde{v} - \frac{1}{n} \mathbf{1}_n \big)^2 \ , \\ & = V_{(BM)} V_{(BM)}^+ \frac{1}{n} \mathbf{1}_n \ , \\ & \ \ \ \ \ \ \ \ \ \ \ \ V_{(BM)} = [ e_\ell \ : \ v^{ij}_\ell ( t^{ij}_1 , t^{ij}_0) = 1 ] \ . \end{array}\end{equation} Let $L = v^{ij} ( t^{ij}_1 , t^{ij}_0) ' \mathbf{1}_n$ denote cardinality of the set $\{ \ell \ : \ v^{ij}_\ell ( t^{ij}_1 , t^{ij}_0) = 1 \}$. Since the columns of $V_{(BM)}$ are linearly independent, the right-hand side (RHS) of $(\ref{optbm2})$ can be computed as, $$ \begin{array}{llll} V_{(BM)} V_{(BM)}^+ \frac{1}{n} \mathbf{1}_n & = \frac{1}{n} V_{(BM)} (V_{(BM)} ' V_{(BM)})^{-1} V_{(BM)} ' \mathbf{1}_n \\ & = \frac{1}{n} V_{(BM)} ( I_{L \times L})^{-1} \mathbf{1}_L ' \ , \\ & = \frac{1}{n} v^{ij} ( t^{ij}_1 , t^{ij}_0) \ , \end{array} $$ where $I_{L \times L}$ denotes the identity matrix of dimension $L$. \end{proof}

\emph{Theorem $\ref{bmthm}$}(BM Network Convergence to Average-Consensus) \begin{proof} By the BM update $(\ref{bm3})$ when any node receives a signal $S^{ij}(t^{ij}_0,t^{ij}_1 )$, the receiving node $i$ either receives the desired average-consensus value $\bar{s}(0)$, or receives the initial consensus vector $s_\ell (0)$ of every node that has a communication path to node $j$ within the time span $[0, t^{ij}_0)$. If $C_{[0,t_1]}$ satisfies $(\ref{scdef})$ then every node has a communication path to every other node within the time span $[0, t_1]$, thus (A7) and $(\ref{bm3})$ together imply that at time $t_1(+)$ every node $i$ will compute $\hat{v}_i(t_1(+)) = \frac{1}{n} \mathbf{1}_n$. This implies $(\ref{acdef1})$ holds at $t_1(+)$ and hence by Definition $(\ref{def11})$ a network average-consensus is obtained at time $t = t_1(+)$. \end{proof}

\emph{Theorem $\ref{bmcor}$}(BM,DA,OH and DDA Necessary Conditions) \begin{proof} If $C_{[0,t_1]} \notin \mbox{S$\mathcal{V}$SC}$ then there exists a node $i \in \mathcal{V}$ that does not have a communication path to some node $j \in \mathcal{V}_{-i}$ within the time span $[0,t_1]$. At time $t_1$ the node $j$ thus cannot have any knowledge that is contained in $\mathcal{K}_i(t)$ for any $t \leq t_1$, regardless of the knowledge set update rule $(\ref{update10})$ and signal specification $(\ref{update0})$. Hence $\hat{s}_j(t_1(+))$ cannot be a function of $s_i(0)$ for an arbitrary initial consensus vector $s_i(0)$. It then follows that no distributed algorithm $(\ref{update10}),(\ref{update0})$ can imply $(\ref{acdef})$ is satisfied at time $t \leq t_1(+)$ for an arbitrary set of initial consensus vectors $\{s_i(0) \ : \ i \in \mathcal{V} \}$. \end{proof}

We now present a conjecture regarding the BM algorithm. From Theorem $\ref{bmthm}$ and $(\ref{bm1})-(\ref{bminit1})$ it follows that the BM algorithm implies the following property (P) of each knowledge set $\mathcal{K}_i(t)$: \begin{framed} \begin{itemize} \item (P): $$ C^{ij}_{[t_0(ij),t_1(ij)]} \subset C_{[0,t_1]} \Rightarrow \left\{ \begin{array}{l l} \mathcal{K}_i(t_1(+)) \supseteq \{ s_j(0) \} \ & , \ \mbox{if $\exists \ \ell$ s.t. $C^{i \ell}_{[t_0(i \ell),t_1(i \ell)]} \nsubseteq C_{[0,t_1]}$} \\ \mathcal{K}_i(t_1(+)) \supseteq \{ \bar{s}(0) \} \ & , \ \mbox{if $C^{i \ell}_{[t_0(i \ell),t_1(i \ell)]} \subset C_{[0,t_1]} \ \forall \ \ell \in \mathcal{V}_{-i}$} \end{array} \right. $$  \end{itemize} \end{framed} The condition (P) forms an equivalence class among all algorithms $\mathcal{A}$ defined by $(\ref{update10})$ and $(\ref{update0})$. From the second line in (P) it follows that any algorithm $\mathcal{A}$ that satisfies (P) will imply $(\ref{acdef})$ holds at time $t=t_1(+)$ for any communication sequence $C_{[0,t_1]}$ satisfying $(\ref{scdef})$. We now conjecture that if an algorithm $\mathcal{A}$ implies $(\ref{acdef})$ holds at time $t=t_1(+)$ for every communication sequence $C_{[0,t_1]}$ satisfying $(\ref{scdef})$, then algorithm $\mathcal{A}$ must have resource costs at least as great as any algorithm 
$\mathcal{A}$ that satisfies (P).

\begin{conj}\label{conject0} If an algorithm $\mathcal{A}$ guarantees $(\ref{acdef})$ holds at time $t=t_1(+)$ for every communication sequence $C_{[0,t_1]}$ satisfying $(\ref{scdef})$, then the algorithm $\mathcal{A}$ will require that a set with cardinality upper bounded by at least $O(nd)$ can be communicated and stored at each node. \end{conj}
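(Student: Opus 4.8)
\emph{Proof plan for Conjecture~\ref{conject0}.} The plan is to prove a quantitative form of the conjecture by an adversarial-scheduling (indistinguishability) argument, in the same spirit as — but strictly finer than — Theorem~\ref{bmcor}. Throughout, suppose $\mathcal{A}$ guarantees $(\ref{acdef})$ at $t=t_1(+)$ for \emph{every} communication sequence satisfying $(\ref{scdef})$ and \emph{every} choice of initial vectors $\{s_\ell(0)\in\mathbb{R}^d:\ell\in\mathcal{V}\}$. First I would isolate two consequences of Theorem~\ref{bmcor}: in any execution a node's knowledge set at time $t$ can depend on $s_\ell(0)$ only if $\ell$ has a communication path to it within $[0,t]$; and correctness at $t_1(+)$ forces each node's knowledge set at $t_1(+)$ to \emph{determine} $\bar s(0)$. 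Fix an arbitrary ordered pair $(i,j)$ with $j\in\mathcal{V}_{-i}$; since these are arbitrary, a cost bound proved ``at $i$ and for a signal out of $j$'' yields the ``at each node'' statement.

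Step~1 (the adversarial family). Let $P,Q$ be a partition of $\mathcal{V}\setminus\{i,j\}$ with $|P|=|Q|=\Theta(n)$. For each subset $C\subseteq Q$ I would build, by a careful but routine scheduling, an $\mbox{S$\mathcal{V}$SC}$ sequence $C^{C}_{[0,t_1]}$ with a common prefix up to a time $t^{\ast}$ such that: (i) by $t^{\ast}$ node $i$ has completed communication paths from exactly the nodes in $Q$ and no others; (ii) by $t^{\ast}$ node $j$ has completed paths from exactly $P\cup C$ and no others; (iii) node $j$ sends its only signal, received by node $i$, at $t^{\ast}$, which is node $i$'s last reception; (iv) all remaining paths needed to make the whole sequence $\mbox{S$\mathcal{V}$SC}$ (including $j$'s missing paths and everybody's paths to and from $i$) are routed through $\mathcal{V}\setminus\{i\}$ in windows disjoint from the frozen ones, with $t_1>t^{\ast}$. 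This is schedulable because $n\geq 3$: node $i$ can broadcast early and then sit idle, so its incoming traffic can be shut off after $t^{\ast}$. After the final reception node $i$ has heard from $Q\cup\{j\}\cup P\cup C\cup\{i\}=\mathcal{V}$, so $\mathcal{A}$ must output consensus at $t_1(+)=t^{\ast}(+)$.

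Step~2 (the bottleneck argument). Along $C^{C}$ node $i$ must compute $\bar s(0)=\tfrac1n\big(\sum_{\ell\in Q}s_\ell(0)+\sum_{\ell\in P}s_\ell(0)+s_i(0)+s_j(0)\big)$ as $g\big(\mathcal{K}_i(t^{\ast}),S^{ij}\big)$, where (for the fixed common scheduling) $\mathcal{K}_i(t^{\ast})=\Phi_i(\{s_\ell(0):\ell\in Q\cup\{i\}\})$ and $S^{ij}=f_S\big(\Phi_j(\{s_\ell(0):\ell\in P\cup C\cup\{j\}\})\big)$. The crux: to get $\sum_{\ell\in P}s_\ell(0)=\sum_{\ell\in P\cup C}s_\ell(0)-\sum_{\ell\in C}s_\ell(0)$, and since node $j$ cannot know the receiver's set $Q$ (hence cannot know the split of its heard-from set $P\cup C$ into the receiver-redundant part $C$ and the genuinely new part $P$), the signal $S^{ij}$ is symmetric in $P\cup C$; hence either $S^{ij}$ determines each individual $s_\ell(0),\ \ell\in P\cup C$ (so $\rho_{ij}=\Omega(nd)$ since $|P\cup C|\geq|P|=\Theta(n)$), \emph{or} node $i$ must supply the correction $\sum_{\ell\in C}s_\ell(0)$ from $\mathcal{K}_i(t^{\ast})$ for \emph{every} $C\subseteq Q$, which (specializing $C$ to singletons) forces $\mathcal{K}_i(t^{\ast})$ to determine each $s_\ell(0),\ \ell\in Q$, so $\phi_i=\Omega(nd)$. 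Either way $\omega_i=\phi_i+\rho_{ij}=\Omega(nd)$, i.e.\ a set of cardinality $\Omega(nd)$ must be storable and communicable at node $i$. This is exactly the assertion that property~(P) cannot be beaten: the adversary's overlapping heard-from sets cannot be reconciled without retaining the individual initial vectors somewhere.

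Step~3 (the main obstacle). The delicate point — and the reason the statement is posed as a conjecture rather than a theorem — is the implication ``a knowledge set (or signal) described by $k$ scalars cannot determine $k'$ independent real parameters with $k'>k$'': for a fully arbitrary rule $(\ref{update10}),(\ref{update0})$ the maps $\Phi_i,f_S$ need not be continuous (set-theoretically a single real can encode a whole vector), so injectivity of $\Phi_i:\mathbb{R}^{\Theta(n)d}\to\mathbb{R}^{\phi_i}$ does not \emph{a priori} force $\phi_i=\Omega(nd)$. I would close this under the mild regularity implicit in the resource-cost model of Sec.~\ref{sec:rescost} — knowledge sets are finite scalar lists updated by piecewise-$C^{1}$ (or semialgebraic, or affine) rules — where invariance of domain, or in the affine case a rank count, does the job; the removal of that hypothesis and the sharpening of the constant are the remaining open parts, while the combinatorial core in Steps~1--2 is independent of it.
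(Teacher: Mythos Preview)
The paper does not prove this statement: it is explicitly labeled a \emph{conjecture} and no proof is offered, only the remark that it would justify the search for less costly algorithms. So your proposal is not to be compared against any argument in the paper --- you are attempting something the authors left open.

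Your Step~1 (the adversarial $\mbox{S$\mathcal{V}$SC}$ family with controllable overlap $C\subseteq Q$) is well designed, and your Step~3 correctly isolates the set-theoretic encoding obstacle and the regularity hypothesis needed to convert injectivity into a dimension bound. The genuine gap is the dichotomy in Step~2. You assert that because $j$ cannot separate $P$ from $C$, ``either $S^{ij}$ determines each individual $s_\ell(0)$, $\ell\in P\cup C$, \emph{or} node $i$ must supply the correction $\sum_{\ell\in C}s_\ell(0)$ from $\mathcal{K}_i(t^\ast)$ for every $C$''. Neither horn is forced. For the first: $j$ can transmit the pair $(P\cup C\cup\{j\},\ \sum_{\ell\in P\cup C\cup\{j\}}s_\ell(0))$, which has size $O(n+d)$ and does \emph{not} determine the individual vectors; this already falsifies the implication ``$j$ cannot split $\Rightarrow$ $S^{ij}$ encodes individuals''. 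For the second: the correction is extracted by the \emph{joint} map $g(\mathcal{K}_i(t^\ast),S^{ij})$, and since $S^{ij}$ itself depends on $\{s_\ell(0):\ell\in C\}$, there is no a priori reason the correction must be computable from $\mathcal{K}_i(t^\ast)$ alone. Your ``specialize $C$ to singletons'' step therefore does not go through: across different $C$ the functions $\Psi_C$ differ, so you are not fixing the second argument of $g$ while varying the first.

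What is actually needed in Step~2 is a two-party one-way communication (or rank) lower bound: with $g,\Phi_i$ fixed and $\Psi_C$ ranging over the family induced by all schedulings $C\subseteq Q$, show that no pair of maps with $\phi_i+\rho_{ij}=o(nd)$ can satisfy $g(\Phi_i(s_{Q\cup\{i\}}),\Psi_C(s_{P\cup C\cup\{j\}}))=\bar s(0)$ for \emph{all} $(C,\{s_\ell\})$ simultaneously. In the affine case a promising route is to use $C=\emptyset$ to pin down $G_1\Phi_i$ on every $s_c$, $c\in Q$, and then show that the constraints from all singleton $C=\{c\}$ force $G_2\Psi_{\{c\}}$ to annihilate $s_c$ while preserving each $s_p$, which is a rank obstruction on the \emph{fixed} rules $f_S,f_{\mathcal{K}}$ --- but this must be argued, not asserted. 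Finally, note that the conjecture as phrased asks that a set of cardinality $\Omega(nd)$ be \emph{communicated and stored}; your conclusion $\omega_i=\phi_i+\rho_{ij}=\Omega(nd)$ is weaker, since it allows one of $\phi_i,\rho_{ij}$ to be $O(n+d)$.
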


The above conjecture implies that any algorithm $\mathcal{A}$ that satisfies (P) will require that a set with cardinality upper bounded by $O(nd)$ can be communicated and stored at each node, this is why searching for less costly algorithms is of importance. The problem is, less costly algorithms tend to require stronger communication conditions than the BM algorithm, and they also do not guarantee average-consensus is obtained as quickly as the BM algorithm. We note that due to the resource costs associated with the RIS and ARIS algorithms, the Conjecture $\ref{conject1}$ in Sec.$\ref{sec:ris}$ does not contradict Conjecture $\ref{conject0}$.

\emph{Proof.} (Theorem $\ref{thm6}$) Lemmas $\ref{lemkda}$ - $\ref{thmkda7c}$. \emph{Overview of Proof.} To prove Theorem $\ref{thm6}$ we initially show in Lemma $\ref{lemkda1}$ that the update $(\ref{optkda2})$ implies each normal consensus estimate $\hat{v}_i(t)$ satisfies the normalization property $(\ref{norm1})$. The Lemma $\ref{lemkda3}$ proves that each normal consensus estimate $\hat{v}_i(t)$ also satisfies the ``zero local error'' property $\hat{v}_{ii}(t) = 1/n$. Without the latter, the following lemmas would still imply convergence of all consensus estimates, but the final consensus value would not necessarily equal the average $\bar{s}(0)$ defined in $(\ref{avg0})$. The essence of the convergence proof is that the reduction in error that results from any signal will eventually vanish if $C_{[0,t_1]} \in \mbox{I$\mathcal{V}$SC}$, see Lemma $\ref{lemkda4b}$. Applying this result to the DA lower bounds on the reduction in error derived in Lemmas $\ref{lemkda4}$ and $\ref{lemkda4c}$, we can show that each normal consensus vector will necessarily converge to a common vector, see Lemma $\ref{lemkda6}$. Together with the two technical results derived in Lemmas $\ref{lemkda7b}$ and $\ref{lemkda7c}$, we then combine the triangle inequality and the ``zero local error'' property to prove that the common vector will approach $\frac{1}{n} \mathbf{1}_n$ in the $L^2$ norm as time approaches $t_1$. This implies $(\ref{acdef1})$ holds at $t_1(+)$ and hence by Definition $(\ref{def11})$ a network average-consensus is obtained at time $t = t_1(+)$.

\begin{lem}\label{lemkda}(DA Normal Consensus Estimate Update) Applying $(\ref{kda1})$ and $(\ref{kda2})$ to $(\ref{optllca})$ yields the DA normal consensus estimate update $\hat{v}_i(t^{ij}_1(+))$ defined in $(\ref{optkda2})$. \end{lem}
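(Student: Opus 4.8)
The plan is to mirror the proof of Lemma~\ref{lembm}: first identify the data available to the receiving node $i$ after reception of $S^{ij}(t^{ij}_0,t^{ij}_1)$ under $(\ref{kda1})$--$(\ref{kda2})$, then characterise the feasible set of $(\ref{optllca})$ as a linear subspace of $\mathbb{R}^n$, and finally recognise $(\ref{optkda2})$ as the orthogonal projection of $\frac{1}{n}\mathbf{1}_n$ onto that subspace.

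First I would record the representation invariant that $\hat{s}_i(t)=\mathbf{S}\,\hat{v}_i(t)$ holds for every node $i$ at every time $t$ at which a signal is transmitted or received. This is true at $t=0$ by $(\ref{dainit1})$ (equivalently Lemma~\ref{lembm0}), since $\frac{1}{n}s_i(0)=\mathbf{S}(\frac{1}{n}e_i)$, and it is preserved at each reception because $(\ref{rr1a})$ is exactly the image under $\mathbf{S}$ of $(\ref{optkda2})$, while between receptions $\mathcal{K}_i(t)$ is constant by (A7); the claim then follows by induction along the ordered sequence of receptions. Consequently, since $(\ref{kda1})$ gives $S^{ij}(t^{ij}_0,t^{ij}_1)=\mathcal{K}_j(t^{ij}_0)\setminus\{j,n,s_j(0)\}=\{\hat{v}_j(t^{ij}_0),\hat{s}_j(t^{ij}_0)\}$ and $(\ref{kda2})$ retains $s_i(0)$ in $\mathcal{K}_i(t^{ij}_1)$, the node $i$ can evaluate $\mathbf{S}w$ for each of the three vectors $w\in\{\hat{v}_i(t^{ij}_1),\hat{v}_j(t^{ij}_0),\frac{1}{n}e_i\}$, namely $\mathbf{S}\hat{v}_i(t^{ij}_1)=\hat{s}_i(t^{ij}_1)$, $\mathbf{S}\hat{v}_j(t^{ij}_0)=\hat{s}_j(t^{ij}_0)$, and $\mathbf{S}(\frac{1}{n}e_i)=\frac{1}{n}s_i(0)$.

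Next, arguing as in Lemma~\ref{lembm}, the set of $\tilde{v}\in\mathbb{R}^n$ for which the constraint $(\ref{a10})$, i.e. $\hat{s}_i(t^{ij}_1(+))=\mathbf{S}\tilde{v}$, can be guaranteed to hold given only $\mathcal{K}_i(t^{ij}_1)\cup S^{ij}(t^{ij}_0,t^{ij}_1)$ is precisely $\mbox{span}\{\hat{v}_i(t^{ij}_1),\hat{v}_j(t^{ij}_0),\frac{1}{n}e_i\}$, the column space of $V_{(DA)}$: a $\tilde{v}$ in this span has a computable image $\mathbf{S}\tilde{v}$, whereas for $\tilde{v}$ outside this span the value $\mathbf{S}\tilde{v}$ is not determined by the available data, because the initial consensus vectors, and hence $\mathbf{S}\in\mathbb{R}^{d\times n}$, are arbitrary. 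Therefore $(\ref{optllca})$ reduces to minimising $\big(\tilde{v}-\frac{1}{n}\mathbf{1}_n\big)^2$ over $\mbox{col}(V_{(DA)})$, whose unique solution is the orthogonal projection $V_{(DA)}V_{(DA)}^+\frac{1}{n}\mathbf{1}_n$ of $\frac{1}{n}\mathbf{1}_n$ onto $\mbox{col}(V_{(DA)})$, the projector $V_{(DA)}V_{(DA)}^+$ being well defined and independent of the rank of the $n\times 3$ matrix $V_{(DA)}$. This is $(\ref{optkda2})$.

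The main obstacle is making the feasibility characterisation rigorous: one must justify the ``arbitrariness of $\mathbf{S}$'' step that forces $\tilde{v}\in\mbox{col}(V_{(DA)})$, and one must verify carefully that the invariant $\hat{s}_i(t)=\mathbf{S}\hat{v}_i(t)$ is genuinely maintained along every chain of signals — this is the inductive part of the argument and the place where (A7) and the definition $(\ref{rr1a})$ are used. Once these are in place the reduction to the projection formula is immediate, and uniqueness follows from strict convexity of $\big(\tilde{v}-\frac{1}{n}\mathbf{1}_n\big)^2$ restricted to the subspace $\mbox{col}(V_{(DA)})$.
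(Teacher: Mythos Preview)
Your proposal is correct and follows essentially the same route as the paper: identify the data available after reception, reduce the feasible set of $(\ref{optllca})$ to $\mbox{span}\{\hat{v}_i(t^{ij}_1),\hat{v}_j(t^{ij}_0),e_i\}=\mbox{col}(V_{(DA)})$, and conclude with the orthogonal-projection formula. You are in fact more careful than the paper, which simply asserts the feasibility characterisation and the invariant $\hat{s}_i(t)=\mathbf{S}\hat{v}_i(t)$ without the inductive justification or the ``arbitrariness of $\mathbf{S}$'' argument you supply.
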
 \begin{proof} Under $(\ref{kda1})$ and $(\ref{kda2})$ we can re-write $(\ref{optllca})$ as, \begin{equation}\label{optkda1}\begin{array}{llll} \hat{v}_i ( t^{ij}_1(+)) = & \arg_{\mbox{$\tilde{v}$}} \ \min \ \big( \tilde{v} - \frac{1}{n} \mathbf{1}_n \big)^2 \ , \\ & \mbox{s.t. $(\ref{a10})$ holds, given $\hat{s}_i(t^{ij}_1) = \mathbf{S} \hat{v}_i(t^{ij}_1),$} \\ & \mbox{   $\hat{s}_j(t^{ij}_0) = \mathbf{S} \hat{v}_j(t^{ij}_0)$, and $s_i(0) = \mathbf{S} e_i$.} \end{array}\end{equation} Given that $\hat{s}_i(t^{ij}_1) = \mathbf{S} \hat{v}_i(t^{ij}_1) , \hat{s}_j(t^{ij}_0) = \mathbf{S} \hat{v}_j(t^{ij}_0)$, and $s_i(0) = \mathbf{S} e_i$ are known, the set of vectors $\hat{v}_i(t^{ij}_1(+))$ for which $(\ref{a10})$ holds is $\mbox{span} \{ \hat{v}_i(t^{ij}_1) , \hat{v}_j(t^{ij}_0) , e_i \}$, thus $(\ref{optkda1})$ can be re-written as \begin{equation}\label{optkda3} \hat{v}_i ( t^{ij}_1(+)) = \arg_{\mbox{$\tilde{v}$}} \min_{\mbox{$\tilde{v} \in \mbox{span} \{ \hat{v}_i(t^{ij}_1) , \hat{v}_j(t^{ij}_0) , e_i \}$}} \big( \tilde{v} - \frac{1}{n} \mathbf{1}_n \big)^2 . \end{equation} The update $(\ref{optkda2})$ follows immediately from $(\ref{optkda3})$. \end{proof}

\begin{lem}\label{lemkda1}(DA Consensus Estimate Normalization) Every normal consensus estimate $\hat{v}_i(t)$ satisfies \begin{equation}\label{norm1} \hat{v}_i(t)^2 = \frac{1}{n} \hat{v}_i (t) ' \mathbf{1}_n \ , \ \forall \ i \in \mathcal{V} \ , \ \forall \ t \geq 0 \ . \end{equation} \end{lem}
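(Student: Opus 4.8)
The plan is to establish the identity $(\ref{norm1})$ by verifying that it holds at $t=0$ and that it is preserved by every signal reception; since assumption (A7) guarantees that $\hat{v}_i(t)$ is constant at any time node $i$ does not receive a signal, these two checks cover all $t \geq 0$. The key structural observation is that the DA normal consensus estimate is, at every stage, the orthogonal projection of the target vector $\frac{1}{n}\mathbf{1}_n$ onto a subspace of $\mathbb{R}^n$: at $t=0$ this subspace is $\mathrm{span}\{e_i\}$ by Lemma $\ref{lembm0}$, and immediately after reception of a signal $S^{ij}(t^{ij}_0,t^{ij}_1)$ it is $\mathrm{span}\{\hat{v}_i(t^{ij}_1),\hat{v}_j(t^{ij}_0),e_i\}$ by Lemma $\ref{lemkda}$ (equation $(\ref{optkda3})$), the projection being realized in closed form as $V_{(DA)}V_{(DA)}^{+}\frac{1}{n}\mathbf{1}_n$ in $(\ref{optkda2})$, because $V_{(DA)}V_{(DA)}^{+}$ is precisely the orthogonal projector onto the range of $V_{(DA)}$.

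Granting this, the identity follows from a single elementary fact about orthogonal projections. If $\hat{v}$ is the orthogonal projection of $w = \frac{1}{n}\mathbf{1}_n$ onto a subspace $W \subseteq \mathbb{R}^n$, then the residual $w - \hat{v}$ is orthogonal to $W$, hence in particular orthogonal to $\hat{v} \in W$; therefore $(w - \hat{v})'\hat{v} = 0$, i.e. $\hat{v}'\hat{v} = w'\hat{v}$. Spelled out, $\hat{v}^2 = \frac{1}{n}\mathbf{1}_n'\hat{v} = \frac{1}{n}\hat{v}'\mathbf{1}_n$, which is exactly $(\ref{norm1})$. Applying this with $\hat{v} = \hat{v}_i(0)$ and with $\hat{v} = \hat{v}_i(t^{ij}_1(+))$ at each reception time, and invoking (A7) on the intervening intervals, yields the claim for all $i \in \mathcal{V}$ and all $t \geq 0$.

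I would emphasize that no induction on the past values of $\hat{v}_i$ is actually needed: the property $(\ref{norm1})$ is a pointwise consequence of the projection structure of the update, independent of which subspace is being projected onto, so it continues to hold in degenerate situations (for instance when $\hat{v}_i(t^{ij}_1) = \frac{1}{n}\mathbf{1}_n$ already, in which case the projection returns $\frac{1}{n}\mathbf{1}_n$ and $(\ref{norm1})$ holds trivially, since $\big(\frac{1}{n}\mathbf{1}_n\big)^2 = \frac{1}{n} = \frac{1}{n}\big(\frac{1}{n}\mathbf{1}_n\big)'\mathbf{1}_n$). Consequently there is no substantive obstacle here; the only point requiring care is the identification of the closed-form update $V_{(DA)}V_{(DA)}^{+}\frac{1}{n}\mathbf{1}_n$ with the orthogonal projection onto $\mathrm{span}\{\hat{v}_i(t^{ij}_1),\hat{v}_j(t^{ij}_0),e_i\}$, which is exactly what Lemma $\ref{lemkda}$ together with the standard property $V V^{+} = \Pi_{\mathrm{range}(V)}$ of the pseudo-inverse provides.
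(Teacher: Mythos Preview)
Your proof is correct and, in fact, cleaner than the paper's own argument. Both proofs rest on the same underlying reason---that $\hat{v}_i(t)$ is always the orthogonal projection of $\frac{1}{n}\mathbf{1}_n$ onto some subspace---but they extract the identity $(\ref{norm1})$ differently. You invoke the standard projection fact that the residual $\frac{1}{n}\mathbf{1}_n - \hat{v}$ is orthogonal to the entire subspace and hence to $\hat{v}$ itself, which immediately gives $\hat{v}'\hat{v} = \frac{1}{n}\mathbf{1}_n'\hat{v}$. The paper instead argues by contraposition: it shows that any vector $v$ failing $(\ref{norm1})$ can be strictly improved by the rescaling $w = v\,(v'\mathbf{1}_n)/(n v^2) \in \mathrm{span}\{v\}$, so such a $v$ cannot be the minimizer in $(\ref{optkda3})$. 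This is essentially the one-dimensional instance of the same orthogonality condition, verified by an explicit expansion of $(w-\frac{1}{n}\mathbf{1}_n)^2$ versus $(v-\frac{1}{n}\mathbf{1}_n)^2$. Your route is more conceptual and avoids that computation; the paper's route is more hands-on and does not presuppose familiarity with the projector identity $VV^{+} = \Pi_{\mathrm{range}(V)}$. Either way the result follows.
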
 \begin{proof} Note that $\hat{v}_i(0) = \frac{1}{n} e_i$ satisfies $(\ref{norm1})$ for each $i \in \mathcal{V}$. Next observe that under $(A7)$ the estimate $\hat{v}_i(t)$ will not change unless a signal $S^{i j} ( t^{i j}_0, t^{i j}_1 )$ is received at node $i$. If a signal is received then by Lemma $\ref{lemkda}$ the estimate $\hat{v}_i(t)$ is updated to the unique solution of $(\ref{optkda3})$. Thus to finish the proof it suffices to show that if a vector $v \in \mathbb{R}^n$ does not satisfy $(\ref{norm1})$ then the vector $v$ is not the solution to $(\ref{optkda3})$. To prove this we show that if $(\ref{norm1})$ does not hold then the vector $w$ defined, $$ w = v \big( v ' \mathbf{1}_n \big) / \big( n v^2 \big) \ , $$ will satisfy the inequality \begin{equation}\label{wnorm1} \big( w - \frac{1}{n} \mathbf{1}_n \big) ^2 < \big( v - \frac{1}{n} \mathbf{1}_n \big) ^2 \ . \end{equation} Notice that since $w$ is contained in $\mbox{span}(v)$, the inequality $(\ref{wnorm1})$ implies that $v$ is not the solution to $(\ref{optkda3})$. Next observe that if a vector $v$ does not satisfy $(\ref{norm1})$ then, \begin{equation}\label{wnorm4} \big( v^2 - \frac{1}{n} v ' \mathbf{1}_n \big)^2 > 0 \ . \end{equation} Expanding $(\ref{wnorm4})$ yields, \begin{equation}\label{wnorm5} ( v^2 )^2 - 2 \frac{1}{n} v ' \mathbf{1}_n v^2 + \big( \frac{1}{n} v ' \mathbf{1}_n \big)^2 > 0 \ . \end{equation} Re-arranging $(\ref{wnorm5})$ then implies $(\ref{wnorm1})$, $$ \begin{array}{llll} & ( v^2 )^2 - 2 \frac{1}{n} v ' \mathbf{1}_n v^2 > - \big( \frac{1}{n} v ' \mathbf{1}_n \big) ^2 \ , \\ & v^2 - 2 \frac{1}{n} v ' \mathbf{1}_n + \frac{1}{n} > \big( \frac{v ' \mathbf{1}_n }{n v^2 } \big) ^2 v^2 - 2 \frac{( v ' \mathbf{1}_n )^2}{n^2 v^2 } + \frac{1}{n} \ , \\ & \big( v - \frac{1}{n} \mathbf{1}_n \big)^2 > \big( v \frac{v ' \mathbf{1}_n }{n v^2} - \frac{1}{n} \mathbf{1}_n \big)^2 \ . \end{array} $$ \end{proof}

\begin{lem}\label{lemkda2}(DA Non-Decreasing Normal Consensus Estimate Magnitude) Each magnitude $\hat{v}_i(t)^2$ is a non-decreasing function of $t \geq 0$ for all $i \in \mathcal{V}$.\end{lem}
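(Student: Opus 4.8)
The plan is to reduce the lemma to a short fact about orthogonal projections, combined with the normalization identity of Lemma~\ref{lemkda1}. By assumption (A7) the estimate $\hat{v}_i(t)$ is constant on every time interval during which node $i$ receives no signal, so on such intervals $\hat{v}_i(t)^2$ is certainly non-decreasing; it therefore suffices to prove that each reception of a signal $S^{ij}(t^{ij}_0,t^{ij}_1)$ at node $i$ satisfies $\hat{v}_i(t^{ij}_1(+))^2 \geq \hat{v}_i(t^{ij}_1)^2$.

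By Lemma~\ref{lemkda} the update upon such a reception is $\hat{v}_i(t^{ij}_1(+)) = V_{(DA)} V_{(DA)}^+ \tfrac{1}{n}\mathbf{1}_n$ with $V_{(DA)} = [\,\hat{v}_i(t^{ij}_1),\,\hat{v}_j(t^{ij}_0),\,\tfrac{1}{n}e_i\,]$, and $V_{(DA)}V_{(DA)}^+$ is the orthogonal projector onto the column space $W := \mbox{span}\{\hat{v}_i(t^{ij}_1),\hat{v}_j(t^{ij}_0),e_i\}$ (this holds for the Moore--Penrose pseudo-inverse regardless of whether the three columns are linearly independent). Abbreviate $b = \tfrac{1}{n}\mathbf{1}_n$, $v = \hat{v}_i(t^{ij}_1)$, and $p = \hat{v}_i(t^{ij}_1(+))$, so that $p$ is the orthogonal projection of $b$ onto $W$. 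Since $v$ is a column of $V_{(DA)}$ we have $v \in W$, hence $b-p \perp v$ and therefore $v'b = v'p$. Moreover Lemma~\ref{lemkda1}, which is valid for every $t \geq 0$ and in particular at $t = t^{ij}_1$, gives the normalization identity $v^2 = \tfrac{1}{n}v'\mathbf{1}_n = v'b$, so that $v^2 = v'p$.

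It then remains to apply the Cauchy--Schwarz inequality: $(v^2)^2 = (v'p)^2 \leq v^2\, p^2$, whence $v^2 \leq p^2$ (the case $v^2 = 0$ being trivial). This is precisely $\hat{v}_i(t^{ij}_1)^2 \leq \hat{v}_i(t^{ij}_1(+))^2$, and combined with the first paragraph it shows that $\hat{v}_i(t)^2$ is non-decreasing on $[0,\infty)$ for every $i \in \mathcal{V}$. I do not expect a genuine obstacle here; the only points requiring care are the identification of $V_{(DA)}V_{(DA)}^+$ with the orthogonal projector onto its column space in the possibly rank-deficient case, and the correct invocation of Lemma~\ref{lemkda1} for the pre-update vector $v$ rather than the post-update one.
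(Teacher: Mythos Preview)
Your proof is correct but takes a genuinely different route from the paper. The paper argues via the optimality of the projection: since $\hat v_i(t^{ij}_1)$ lies in the feasible span, the minimizer $\hat v_i(t^{ij}_1(+))$ satisfies $\bigl(\hat v_i(t^{ij}_1(+))-\tfrac{1}{n}\mathbf 1_n\bigr)^2\le\bigl(\hat v_i(t^{ij}_1)-\tfrac{1}{n}\mathbf 1_n\bigr)^2$, and then invokes the normalization identity of Lemma~\ref{lemkda1} on \emph{both} sides to rewrite each squared error as $\tfrac{1}{n}-\|\cdot\|^2$, yielding the magnitude inequality. You instead exploit the orthogonality of the projection residual: $v'(b-p)=0$ gives $v'p=v'b$, and normalization (applied only to the \emph{pre}-update vector $v$) turns this into $v^2=v'p$, after which Cauchy--Schwarz finishes. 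Your argument is a bit more economical in that it does not require knowing that the \emph{post}-update vector $p$ also satisfies~$(\ref{norm1})$; the paper's route, on the other hand, immediately delivers the slightly stronger intermediate bound $\hat v_i(t^{ij}_1(+))^2\ge\max\{\hat v_i(t^{ij}_1)^2,\hat v_j(t^{ij}_0)^2\}$ (equation~$(\ref{gkda4})$), which is reused in Lemma~\ref{lemkda4c}. Your argument yields that too, since $\hat v_j(t^{ij}_0)$ is equally a column of $V_{(DA)}$ and satisfies~$(\ref{norm1})$, so the same orthogonality-plus-Cauchy--Schwarz step applies to it verbatim.
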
 \begin{proof} Note that under (A7) the estimate $\hat{v}_i(t)$ will not change unless a signal is received at node $i$. If a signal $S^{i j} ( t^{i j}_0, t^{i j}_1 )$ is received then the DA update problem $(\ref{optkda3})$ implies the update $\hat{v}_i(t^{ij}_1(+))$ must satisfy, \begin{equation}\label{gkda1what}\begin{array}{llll} & \big( \hat{v}_i(t^{ij}_1(+)) - \frac{1}{n} \mathbf{1}_n \big)^2 \leq \big( w - \frac{1}{n} \mathbf{1}_n \big)^2 \ , \\ & \ \ \ \forall \ w \in \mbox{span $\{ \hat{v}_i(t^{ij}_1) , \hat{v}_j(t^{ij}_0) , e_i \}.$} \end{array}\end{equation} Since $\{ \hat{v}_i(t^{ij}_1) , \hat{v}_j(t^{ij}_0) \} \in \mbox{span $\{ \hat{v}_i(t^{ij}_1) , \hat{v}_j(t^{ij}_0) , e_i \}$}$ the inequality $(\ref{gkda1what})$ implies, \begin{equation}\label{gkda2}\begin{array}{llll} & \big( \hat{v}_i(t^{ij}_1(+)) - \frac{1}{n} \mathbf{1}_n \big)^2 \leq \mbox{min} \{ \big( \hat{v}_i(t^{ij}_1) - \frac{1}{n} \mathbf{1}_n \big)^2 \ , \\ & \ \ \ \ \ \ \ \ \ \ \ \ \ \ \ \ \ \ \ \ \ \ \ \ \ \ \ \ \ \ \ \ \ \ \ \big( \hat{v}_j(t^{ij}_0) - \frac{1}{n} \mathbf{1}_n \big)^2 \}. \end{array}\end{equation} Next observe that if a vector $v \in \mathbb{R}^n$ satisfies $(\ref{norm1})$ then, \begin{equation}\label{gkda0}\begin{array}{llll} \big( v - \frac{1}{n} \mathbf{1}_n \big)^2 & = v^2 - 2 \frac{1}{n} v ' \mathbf{1}_n + \frac{1}{n} \ , \\ & = \frac{1}{n} - v^2 \ .\end{array}\end{equation} Due to Lemma $\ref{lemkda1}$, all normal consensus estimates satisfy $(\ref{norm1})$, thus we can apply $(\ref{gkda0})$ to $(\ref{gkda2})$ and obtain, \begin{equation}\label{gkda3} \frac{1}{n} - \hat{v}_i(t^{ij}_1(+))^2 \leq \mbox{min$\{ \frac{1}{n} - \hat{v}_i(t^{ij}_1) ^2 \ , \ \frac{1}{n} - \hat{v}_j(t^{ij}_0)^2 \}$. } \end{equation} Subtracting both sides of $(\ref{gkda3})$ from $\frac{1}{n}$ then yields, \begin{equation}\label{gkda4}\begin{array}{llll} \hat{v}_i(t^{ij}_1(+))^2 & \geq \mbox{max$\{ \hat{v}_i(t^{ij}_1) ^2 \ , \ \hat{v}_j(t^{ij}_0) ^2 \}$,} \\ & \geq \hat{v}_i(t^{ij}_1) ^2 \ , \end{array}\end{equation} thus each magnitude $\hat{v}_i(t)^2$ is a non-decreasing function of $t \geq 0$ for all $i \in \mathcal{V}$. \end{proof}

\begin{lem}\label{lemkdanew1}(Equality of Normalized Linear Dependent Vectors) If two linearly dependent vectors $\hat{v}_i$, $\hat{v}_j \in \mathbb{R}^n$ both satisfy $(\ref{norm1})$, then $\hat{v}_j ^2 = \hat{v}_i ' \hat{v}_j = \hat{v}_i ^2$. \end{lem}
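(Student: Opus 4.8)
\emph{Proof proposal.} The plan is to exploit that condition $(\ref{norm1})$ is extremely rigid. Rewriting $v^2=\frac{1}{n}v'\mathbf{1}_n$ as $\big(v-\frac{1}{2n}\mathbf{1}_n\big)^2=\frac{1}{4n}$ shows that every vector obeying $(\ref{norm1})$ lies on the sphere centred at $\frac{1}{2n}\mathbf{1}_n$ that passes through the origin, and a line through the origin meets this sphere in at most the origin and one further point. Since $\hat{v}_i$ and $\hat{v}_j$ are linearly dependent they lie on a common line through the origin, so once the origin is excluded they must coincide, which makes the three claimed equalities immediate. In the write-up I would not invoke the sphere explicitly but carry out the equivalent short computation below.

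First I would reduce to the proportional form and dispose of the degenerate case. Because (as the notation indicates, and as is true wherever this lemma is applied) $\hat{v}_i$ and $\hat{v}_j$ are normal consensus estimates, the initialization $(\ref{dainit1})$ together with Lemma $\ref{lemkda2}$ gives $\hat{v}_i^2\geq 1/n^2>0$ and $\hat{v}_j^2\geq 1/n^2>0$, so neither vector vanishes; linear dependence then yields a nonzero scalar $\alpha$ with $\hat{v}_j=\alpha\hat{v}_i$. Using $(\ref{norm1})$ for both vectors,
\[
\hat{v}_i'\hat{v}_j=\alpha\,\hat{v}_i^2=\alpha\cdot\frac{1}{n}\hat{v}_i'\mathbf{1}_n=\frac{1}{n}(\alpha\hat{v}_i)'\mathbf{1}_n=\frac{1}{n}\hat{v}_j'\mathbf{1}_n=\hat{v}_j^2 ,
\]
which already gives $\hat{v}_i'\hat{v}_j=\hat{v}_j^2$. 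Since also $\hat{v}_j^2=\alpha^2\hat{v}_i^2$ while $\hat{v}_i'\hat{v}_j=\alpha\hat{v}_i^2$, we obtain $\alpha^2\hat{v}_i^2=\alpha\hat{v}_i^2$, hence $\alpha(\alpha-1)\hat{v}_i^2=0$; as $\hat{v}_i^2>0$ and $\alpha\neq 0$ this forces $\alpha=1$, so $\hat{v}_j=\hat{v}_i$ and therefore $\hat{v}_j^2=\hat{v}_i'\hat{v}_j=\hat{v}_i^2$, as desired.

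There is no genuinely hard step here; the only point that needs care is the degenerate linear-dependence case in which one of the two vectors is the zero vector (for instance $\hat{v}_i=\frac{1}{n}e_1$, $\hat{v}_j=0$), where the stated equality $\hat{v}_i^2=\hat{v}_i'\hat{v}_j$ would fail. This is exactly why the argument must invoke the non-vanishing of the normal consensus estimates — equivalently, the lemma should be read with both vectors nonzero — and once that is secured the remainder is the two-line calculation above.
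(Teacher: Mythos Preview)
Your proof is correct and follows essentially the same route as the paper: write one vector as a nonzero scalar multiple of the other, use $(\ref{norm1})$ on both to deduce $\alpha^2=\alpha$, and conclude $\alpha=1$. The paper simply asserts the existence of $k\neq 0$ with $\hat{v}_i=k\hat{v}_j$ without comment, whereas you explicitly justify non-vanishing via the initialization $(\ref{dainit1})$ and Lemma~$\ref{lemkda2}$; this extra care is appropriate, since the zero vector does satisfy $(\ref{norm1})$ and would otherwise furnish a counterexample to the lemma as literally stated.
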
 \begin{proof} If $\hat{v}_i$ and $\hat{v}_j$ are linearly dependent then there exists some $k \neq 0$ such that $\hat{v}_i = k \hat{v}_j$, thus if both vectors also satisfy $(\ref{norm1})$ then, \begin{equation}\label{eqnormlin1}\begin{array}{llll} & \frac{1}{n} \hat{v}_i ' \mathbf{1}_n = k \frac{1}{n} \hat{v}_j ' \mathbf{1}_n \ \Rightarrow \ k = \big( \hat{v}_j ' \mathbf{1}_n \big) ^{-1} \big( \hat{v}_i ' \mathbf{1}_n \big) \\ & \ \ \ \ \ \ \ \ \ \ \ \ \ \ \ \ \ \ \ \ \ \ \ \ \ \ \ \ \ \ = \hat{v}_i ^2 / \hat{v}_j ^2 , \\ & \ \ \ \ \ \ \ \ \ \hat{v}_i ^2 = k^2 \hat{v}_j ^2 \ \Rightarrow \ k ^2 = \hat{v}_i ^2 / \hat{v}_j ^2 \ . \end{array}\end{equation} Combining the RHS of the first and second lines in $(\ref{eqnormlin1})$ implies $k^2 = k$ and thus $k = \pm 1$ since $k \neq 0$. We then obtain $k=1$ since $\hat{v}_i ^2 / \hat{v}_j ^2 > 0$, thus $\hat{v}_i = \hat{v}_j$ and the result follows. \end{proof}

\begin{lem}\label{lemkda3}(DA Local Zero Error Property) Every normal consensus estimate $\hat{v}_i(t)$ satisfies \begin{equation}\label{pp1} \hat{v}_{ii} (t) = \frac{1}{n} \ , \ \forall \ i \in \mathcal{V} \ , \ \forall \ t \geq 0 \ . \end{equation} \end{lem}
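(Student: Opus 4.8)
The plan is to prove $(\ref{pp1})$ by showing that the $i$-th coordinate of $\hat{v}_i$ equals $1/n$ at initialization and is forced back to $1/n$ by every update at node $i$. First I would record the base case: by $(\ref{dainit1})$ (equivalently Lemma \ref{lembm0}) we have $\hat{v}_i(0) = \frac{1}{n} e_i$, so $\hat{v}_{ii}(0) = \frac{1}{n}$. Second, by $(A7)$ the estimate $\hat{v}_i(t)$ changes only when node $i$ receives a signal, and by $(A9)$ at most one signal is received at a time, so $\hat{v}_i(\cdot)$ is a step function whose only jumps occur at reception times; between receptions $(\ref{pp1})$ is trivially preserved. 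Hence it suffices to show that whenever a signal $S^{ij}(t^{ij}_0,t^{ij}_1)$ is received, the update satisfies $\hat{v}_{ii}(t^{ij}_1(+)) = \frac{1}{n}$, irrespective of the pre-update value.

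The heart of the argument is a one-line ``coordinate-correction'' observation. By Lemma \ref{lemkda} the update $\hat{v}_i(t^{ij}_1(+))$ is the unique minimizer of $\big( \tilde{v} - \frac{1}{n} \mathbf{1}_n \big)^2$ over the subspace $\mathcal{S} = \mbox{span}\{ \hat{v}_i(t^{ij}_1), \hat{v}_j(t^{ij}_0), e_i \}$, uniqueness following from strict convexity of the objective on a subspace. Crucially $e_i \in \mathcal{S}$. So if the minimizer $\hat{v} := \hat{v}_i(t^{ij}_1(+))$ had $\hat{v}_i \neq \frac{1}{n}$, then the vector $\hat{v} + \big( \frac{1}{n} - \hat{v}_i \big) e_i$ would still lie in $\mathcal{S}$, would agree with $\hat{v}$ in every coordinate except the $i$-th, and would have $i$-th coordinate exactly $\frac{1}{n}$; its squared distance to $\frac{1}{n}\mathbf{1}_n$ would therefore be $\sum_{k \neq i} (\hat{v}_k - \frac{1}{n})^2$, which is strictly smaller than $\big( \hat{v} - \frac{1}{n}\mathbf{1}_n \big)^2 = \sum_{k \neq i}(\hat{v}_k - \frac{1}{n})^2 + (\hat{v}_i - \frac{1}{n})^2$. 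This contradicts optimality of $\hat{v}$, so necessarily $\hat{v}_{ii}(t^{ij}_1(+)) = \frac{1}{n}$.

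Combining the base case, $(A7)$, and the coordinate-correction step applied at each successive signal reception at node $i$ yields $\hat{v}_{ii}(t) = \frac{1}{n}$ for all $t \geq 0$ and all $i \in \mathcal{V}$. I do not expect a genuine obstacle here: the projection argument works for any subspace that contains $e_i$, so the pre-update value of $\hat{v}_i$ is irrelevant and no nontrivial induction hypothesis is needed. The only point requiring a little care is the bookkeeping — using $(A7)$ and $(A9)$ to justify that ``$\hat{v}_i(t)$ equals $\frac{1}{n}e_i$ or the output of the most recent update'' is legitimate — and this is immediate from the step-function structure of $\hat{v}_i(\cdot)$.
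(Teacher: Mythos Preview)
Your proposal is correct and takes a genuinely different route from the paper. The paper proves the lemma by explicit computation: it splits into cases according to the rank of $\{\hat{v}_i(t^{ij}_1),\hat{v}_j(t^{ij}_0),e_i\}$, writes down the closed-form least-squares solution in each case (using the pseudo-inverse together with the normalization property of Lemma~\ref{lemkda1}), and then algebraically verifies that the $i$-th coordinate of the result equals $\frac{1}{n}$ --- see the calculations culminating in $(\ref{fin1a})$ and $(\ref{pp1condproof})$. Your coordinate-correction argument bypasses all of this: since $e_i$ lies in the feasible subspace $\mathcal{S}$, any candidate minimizer whose $i$-th coordinate differs from $\frac{1}{n}$ can be strictly improved by shifting along $e_i$, so the projection of $\frac{1}{n}\mathbf{1}_n$ onto $\mathcal{S}$ must agree with $\frac{1}{n}\mathbf{1}_n$ in the $i$-th slot. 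This is shorter, needs no case analysis, no induction hypothesis, and no appeal to Lemma~\ref{lemkda1}; it works for \emph{any} subspace containing $e_i$. What the paper's longer computation buys is the explicit update formulas $(\ref{fin1})$, $(\ref{matinv1})$, $(\ref{pp1condproof})$, which are referenced elsewhere in the analysis; your argument yields only the conclusion $\hat{v}_{ii}(t^{ij}_1(+))=\frac{1}{n}$, which is all the lemma requires.
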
 \begin{proof} By Lemma $\ref{lembm0}$, $\hat{v}_{ii} (0) = \frac{1}{n}$ for each $i \in \mathcal{V}$. Next observe that under $(A7)$ the estimate $\hat{v}_i(t)$ will not change unless a signal $S^{i j} ( t^{i j}_0, t^{i j}_1 )$ is received at node $i$. If a signal is received then $\hat{v}_i(t)$ is updated to the solution of $(\ref{optkda3})$. We now show that under the assumption $(\ref{pp1})$ the solution $\hat{v}_i ( t^{ij}_1(+))$ to $(\ref{optkda3})$ will imply $(\ref{pp1})$ for every set vectors $\{ \hat{v}_i ( t^{ij}_1), \hat{v}_j ( t^{ij}_0) , e_i\}$. First assume $\hat{v}_i ( t^{ij}_1)$, $\hat{v}_j ( t^{ij}_0)$, and $e_i$ are linearly dependent. In this case the update problem $(\ref{optkda3})$ reduces to the RHS of $(\ref{optbm2n})$ and thus implies $(\ref{pp1})$. Next assume that any two vectors in the set $\{ \hat{v}_i ( t^{ij}_1), \hat{v}_j ( t^{ij}_0) , e_i \}$ are linearly dependent. In this case $(\ref{optkda3})$ reduces to \begin{equation}\label{optkda2a}\begin{array}{llll} \hat{v}_i ( t^{ij}_1(+)) & = \arg_{\mbox{$\tilde{v}$}} \ \min_{\mbox{$\tilde{v} \in \mbox{span} \{ e_i , v \}$}} \ \big( \tilde{v} - \frac{1}{n} \mathbf{1}_n \big)^2 , \\ & = \arg_{\mbox{$\tilde{v} = a \frac{1}{n} e_i + b v$}} \min_{\mbox{$(a , b)$}} \ \big( a \frac{1}{n} e_i + b v - \frac{1}{n} \mathbf{1}_n \big)^2 \ , \end{array}\end{equation} where $v \in \{ \hat{v}_i ( t^{ij}_1), \hat{v}_j ( t^{ij}_0) \}$ is linearly independent of $e_i$. The objective function in $(\ref{optkda2a})$ is \begin{equation}\label{optfunct}\begin{array}{llll} f(a,b) & = \big( a \frac{1}{n} e_i + b v - \frac{1}{n} \mathbf{1}_n \big)^2 \ , \\ & = a^2 (\frac{1}{n} e_i) ^2 + b^2 v^2 + 2 a b \frac{1}{n} e_i ' v \\ & \ \ \ \ - 2 a \frac{1}{n} e_i ' \frac{1}{n} \mathbf{1}_n - 2 b v ' \frac{1}{n} \mathbf{1}_n + \frac{1}{n} \ . \end{array}\end{equation} The Lemma $\ref{lemkda1}$ implies that $v$ satisfies $(\ref{norm1})$. Note also that $\frac{1}{n} e_i$ satisfies $(\ref{norm1})$, thus the objective function $(\ref{optfunct})$ can be simplified, $$ f(a,b) = (a^2 - 2a) \big( \frac{1}{n} e_i \big) ^2 + ( b^2 - 2 b ) v^2 + 2 a b \frac{1}{n} e_i ' v + \frac{1}{n} \ . $$ The first-order partial derivatives of $f(a,b)$ are, \begin{equation}\label{optfunct2}\begin{array}{llll} & \frac{\partial f(a,b)}{\partial a} = 2 (a - 1) \big( \frac{1}{n} e_i \big) ^2 + 2 b \frac{1}{n} e_i ' v \ , \\ & \frac{\partial f(a,b)}{\partial b} = 2 (b - 1) v^2 + 2 a \frac{1}{n} e_i ' v \ . \end{array}\end{equation} The second-order partial derivatives of $f(a,b)$ are, $$ \begin{array}{llll} & \frac{\partial^2 f(a,b)}{\partial a^2} = 2 \big( \frac{1}{n} e_i \big) ^2 \ , \\ & \frac{\partial^2 f(a,b)}{\partial b^2} = 2 v ^2 \ , \\ & \frac{\partial^2 f(a,b)}{\partial a \partial b} = \frac{\partial^2 f(a,b)}{\partial b \partial a} = 2 \frac{1}{n} e_i ' v \ . \end{array} $$ The determinant of the Hessian matrix of $f(a,b)$ is thus, \begin{equation}\label{hessian} | H \big(f(a,b) \big) | = 2 \big( (\frac{1}{n} e_i ) ^2 v^2 - (\frac{1}{n} e_i ' v)^2 \big) \ . \end{equation} Since $\frac{1}{n} e_i$ and $v$ are linearly independent, the determinant $(\ref{hessian})$ is strictly positive by the Cauchy-Schwartz inequality. This implies the Hessian matrix $H \big(f(a,b) \big)$ is positive-definite, thus setting the RHS of $(\ref{optfunct2})$ to zero and solving for $(a,b)$ yields the unique optimal values $(\hat{a}, \hat{b})$, \begin{equation}\label{optfunct2sol}\hat{a} = \frac{v ^2 \frac{1}{n} e_i ' ( \frac{1}{n} e_i - v) }{( \frac{1}{n} e_i) ^2 v^2 - ( \frac{1}{n} e_i ' v )^2} \ , \ \ \hat{b} = \frac{ ( \frac{1}{n} e_i ) ^2 v ' (v - \frac{1}{n} e_i) }{ (\frac{1}{n} e_i) ^2 v^2 - ( \frac{1}{n} e_i ' v )^2} \ . \end{equation} From $(\ref{optfunct2sol})$ the unique solution to $(\ref{optkda2a})$ is thus obtained, \begin{equation}\label{fin1}\begin{array}{llll} & \hat{v}_i ( t^{ij}_1(+)) = \frac{v ^2 \frac{1}{n} e_i ' ( \frac{1}{n} e_i - v) }{ ( \frac{1}{n} e_i) ^2 v^2 - ( \frac{1}{n} e_i ' v )^2} \frac{1}{n} e_i \\ & \ \ \ \ \ \ \ \ \ \ \ \ \ \ \ \ \ \ \ + \frac{ ( \frac{1}{n} e_i ) ^2 v ' (v - \frac{1}{n} e_i) }{ ( \frac{1}{n} e_i ) ^2 v^2 - ( \frac{1}{n} e_i ' v )^2} v \ . \end{array}\end{equation} Based on $(\ref{fin1})$ the element $\hat{v}_{ii} ( t^{ij}_1(+))$ can be expressed, \begin{equation}\label{fin1a}\begin{array}{llll} \hat{v}_{ii} ( t^{ij}_1(+)) & = \frac{v ^2 \frac{1}{n} e_i ' ( \frac{1}{n} e_i - v) }{(\frac{1}{n} e_i ) ^2 v^2 - ( \frac{1}{n} e_i ' v )^2} \frac{1}{n} \\ & \ \ \ \ \ \ \ + \frac{ ( \frac{1}{n} e_i) ^2 v ' (v - \frac{1}{n} e_i) }{ (\frac{1}{n} e_i) ^2 v^2 - ( \frac{1}{n} e_i ' v )^2} v_i \ , \\ & = \frac{v ^2 ( \frac{1}{n^2} - \frac{1}{n} v_i ) \frac{1}{n} + \frac{1}{n^2} (v^2 - \frac{1}{n}v_i ) v_i }{ \frac{1}{n^2} v^2 - \frac{1}{n^2} v_i^2} \ , \\ & = \frac{\frac{1}{n} v^2 - v_i v^2 + v_i v^2 - \frac{1}{n} v_i^2 }{v^2 - v_i^2} = \frac{1}{n} . \end{array}\end{equation} Note that the last equality in $(\ref{fin1a})$ follows since Lemma $\ref{lemkdanew1}$ implies $v^2 \neq v_i^2$ under the assumption that $v$ satisfies $(\ref{norm1})$ and is linearly independent from $\frac{1}{n} e_i$. Next assume that $\hat{v}_i ( t^{ij}_1)$, $\hat{v}_j ( t^{ij}_0)$ and $e_i$ are linearly independent. In this case the solution $(\ref{optkda2})$ can be expressed, $$ \begin{array}{llll} & \hat{v}_i ( t^{ij}_1(+)) = V_{(DA)} (V_{(DA)} ' V_{(DA)})^{-1} V_{(DA)}' \frac{1}{n} \mathbf{1}_n , \\ & \ \ \ \ \ V_{(DA)} = [ \hat{v}_i ( t^{ij}_1) , \hat{v}_j ( t^{ij}_0) , \frac{1}{n} e_i ] \ . \end{array} $$ For notational convenience we denote $\hat{v}_i = \hat{v}_i ( t^{ij}_1)$ and $\hat{v}_j = \hat{v}_j ( t^{ij}_0)$. Note that $(V_{(DA)} ' V_{(DA)})$ has the inverse $(\ref{matinv1})$ below, \begin{equation}\label{matinv1}\begin{array}{llll} (V_{(DA)} ' V_{(DA)})^{-1} & = \left( \begin{array}{ccc} \hat{v}_i^2 & \hat{v}_i ' \hat{v}_j & \frac{1}{n} \hat{v}_{ii} \\ \hat{v}_i ' \hat{v}_j & \hat{v}_j^2 & \frac{1}{n} \hat{v}_{ji} \\ \frac{1}{n} \hat{v}_{ii} & \frac{1}{n} \hat{v}_{ji} & \frac{1}{n^2} \end{array} \right) ^{-1} \ , \\ & = \frac{1}{|V_{(DA)} ' V_{(DA)}|} \left( \begin{array}{ccc} \frac{1}{n^2} \big( \hat{v}_j^2 - \hat{v}_{ji}^2 \big) & \frac{1}{n^2} \big( \hat{v}_{ii} \hat{v}_{ji} - \hat{v}_i ' \hat{v}_j \big) & \frac{1}{n} \big( \hat{v}_{ji} \hat{v}_i ' \hat{v}_j - \hat{v}_j^2 \hat{v}_{ii} \big) \\ \frac{1}{n^2} \big( \hat{v}_{ii} \hat{v}_{ji} - \hat{v}_i ' \hat{v}_j \big) & \frac{1}{n^2} \big( \hat{v}_i^2 - \hat{v}_{ii}^2 \big) & \frac{1}{n} \big( \hat{v}_{ii} \hat{v}_i ' \hat{v}_j - \hat{v}_i^2 \hat{v}_{ji} \big) \\ \frac{1}{n} \big( \hat{v}_{ji} \hat{v}_i ' \hat{v}_j - \hat{v}_j^2 \hat{v}_{ii} \big) & \frac{1}{n} \big( \hat{v}_{ii} \hat{v}_i ' \hat{v}_j - \hat{v}_i^2 \hat{v}_{ji} \big) & \big( \hat{v}_i^2 \hat{v}_j^2 - ( \hat{v}_i ' \hat{v}_j )^2 \big) \end{array} \right) \end{array}\end{equation} where the determinant $|V_{(DA)}'V_{(DA)}|$ can be computed as, $$ \begin{array}{llll} & |V_{(DA)}'V_{(DA)}| = \frac{1}{n^2} ( \hat{v}_i^2 \hat{v}_j^2 - ( \hat{v}_i ' \hat{v}_j )^2 + 2 \hat{v}_{ii} \hat{v}_{ji} \hat{v}_i ' \hat{v}_j \\ & \ \ \ \ \ \ \ \ \ \ \ \ \ \ \ \ \ \ \ \ \ \ \ \ \ \ \ \ - \hat{v}_{ii}^2 \hat{v}_j^2 - \hat{v}_i^2 \hat{v}_{ji}^2 ) \ . \end{array} $$ Right-multiplying $(\ref{matinv1})$ by $V_{(DA)}' \frac{1}{n} \mathbf{1}_n$ and then left-multiplying by the first row of $V_{(DA)}$ yields $\hat{v}_{ii} ( t^{ij}_1(+))$, \begin{equation}\label{pp1condproof}\begin{array}{llll} & \hat{v}_{ii} ( t^{ij}_1(+)) = \frac{1}{|V_{(DA)}' V_{(DA)}|} \frac{1}{n^2} \big( \hat{v}_i^2 \hat{v}_j^2 \hat{v}_{ii} - \hat{v}_i^2 \hat{v}_{ji}^2 \hat{v}_{ii} \\ & \ \ \ + \hat{v}_{ii}^2 \hat{v}_{ji} \hat{v}_j^2 - \hat{v}_{ii} \hat{v}_i ' \hat{v}_j \hat{v}_j^2 + \frac{1}{n} \hat{v}_{ji} \hat{v}_i ' \hat{v}_j \hat{v}_{ii} - \frac{1}{n} \hat{v}_{ii}^2 \hat{v}_j^2 \\ & \ \ \ + \hat{v}_i^2 \hat{v}_{ii} \hat{v}_{ji}^2 - \hat{v}_i ' \hat{v}_j \hat{v}_i^2 \hat{v}_{ji} + \hat{v}_i^2 \hat{v}_j^2 \hat{v}_{ji} - \hat{v}_{ii}^2 \hat{v}_j^2 \hat{v}_{ji} \\ & \ \ \ + \frac{1}{n} \hat{v}_{ii} \hat{v}_{ji} \hat{v}_i ' \hat{v}_j - \frac{1}{n} \hat{v}_i^2 \hat{v}_{ji}^2 + \hat{v}_i^2 \hat{v}_i ' \hat{v}_j \hat{v}_{ji} - \hat{v}_i^2 \hat{v}_j^2 \hat{v}_{ii} \\ & \ \ \ + \hat{v}_{ii} \hat{v}_i ' \hat{v}_j \hat{v}_j^2 - \hat{v}_i^2 \hat{v}_j^2 \hat{v}_{ji} + \frac{1}{n} \hat{v}_i^2 \hat{v}_j^2 - \frac{1}{n} (\hat{v}_i ' \hat{v}_j)^2 \big) \ , \\ & \ \ \ \ \ = \frac{1}{|V_{(DA)}'V_{(DA)}|} \frac{1}{n^3} \big( \hat{v}_i^2 \hat{v}_j^2 - ( \hat{v}_i ' \hat{v}_j )^2 \\ & \ \ \ \ \ \ \ \ \ \ \ \ \ \ \ + 2 \hat{v}_{ii} \hat{v}_{ji} \hat{v}_i ' \hat{v}_j - \hat{v}_{ii}^2 \hat{v}_j^2 - \hat{v}_i^2 \hat{v}_{ji}^2 \big) \ , \\ & \ \ \ \ \ = \frac{1}{n} \ . \end{array}\end{equation} The last line in $(\ref{pp1condproof})$ follows since $|V_{(DA)}'V_{(DA)}|$ is non-zero if $e_i, \hat{v}_i$ and $\hat{v}_j$ are linearly independent. \end{proof}

\begin{lem}\label{lemkda4}(DA Lower Bound On Signal Reduction in Error) Upon reception of any signal $S^{i j} ( t^{i j}_0, t^{i j}_1 )$ the decrease in the updated normal consensus squared error has the following lower bound, \begin{equation}\label{lowbound}\begin{array}{llll} & E_{\frac{1}{n} \mathbf{1}_n } ^2 \big(\hat{v}_i(t^{ij}_1) \big) - E_{\frac{1}{n} \mathbf{1}_n } ^2 \big( \hat{v}_i(t^{ij}_1(+)) \big) \geq \\ & \ \ \ \ \ \ \ \ \ \ \ \ \mbox{max} \{ \hat{v}_j(t^{ij}_0) ^2 - \hat{v}_i(t^{ij}_1) ^2 \ , \\ & \ \ \ \ \ \ \ \ \ \ \ \ \ \ \ \ \ \ \ n \big( \hat{v}_j(t^{ij}_0) ' ( \hat{v}_j(t^{ij}_0) - \hat{v}_i(t^{ij}_1) ) \big)^2 \}. \end{array}\end{equation} \end{lem}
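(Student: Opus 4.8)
\emph{Proof plan.} The key observation is that, by Lemma $\ref{lemkda}$, the updated estimate $\hat{v}_i(t^{ij}_1(+))$ is the orthogonal projection of $\frac{1}{n}\mathbf{1}_n$ onto the subspace $U := \mbox{span}\{\hat{v}_i(t^{ij}_1),\hat{v}_j(t^{ij}_0),e_i\}$, so that $E^2_{\frac{1}{n}\mathbf{1}_n}(\hat{v}_i(t^{ij}_1(+))) = \min_{w\in U}(w-\frac{1}{n}\mathbf{1}_n)^2 \le (w-\frac{1}{n}\mathbf{1}_n)^2$ for every $w\in U$. The plan is to obtain the two quantities inside the $\max$ of $(\ref{lowbound})$ by plugging two well-chosen vectors $w\in U$ into this inequality. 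Throughout I will use Lemma $\ref{lemkda1}$ (every normal consensus estimate satisfies the normalization $(\ref{norm1})$), together with the identity $(\ref{gkda0})$, which gives $E^2_{\frac{1}{n}\mathbf{1}_n}(\hat{v}(t)) = \frac{1}{n}-\hat{v}(t)^2$ and in particular $\hat{v}_j(t^{ij}_0)^2 \le \frac{1}{n}$.

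For the first term, I would take $w = \hat{v}_j(t^{ij}_0)\in U$, which gives $E^2_{\frac{1}{n}\mathbf{1}_n}(\hat{v}_i(t^{ij}_1(+))) \le \frac{1}{n}-\hat{v}_j(t^{ij}_0)^2$; subtracting this from $E^2_{\frac{1}{n}\mathbf{1}_n}(\hat{v}_i(t^{ij}_1)) = \frac{1}{n}-\hat{v}_i(t^{ij}_1)^2$ leaves a reduction of at least $\hat{v}_j(t^{ij}_0)^2-\hat{v}_i(t^{ij}_1)^2$. For the second term, I would restrict attention to the line $w = \hat{v}_i(t^{ij}_1)+t\,\hat{v}_j(t^{ij}_0)\in U$, $t\in\mathbb{R}$; minimizing the scalar quadratic $(w-\frac{1}{n}\mathbf{1}_n)^2$ over $t$ reduces $E^2_{\frac{1}{n}\mathbf{1}_n}(\hat{v}_i(t^{ij}_1))$ by exactly $\big((\frac{1}{n}\mathbf{1}_n-\hat{v}_i(t^{ij}_1))'\hat{v}_j(t^{ij}_0)\big)^2\big/\hat{v}_j(t^{ij}_0)^2$, the denominator being positive since Lemma $\ref{lemkda3}$ gives $\hat{v}_j(t^{ij}_0)^2\ge\hat{v}_{jj}(t^{ij}_0)^2 = 1/n^2$. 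Simplifying the numerator via $(\ref{norm1})$ (namely $\frac{1}{n}\mathbf{1}_n'\hat{v}_j(t^{ij}_0) = \hat{v}_j(t^{ij}_0)^2$) rewrites it as $\hat{v}_j(t^{ij}_0)'\big(\hat{v}_j(t^{ij}_0)-\hat{v}_i(t^{ij}_1)\big)$, and then bounding the denominator above by $\frac{1}{n}$ gives a reduction of at least $n\big(\hat{v}_j(t^{ij}_0)'(\hat{v}_j(t^{ij}_0)-\hat{v}_i(t^{ij}_1))\big)^2$. Combining the two bounds with a maximum yields $(\ref{lowbound})$.

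I expect the only non-routine step to be the choice of the one-dimensional family in the second bound: one must project along the direction $\hat{v}_j(t^{ij}_0)$ itself (rather than, say, along $\hat{v}_j(t^{ij}_0)-\hat{v}_i(t^{ij}_1)$), because it is precisely this choice that — after the normalization substitution — produces the quadratic form $\hat{v}_j(t^{ij}_0)'(\hat{v}_j(t^{ij}_0)-\hat{v}_i(t^{ij}_1))$ appearing in $(\ref{lowbound})$, with a denominator $\hat{v}_j(t^{ij}_0)^2$ that $(\ref{gkda0})$ controls by $1/n$. Everything else is elementary manipulation of the projection inequality and of the identity $E^2_{\frac{1}{n}\mathbf{1}_n}(\hat{v}(t)) = \frac{1}{n}-\hat{v}(t)^2$.
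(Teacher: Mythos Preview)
Your proposal is correct and, for the second bound, takes a cleaner route than the paper. For the first term you and the paper do the same thing (plug in $w=\hat{v}_j(t^{ij}_0)$ and use $(\ref{gkda0})$). For the second term, the paper projects $\frac{1}{n}\mathbf{1}_n$ onto the two-dimensional subspace $\mbox{span}\{\hat{v}_i(t^{ij}_1),\hat{v}_j(t^{ij}_0)\}$, which forces a case split on whether $\hat{v}_i,\hat{v}_j$ are linearly independent (the dependent case handled via Lemma~\ref{lemkdanew1}, the independent case via the explicit $2\times2$ projection in $(\ref{what3})$--$(\ref{lowbound4})$). You instead minimize over the one-dimensional affine line $\{\hat{v}_i(t^{ij}_1)+t\,\hat{v}_j(t^{ij}_0):t\in\mathbb{R}\}\subset U$, which yields the reduction $\big((\tfrac{1}{n}\mathbf{1}_n-\hat{v}_i)'\hat{v}_j\big)^2/\hat{v}_j^2$ directly, with no case split needed (only $\hat{v}_j\neq 0$, which you get from Lemma~\ref{lemkda3}). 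After the normalization substitution $\tfrac{1}{n}\mathbf{1}_n'\hat{v}_j=\hat{v}_j^2$ and the bound $\hat{v}_j^2\le 1/n$, both routes land on $n\big(\hat{v}_j'(\hat{v}_j-\hat{v}_i)\big)^2$. Your argument is shorter and avoids the explicit $2\times2$ inverse and the algebra in $(\ref{lowbound4})$; the paper's route, by computing the full two-dimensional projection, gives a slightly tighter intermediate bound $\hat{w}^2-\hat{v}_i^2$ before relaxing to the same final expression, but that extra precision is not used downstream.
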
 \begin{proof} By Lemma $\ref{lemkda1}$ we can apply $(\ref{gkda0})$ to the left-hand side (LHS) of $(\ref{lowbound})$, \begin{equation}\label{lowbound1} E_{\frac{1}{n} \mathbf{1}_n } ^2 \big( \hat{v}_i(t^{ij}_1) \big) - E_{\frac{1}{n} \mathbf{1}_n } ^2 \big( \hat{v}_i(t^{ij}_1(+)) \big) = \hat{v}_i(t^{ij}_1(+))^2 - \hat{v}_i(t^{ij}_1)^2 . \end{equation} Applying the first line of $(\ref{gkda4})$ to the RHS of $(\ref{lowbound1})$ then yields, \begin{equation}\label{lowbound2}\begin{array}{llll} & E_{\frac{1}{n} \mathbf{1}_n } ^2 \big(\hat{v}_i(t^{ij}_1) \big) - E_{\frac{1}{n} \mathbf{1}_n } ^2 \big( \hat{v}_i(t^{ij}_1(+)) \big) \\ & \ \ \ \ \ \geq \max \{ \hat{v}_i(t^{ij}_1) ^2 , \hat{v}_j(t^{ij}_0)^2 \} - \hat{v}_i(t^{ij}_1)^2 \ , \\ & \ \ \ \ \ \geq \max \{ 0 \ , \ \hat{v}_j(t^{ij}_0) ^2 - \hat{v}_i(t^{ij}_1) ^2 \} \ , \\ & \ \ \ \ \ \geq \hat{v}_j(t^{ij}_0) ^2 - \hat{v}_i(t^{ij}_1) ^2 \ . \end{array}\end{equation} Next observe that for any two vectors $\hat{v}_i,\hat{v}_j$, $$ \mbox{span $\{\hat{v}_i, \hat{v}_j \}$} \subseteq \mbox{span $\{ \hat{v}_i, \hat{v}_j , e_i \}$} \ . $$ It thus follows that, \begin{equation}\label{optkda3new}\begin{array}{llll} & E_{\frac{1}{n} \mathbf{1}_n} \big( \arg_{\mbox{$\tilde{v}$}} \ \min_{\mbox{$\tilde{v} \in \mbox{span} \{ \hat{v}_i , \hat{v}_j , e_i \}$}} \ ( \tilde{v} - \frac{1}{n} \mathbf{1}_n )^2 \ \big) \\ & \leq E_{\frac{1}{n} \mathbf{1}_n} \big( \arg_{\mbox{$\tilde{v}$}} \ \min_{\mbox{$\tilde{v} \in \mbox{span} \{ \hat{v}_i , \hat{v}_j \}$}} \ ( \tilde{v} - \frac{1}{n} \mathbf{1}_n )^2 \ \big) . \end{array}\end{equation} Let $\hat{v}_i = \hat{v}_i ( t^{ij}_1)$, $\hat{v}_j = \hat{v}_j ( t^{ij}_0)$ and $\hat{v}_i ( t^{ij}_1(+)) = \hat{v}_i (+)$ for notational convenience. From $(\ref{optkda3new})$ we have, \begin{equation}\label{lowbound3}\begin{array}{llll} & E_{\frac{1}{n} \mathbf{1}_n } ^2 (\hat{v}_i) - E_{\frac{1}{n} \mathbf{1}_n } ^2 ( \hat{v}_i (+)) \\ & \geq \big( \arg_{\mbox{$\tilde{v}$}} \ \min_{\mbox{$\tilde{v} \in \mbox{span} \{ \hat{v}_i , \hat{v}_j \}$}} \ ( \tilde{v} - \frac{1}{n} \mathbf{1}_n )^2 \big)^2 - \hat{v}_i^2 \ , \\ & = \hat{w} ^2 - \hat{v}_i ^2 \ , \\ & = \frac{1}{n} \big( \hat{w} ' \mathbf{1}_n - \hat{v}_i ' \mathbf{1}_n \big) \ , \end{array}\end{equation} where the last line holds due to Lemma $\ref{lemkda1}$, and $\hat{w}$ is defined, \begin{equation}\label{what}\begin{array}{llll} \hat{w} & = \arg_{\mbox{$\tilde{v}$}} \ \min_{\mbox{$\tilde{v} \in \mbox{span} \{ \hat{v}_i , \hat{v}_j \}$}} \ ( \tilde{v} - \frac{1}{n} \mathbf{1}_n )^2 \ , \\ & = \arg_{\mbox{$\tilde{v} = a \hat{v}_i + b \hat{v}_j$}} \ \min_{\mbox{$(a,b)$}} \ ( a \hat{v}_i + b \hat{v}_j - \frac{1}{n} \mathbf{1}_n )^2 \ . \end{array}\end{equation} Note that Lemma $\ref{lemkda2}$ together with the initialization $(\ref{initsol})$ implies $\hat{v}_i$ and $\hat{v}_j$ are non-zero. Next assume that $\hat{v}_i$ and $\hat{v}_j$ are linearly dependent. In this case $(\ref{what})$ reduces to, \begin{equation}\label{what2}\begin{array}{llll} \hat{w} & = \arg_{\mbox{$\tilde{v}$}} \ \min_{\mbox{$\tilde{v} \in \mbox{span} \{ \hat{v}_i \}$}} \ ( \tilde{v} - \frac{1}{n} \mathbf{1}_n )^2 \ , \\ & = V (V ' V)^{-1} V' \frac{1}{n} \mathbf{1}_n \ , \ V = [\hat{v}_i] \ , \\ & = \hat{v}_i \frac{\hat{v}_i ' \mathbf{1}_n}{n \hat{v}_i^2} \ , \\ & = \hat{v}_i \ , \end{array}\end{equation} where the last line follows due to Lemma $\ref{lemkda1}$. Applying $(\ref{what2})$ to $(\ref{lowbound3})$ implies, \begin{equation}\label{lowbound3a}\begin{array}{llll} E_{\frac{1}{n} \mathbf{1}_n } ^2 (\hat{v}_i) - E_{\frac{1}{n} \mathbf{1}_n } ^2 ( \hat{v}_i (+)) & \geq \hat{w} ^2 - \hat{v}_i ^2 \ , \\ & = \hat{v}_i ^2 - \hat{v}_i ^2 \\ & = 0 \ , \\ & = n \big( \hat{v}_j ^2 - \hat{v}_i ' \hat{v}_j \big)^2 \end{array}\end{equation} where the last line follows by Lemma $\ref{lemkdanew1}$ since Lemma $\ref{lemkda1}$ implies both $\hat{v}_i$ and $\hat{v}_j$ satisfy $(\ref{norm1})$, and we are assuming $\hat{v}_i$ and $\hat{v}_j$ are linearly dependent.

Next assume $\hat{v}_i$ and $\hat{v}_j$ are linearly independent. In this case $(\ref{what})$ can be solved analogous to $(\ref{fin1})$ based on the optimization problem $(\ref{optkda2a})$, \begin{equation}\label{what3}\begin{array}{llll} \hat{w} & = \arg_{\mbox{$\tilde{v} = a \hat{v}_i + b \hat{v}_j$}} \ \min_{\mbox{$(a,b)$}} \ ( a \hat{v}_i + b \hat{v}_j - \frac{1}{n} \mathbf{1}_n )^2 \ , \\ & = \frac{\hat{v}_j ^2 \hat{v}_i ' ( \hat{v}_i - \hat{v}_j) }{ \hat{v}_i^2 \hat{v}_j^2 - ( \hat{v}_i ' \hat{v}_j )^2} \hat{v}_i + \frac{ \hat{v}_i ^2 \hat{v}_j ' (\hat{v}_j - \hat{v}_i) }{ \hat{v}_i ^2 \hat{v}_j^2 - ( \hat{v}_i ' \hat{v}_j )^2} \hat{v}_j \ . \end{array}\end{equation} Substituting the second line of $(\ref{what3})$ for $\hat{w}$ in the third line of $(\ref{lowbound3})$ then yields, \begin{equation}\label{lowbound4}\begin{array}{llll} & \frac{1}{n} \big( \hat{w} ' \mathbf{1}_n - \hat{v}_i ' \mathbf{1}_n \big) \\ & \ \ \ = \frac{1}{n} \big( \frac{\hat{v}_j ^2 \hat{v}_i ' ( \hat{v}_i - \hat{v}_j) \hat{v}_i ' \mathbf{1}_n + \hat{v}_i ^2 \hat{v}_j ' (\hat{v}_j - \hat{v}_i) \hat{v}_j ' \mathbf{1}_n }{ \hat{v}_i ^2 \hat{v}_j^2 - ( \hat{v}_i ' \hat{v}_j )^2} - \hat{v}_i ' \mathbf{1}_n \big) \\ & \ \ \ = \frac{\hat{v}_j ^2 \hat{v}_i ^2 \hat{v}_i ' ( \hat{v}_i - \hat{v}_j) + \hat{v}_i ^2 \hat{v}_j^2 \hat{v}_j ' (\hat{v}_j - \hat{v}_i) - \big( \hat{v}_i ^2 \hat{v}_j^2 - ( \hat{v}_i ' \hat{v}_j )^2 \big) \hat{v}_i ^2 }{ \hat{v}_i ^2 \hat{v}_j^2 - ( \hat{v}_i ' \hat{v}_j )^2} \\ & \ \ \ = \frac{\hat{v}_i ^2}{ \hat{v}_i ^2 \hat{v}_j^2 - ( \hat{v}_i ' \hat{v}_j )^2} \big( \hat{v}_j ^2 ( \hat{v}_i^2 - \hat{v}_i ' \hat{v}_j) + \hat{v}_j^2 ( \hat{v}_j^2 - \hat{v}_i ' \hat{v}_j) \\ & \ \ \ \ \ \ \ \ \ \ \ \ \ \ \ \ \ \ \ \ \ \ \ \ \ \ \ \ \ \ \ \ \ \ \ - \hat{v}_i ^2 \hat{v}_j^2 + ( \hat{v}_i ' \hat{v}_j )^2 \big) , \\ & \ \ \ \geq \big( \hat{v}_j ^4 - 2 \hat{v}_i ' \hat{v}_j \hat{v}_j ^2 + ( \hat{v}_i ' \hat{v}_j )^2 \big) \frac{\hat{v}_i ^2}{ \hat{v}_i ^2 \hat{v}_j^2 } \\ & \ \ \ = \big( \hat{v}_j ^2 - \hat{v}_i ' \hat{v}_j \big)^2 \frac{1}{ \hat{v}_j^2 } \\ & \ \ \ \geq n \big( \hat{v}_j ^2 - \hat{v}_i ' \hat{v}_j \big)^2 \ , \end{array}\end{equation} where the last line follows since $E^2_{\frac{1}{n} \mathbf{1}_n}( \hat{v}_j) \geq 0$ implies $\hat{v}_j^2 \leq \frac{1}{n}$. Combining $(\ref{lowbound3})$ and $(\ref{lowbound4})$ implies, \begin{equation}\label{lowbound5} E_{\frac{1}{n} \mathbf{1}_n } ^2 (\hat{v}_i) - E_{\frac{1}{n} \mathbf{1}_n } ^2 ( \hat{v}_i (+) ) \geq n \big( \hat{v}_j ^2 - \hat{v}_i ' \hat{v}_j \big)^2 \ . \end{equation} Together $(\ref{lowbound2})$, $(\ref{lowbound3a})$ and $(\ref{lowbound5})$ imply $(\ref{lowbound})$. \end{proof}

\begin{lem}\label{lemkda4c}(DA Non-Decreasing Magnitude for any Communication Path) Any communication path $C^{ij}_{[t_0(ij),t_1(ij)]} \in C_{[0,t_1]}$ implies, \begin{equation}\label{lbcp} \hat{v}_i(t_1(ij)(+)) ^2 \geq \hat{v}_j(t_0(ij)) ^2 \ . \end{equation} \end{lem}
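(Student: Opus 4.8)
The plan is to argue by induction along the communication path, chaining the per-signal magnitude bound obtained inside the proof of Lemma~\ref{lemkda2} with the global monotonicity asserted by Lemma~\ref{lemkda2} itself. Recall that a communication path $C^{ij}_{[t_0(ij),t_1(ij)]}$ consists of signals $S^{\ell_1 j}, S^{\ell_2 \ell_1}, \ldots, S^{\ell_{k(ij)}\ell_{k(ij)-1}}, S^{i\ell_{k(ij)}}$ for which the transmission time of each signal strictly follows the reception time of the preceding one, and all transmission and reception times lie in $[t_0(ij),t_1(ij)]$. In particular $t^{\ell_1 j}_0 \geq t_0(ij)$, so Lemma~\ref{lemkda2} already gives $\hat{v}_j(t^{\ell_1 j}_0)^2 \geq \hat{v}_j(t_0(ij))^2$.

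First I would establish the base step. When node $\ell_1$ receives $S^{\ell_1 j}(t^{\ell_1 j}_0,t^{\ell_1 j}_1)$, inequality $(\ref{gkda4})$ yields $\hat{v}_{\ell_1}(t^{\ell_1 j}_1(+))^2 \geq \hat{v}_j(t^{\ell_1 j}_0)^2 \geq \hat{v}_j(t_0(ij))^2$. For the inductive step, suppose that after node $\ell_q$ processes the $q$-th path signal at reception time $t_q$ one has $\hat{v}_{\ell_q}(t_q(+))^2 \geq \hat{v}_j(t_0(ij))^2$. The next path signal $S^{\ell_{q+1}\ell_q}$ is transmitted by $\ell_q$ at a time $t^{\ell_{q+1}\ell_q}_0 > t_q$, hence $t^{\ell_{q+1}\ell_q}_0 \geq t_q(+)$, so Lemma~\ref{lemkda2} gives $\hat{v}_{\ell_q}(t^{\ell_{q+1}\ell_q}_0)^2 \geq \hat{v}_{\ell_q}(t_q(+))^2 \geq \hat{v}_j(t_0(ij))^2$; applying $(\ref{gkda4})$ at the reception of $S^{\ell_{q+1}\ell_q}$ by node $\ell_{q+1}$ then gives $\hat{v}_{\ell_{q+1}}(t^{\ell_{q+1}\ell_q}_1(+))^2 \geq \hat{v}_{\ell_q}(t^{\ell_{q+1}\ell_q}_0)^2 \geq \hat{v}_j(t_0(ij))^2$, which closes the induction.

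Finally I would propagate the bound to node $i$ across the last path signal $S^{i\ell_{k(ij)}}$, whose reception time $t^{i\ell_{k(ij)}}_1$ satisfies $t^{i\ell_{k(ij)}}_1 \leq t_1(ij)$: the induction gives $\hat{v}_i(t^{i\ell_{k(ij)}}_1(+))^2 \geq \hat{v}_j(t_0(ij))^2$, and one more application of Lemma~\ref{lemkda2} over the interval from $t^{i\ell_{k(ij)}}_1(+)$ to $t_1(ij)(+)$ yields $(\ref{lbcp})$.

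I do not expect a genuine obstacle here; the argument is entirely a time-bookkeeping exercise. The only point requiring care is the treatment of the right-hand limits $t(+)$ together with the strict ordering of transmission-after-reception along the path, so that no temporal ``gap'' can permit a magnitude to decrease. Because Lemma~\ref{lemkda2} asserts monotonicity of $\hat{v}_\cdot(t)^2$ for \emph{all} $t \geq 0$, any additional signals a node happens to receive between consecutive path steps can only increase its magnitude and therefore cause no difficulty.
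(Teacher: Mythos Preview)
Your proof is correct and follows essentially the same approach as the paper: both chain the per-signal bound $(\ref{gkda4})$ along the path, inserting Lemma~\ref{lemkda2} between consecutive path signals and at the two endpoints. The only cosmetic difference is that you phrase the chaining as an induction, whereas the paper writes the full string of inequalities out explicitly.
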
 \begin{proof} The first line in $(\ref{gkda4})$ implies that for any signal $S^{i j} ( t^{i j}_0, t^{i j}_1 )$, $$ \hat{v}_i(t^{ij}_1(+)) ^2 \geq \hat{v}_j(t^{ij}_0) ^2 \ . $$ Thus for any communication path $C^{ij}_{[t_0(ij),t_1(ij)]}$ defined as in $(\ref{pairs})$, $$ \begin{array}{llll} & \hat{v}_i (t^{i \ell_{k(ij)}}_1(+))^2 \geq \hat{v}_{\ell_{k(ij)}}(t^{i \ell_{k(ij)}}_0) ^2 \\ & \geq \hat{v}_{\ell_{k(ij)}}(t^{\ell_{k(ij)} \ell_{k(ij)-1}}_1(+)) ^2 \\ & \geq \hat{v}_{\ell_{k(ij)-1}}(t^{\ell_{k(ij)} \ell_{k(ij)-1}}_0) ^2 \\ & \geq \hat{v}_{\ell_{k(ij)-1}}(t^{\ell_{k(ij)-1} \ell_{k(ij)-2}}_1(+)) ^2 \geq \cdots \\ & \geq \hat{v}_{\ell_1}(t^{\ell_2 \ell_1}_0) ^2 \geq \ \hat{v}_{\ell_1}(t^{\ell_1 j}_1(+)) ^2 \geq \hat{v}_{j}(t^{\ell_1 j}_0) ^2 \ , \end{array} $$ where every second inequality holds due to Lemma $\ref{lemkda2}$ since $t^{\ell_{q+1} \ell_{q}}_0 > t^{\ell_{q} \ell_{q-1}}_1$ for each $q = 1,2, \ldots, k(ij)$, where $\ell_0 = j$ and $\ell_{k(ij)+1} = i$. We then obtain $(\ref{lbcp})$ again by Lemma $\ref{lemkda2}$ since $\hat{v}_i(t_1(ij)(+)) ^2 \geq \hat{v}_i (t^{i \ell_{k(ij)}}_1(+))^2$ and $\hat{v}_{j}(t^{\ell_1 j}_0)^2 \geq \hat{v}_{j}(t_0(ij)) ^2 $ because $t_1(ij) \geq t^{i \ell_{k(ij)}}_1$ and $t_0(ij) \leq t^{\ell_1 j}_0$ respectively. \end{proof}

\begin{lem}\label{lemkda4a}(Error Expression for $C_{[0,t_1]}$ satisfying $(\ref{star1ga})$) For any communication sequence $C_{[0,t_1]}$ satisfying $(\ref{star1ga})$ the total reduction in normal consensus squared error $\mathbf{E}^2 ( C_{[0,t_1]} )$ defined in $(\ref{newdef0})$ is, \begin{equation}\label{errred}\begin{array}{llll} \mathbf{E}^2 ( C_{[0,t_1]} ) & = \sum_{i=1}^n \big( E^2_{\frac{1}{n} \mathbf{1}_n} ( \hat{v}_i(0) ) - E^2_{\frac{1}{n} \mathbf{1}_n} ( \hat{v}_i(t_1(+)) ) \big) \\ & = \frac{n-1}{n} - \sum_{i=1}^n E^2_{\frac{1}{n} \mathbf{1}_n} ( \hat{v}_i(t_1(+) ) \\ & = \sum_{\ell \in \mathbb{N}} \mathbf{E}^2 ( C_{[t^{\ell}_0,t^{\ell}_1 ]} ) \ , \ C_{[t^{\ell}_0,t^{\ell}_1 ]} \in \mbox{S$\mathcal{V}$SC} \\ & \leq \frac{n-1}{n} \ . \end{array}\end{equation} \end{lem}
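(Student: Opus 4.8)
The plan is to prove the four lines of $(\ref{errred})$ in turn: the first by telescoping the per-signal reductions at each node, the second by evaluating the error of the initialization, the third by the disjointness of the $\mbox{I$\mathcal{V}$SC}$ intervals, and the fourth by nonnegativity of the squared error. Note that the first two lines in fact hold for an arbitrary communication sequence under the DA algorithm; only the third line uses the $\mbox{I$\mathcal{V}$SC}$ structure $(\ref{star1ga})$.

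First I would establish the telescoping identity. Fix a node $i \in \mathcal{V}$ and list the signals received at $i$ during $[0,t_1]$ in increasing order of reception time; by (A9) these reception times are distinct, and by (A7) the value $\hat{v}_i(t)$ is constant on any subinterval during which $i$ receives no signal. Hence $\hat{v}_i(0)$ equals $\hat{v}_i$ at the first reception time, the post-update value $\hat{v}_i$ at $t^{ij}_1(+)$ equals $\hat{v}_i$ at the next reception time, and after the last reception $\hat{v}_i$ equals $\hat{v}_i(t_1(+))$. Summing the per-signal reductions $\mathbf{E}^2(S^{ij}(t^{ij}_0,t^{ij}_1)) = E_{\frac{1}{n} \mathbf{1}_n}^2(\hat{v}_i(t^{ij}_1)) - E_{\frac{1}{n} \mathbf{1}_n}^2(\hat{v}_i(t^{ij}_1(+)))$ from $(\ref{newdef1})$ over all signals received at $i$ therefore telescopes to $E_{\frac{1}{n} \mathbf{1}_n}^2(\hat{v}_i(0)) - E_{\frac{1}{n} \mathbf{1}_n}^2(\hat{v}_i(t_1(+)))$; each summand is nonnegative by Lemma $\ref{lemkda2}$, $(\ref{gkda0})$, and Lemma $\ref{lemkda1}$, so convergence of the series is automatic even if infinitely many receptions accumulate at $t_1$. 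Because every signal in $C_{[0,t_1]}$ is received at exactly one node, summing this telescoped identity over $i = 1,\ldots,n$ and recalling the definition $(\ref{newdef0})$ of $\mathbf{E}^2(C_{[0,t_1]})$ as a sum over all signals yields the first line of $(\ref{errred})$.

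For the second line I would substitute the initialization $(\ref{dainit1})$, $\hat{v}_i(0) = \frac{1}{n} e_i$: the vector $\hat{v}_i(0) - \frac{1}{n} \mathbf{1}_n$ is zero in position $i$ and $-\frac{1}{n}$ in the remaining $n-1$ positions, so $E_{\frac{1}{n} \mathbf{1}_n}^2(\hat{v}_i(0)) = (n-1)/n^2$ (equivalently use $(\ref{gkda0})$ with $\hat{v}_i(0)^2 = 1/n^2$), and summing over $i$ gives $\sum_{i=1}^n E_{\frac{1}{n} \mathbf{1}_n}^2(\hat{v}_i(0)) = (n-1)/n$, which together with the first line gives the second line. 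For the third line I would use that, by $(\ref{star1ga})$, $C_{[0,t_1]}$ is the union of the sets $C_{[t^\ell_0, t^\ell_1]}$, $\ell \in \mathbb{N}$, whose time intervals are pairwise disjoint since $t^\ell_0 = t^{\ell-1}_1(+)$; hence each signal lies in exactly one $C_{[t^\ell_0, t^\ell_1]}$, additivity of $(\ref{newdef0})$ over this partition gives $\mathbf{E}^2(C_{[0,t_1]}) = \sum_{\ell \in \mathbb{N}} \mathbf{E}^2(C_{[t^\ell_0, t^\ell_1]})$, and each $C_{[t^\ell_0, t^\ell_1]} \in \mbox{S$\mathcal{V}$SC}$ by the $\mbox{I$\mathcal{V}$SC}$ definition. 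The fourth line is then immediate, since $E_{\frac{1}{n} \mathbf{1}_n}^2(\hat{v}_i(t_1(+))) \geq 0$ for every $i$.

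The main obstacle is making the telescoping rigorous: one must argue carefully from (A7) and (A9) that $t \mapsto \hat{v}_i(t)$ is piecewise constant with jumps exactly at the distinct reception times at node $i$, so that the sum over signals received at $i$ truly collapses, and one must keep the right-limit bookkeeping straight (the value at $t^{ij}_1(+)$ feeding into the next reception, and into $\hat{v}_i(t_1(+))$ after the last reception), including the case of infinitely many signals accumulating at $t_1$, where the monotonicity of Lemma $\ref{lemkda2}$ guarantees the telescoping series converges to the stated limit.
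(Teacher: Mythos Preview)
Your proposal is correct and follows essentially the same approach as the paper's proof, which tersely attributes the first line to (A7), (A8) and the definitions $(\ref{newdef2})$--$(\ref{newdef1})$, the second to the initialization, the third to the disjoint partition of an $\mbox{I$\mathcal{V}$SC}$ sequence into $\mbox{S$\mathcal{V}$SC}$ pieces, and the fourth to nonnegativity of the error. You have simply spelled out the telescoping carefully (including the accumulation-at-$t_1$ case via Lemma~$\ref{lemkda2}$), which is more detail than the paper provides but not a different argument.
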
 \begin{proof} The first line follows from (A7),(A8) and the definitions $(\ref{newdef2})-(\ref{newdef1})$. The second line in $(\ref{errred})$ is due to the initialization $(\ref{initsol})$. The third line in $(\ref{errred})$ follows since any communication sequence $C_{[0,t_1]}$ satisfying $(\ref{star1ga})$ can be partitioned into an infinite number of disjoint sequences $C_{[t^{\ell}_0,t^{\ell}_1 ]} \in \mbox{S$\mathcal{V}$SC}$. The last line in $(\ref{errred})$ follows since the minimum error of any normal consensus estimate is $0$. \end{proof}

\begin{lem}\label{lemkda4b}(Vanishing Reduction in Error for $C_{[0,t_1]}$ satisfying $(\ref{star1ga})$) For any communication sequence $C_{[0,t_1]}$ satisfying $(\ref{star1ga})$ there exists an integer $\ell_\varepsilon$ such that, \begin{equation}\label{lab1} \mathbf{E}^2 ( C_{[t^{\ell}_0,t^{\ell}_1 ]} ) \leq \varepsilon \ , \ \forall \ \ell \geq \ell_\varepsilon \ , \end{equation} for any $\varepsilon > 0$, where $\mathbf{E}^2 ( C_{[t^{\ell}_0,t^{\ell}_1 ]} )$ is defined by $(\ref{newdef0})$. \end{lem}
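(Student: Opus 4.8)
\textbf{Proof proposal for Lemma \ref{lemkda4b}.}
The plan is to deduce the claim from the convergence of the series of per-block error reductions established in Lemma \ref{lemkda4a}. First I would record that every block reduction is nonnegative: for a single signal, Lemma \ref{lemkda2} (equivalently, the first line of $(\ref{gkda4})$ together with the identity $(\ref{gkda0})$, valid since all normal consensus estimates satisfy $(\ref{norm1})$ by Lemma \ref{lemkda1}) gives $\hat{v}_i(t^{ij}_1(+))^2 \geq \hat{v}_i(t^{ij}_1)^2$, hence $\mathbf{E}^2\big(S^{ij}(t^{ij}_0,t^{ij}_1)\big) = E^2_{\frac1n\mathbf{1}_n}(\hat v_i(t^{ij}_1)) - E^2_{\frac1n\mathbf{1}_n}(\hat v_i(t^{ij}_1(+))) \geq 0$; summing over the signals contained in $C_{[t^\ell_0,t^\ell_1]}$ as in $(\ref{newdef0})$ yields $\mathbf{E}^2(C_{[t^\ell_0,t^\ell_1]}) \geq 0$ for every $\ell \in \mathbb{N}$.

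Next I would invoke Lemma \ref{lemkda4a}: since $C_{[0,t_1]}$ satisfies $(\ref{star1ga})$, it partitions into the disjoint blocks $C_{[t^\ell_0,t^\ell_1]} \in \mbox{S$\mathcal{V}$SC}$ (these blocks are well-defined for all $\ell$ precisely because the I$\mathcal{V}$SC condition guarantees each interval $[t^\ell_0,t_1)$ again contains an S$\mathcal{V}$SC subinterval), and $(\ref{errred})$ gives the uniform bound on the partial sums,
\begin{equation*}
\sum_{\ell \in \mathbb{N}} \mathbf{E}^2(C_{[t^\ell_0,t^\ell_1]}) = \frac{n-1}{n} - \sum_{i=1}^n E^2_{\frac1n\mathbf{1}_n}(\hat v_i(t_1(+))) \leq \frac{n-1}{n} .
\end{equation*}
Thus $\sum_{\ell} \mathbf{E}^2(C_{[t^\ell_0,t^\ell_1]})$ is a convergent series of nonnegative terms.

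Finally I would apply the elementary fact that the terms of a convergent series of nonnegative reals tend to zero: given $\varepsilon > 0$, finiteness of $\sum_\ell \mathbf{E}^2(C_{[t^\ell_0,t^\ell_1]})$ forces the tail sums to vanish, so there is an integer $\ell_\varepsilon$ with $\sum_{\ell \geq \ell_\varepsilon} \mathbf{E}^2(C_{[t^\ell_0,t^\ell_1]}) \leq \varepsilon$; since every summand is nonnegative, this implies $\mathbf{E}^2(C_{[t^\ell_0,t^\ell_1]}) \leq \varepsilon$ for all $\ell \geq \ell_\varepsilon$, which is exactly $(\ref{lab1})$. There is no real obstacle here — the only points requiring care are confirming nonnegativity of each block reduction (a direct consequence of Lemmas \ref{lemkda1}–\ref{lemkda2}) and that the S$\mathcal{V}$SC partition is genuinely infinite, both already in hand; the substance of the argument was done in Lemma \ref{lemkda4a}.
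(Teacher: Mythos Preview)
Your proposal is correct and follows essentially the same route as the paper: both argue that Lemma~\ref{lemkda4a} exhibits $\mathbf{E}^2(C_{[0,t_1]})$ as a bounded infinite sum of nonnegative block reductions, whence the individual terms must eventually fall below any $\varepsilon>0$. The paper's version is terser (it simply invokes the ``monotonic sequence theorem'' without spelling out nonnegativity or the tail-sum step), but your more explicit justification via Lemmas~\ref{lemkda1}--\ref{lemkda2} and the tail-sum bound is exactly the right way to make that inference rigorous.
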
 \begin{proof} The third line of $(\ref{errred})$ implies that for any $C_{[0,t_1]}$ satisfying $(\ref{star1ga})$ the quantity $\mathbf{E}^2 ( C_{[0,t_1]} )$ is the sum of an infinite number of non-negative terms $\mathbf{E}^2 ( C_{[t^{\ell}_0,t^{\ell}_1 ]} )$, the fourth line in $(\ref{errred})$ implies $\mathbf{E}^2 ( C_{[0,t_1]} )$ is bounded above, thus $(\ref{lab1})$ follows by the monotonic sequence theorem. \end{proof}

\begin{lem}\label{lemkda4d}(DA Lower Bound on Reduction in Error for any $C_{[0,t_1]}$ satisfying $(\ref{scdef})$) Any communication sequence $C_{[t_0,t_1]}$ satisfying $(\ref{scdef})$ implies, \begin{equation}\label{red2}\begin{array}{llll} & \mathbf{E}^2 ( C_{[t_0,t_1]} ) \geq n \big( \min_{i \in \mathcal{V}} \{ \hat{v}_i(t_1(+)) ^2 \} \\ & \ \ \ \ \ \ \ \ \ \ \ \ \ \ \ \ \ \ \ \ \ \ \ \ \ \ \ \ \ - \max_{i \in \mathcal{V}} \{ \hat{v}_i(t_0) ^2 \} \big) \ , \\ & \ \ \ \ \ \ \ \ \ \ \ \ \ \ \geq 0 \ . \end{array}\end{equation} \end{lem}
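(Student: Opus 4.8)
The plan is to turn the bound into a telescoping identity for the total error reduction and then to localize the magnitudes $\hat{v}_i(t)^2$ using the S$\mathcal{V}$SC structure together with the monotonicity results already established. First I would note that, exactly as in the first line of $(\ref{errred})$, assumptions (A7) and (A8) together with the definitions $(\ref{newdef2})$--$(\ref{newdef1})$ yield the telescoping identity
\[
\mathbf{E}^2(C_{[t_0,t_1]}) = \sum_{i=1}^n \Big( E^2_{\frac{1}{n}\mathbf{1}_n}\big(\hat{v}_i(t_0)\big) - E^2_{\frac{1}{n}\mathbf{1}_n}\big(\hat{v}_i(t_1(+))\big) \Big),
\]
since by (A7) each $\hat{v}_i$ changes only when node $i$ receives a signal, and by Lemma~\ref{lemkda4} every per-signal term $(\ref{newdef1})$ is non-negative, so the sum of these terms over all signals received by node $i$ in $[t_0,t_1]$ collapses to the difference of its error at the two endpoints. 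Then, since every normal consensus estimate satisfies $(\ref{norm1})$ by Lemma~\ref{lemkda1}, equation $(\ref{gkda0})$ gives $E^2_{\frac{1}{n}\mathbf{1}_n}(\hat{v}_i(t)) = \tfrac{1}{n} - \hat{v}_i(t)^2$, and substituting this into the telescoping identity produces $\mathbf{E}^2(C_{[t_0,t_1]}) = \sum_{i=1}^n \big( \hat{v}_i(t_1(+))^2 - \hat{v}_i(t_0)^2 \big)$.

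The first inequality of $(\ref{red2})$ then follows immediately by bounding each summand below by $\min_{k\in\mathcal{V}}\{\hat{v}_k(t_1(+))^2\} - \max_{k\in\mathcal{V}}\{\hat{v}_k(t_0)^2\}$ and adding the $n$ identical bounds. For the second inequality I would pick a node $i^\star$ attaining $\min_i \hat{v}_i(t_1(+))^2$ and a node $j^\star$ attaining $\max_j \hat{v}_j(t_0)^2$. If $i^\star = j^\star$, then $\hat{v}_{i^\star}(t_1(+))^2 \geq \hat{v}_{i^\star}(t_0)^2$ is just Lemma~\ref{lemkda2}. Otherwise, because $C_{[t_0,t_1]}$ satisfies $(\ref{scdef})$ it contains a communication path from $j^\star$ to $i^\star$ whose transmission and reception times lie in $[t_0,t_1]$, so Lemma~\ref{lemkda4c} gives $\hat{v}_{i^\star}(t_1(i^\star j^\star)(+))^2 \geq \hat{v}_{j^\star}(t_0(i^\star j^\star))^2$, and two applications of Lemma~\ref{lemkda2} (monotonicity in $t$, using $t_0 \leq t_0(i^\star j^\star)$ and $t_1(i^\star j^\star) \leq t_1$) extend this to $\hat{v}_{i^\star}(t_1(+))^2 \geq \hat{v}_{j^\star}(t_0)^2$. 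Either way, $\min_i \hat{v}_i(t_1(+))^2 \geq \max_j \hat{v}_j(t_0)^2$, which is exactly non-negativity of the right-hand side of the first inequality.

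I do not anticipate a genuine obstacle: the argument is a combination of the telescoping identity already used in Lemma~\ref{lemkda4a}, the normalization $(\ref{norm1})$, and the monotonicity statements of Lemmas~\ref{lemkda2} and~\ref{lemkda4c}. The only point that needs a careful word is the telescoping step when a node receives infinitely many signals inside the finite window $[t_0,t_1]$; there I would invoke the monotone sequence theorem exactly as in Lemma~\ref{lemkda4b}, noting that the partial error reductions are non-negative and that $E^2_{\frac{1}{n}\mathbf{1}_n}(\hat{v}_i(t))$ is non-increasing and bounded below by $0$, so the series converges to $E^2_{\frac{1}{n}\mathbf{1}_n}(\hat{v}_i(t_0)) - E^2_{\frac{1}{n}\mathbf{1}_n}(\hat{v}_i(t_1(+)))$.
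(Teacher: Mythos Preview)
Your proposal is correct and follows essentially the same route as the paper: rewrite $\mathbf{E}^2(C_{[t_0,t_1]})$ as $\sum_i(\hat v_i(t_1(+))^2-\hat v_i(t_0)^2)$ via the normalization identity $(\ref{gkda0})$, bound the sum below by $n$ times $(\min-\max)$, then use the S$\mathcal{V}$SC path from the argmax node to the argmin node together with Lemma~\ref{lemkda4c} and the monotonicity Lemma~\ref{lemkda2} to show the bracket is non-negative. Your extra remarks on the $i^\star=j^\star$ case and on handling infinitely many signals in the window are harmless refinements that the paper leaves implicit.
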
 \begin{proof} The first line in $(\ref{red2})$ holds for any communication sequence $C_{[t_0,t_1]}$, $$ \begin{array}{llll} & \mathbf{E}^2 ( C_{[t_0,t_1]} ) = \sum_{i = 1}^n \big( E^2_{\frac{1}{n}\mathbf{1}_n} ( \hat{v}_i(t_0)) - E^2_{\frac{1}{n}\mathbf{1}_n} ( \hat{v}_i(t_1(+)) ) \big) \\ & \ \ \ \ = \sum_{i = 1}^n \big( \hat{v}_i(t_1(+)) ^2 - \hat{v}_i(t_0) ^2 \big) \\ & \ \ \ \ \geq n \big( \ \min_{i \in \mathcal{V}} \{ \hat{v}_i(t_1(+)) ^2 \} - \max_{i \in \mathcal{V}} \{ \hat{v}_i(t_0) ^2 \} \ \big) \ . \end{array} $$ To prove the second line in $(\ref{red2})$ it is required that $C_{[t_0,t_1]}$ satisfies $(\ref{scdef})$. We define, $$ \begin{array}{llll} & \underline{\ell} = \arg_i \ \min_{i \in \mathcal{V}} \{ \hat{v}_i(t_1(+)) ^2 \} \ , \\ & \overline{\ell} = \arg_i \ \max_{i \in \mathcal{V}} \{ \hat{v}_i(t_0) ^2 \} \ . \end{array} $$ Since $C_{[t_0,t_1]}$ satisfies $(\ref{scdef})$ there exists a communication path $C^{\underline{\ell} \overline{\ell}}_{[t_0(\underline{\ell} \overline{\ell}),t_1(\underline{\ell} \overline{\ell})]} \in C_{[t_0,t_1]}$. The Lemma $\ref{lemkda4c}$ then implies, $$ \hat{v}_{\underline{\ell}} ( t_1 ( \underline{\ell} \overline{\ell} ) (+) ) ^2 \geq \hat{v}_{\overline{\ell}} ( t_0 ( \underline{\ell} \overline{\ell} ) ) ^2 \ . $$ The second line in $(\ref{red2})$ then follows because, $$ \begin{array}{llll} & \min_{i \in \mathcal{V}} \{ \hat{v}_i(t_1(+))^2 \} = \hat{v}_{\underline{\ell}} (t_1(+))^2 \\ & \ \ \ \ \ \geq \hat{v}_{\underline{\ell}} (t_1 ( \underline{\ell} \overline{\ell} )(+) )^2 \\ & \ \ \ \ \ \geq \hat{v}_{\overline{\ell}} ( t_0 ( \underline{\ell} \overline{\ell} ) ) ^2 \\ & \ \ \ \ \ \geq \hat{v}_{\overline{\ell}} ( t_0 ) ^2 = \max_{i \in \mathcal{V}} \{ \hat{v}_i(t_0)^2 \} \ , \end{array} $$ where the first and third inequality hold due to Lemma $\ref{lemkda2}$ because $t_1 \geq t_1 ( \underline{\ell} \overline{\ell} )$ and $t_0 ( \underline{\ell} \overline{\ell} ) \geq t_0$ respectively. \end{proof}

\begin{lem}\label{lemkda6}(DA Local Convergence of Normal Consensus Estimates) For any communication sequence $C_{[0,t_1]}$ satisfying $(\ref{star1ga})$ there exists an integer $\ell_\chi$ such that for all $\ell \geq \ell_\chi$, \begin{equation}\label{lab1a} \big( \hat{v}_i( t^{i j}_1) - \hat{v}_j( t^{i j}_0 ) \big)^2 \leq \chi \ , \ \forall \ S^{i j} ( t^{i j}_0, t^{i j}_1 ) \in C_{[t^{\ell}_0,t^{\ell}_1 ]} \ , \end{equation} for any $\chi > 0$. \end{lem}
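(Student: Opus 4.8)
\emph{Proof proposal.} The plan is to show first that every magnitude $\hat{v}_i(t)^2$ approaches one and the same value $\mu$ as $t \to t_1$, and then to combine this with the second branch of the lower bound in Lemma $\ref{lemkda4}$ and the additivity $(\ref{newdef0})$ of the per-signal error reductions to trap $\big( \hat{v}_i(t^{ij}_1) - \hat{v}_j(t^{ij}_0) \big)^2$ near zero inside every sufficiently late $\mbox{S$\mathcal{V}$SC}$ block $C_{[t^{\ell}_0,t^{\ell}_1]}$.

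\emph{Step 1 (common limit of the magnitudes).} For each fixed $i$, the sequence $\ell \mapsto \hat{v}_i(t^{\ell}_0)^2$ is non-decreasing by Lemma $\ref{lemkda2}$ (since $t^{\ell}_0 \leq t^{\ell+1}_0$) and is bounded above by $\frac{1}{n}$, because $E^2_{\frac{1}{n}\mathbf{1}_n}(\hat{v}_i) = \frac{1}{n} - \hat{v}_i^2 \geq 0$ by $(\ref{gkda0})$ and Lemma $\ref{lemkda1}$; hence it converges to some $\mu_i \in (0,\frac{1}{n}]$, and from the sandwich $\hat{v}_i(t^{\ell}_0)^2 \leq \hat{v}_i(t^{\ell}_1(+))^2 \leq \hat{v}_i(t^{\ell+1}_0)^2$ so does $\hat{v}_i(t^{\ell}_1(+))^2$. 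Each block $C_{[t^{\ell}_0,t^{\ell}_1]}$ is $\mbox{S$\mathcal{V}$SC}$, so Lemma $\ref{lemkda4d}$ gives $0 \leq \min_i \hat{v}_i(t^{\ell}_1(+))^2 - \max_i \hat{v}_i(t^{\ell}_0)^2 \leq \frac{1}{n}\mathbf{E}^2(C_{[t^{\ell}_0,t^{\ell}_1]})$, and the right-hand side tends to $0$ as $\ell \to \infty$ by Lemma $\ref{lemkda4b}$. Letting $\ell \to \infty$, and using that $\mathcal{V}$ is finite, yields $\min_i \mu_i \geq \max_i \mu_i$, so all the $\mu_i$ coincide with a single value $\mu$.

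\emph{Step 2 (localization and squeeze).} Fix $\chi > 0$ and set $\delta = \chi/4$. Step 1 supplies an $\ell^\star$ with $|\hat{v}_i(t^{\ell}_0)^2 - \mu| < \delta$ for all $i$ and all $\ell \geq \ell^\star$; since every time instant attached to a signal of $C_{[t^{\ell}_0,t^{\ell}_1]}$ lies in $[t^{\ell}_0,t^{\ell+1}_0)$ by the partition $(\ref{star1ga})$, monotonicity (Lemma $\ref{lemkda2}$) then forces $|\hat{v}_i(t)^2 - \mu| < \delta$ at all such instants, for every $\ell \geq \ell^\star$. Writing $\varepsilon_\ell = \mathbf{E}^2(C_{[t^{\ell}_0,t^{\ell}_1]})$, observe from $(\ref{newdef0})$, $(\ref{lowbound1})$ and $(\ref{gkda4})$ that $\varepsilon_\ell$ is a sum of the non-negative per-signal reductions $\hat{v}_i(t^{ij}_1(+))^2 - \hat{v}_i(t^{ij}_1)^2$, so each one is $\leq \varepsilon_\ell$. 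Applying the second branch of Lemma $\ref{lemkda4}$ to a signal $S^{ij}(t^{ij}_0,t^{ij}_1) \in C_{[t^{\ell}_0,t^{\ell}_1]}$ gives $n\big( \hat{v}_j(t^{ij}_0)'(\hat{v}_j(t^{ij}_0) - \hat{v}_i(t^{ij}_1)) \big)^2 \leq \varepsilon_\ell$, i.e. $\big| \hat{v}_j(t^{ij}_0)^2 - \hat{v}_j(t^{ij}_0)'\hat{v}_i(t^{ij}_1) \big| \leq \sqrt{\varepsilon_\ell/n}$. Abbreviating $\hat{v}_i = \hat{v}_i(t^{ij}_1)$, $\hat{v}_j = \hat{v}_j(t^{ij}_0)$ and using the identity $(\hat{v}_i - \hat{v}_j)^2 = (\hat{v}_i^2 - \hat{v}_j^2) + 2(\hat{v}_j^2 - \hat{v}_i'\hat{v}_j)$, I would then conclude, for all $\ell \geq \ell^\star$ and every signal of block $\ell$,
\[
(\hat{v}_i - \hat{v}_j)^2 \leq |\hat{v}_i^2 - \mu| + |\mu - \hat{v}_j^2| + 2\big| \hat{v}_j^2 - \hat{v}_i'\hat{v}_j \big| \leq 2\delta + 2\sqrt{\varepsilon_\ell/n}.
\]
Since $\varepsilon_\ell \to 0$ by Lemma $\ref{lemkda4b}$, choosing $\ell_\chi \geq \ell^\star$ large enough that $2\sqrt{\varepsilon_\ell/n} \leq \chi/2$ for all $\ell \geq \ell_\chi$ gives $(\hat{v}_i - \hat{v}_j)^2 \leq 2\delta + \chi/2 = \chi$, which is $(\ref{lab1a})$.

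\emph{Main obstacle.} No single estimate is difficult; the delicate point is Step 1 — showing that the partition $(\ref{star1ga})$ together with Lemma $\ref{lemkda4d}$ forces all $n$ non-decreasing magnitude sequences to the \emph{same} limit $\mu$, and that monotonicity then pins down every magnitude appearing inside a late block within $\delta$ of $\mu$. Once that bookkeeping is secured, Lemma $\ref{lemkda4}$ already furnishes the control on $\hat{v}_j'(\hat{v}_j - \hat{v}_i)$ and the remainder is the elementary expansion above.
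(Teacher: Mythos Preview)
Your proof is correct and rests on the same core ingredients as the paper's: the identity $(\hat{v}_i-\hat{v}_j)^2=(\hat{v}_i^2-\hat{v}_j^2)+2\hat{v}_j'(\hat{v}_j-\hat{v}_i)$, the second branch of Lemma~\ref{lemkda4} to control the cross term via the vanishing per-signal error reduction (Lemma~\ref{lemkda4b}), and Lemmas~\ref{lemkda2} and~\ref{lemkda4d} to control the magnitude difference.

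The one organizational difference is in how you handle $\hat{v}_i^2-\hat{v}_j^2$. You first prove the structural fact that all $n$ monotone magnitude sequences share a common limit $\mu$ (your Step~1), then triangle through $\mu$ to get $|\hat{v}_i^2-\hat{v}_j^2|<2\delta$. The paper instead bounds $\hat{v}_i(t^{ij}_1)^2-\hat{v}_j(t^{ij}_0)^2$ directly within a single block: it writes $\mathbf{E}^2(C_{[t^\ell_0,t^\ell_1]})=\sum_r\big(\hat{v}_r(t^\ell_1(+))^2-\hat{v}_r(t^\ell_0)^2\big)$, splits off the $(i,j)$ cross-pair, and uses Lemma~\ref{lemkda4d} to show the remaining piece $\hat{v}_j(t^\ell_1(+))^2-\hat{v}_i(t^\ell_0)^2$ is non-negative, so that $\hat{v}_i(t^{ij}_1)^2-\hat{v}_j(t^{ij}_0)^2\leq\varepsilon$ already for $\ell\geq\ell_\varepsilon$. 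Your route is a bit more conceptual (the common limit $\mu$ is a natural statement in its own right); the paper's is marginally more direct since it needs only one small-parameter $\varepsilon$ rather than separately choosing $\delta$ and then $\ell_\chi$. Either way the argument goes through.
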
 \begin{proof} For any communication sequence $C_{[0,t_1]}$ satisfying $(\ref{star1ga})$ the Lemma $\ref{lemkda4b}$ implies there exists an integer $\ell_\varepsilon$ such that $(\ref{lab1})$ holds for any $\varepsilon > 0$. For all $\ell \geq \ell_\varepsilon$ we thus have for any signal $S^{ij} ( t^{i j}_0, t^{i j}_1 ) \in C_{[t^{\ell}_0,t^{\ell}_1 ]}$, \begin{equation}\label{lab3}\begin{array}{llll} & \hat{v}_i (t^{ij}_1) ^2 - \hat{v}_j (t^{ij}_0) ^2 \leq \hat{v}_i (t^\ell_1(+)) ^2 - \hat{v}_j (t^\ell_0) ^2 \ , \\ & \ \ \ \ \ \ \ \ \ \ \ \ \leq \sum_{r=1}^n \big( \hat{v}_r (t^\ell_1(+)) ^2 - \hat{v}_r (t^\ell_0) ^2 \big) \ , \\ & \ \ \ \ \ \ \ \ \ \ \ \ = \mathbf{E}^2 ( C_{[t^{\ell}_0,t^{\ell}_1 ]} ) \leq \varepsilon \ . \end{array}\end{equation} Note that the first inequality in $(\ref{lab3})$ holds by Lemma $\ref{lemkda2}$ since $t^{ij}_0 \geq t^\ell_0$ and $t^{ij}_1 \leq t^\ell_1$ for any $S^{i j} ( t^{i j}_0, t^{i j}_1 ) \in C_{[t^{\ell}_0,t^{\ell}_1 ]}$. The second inequality in $(\ref{lab3})$ holds since, \begin{equation}\label{lab3a}\begin{array}{llll} & \sum_{r=1}^n \big( \hat{v}_r (t^\ell_1(+)) ^2 - \hat{v}_r (t^\ell_0) ^2 \big) \\ & \ \ \ \ \ \ = \big( \hat{v}_i (t^\ell_1(+)) ^2 - \hat{v}_j (t^\ell_0) ^2 \big) + \big( \hat{v}_j (t^\ell_1(+)) ^2 - \hat{v}_i (t^\ell_0) ^2 \big) \\ & \quad \quad \ \ \ \ \ \ \ + \sum_{r \in \mathcal{V} \setminus \{ i, j \}} \big( \hat{v}_r (t^\ell_1(+)) ^2 - \hat{v}_r (t^\ell_0) ^2 \big) \ , \end{array}\end{equation} where \begin{equation}\label{ar1} \sum_{r \in \mathcal{V} \setminus \{ i, j \}} \big( \hat{v}_r (t^\ell_1(+)) ^2 - \hat{v}_r (t^\ell_0) ^2 \big) \geq 0 \end{equation} holds due to Lemma $\ref{lemkda2}$, and, \begin{equation}\label{ar2}\begin{array}{llll} \hat{v}_j (t^\ell_1(+)) ^2 - \hat{v}_i (t^\ell_0) ^2 & \geq \min_{r \in \mathcal{V}} \{ \hat{v}_r( t^\ell_1(+)) ^2 \} \\ & \ \ \ \ \ \ \ \ - \max_{r \in \mathcal{V}} \{ \hat{v}_r( t^\ell_0) ^2 \} \\ & \geq 0 \ , \end{array}\end{equation} where the second inequality in $(\ref{ar2})$ holds due to Lemma $\ref{lemkda4d}$ since $C_{[t^{\ell}_0,t^{\ell}_1 ]} \in \mbox{S$\mathcal{V}$SC}$ and thus satisfies $(\ref{scdef})$. Together $(\ref{lab3a})$, $(\ref{ar1})$ and $(\ref{ar2})$ imply the second inequality in $(\ref{lab3})$. Applying Lemma $\ref{lemkda4}$ to Lemma $\ref{lemkda4b}$ implies that for all $S^{i j} ( t^{i j}_0, t^{i j}_1 ) \in C_{[t^{\ell}_0,t^{\ell}_1 ]}$ and $\ell \geq \ell_\varepsilon$, \begin{equation}\label{ar3}\begin{array}{llll} \varepsilon & \geq \mathbf{E}^2 (C_{[t^{\ell}_0,t^{\ell}_1 ]}) \\ & \geq \mathbf{E}^2 \big( S^{ij} ( t^{i j}_0, t^{i j}_1 ) \big) \ , \\ & = E_{\frac{1}{n} \mathbf{1}_n } ^2 \big( \hat{v}_i(t^{ij}_1) \big) - E_{\frac{1}{n} \mathbf{1}_n } ^2 \big( \hat{v}_i(t^{ij}_1(+)) \big) \ , \\ & \geq \mbox{max} \{ \hat{v}_j(t^{ij}_0) ^2 - \hat{v}_i(t^{ij}_1) ^2 \ , \\ & \ \ \ \ \ \ \ \ \ \ \ \ \ \ \ n \big( \hat{v}_j(t^{ij}_0) ' ( \hat{v}_j(t^{ij}_0) - \hat{v}_i(t^{ij}_1) ) \big)^2 \} \ , \end{array}\end{equation} for any $\varepsilon > 0$. For notational convenience denote $\hat{v}_i = \hat{v}_i ( t^{ij}_1)$ and $\hat{v}_j = \hat{v}_j ( t^{ij}_0)$. Combining $(\ref{lab3})$ and $(\ref{ar3})$ implies that for any $\varepsilon > 0$ there exist an integer $\ell_\varepsilon$ such that, \begin{equation}\label{ar4}\begin{array}{llll} \ \ \hat{v}_i ^2 - \hat{v}_j ^2 & \leq \varepsilon \ , \\ \hat{v}_j ' ( \hat{v}_j - \hat{v}_i ) & \leq \sqrt{ \varepsilon / n } \ , \end{array}\end{equation} for any $S^{i j} ( t^{i j}_0, t^{i j}_1 ) \in C_{[t^{\ell}_0,t^{\ell}_1 ]}$ and $\ell \geq \ell_\varepsilon$. To obtain $(\ref{lab1a})$ note that $(\ref{ar4})$ implies, \begin{equation}\label{chil1}\begin{array}{llll} ( \hat{v}_i - \hat{v}_j )^2 & = \hat{v}_i^2 - 2 \hat{v}_i ' \hat{v}_j + \hat{v}_j^2 \ , \\ & = \hat{v}_i^2 - \hat{v}_j^2 + 2 \hat{v}_j ' ( \hat{v}_j - \hat{v}_i ) \ , \\ & \leq \varepsilon + 2 \sqrt{\varepsilon/n}\ , \\ & \leq \sqrt{ \varepsilon} ( 1 + 2 / \sqrt{n} ) \ , \ \mbox{$\forall \ \epsilon \in (0, 1]$.} \end{array}\end{equation} We thus define $\overline{\varepsilon}(\chi)$, \begin{equation}\label{chil} \overline{\varepsilon}(\chi) = \Big( \frac{\chi}{1 + 2 / \sqrt{n} } \Big) ^2 \ . \end{equation} For any $\chi \in (0,1)$ and $\varepsilon \in (0, \overline{\varepsilon}(\chi) ]$ the result $(\ref{lab1a})$ then follows from $(\ref{chil1})$. \end{proof}

\begin{lem}\label{lemkda7a}(DA Properties of the Normal Consensus Update) Upon reception of any signal $S^{ij} ( t^{i j}_0, t^{i j}_1 )$ the normal consensus estimate $\hat{v}_i(t^{ij}_1(+))$ that results from the update problem $(\ref{optkda3})$ will satisfy, \begin{equation}\label{aa5} \hat{v}_i(t^{ij}_1) ' \big( \hat{v}_i(t^{ij}_1) - \hat{v}_i(t^{ij}_1(+)) \big) \leq 0 \ . \end{equation} \end{lem}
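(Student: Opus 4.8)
The plan is to recognize that $(\ref{optkda3})$ defines $\hat{v}_i(t^{ij}_1(+))$ as the orthogonal projection of $\frac{1}{n}\mathbf{1}_n$ onto the subspace $W=\mbox{span}\{\hat{v}_i(t^{ij}_1),\hat{v}_j(t^{ij}_0),e_i\}$, and then to exploit the fact that the old estimate $\hat{v}_i(t^{ij}_1)$ itself lies in $W$. Concretely, the minimizer in $(\ref{optkda3})$ is $\hat{v}_i(t^{ij}_1(+))=V_{(DA)}V_{(DA)}^{+}\frac{1}{n}\mathbf{1}_n$ as in $(\ref{optkda2})$, so the residual $\frac{1}{n}\mathbf{1}_n-\hat{v}_i(t^{ij}_1(+))=(I-V_{(DA)}V_{(DA)}^{+})\frac{1}{n}\mathbf{1}_n$ is orthogonal to every column of $V_{(DA)}$, in particular to $\hat{v}_i(t^{ij}_1)$. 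Taking that inner product gives $\hat{v}_i(t^{ij}_1)'\hat{v}_i(t^{ij}_1(+))=\frac{1}{n}\hat{v}_i(t^{ij}_1)'\mathbf{1}_n$.

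The second ingredient is Lemma~\ref{lemkda1}: the normalization property $(\ref{norm1})$ holds for every normal consensus estimate at every time, hence in particular $\hat{v}_i(t^{ij}_1)^2=\frac{1}{n}\hat{v}_i(t^{ij}_1)'\mathbf{1}_n$ for the estimate \emph{before} the update. Combining the two identities yields $\hat{v}_i(t^{ij}_1)'\hat{v}_i(t^{ij}_1(+))=\hat{v}_i(t^{ij}_1)^2$, so $\hat{v}_i(t^{ij}_1)'\big(\hat{v}_i(t^{ij}_1)-\hat{v}_i(t^{ij}_1(+))\big)=0$, which establishes $(\ref{aa5})$, in fact with equality.

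There is no substantial obstacle here; the only point needing a word of care is that $\hat{v}_i(t^{ij}_1),\hat{v}_j(t^{ij}_0),e_i$ may be linearly dependent, so $V_{(DA)}$ need not have full column rank. This causes no trouble: $V_{(DA)}V_{(DA)}^{+}$ is the orthogonal projector onto $\mbox{col}(V_{(DA)})=W$ for any rank, so $V_{(DA)}'(I-V_{(DA)}V_{(DA)}^{+})=0$ always, and the residual is orthogonal to all of $W$ in every case. Equivalently, one may argue purely variationally: since $\hat{v}_i(t^{ij}_1(+))$ minimizes $(\tilde{v}-\frac{1}{n}\mathbf{1}_n)^2$ over $W\ni\hat{v}_i(t^{ij}_1)$, the first-order condition along the direction $\hat{v}_i(t^{ij}_1)$ forces $(\hat{v}_i(t^{ij}_1(+))-\frac{1}{n}\mathbf{1}_n)'\hat{v}_i(t^{ij}_1)=0$, after which $(\ref{norm1})$ closes the argument exactly as above.
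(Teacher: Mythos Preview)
Your proof is correct and in fact establishes the result with equality, which is stronger than the stated inequality $(\ref{aa5})$. Your route is genuinely different from the paper's and considerably more direct.

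The paper does not use the projection--residual orthogonality directly. Instead it introduces an auxiliary vector $\tilde{w}$, defined as the minimizer of $(\tilde{v}-\frac{1}{n}\mathbf{1}_n)^2$ over $\mbox{span}\{\hat{v}_i(t^{ij}_1(+)),\hat{v}_i(t^{ij}_1),e_i\}$. Since this span is contained in $\mbox{span}\{\hat{v}_i(t^{ij}_1),\hat{v}_j(t^{ij}_0),e_i\}$, one gets $E_{\frac{1}{n}\mathbf{1}_n}(\tilde{w})\geq E_{\frac{1}{n}\mathbf{1}_n}(\hat{v}_i(t^{ij}_1(+)))$, hence (via the normalization identity $(\ref{gkda0})$) $\tilde{w}^2\leq \hat{v}_i(t^{ij}_1(+))^2$. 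The paper then interprets the passage from $\hat{v}_i(t^{ij}_1(+))$ to $\tilde{w}$ as a fictitious ``signal'' $S^{ii}(t^{ij}_1,t^{ij}_1(+))$ with nonpositive error reduction, and invokes the lower bound of Lemma~\ref{lemkda4} on that signal: this forces $n\big(\hat{v}_i(t^{ij}_1)'(\hat{v}_i(t^{ij}_1)-\hat{v}_i(t^{ij}_1(+)))\big)^2\leq 0$, giving $(\ref{aa5})$ (again with equality, though the paper does not remark on this).

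What each approach buys: yours is a two--line argument relying only on the elementary fact that the least-squares residual is orthogonal to the column space, plus Lemma~\ref{lemkda1}; it makes transparent why equality holds and requires no auxiliary constructions. The paper's argument reuses the machinery of Lemma~\ref{lemkda4}, which lends some internal coherence to the chain of lemmas, but at the cost of an artificial ``self--signal'' device and a longer derivation. Your handling of the possibly rank-deficient case via the general projector identity $V_{(DA)}'(I-V_{(DA)}V_{(DA)}^{+})=0$ is also cleaner than a case split.
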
\begin{proof} Let us define $\tilde{w}$, \begin{equation}\label{def1w} \tilde{w} = \arg_{\mbox{$\tilde{v}$}} \ \min_{\mbox{$\tilde{v} \in \mbox{span} \{ \hat{v}_i(t^{ij}_1(+)) , \hat{v}_i(t^{ij}_1) , e_i \}$}} \ \big( \tilde{v} - \frac{1}{n} \mathbf{1}_n \big)^2 \ , \end{equation} where $\hat{v}_i(t^{ij}_1(+))$ is given by $(\ref{optkda3})$. Note that $\hat{v}_i(t^{ij}_1(+)) \in \mbox{span} \{ \hat{v}_i(t^{ij}_1) , \hat{v}_j(t^{ij}_0) , e_i \}$ implies, \begin{equation}\label{gg2} \mbox{span} \{ \hat{v}_i(t^{ij}_1(+)) , \hat{v}_i(t^{ij}_1) , e_i \} \subseteq \mbox{span} \{ \hat{v}_i(t^{ij}_1) , \hat{v}_j(t^{ij}_0) , e_i \} \ . \end{equation} From $(\ref{gg2})$ we have, \begin{equation}\label{gg3} E_{\frac{1}{n} \mathbf{1}_n} ( \tilde{w} ) \geq E_{\frac{1}{n} \mathbf{1}_n} \big( \hat{v}_i(t^{ij}_1(+)) \big) \ , \end{equation} and thus combining $(\ref{gg3})$ and $(\ref{gkda0})$ implies, \begin{equation}\label{gkda1} \tilde{w}^2 \leq \hat{v}_i(t^{ij}_1(+)) ^2 \ . \end{equation} Next observe that since $\tilde{w}$ is defined by $(\ref{def1w})$, if $(\ref{gkda1})$ is applied to the signal $S^{ii}( t^{i j}_1 , t^{i j}_1(+) )$ then, \begin{equation}\label{gg3w}\begin{array}{llll} \mathbf{E} ^2 \big( S^{ii}( t^{i j}_1 , t^{i j}_1 (+)) \big) & \equiv E_{\frac{1}{n} \mathbf{1}_n } ^2 \big( \hat{v}_i(t^{ij}_1(+)) \big) - E_{\frac{1}{n} \mathbf{1}_n } ^2 (\tilde{w}) \ , \\ & = \tilde{w}^2 - \hat{v}_i(t^{ij}_1(+))^2 \leq 0 \ . \end{array} \end{equation} Applying Lemma $\ref{lemkda4}$ to the signal $S^{ii}( t^{i j}_1 , t^{i j}_1 (+))$ then implies, \begin{equation}\begin{array}{llll} 0 & \geq \mathbf{E} ^2 \big( S^{ii}( t^{i j}_1 , t^{i j}_1 (+) ) \big) \\ & \geq \mbox{max} \{ \hat{v}_i(t^{ij}_1) ^2 - \hat{v}_i(t^{ij}_1(+)) ^2 \ , \\ & \ \ \ \ \ \ \ \ \ \ \ \ \ n \big( \hat{v}_i(t^{ij}_1) ' ( \hat{v}_i(t^{ij}_1) - \hat{v}_i(t^{ij}_1(+)) ) \big)^2 \} \end{array} \end{equation} where the first line follows from $(\ref{gg3w})$ and the last line implies $(\ref{aa5})$. \end{proof}

\begin{lem}\label{lemkda7b}(DA Vanishing Change in Normal Consensus Update) For any communication sequence $C_{[0,t_1]}$ satisfying $(\ref{star1ga})$ there exists an integer $\ell_\varepsilon$ such that for all $\ell \geq \ell_\varepsilon$, \begin{equation}\label{ch1a} \big( \hat{v}_i( t^{i j}_1(+)) - \hat{v}_i( t^{i j}_1) \big)^2 \leq \varepsilon \ , \ \forall \ S^{i j} ( t^{i j}_0, t^{i j}_1 ) \in C_{[t^{\ell}_0,t^{\ell}_1 ]} \ , \end{equation} for any $\varepsilon > 0$. \end{lem}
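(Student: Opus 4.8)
The plan is to convert the quantity $\big(\hat{v}_i(t^{ij}_1(+)) - \hat{v}_i(t^{ij}_1)\big)^2$ into the \emph{increase in squared magnitude} $\hat{v}_i(t^{ij}_1(+))^2 - \hat{v}_i(t^{ij}_1)^2$ produced by the signal $S^{ij}(t^{ij}_0,t^{ij}_1)$, and then to dominate that increase by $\mathbf{E}^2(C_{[t^\ell_0,t^\ell_1]})$, which tends to $0$ as $\ell \to \infty$ by Lemma~$\ref{lemkda4b}$. The first step is purely algebraic. Writing $a = \hat{v}_i(t^{ij}_1(+))$ and $b = \hat{v}_i(t^{ij}_1)$, expand $(a-b)^2 = (a^2 - b^2) + 2\,b'(b-a)$. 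By Lemma~$\ref{lemkda7a}$ the cross term $b'(b-a) = \hat{v}_i(t^{ij}_1)'\big(\hat{v}_i(t^{ij}_1) - \hat{v}_i(t^{ij}_1(+))\big)$ is non-positive, hence $\big(\hat{v}_i(t^{ij}_1(+)) - \hat{v}_i(t^{ij}_1)\big)^2 \le \hat{v}_i(t^{ij}_1(+))^2 - \hat{v}_i(t^{ij}_1)^2$.

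The second step repeats, essentially verbatim, the chain of inequalities $(\ref{lab3})$ used in the proof of Lemma~$\ref{lemkda6}$. For any $S^{ij}(t^{ij}_0,t^{ij}_1) \in C_{[t^\ell_0,t^\ell_1]}$ we have $t^\ell_0 \le t^{ij}_0 \le t^{ij}_1 \le t^\ell_1$, so Lemma~$\ref{lemkda2}$ (monotonicity of the squared magnitudes) gives $\hat{v}_i(t^\ell_0)^2 \le \hat{v}_i(t^{ij}_1)^2 \le \hat{v}_i(t^{ij}_1(+))^2 \le \hat{v}_i(t^\ell_1(+))^2$, so the single-signal jump is at most the block increase $\hat{v}_i(t^\ell_1(+))^2 - \hat{v}_i(t^\ell_0)^2$. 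Enlarging the $i$-term to the full sum $\sum_{r=1}^n\big(\hat{v}_r(t^\ell_1(+))^2 - \hat{v}_r(t^\ell_0)^2\big)$ (each summand non-negative by Lemma~$\ref{lemkda2}$) and recognizing this sum as $\mathbf{E}^2(C_{[t^\ell_0,t^\ell_1]})$ — which follows from the telescoping identity in the first line of $(\ref{errred})$ applied to the block $[t^\ell_0,t^\ell_1]$ together with $(\ref{gkda0})$, valid because Lemma~$\ref{lemkda1}$ ensures $(\ref{norm1})$ — yields $\hat{v}_i(t^{ij}_1(+))^2 - \hat{v}_i(t^{ij}_1)^2 \le \mathbf{E}^2(C_{[t^\ell_0,t^\ell_1]})$.

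Finally, Lemma~$\ref{lemkda4b}$ supplies, for each $\varepsilon > 0$, an integer $\ell_\varepsilon$ with $\mathbf{E}^2(C_{[t^\ell_0,t^\ell_1]}) \le \varepsilon$ for all $\ell \ge \ell_\varepsilon$; chaining the displayed inequalities of the two previous steps then gives $(\ref{ch1a})$. There is no real analytic difficulty here: the two substantive facts — the variational inequality $(\ref{aa5})$ and the summability of the block errors — are already available, so the only point requiring care is the index bookkeeping in the middle step, i.e. checking that $t^\ell_0 \le t^{ij}_0$ and $t^{ij}_1 \le t^\ell_1$ together with Lemma~$\ref{lemkda2}$ genuinely confine the whole jump of $\hat{v}_i(\cdot)^2$ to the block, and that the conversion between $\mathbf{E}^2$ (defined through $E_{\frac{1}{n}\mathbf{1}_n}$) and the squared-magnitude differences rests on $(\ref{norm1})$.
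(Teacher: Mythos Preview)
Your proof is correct and matches the paper's approach: expand $(\hat{v}_i(+)-\hat{v}_i)^2$ as the magnitude jump plus the cross term, kill the cross term with Lemma~\ref{lemkda7a}, and bound the magnitude jump by $\mathbf{E}^2(C_{[t^\ell_0,t^\ell_1]})$, which vanishes by Lemma~\ref{lemkda4b}. The only cosmetic difference is that the paper bounds the jump directly via $\mathbf{E}^2(S^{ij})\le \mathbf{E}^2(C_{[t^\ell_0,t^\ell_1]})$ (each summand in $(\ref{newdef0})$ being non-negative), whereas you route through the per-node block increase $\hat{v}_i(t^\ell_1(+))^2-\hat{v}_i(t^\ell_0)^2$ and then sum over nodes---both arrive at the same inequality.
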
\begin{proof} Recall that Lemma $\ref{lemkda4b}$ implies that for any communication sequence $C_{[0,t_1]}$ satisfying $(\ref{star1ga})$ there exists an integer $\ell_\varepsilon$ such that $(\ref{lab1})$ holds for any $\varepsilon > 0$, we thus observe for $\ell \geq \ell_\varepsilon$, \begin{equation}\label{bal1}\begin{array}{llll} \varepsilon & \geq \mathbf{E}^2 \big( C_{[t^{\ell}_0,t^{\ell}_1 ]} \big) \\ & \geq \mathbf{E} ^2 ( S^{i j} ( t^{i j}_0, t^{i j}_1 ) ) \ , \ \forall \ S^{i j} ( t^{i j}_0, t^{i j}_1 ) \in C_{[t^{\ell}_0,t^{\ell}_1 ]} \\ & = \hat{v}_i (t^{ij}_1(+) )^2 - \hat{v}_i (t^{ij}_1 )^2 \ . \end{array}\end{equation} For all $\ell \geq \ell_\varepsilon$ and signals $S^{i j} ( t^{i j}_0, t^{i j}_1 ) \in C_{[t^{\ell}_0,t^{\ell}_1 ]}$ we then have, $$ \begin{array}{llll} & \big( \hat{v}_i (t^{ij}_1(+) ) - \hat{v}_i (t^{ij}_1) \big) ^2 = \hat{v}_i (t^{ij}_1(+)) ^2 - \hat{v}_i (t^{ij}_1)^2 \\ & \ \ \ \ \ \ \ \ \ \ \ \ \ \ \ \ \ \ + 2 \hat{v}_i (t^{ij}_1) ' \big( \hat{v}_i (t^{ij}_1) - \hat{v}_i (t^{ij}_1(+)) \big) \\ & \ \ \ \ \ \ \ \ \ \ \ \ \ \ \ \ \ \ \ \ \leq \mathbf{E}^2 \big( S^{i j} ( t^{i j}_0, t^{i j}_1 ) \big) \leq \varepsilon \ , \end{array} $$ where the first inequality follows from Lemma $\ref{lemkda7a}$ and the second inequality from Lemma $\ref{lemkda7b}$. \end{proof}

\begin{lem}\label{lemkda7c}(DA Vanishing Change between Normal Consensus Update and Signal) For any communication sequence $C_{[0,t_1]}$ satisfying $(\ref{star1ga})$ there exists an integer $\ell_\gamma$ such that, \begin{equation}\label{ch1} \big( \hat{v}_i( t^{i j}_1(+)) - \hat{v}_j( t^{i j}_0) \big)^2 \leq \gamma \ , \ \forall \ S^{i j} ( t^{i j}_0, t^{i j}_1 ) \in C_{[t^{\ell}_0,t^{\ell}_1 ]} , \end{equation} for all $\ell \geq \ell_\gamma$ and any $\gamma > 0$. \end{lem}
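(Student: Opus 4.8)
The plan is to obtain $(\ref{ch1})$ as an immediate consequence of the two preceding lemmas via a single triangle‑inequality step. All of the squared quantities in $(\ref{ch1})$, $(\ref{ch1a})$ and $(\ref{lab1a})$ are squared Euclidean norms of difference vectors, so for any signal $S^{ij}(t^{ij}_0,t^{ij}_1)$ one has the decomposition
$$ \big\| \hat{v}_i(t^{ij}_1(+)) - \hat{v}_j(t^{ij}_0) \big\| \;\leq\; \big\| \hat{v}_i(t^{ij}_1(+)) - \hat{v}_i(t^{ij}_1) \big\| \;+\; \big\| \hat{v}_i(t^{ij}_1) - \hat{v}_j(t^{ij}_0) \big\| , $$
where $\|\cdot\|$ denotes the Euclidean norm. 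The first term on the right is controlled by Lemma $\ref{lemkda7b}$ (the signal reception moves $\hat v_i$ only slightly once $\ell$ is large), and the second term is controlled by Lemma $\ref{lemkda6}$ (the two endpoints of a signal are nearly equal once $\ell$ is large).

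Concretely, given $\gamma>0$ I would set $\varepsilon=\chi=\gamma/4$ and proceed as follows. First apply Lemma $\ref{lemkda7b}$ to produce an integer $\ell_\varepsilon$ such that $\big\| \hat{v}_i(t^{ij}_1(+)) - \hat{v}_i(t^{ij}_1) \big\|^2 \leq \varepsilon$ for every $S^{ij}(t^{ij}_0,t^{ij}_1)\in C_{[t^{\ell}_0,t^{\ell}_1]}$ and every $\ell\geq\ell_\varepsilon$. Then apply Lemma $\ref{lemkda6}$ to produce an integer $\ell_\chi$ such that $\big\| \hat{v}_i(t^{ij}_1) - \hat{v}_j(t^{ij}_0) \big\|^2 \leq \chi$ for every $S^{ij}(t^{ij}_0,t^{ij}_1)\in C_{[t^{\ell}_0,t^{\ell}_1]}$ and every $\ell\geq\ell_\chi$. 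Setting $\ell_\gamma=\max\{\ell_\varepsilon,\ell_\chi\}$ and substituting these two bounds into the displayed inequality, every signal $S^{ij}(t^{ij}_0,t^{ij}_1)\in C_{[t^{\ell}_0,t^{\ell}_1]}$ with $\ell\geq\ell_\gamma$ satisfies $\big\| \hat{v}_i(t^{ij}_1(+)) - \hat{v}_j(t^{ij}_0) \big\| \leq \sqrt{\varepsilon}+\sqrt{\chi}=\sqrt{\gamma}$, and squaring gives $(\ref{ch1})$.

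There is essentially no hard step here: the lemma is a bookkeeping corollary that chains the ``signal barely moves $\hat v_i$'' estimate (Lemma $\ref{lemkda7b}$, built in turn on Lemmas $\ref{lemkda7a}$ and $\ref{lemkda4}$--$\ref{lemkda4b}$) with the ``endpoints of a signal are nearly equal'' estimate (Lemma $\ref{lemkda6}$). The only points requiring a moment's care are taking the two indices $\ell_\varepsilon,\ell_\chi$ simultaneously (handled by the maximum) and making sure the constants $\varepsilon,\chi$ fed into the earlier lemmas are chosen small enough that $\sqrt{\varepsilon}+\sqrt{\chi}\le\sqrt{\gamma}$, which the choice $\varepsilon=\chi=\gamma/4$ guarantees. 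This lemma, combined with Lemma $\ref{lemkda3}$ (the local zero‑error property $\hat v_{ii}(t)=1/n$) and one further triangle‑inequality argument, then feeds into the remaining lemmas to show that all normal consensus estimates converge to a common vector and that this vector approaches $\tfrac{1}{n}\mathbf{1}_n$ in the $L^2$ norm, completing the proof of Theorem $\ref{thm6}$.
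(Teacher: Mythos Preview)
Your proof is correct and follows essentially the same approach as the paper: both arguments combine Lemma~\ref{lemkda6} and Lemma~\ref{lemkda7b} via a single triangle-inequality step on $\|\hat{v}_i(t^{ij}_1(+)) - \hat{v}_j(t^{ij}_0)\|$. The only cosmetic difference is in the bookkeeping of constants---the paper ties $\varepsilon$ to $\chi$ through the function $\overline{\varepsilon}(\chi)$ from $(\ref{chil})$ and uses a single threshold $\ell_{\overline{\varepsilon}(\chi)}$, whereas you treat the two lemmas as independent black boxes, pick $\varepsilon=\chi=\gamma/4$, and take $\ell_\gamma=\max\{\ell_\varepsilon,\ell_\chi\}$; both choices yield the same bound $\sqrt{\varepsilon}+\sqrt{\chi}\leq\sqrt{\gamma}$.
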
\begin{proof} For any communication sequence $C_{[0,t_1]}$ satisfying $(\ref{star1ga})$ the Lemma $\ref{lemkda6}$ implies that $(\ref{lab1a})$ holds for any $\chi \in (0,1)$ and $\varepsilon \in (0, \overline{\varepsilon}(\chi) ]$, where $\overline{\varepsilon}(\chi)$ is given by $(\ref{chil})$. The Lemma $\ref{lemkda7b}$ implies that $(\ref{ch1a})$ holds for any $\varepsilon > 0$, thus for any $\ell \geq \ell_{\overline{\varepsilon}(\chi)}$ and $S^{i j} ( t^{i j}_0, t^{i j}_1 ) \in C_{[t^{\ell}_0,t^{\ell}_1 ]}$ the triangle inequality then implies, $$ \begin{array}{llll} & \sqrt{ \big( \hat{v}_i (t^{ij}_1(+)) - \hat{v}_j (t^{ij}_0) \big) ^2} \leq \sqrt{ \big( \hat{v}_i (t^{ij}_1(+)) - \hat{v}_i (t^{ij}_1) \big) ^2} \\ & \ \ \ \ \ \ \ \ \ \ \ \ \ \ \ \ \ \ \ \ \ \ \ \ \ \ \ \ \ \ \ \ \ \ + \sqrt{ \big( \hat{v}_i (t^{ij}_1) - \hat{v}_j (t^{ij}_0) \big) ^2} \\ & \ \ \ \ \ \ \ \ \ \ \ \ \ \ \ \ \ \ \ \ \ \ \ \ \leq \sqrt{\overline{\varepsilon}(\chi)} + \sqrt{\chi} \ , \\ & \ \ \ \ \ \ \ \ \ \ \ \ \ \ \ \ \ \ \ \ \ \ \ \ \leq 2 \sqrt{\chi} \ , \ \mbox{$\forall \ \chi \in ( 0 ,1 )$.} \end{array} $$ Any $\chi \in (0, \gamma/4]$ and $\varepsilon \in (0, \overline{\varepsilon}(\chi) ]$ thus yields $(\ref{ch1})$ for any $\gamma \in ( 0, 4]$. \end{proof}

\begin{lem}\label{thmkda7c}(DA Network Convergence to Average-Consensus) For any communication sequence $C_{[0,t_1]}$ satisfying $(\ref{star1ga})$ there exists an integer $\ell_\xi$ such that for all $\ell \geq \ell_\xi$, \begin{equation}\label{l2a} \sum_{i = 1}^n \big( \hat{v}_i( t^\ell_1 (+) ) - \frac{1}{n}\mathbf{1}_n \big)^2 \leq \xi \ , \ \forall \ C_{[t^{\ell}_0,t^{\ell}_1 ]} \in C_{[0,t_1]} , \end{equation} for any $\xi > 0$. \end{lem}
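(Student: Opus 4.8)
The plan is to prove $(\ref{l2a})$ by showing that, for $\ell$ large, every vector $\hat{v}_i(t^\ell_1(+))$ lies within $L^2$-distance $\sqrt{\xi}/n$ of $\frac{1}{n}\mathbf{1}_n$, and the entry point is Lemma $\ref{lemkda3}$: the $i$-th coordinate of $\hat{v}_i(t^\ell_1(+))$ is exactly $\frac{1}{n}$ for every $i$. Thus it suffices to establish that for $\ell$ large the vectors $\{\hat{v}_i(t^\ell_1(+)):i\in\mathcal{V}\}$ are pairwise $L^2$-close, say within $\delta$, because then for each pair $(i,j)$ one has $|\hat{v}_{ij}(t^\ell_1(+))-\frac{1}{n}|=|\hat{v}_{ij}(t^\ell_1(+))-\hat{v}_{jj}(t^\ell_1(+))|\le\|\hat{v}_i(t^\ell_1(+))-\hat{v}_j(t^\ell_1(+))\|\le\delta$, so $\big(\hat{v}_i(t^\ell_1(+))-\frac{1}{n}\mathbf{1}_n\big)^2\le n\delta^2$ for each $i$ and the sum over $i$ is at most $n^2\delta^2$; taking $\delta=\sqrt{\xi}/n$ would finish. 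An equivalent reformulation, via Lemma $\ref{lemkda1}$ and $(\ref{gkda0})$, is $\sum_i\big(\hat{v}_i(t^\ell_1(+))-\frac{1}{n}\mathbf{1}_n\big)^2=1-\sum_i\hat{v}_i(t^\ell_1(+))^2$, so the claim is the same as the total squared magnitude $\sum_i\hat{v}_i(t^\ell_1(+))^2$ tending to $1$.

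To get pairwise closeness I would fix $i,j\in\mathcal{V}$ and use that $C_{[t^\ell_0,t^\ell_1]}\in\mbox{S$\mathcal{V}$SC}$ contains a communication path $(\ref{pairs})$ from $j$ to $i$; after splicing out repeated nodes (the temporal ordering survives, as each transmission on the spliced path still follows the preceding reception) the path is loop-free and so carries at most $n-1$ receptions $S^{\ell_1 j},S^{\ell_2\ell_1},\ldots,S^{i\ell_k}$. Walking along the path, Lemma $\ref{lemkda7c}$ bounds each hop discrepancy ($\hat{v}_{\ell_{q+1}}$ just after receiving from $\ell_q$ is within $\sqrt{\gamma}$ of $\hat{v}_{\ell_q}$ at the corresponding transmission time), there are at most $n-1$ hops, and the endpoints $\hat{v}_i(t^\ell_1(+))$ and $\hat{v}_j(t^\ell_0)$ enter because by (A7) a node's estimate is frozen after its last reception and by Lemma $\ref{lemkda3}$ the source has $\hat{v}_{jj}(\cdot)=\frac{1}{n}$. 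The remaining terms are the drifts of an intermediate node's estimate between its in-hop and its out-hop; each single update there is controlled by Lemma $\ref{lemkda7b}$, and, crucially, Lemmas $\ref{lemkda7a}$ and $\ref{lemkda1}$ give the identity $\|\hat{v}_m(\tau(+))-\hat{v}_m(\tau)\|^2=\hat{v}_m(\tau(+))^2-\hat{v}_m(\tau)^2$ for every update, so by Lemma $\ref{lemkda2}$ the sum over all signals of $C_{[t^\ell_0,t^\ell_1]}$ of these squared drifts telescopes to at most $\mathbf{E}^2(C_{[t^\ell_0,t^\ell_1]})$, which Lemma $\ref{lemkda4b}$ drives below any prescribed $\varepsilon$ for $\ell$ large. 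Feeding these bounds into the triangle inequality along the path, and using (via Lemma $\ref{lemkda4d}$ and the preceding lemmas) that all magnitudes $\hat{v}_a(t^\ell_1(+))^2$ are squeezed into a vanishing band, I would obtain the uniform pairwise bound $\|\hat{v}_i(t^\ell_1(+))-\hat{v}_j(t^\ell_1(+))\|\le\delta$ with $\delta\to0$.

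Combining the two steps: given $\xi>0$, Lemmas $\ref{lemkda4b}$, $\ref{lemkda6}$, $\ref{lemkda7b}$, $\ref{lemkda7c}$ furnish $\ell_\xi$ so that the pairwise bound holds with $\delta=\sqrt{\xi}/n$ for all $\ell\ge\ell_\xi$, and the coordinate estimate of the first paragraph then yields $\sum_{i=1}^n\big(\hat{v}_i(t^\ell_1(+))-\frac{1}{n}\mathbf{1}_n\big)^2\le\xi$, i.e. $(\ref{l2a})$. Since the normal consensus error is nonincreasing in $t$ (Lemma $\ref{lemkda2}$) and $t^\ell_1\to t_1$, this gives $(\ref{acdef1})$ at $t=t_1(+)$ and hence completes the proof of Theorem $\ref{thm6}$.

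The hard part will be the control of the intermediate drifts in the pairwise-closeness step: a communication path may contain arbitrarily many signals, an intermediate node can be perturbed by many off-path receptions before it retransmits, and $\mathbf{E}^2(C_{[t^\ell_0,t^\ell_1]})$ directly bounds only the \emph{sum of squared} drifts, while the triangle inequality calls for the sum of the drifts themselves. Reducing to a loop-free path (so that only $O(n)$ displacements of non-negligible size enter the triangle inequality) and exploiting the projection identity of Lemma $\ref{lemkda7a}$ together with the monotonicity of Lemma $\ref{lemkda2}$ is what must make the bookkeeping work; and it is the ``local zero error'' property of Lemma $\ref{lemkda3}$ that converts pairwise agreement of the estimates into agreement with $\frac{1}{n}\mathbf{1}_n$ rather than with some spurious common vector.
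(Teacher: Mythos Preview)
Your plan mirrors the paper's proof closely: both walk a communication path $j\to i$ inside $C_{[t^\ell_0,t^\ell_1]}$, split the displacement by the triangle inequality into ``hop'' terms (controlled by Lemma~\ref{lemkda7c}) and ``off-path drift'' terms at the intermediate nodes (controlled by Lemma~\ref{lemkda7b}), and finish by invoking Lemma~\ref{lemkda3} to replace $\hat v_{jj}$ by $1/n$ so that coordinate-wise closeness becomes closeness to $\frac{1}{n}\mathbf{1}_n$.

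The one genuine divergence is your loop-free reduction, and it does not buy what you hope. The paper does \emph{not} shorten the path: it bounds every hop and every single-signal drift by $\sqrt{\gamma}$ (resp.\ $\sqrt{\bar\varepsilon(\chi)}\le\sqrt{\gamma}$), sums over \emph{all} signals in the block to obtain $|C_{[t^\ell_0,t^\ell_1]}|\sqrt{\gamma}$ (this is $(\ref{sh1})$), and then chooses $\gamma\le\xi/(n|C_{[t^\ell_0,t^\ell_1]}|)^2$. Your reduction caps the number of \emph{hops} at $n-1$, but it does not cap the number of off-path signals an intermediate node may receive between its in-hop and its out-hop, so the drift sum still carries an uncontrolled count. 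The telescoping you invoke from Lemma~\ref{lemkda7a} is an \emph{inequality}, not an identity: one gets $\|\hat v_m(\tau(+))-\hat v_m(\tau)\|^2\le\hat v_m(\tau(+))^2-\hat v_m(\tau)^2$, hence $\sum\|\text{drift}\|^2\le\mathbf E^2(C_{[t^\ell_0,t^\ell_1]})$; but the triangle inequality needs $\sum\|\text{drift}\|$, and passing from squares to first powers via Cauchy--Schwarz reintroduces a factor $\sqrt{N}$ with $N$ the number of off-path receptions. Nor does the single-step projection inequality $\hat v_m(\tau)'\hat v_m(\tau(+))\ge\hat v_m(\tau)^2$ compose over several updates to bound $\|\hat v_m(t_b)-\hat v_m(t_a)\|$ directly. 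So the ``only $O(n)$ displacements of non-negligible size'' claim does not hold as stated, and to close the argument you are forced back to the paper's device of absorbing everything into $|C_{[t^\ell_0,t^\ell_1]}|\sqrt{\gamma}$ and shrinking $\gamma$ accordingly.
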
 \begin{proof} Since $C_{[0,t_1]}$ satisfies $(\ref{star1ga})$ we have $C_{[t^{\ell}_0,t^{\ell}_1 ]} \in \mbox{S$\mathcal{V}$SC}$ for each $\ell \in \mathbb{N}$, thus there exists a communication path $C^{ij}_{[t_0(ij),t_1(ij)]} \in C_{[t^{\ell}_0,t^{\ell}_1 ]}$ for any $i \in \mathcal{V}$, $j \in \mathcal{V}_{-i}$, and $\ell \in \mathbb{N}$. For any $i \in \mathcal{V}$, $j \in \mathcal{V}_{-i}$ and $\ell \in \mathbb{N}$ the triangle inequality then implies, \begin{equation}\label{new1}\begin{array}{llll} & \sqrt{ \big( \hat{v}_{ij} (t^\ell_1(+)) - \hat{v}_{jj} (t^{\ell_1 j}_0) \big) ^2} \\ & \ \ \ \ \ \ \ \ \ \ \leq \sum_{S^{rp}(t^{rp}_0 , t^{rp}_1 ) \in C^{ij}_{[t_0(ij),t_1(ij)]} } \\ & \ \ \ \ \ \ \ \ \ \ \ \ \ \ \ \ \ \ \ \ \ \sqrt{\big( \hat{v}_{r j}(t^{rp}_1(+)) - \hat{v}_{p j} ( t^{rp}_0 ) \big)^2} \\ & \ \ \ \ \ \ \ \ \ \ \ \ \ \ + \sum_{q = 1}^{k(ij)+1} \sum_{S^{rp}(t^{rp}_0 , t^{rp}_1 ) \in Q^\ell_q (ij)} \\ & \ \ \ \ \ \ \ \ \ \ \ \ \ \ \ \ \ \ \ \ \ \sqrt{ \big( \hat{v}_{r j}(t^{r p}_1(+)) - \hat{v}_{r j} ( t^{rp}_1 ) \big)^2 } \end{array}\end{equation} where we define $\bar{C}_{[t^\ell_0, t^\ell_1]} = C_{[t^\ell_0, t^\ell_1]} \setminus C^{ij}_{[t_0(ij),t_1(ij)]}$ and, \begin{equation}\label{qdef}\begin{array}{llll} & Q^\ell_q (ij) = \{ S^{\ell_q m}(t^{\ell_q m}_0, t^{\ell_q m}_1 ) \in \bar{C}_{[t^\ell_0, t^\ell_1]} \ : \\ & \ \ \ \ \ \ \ \ \ \ \ \ \ \ \ \ \ \ \ \ \ \ \ \ \ \ \ \ t^{\ell_q m}_1 \in ( t^{\ell_q \ell_{q-1}}_1 , t^{\ell_{q+1} \ell_q}_0 ) \} , \\ & \mbox{for $q = 1, \ldots, k(ij)$, where $\ell_0 = j$, $\ell_{k(ij)+1} = i$} , \\ & Q^\ell_{k(ij)+1}(ij) = \{ S^{i m}(t^{i m}_0, t^{i m}_1 ) \ \in C_{[t^\ell_0, t^\ell_1]} \ : \\ & \ \ \ \ \ \ \ \ \ \ \ \ \ \ \ \ \ \ \ \ \ \ \ \ \ \ \ \ t^{i m}_1 > t_1(ij) \} \ . \end{array}\end{equation} We clarify that the RHS of $(\ref{new1})$ includes the differences between the received normal consensus vector $\hat{v}_{\ell_{q-1}}( t^{\ell_q \ell_{q-1}}_0)$ and the updated normal consensus vector $\hat{v}_{\ell_q}( t^{\ell_q \ell_{q-1}}_1(+))$ that results from each signal contained in the communication path $C^{ij}_{[t_0(ij),t_1(ij)]} \in C_{[t^{\ell}_0,t^{\ell}_1 ]}$. Each set $Q^\ell_q(ij)$ defined in $(\ref{qdef})$ contains the signals received at each node \emph{after} the respective signal in communication path was received, but \emph{before} the respective signal in the communication path was sent, as is required for an application of the triangle inequality. The set $Q^\ell_{k(ij)+1}(ij)$ contains the signals received at node $i$ \emph{after} the last signal in the communication path $C^{ij}_{[t_0(ij),t_1(ij)]}$ was received, but \emph{before or at} the end of the communication sequence $C_{[t^{\ell}_0,t^{\ell}_1 ]}$, again this is required for application of the triangle inequality.

For any communication sequence $C_{[0,t_1]}$ satisfying $(\ref{star1ga})$ the Lemma $\ref{lemkda7b}$ implies there exists an integer $\ell_\varepsilon$ such that $(\ref{ch1a})$ holds for any $\varepsilon > 0$. Likewise, for any communication sequence $C_{[0,t_1]}$ satisfying $(\ref{star1ga})$ the Lemma $\ref{lemkda7c}$ implies there exists an integer $\ell_\gamma$ such that $(\ref{ch1})$ holds for any $\gamma > 0$. Thus for any $\gamma \in (0, 4]$ if we let $\chi \in (0, \gamma/4]$ and $\varepsilon \in (0, \overline{\varepsilon}(\chi)]$ then for any $\ell \geq \ell_{\overline{\varepsilon}(\chi)}$, \begin{equation}\label{sh1}\begin{array}{llll} & \sqrt{ \big( \hat{v}_{ij} (t^\ell_1(+)) - \hat{v}_{jj} (t^{\ell_1 j}_0) \big) ^2} \\ & \ \ \ \ \leq | C^{ij}_{[t_0(ij),t_1(ij)]} | \sqrt{\gamma} \\ & \ \ \ \ \ \ \ \ \ \ \ \ + \sum_{q = 1}^{k(ij)+1} |Q^\ell_q (ij)| \sqrt{\overline{\varepsilon}(\chi)} \ , \\ & \ \ \ \ \leq \big( | C^{ij}_{[t_0(ij),t_1(ij)]} | + \sum_{q = 1}^{k(ij)+1} |Q^\ell_q (ij)| \big) \sqrt{\gamma} \ , \\ & \ \ \ \ \leq |C_{[t^{\ell}_0,t^{\ell}_1 ]}| \sqrt{\gamma} \ , \end{array}\end{equation} where the second inequality holds since $\overline{\varepsilon}(\chi) < \gamma$, and the last inequality holds since every signal contained in $C_{[t^{\ell}_0,t^{\ell}_1 ]}$ is represented by at most one term on the RHS of $(\ref{new1})$, that is, $$ | C_{[t^{\ell}_0,t^{\ell}_1 ]} | \geq | C^{ij}_{[t_0(ij),t_1(ij)]} | + \sum_{q = 1}^{k(ij)+1} | Q^\ell_q(ij) | \ . $$ Due to $(\ref{sh1})$, any $\xi > 0$ and $\gamma \in (0, \xi/(n | C_{[t^{\ell}_0,t^{\ell}_1 ]} | )^2 ]$ then implies, \begin{equation}\label{sh1a}\begin{array}{llll} \big( \hat{v}_{ij} (t^\ell_1(+)) - \hat{v}_{jj} (t^{\ell_1 j}_0) \big) ^2 & = \big( \hat{v}_{ij} (t^\ell_1(+)) - \frac{1}{n} \big) ^2 \ , \\ & \leq \xi/n^2 \ , \end{array}\end{equation} for any $(i,j) \in \mathcal{V}^2$, where the first equality holds due to Lemma $\ref{lemkda3}$. The result $(\ref{l2a})$ then follows from $(\ref{sh1a})$ since $$ \begin{array}{llll} & \sum_{i = 1}^n \big( \hat{v}_{i} ( t^\ell_1 (+) ) - \frac{1}{n} \mathbf{1}_n \big) ^2 \\ & \ \ \ \ \ \ = \sum_{i = 1}^n \sum_{j = 1}^n \big( \hat{v}_{i j} ( t^\ell_1 (+) ) - \hat{v}_{jj} (t^{\ell_1 j}_0) \big) ^2 \ , \\ & \ \ \ \ \ \ \leq n^2 \big( \xi / n^2 \big) \ , \end{array} $$ where the first equality holds again due to Lemma $\ref{lemkda3}$. \end{proof}

\emph{Proof.}(Theorem $\ref{aaacor5}$) Lemma $\ref{aaalemsda2}$-$\ref{aaalemis2}$. \emph{Overview of Proof.} The Lemma $\ref{aaalemsda2}$ proves the OH normal consensus update $(\ref{aaaisvhat})$. As stated in Lemma $\ref{aaalemis1}$, the update $(\ref{aaaisvhat})$ implies the error of each element of every normal consensus vector is a non-increasing function of time, and thus $(\ref{aaaisvhat})$ will imply the conditions stated in Lemma $\ref{aaalemis2}$ are sufficient and necessary for any node to obtain average-consensus. The Theorem $\ref{aaacor5}$ then follows immediately from Lemma $\ref{aaalemis2}$.

\begin{lem}\label{aaalemsda2}(OH Normal Consensus Estimate Update) Applying $(\ref{aaasdasig5})$ and $(\ref{aaais2})$ to the optimization problem $(\ref{optllca})$ yields the OH normal consensus estimate update $\hat{v}_i( t^{ij}_1 (+) )$ defined in $(\ref{aaaisvhat})$. \end{lem}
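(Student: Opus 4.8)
The plan is to follow the template of Lemmas \ref{lembm} and \ref{lemkda}: I would rewrite $(\ref{optllca})$ using the explicit form of $\mathcal{K}_i(t^{ij}_1)$ together with the OH signal $S^{ij}(t^{ij}_0,t^{ij}_1)$ from $(\ref{aaasdasig5})$--$(\ref{aaais2})$, identify the constraint set $\{\tilde v : (\ref{a10})\ \mbox{holds}\}$ as a linear span determined by what node $i$ knows, and then evaluate the least-squares projection of $\frac{1}{n}\mathbf{1}_n$ onto that span. The argument would run by induction on the ordered sequence of signal receptions, carrying the invariant that every normal consensus estimate $\hat v_i(t)$ has all its entries in $\{0,\frac{1}{n}\}$ with $\hat v_{ii}(t)=\frac{1}{n}$; the base case is Lemma \ref{lembm0} ($\hat v_i(0)=\frac{1}{n}e_i$), and I would check that each of the two branches of $(\ref{aaaisvhat})$ preserves it.

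For the branch $0=S^{ij}_1(t^{ij}_0,t^{ij}_1)$: by $(\ref{aaasdasig5})$ this occurs exactly when the transmitter satisfies $\hat v_j(t^{ij}_0)=\frac{1}{n}\mathbf{1}_n$, so $(\ref{a10})$ forces $\hat s_j(t^{ij}_0)=\bar s(0)$ and the received signal hands node $i$ the value $\bar s(0)$. Then, exactly as in the corresponding step of Lemma \ref{lembm}, $(\ref{optllca})$ is feasible with objective value $0$ by choosing $\hat s_i(t^{ij}_1(+))=\bar s(0)$, so the unique minimizer is $\hat v_i(t^{ij}_1(+))=\frac{1}{n}\mathbf{1}_n$ (which node $i$ can form since $n\in\mathcal{K}_i(t^{ij}_1)$ by (A4), (A7)); this matches the second line of $(\ref{aaaisvhat})$ and trivially preserves the invariant.

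For the branch $0\neq S^{ij}_1(t^{ij}_0,t^{ij}_1)$, i.e. $S^{ij}(t^{ij}_0,t^{ij}_1)=\{j,s_j(0)\}$: node $i$ knows $\hat s_i(t^{ij}_1)=\mathbf{S}\hat v_i(t^{ij}_1)$, knows $s_i(0)=\mathbf{S}e_i$ with $e_i$ reconstructible from $\{i,n\}$, and now knows $s_j(0)=\mathbf{S}e_j$ with $e_j$ reconstructible from the received index $j$. Hence the set of $\tilde v$ compatible with $(\ref{a10})$ is $\mbox{span}\{\hat v_i(t^{ij}_1),e_i,e_j\}$, and $(\ref{optllca})$ reduces to the projection of $\frac{1}{n}\mathbf{1}_n$ onto this subspace. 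Using the inductive invariant, $\hat v_i(t^{ij}_1)=\frac{1}{n}\mathbf{1}_A$ for $A=\mbox{supp}(\hat v_i(t^{ij}_1))$, with $i\in A$; then $w:=\frac{1}{n}\mathbf{1}_{A\cup\{j\}}$ lies in the span (it equals $\hat v_i(t^{ij}_1)$ when $j\in A$, and $\hat v_i(t^{ij}_1)+\frac{1}{n}e_j$ otherwise), and $\big(w-\frac{1}{n}\mathbf{1}_n\big)^2=\frac{1}{n^2}\,|\mathcal{V}\setminus(A\cup\{j\})|$. Since every vector in $\mbox{span}\{\hat v_i(t^{ij}_1),e_i,e_j\}$ is supported on $A\cup\{i,j\}=A\cup\{j\}$, each such vector contributes at least $\frac{1}{n^2}$ to the objective on every coordinate outside $A\cup\{j\}$, so the value achieved by $w$ is a lower bound over the whole subspace; as the point of a subspace closest to a fixed vector is unique, $w$ is the unique minimizer. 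Finally $\mathbf{1}_{A\cup\{j\}}=\mathbf{1}_n-\delta[\hat v_i(t^{ij}_1)+e_j]=v^{ij}(t^{ij}_1,t^{ij}_0)$, so $\hat v_i(t^{ij}_1(+))=\frac{1}{n}v^{ij}(t^{ij}_1,t^{ij}_0)$, which is the first line of $(\ref{aaaisvhat})$; since $i\in A\subseteq A\cup\{j\}$, the invariant is again preserved, closing the induction.

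The step I expect to be the main obstacle is the careful handling of the degenerate configurations in the second branch — when $j\in A$, when $A=\{i\}$, or when $\{\hat v_i(t^{ij}_1),e_i,e_j\}$ fail to be linearly independent — so that both the identification of the constraint set with $\mbox{span}\{\hat v_i(t^{ij}_1),e_i,e_j\}$ and the uniqueness of the projection remain valid in every case. The $\{0,\frac{1}{n}\}$-valued invariant is precisely the device that folds all of these into one argument, so the only genuinely delicate point is establishing that this invariant is propagated through both branches of $(\ref{aaaisvhat})$, which the two cases above are designed to do.
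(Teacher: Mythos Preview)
Your proposal is correct and reaches the same conclusion as the paper, but in the second branch you take a cleaner route. The paper's proof of this lemma splits into the sub-cases $\hat v_i(t^{ij}_1)\in\mbox{span}\{e_i,e_j\}$ and $\hat v_i(t^{ij}_1)\notin\mbox{span}\{e_i,e_j\}$, and in the latter case carries out an explicit $3\times 3$ inversion of $V_{(OH)}'V_{(OH)}$ (using the invariant $\hat v_{ii}=\tfrac1n$, $\hat v_i\in\mathbb{R}^n_{0,1/n}$ exactly as you do) to verify that $V_{(OH)}(V_{(OH)}'V_{(OH)})^{-1}V_{(OH)}'\tfrac1n\mathbf{1}_n=\tfrac1n v^{ij}(t^{ij}_1,t^{ij}_0)$. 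Your support argument --- every vector in $\mbox{span}\{\hat v_i(t^{ij}_1),e_i,e_j\}$ vanishes outside $A\cup\{j\}$, so the objective is bounded below by $\tfrac{1}{n^2}|\mathcal V\setminus(A\cup\{j\})|$, and $w=\tfrac1n\mathbf{1}_{A\cup\{j\}}$ attains this bound --- identifies the projection directly, folds the degenerate configurations (including $j\in A$ and $A=\{i\}$) into a single line, and sidesteps the matrix algebra entirely. Both approaches rely on the same inductive invariant; yours is more elementary and transparent, while the paper's computation parallels the style of Lemma~\ref{lemkda3} and makes the pseudo-inverse form $(\ref{optkda2})$ explicit.
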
 \begin{proof} If $\hat{v}_j(t^{ij}_0) = \frac{1}{n} \mathbf{1}_n$ then $0 = S^{ij}_1(t^{ij}_0,t^{ij}_1 )$, in this case $(\ref{aaasdasig5})$ together with $(\ref{a10})$ implies $(\ref{acbm1})$. The update $(\ref{optllca})$ can thus be re-written as $(\ref{optbmac1})$. Since $\hat{s}_j(t^{ij}_0) = \bar{s}(0)$ is known we can let $\hat{s}_i(t^{ij}_1(+)) = \bar{s}(0)$ and thus obtain $\hat{v}_i(t^{ij}_1(+)) = \frac{1}{n} \mathbf{1}_n$ as the unique solution to $(\ref{optbmac1})$. Note that this coincides with $(\ref{aaaisvhat})$. Next assume that $\hat{v}_j(t^{ij}_0) \neq \frac{1}{n} \mathbf{1}_n$ and thus $0 \neq S^{ij}_1(t^{ij}_0,t^{ij}_1 )$. Under $(\ref{aaasdasig5})$ and $(\ref{aaais2})$ we can re-write $(\ref{optllca})$ as, \begin{equation}\label{aaaoptis1}\begin{array}{llll} \hat{v}_i ( t^{ij}_1(+)) = & \arg_{\mbox{$\tilde{v}$}} \ \min \ \big( \tilde{v} - \frac{1}{n} \mathbf{1}_n \big )^2 , \\ & \mbox{s.t. $(\ref{a10})$ holds, given $\hat{s}_i(t^{ij}_1) = \mathbf{S} \hat{v}_i(t^{ij}_1)$,} \\ & \mbox{     $s_j(0) = \mathbf{S} e_j , \ s_i(0) = \mathbf{S} e_i$.} \end{array}\end{equation} Given that $\hat{s}_i(t^{ij}_1) = \mathbf{S} \hat{v}_i(t^{ij}_1) , s_j(0) = \mathbf{S} e_j$, and $s_i(0) = \mathbf{S} e_i$ are known, the set of vectors $\hat{v}_i(t^{ij}_1(+))$ for which $(\ref{a10})$ holds is $\mbox{span} \{ \hat{v}_i(t^{ij}_1) , e_j , e_i \}$, thus $(\ref{aaaoptis1})$ can be re-written as, \begin{equation}\label{aaaoptis3} \hat{v}_i ( t^{ij}_1(+)) = \arg_{\mbox{$\tilde{v}$}} \ \min_{\mbox{$\tilde{v} \in \mbox{span} \{ \hat{v}_i( t^{ij}_1 ) , e_j , e_i \}$}} \ \big( \tilde{v} - \frac{1}{n} \mathbf{1}_n \big)^2 . \end{equation} If $\hat{v}_i(t^{ij}_1) \in \mbox{span} \{ e_j , e_i \}$ then $(\ref{aaaoptis3})$ becomes, $$ \begin{array}{llll} \hat{v}_i ( t^{ij}_1(+)) & = \arg_{\mbox{$\tilde{v}$}} \ \min_{\mbox{$\tilde{v} \in \mbox{span} \{ e_j , e_i \}$}} \ \big( \tilde{v} - \frac{1}{n} \mathbf{1}_n \big)^2 \ , \\ & = V_{(OH1)} V^+_{(OH1)} \frac{1}{n} \mathbf{1}_n \ , \ \ V_{(OH1)} = [ e_j , e_i ] \ . \end{array} $$ Since $e_i$ is linearly independent of $e_j$ we then have, \begin{equation}\label{aaaoptis5}\begin{array}{llll} & \hat{v}_i ( t^{ij}_1(+)) = V_{(OH1)} ( V'_{(OH1)} V _{(OH1)})^{-1} V '_{(OH1)} \frac{1}{n} \mathbf{1}_n \\ & \ \ \ \ \ \ \ \ \ \ \ \ \ = \frac{1}{n} V _{(OH1)} ( I_{2} )^{-1} \mathbf{1}_2 \ , \\ & \ \ \ \ \ \ \ \ \ \ \ \ \ = \frac{1}{n} ( e_i + e_j ) \ . \end{array}\end{equation} Note that if $\hat{v}_i(t^{ij}_1) \in \mbox{span} \{ e_j , e_i \}$ then the last line in $(\ref{aaaoptis5})$ coincides with $(\ref{aaaisvhat})$. Next assume $\hat{v}_i(t^{ij}_1) \notin \mbox{span} \{ e_j , e_i \}$. In this case $(\ref{aaaoptis3})$ can be expressed, $$ \begin{array}{llll} & \hat{v}_i ( t^{ij}_1(+)) = V_{(OH)} V^+_{(OH)} \frac{1}{n} \mathbf{1}_n \ , \\ & \ \ \ \ \ \ \ \ \ \ \ \ \  = V_{(OH)} ( V'_{(OH)} V_{(OH)} )^{-1} V'_{(OH)} \frac{1}{n} \mathbf{1}_n \\ & \ \ \ \ \ V_{(OH)} = [ \hat{v}_i(t^{ij}_1), \frac{1}{n} e_j , \frac{1}{n} e_i ] \ . \end{array} $$ Recall the discrete set of vectors $\mathbb{R}^n_{0,\frac{1}{n}}$ is defined in $(\ref{aaaaa2})$. Note that the initialization $(\ref{initsol})$ implies $\hat{v}_{ii}(0) = \frac{1}{n}$ and thus $\hat{v}_i(0) \in \mathbb{R}^n_{0,\frac{1}{n}}$. Also note that $(\ref{aaaoptis5})$ implies $\hat{v}_{ii}(t^{ij}_1(+)) = \frac{1}{n}$ and $\hat{v}_i(t^{ij}_1(+)) \in \mathbb{R}^n_{0, \frac{1}{n}}$. We thus assume, \begin{equation}\label{aaaaais} \hat{v}_{i i} (t^{ij}_1) = \frac{1}{n} \ , \ \ \hat{v}_i (t^{ij}_1) \in \mathbb{R}^n_{0,\frac{1}{n}} \ . \end{equation} Observe that under $(\ref{aaaaais})$ if the result $(\ref{aaaisvhat})$ is proven, then the assumptions $(\ref{aaaaais})$ are valid. For notational convenience denote $\hat{v}_i = \hat{v}_i (t^{ij}_1)$. Given $(\ref{aaaaais})$ the matrix $(V'_{(OH)} V_{(OH)})$ has an inverse $(\ref{aaamatinv1})$ below, \begin{equation}\label{aaamatinv1}\begin{array}{llll} (V'_{(OH)} V_{(OH)})^{-1} & = n^{-4} \left( \begin{array}{ccc} \hat{v}_i ' \mathbf{1}_n & 1 & v_{ij} \\ 1 & 1 & 0 \\ v_{ij} & 0 & 1 \end{array} \right) ^{-1} \\ & = \frac{n^{-4}}{|V'_{(OH)} V_{(OH)}|} \left( \begin{array}{ccc} 1 & -1 & - v_{ij} \\ -1 & ( \hat{v}_i ' \mathbf{1}_n - v_{ij} ) & v_{ij} \\ - v_{ij} & v_{ij} & ( \hat{v}_i ' \mathbf{1}_n - 1 ) \end{array} \right) \end{array} \end{equation} where the determinant $|V'_{(OH)} V_{(OH)}|$ can be computed as, $$ |V'_{(OH)} V_{(OH)}| = n^{-6} ( \hat{v}_i ' \mathbf{1}_n - 1 - v_{ij} ) \ . $$ Next observe that,\begin{equation}\label{aaaoptis4c} V'_{(OH)} \frac{1}{n} \mathbf{1}_n = \frac{1}{n^2} \left( \begin{array}{ccc} \hat{v}_i ' \mathbf{1}_n \\ 1 \\ 1 \end{array} \right) \ . \end{equation} From $(\ref{aaaoptis4c})$ we observe that right-multiplying $(\ref{aaamatinv1})$ by $V'_{(OH)} \frac{1}{n} \mathbf{1}_n$ and left-multiplying by $V_{(OH)}$ then yields $(\ref{aaaisvhat})$, $$ \begin{array}{llll} & V_{(OH)} ( V'_{(OH)} V_{(OH)})^{-1} V'_{(OH)} \frac{1}{n} \mathbf{1}_n \\ &  \ \ \ \ \ \ \ \  = \frac{1}{\hat{v}_i ' \mathbf{1}_n - 1 - \hat{v}_{ij} } V_{(OH)} \left( \begin{array}{ccc} \hat{v}_i ' \mathbf{1}_n - 1 - \hat{v}_{ij} \\ 0 \\ \hat{v}_i ' \mathbf{1}_n ( 1 - \hat{v}_{ij} ) + \hat{v}_{ij} - 1 \end{array} \right) \\ & \ \ \ \ \ \ \ \ = \frac{1}{n} v^{ij} ( t^{ij}_1 , t^{ij}_0 ) \ . \end{array} $$ \end{proof}

\begin{lem}\label{aaalemis1}(OH Element-Wise Non-Increasing Error of the Normal Consensus Estimate) The error $\big( \hat{v}_{i \ell} (t) - 1/n \big)^2$ of each element $\hat{v}_{i \ell} (t)$ is a non-increasing function of $t \geq 0$ for all $(i, \ell) \in \mathcal{V}^2$. \end{lem}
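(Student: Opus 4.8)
The plan is to exploit the explicit closed form $(\ref{aaaisvhat})$ of the OH normal consensus update, together with the initialization $(\ref{initsol})$ from Lemma \ref{lembm0} and the assumption (A7) that $\hat{v}_i(t)$ changes only upon reception of a signal. The key observation is that under the OH algorithm every normal consensus estimate lives in the discrete set $\mathbb{R}^n_{0,\frac{1}{n}}$ defined in $(\ref{aaaaa2})$, so that for each pair $(i,\ell)\in\mathcal{V}^2$ the quantity $\big(\hat{v}_{i\ell}(t)-\frac{1}{n}\big)^2$ takes only the two values $0$ and $\frac{1}{n^2}$; hence the claim is equivalent to showing that no entry $\hat{v}_{i\ell}(t)$ can ever move from $\frac{1}{n}$ back to $0$.

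First I would record the invariant $\hat{v}_i(t)\in\mathbb{R}^n_{0,\frac{1}{n}}$ for every $i\in\mathcal{V}$ and every $t\geq 0$, by induction on the (locally ordered, by (A9)) sequence of signals received at node $i$. At $t=0$, Lemma \ref{lembm0} gives $\hat{v}_i(0)=\frac{1}{n}e_i\in\mathbb{R}^n_{0,\frac{1}{n}}$; by (A7) the estimate is constant between signal receptions; and when a signal $S^{ij}(t^{ij}_0,t^{ij}_1)$ arrives, $(\ref{aaaisvhat})$ makes $\hat{v}_i(t^{ij}_1(+))$ equal either to $\frac{1}{n}\mathbf{1}_n$ or to $\frac{1}{n}v^{ij}(t^{ij}_1,t^{ij}_0)=\frac{1}{n}\big(\mathbf{1}_n-\delta[\hat{v}_i(t^{ij}_1)+e_j]\big)$, both of which lie in $\mathbb{R}^n_{0,\frac{1}{n}}$ since $\delta[\cdot]$ has $\{0,1\}$-valued entries.

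Next I would show that a signal receipt at node $i$ can only turn an entry of $\hat{v}_i$ from $0$ into $\frac{1}{n}$, never the reverse. If $0=S^{ij}_1(t^{ij}_0,t^{ij}_1)$, the update sets $\hat{v}_i(t^{ij}_1(+))=\frac{1}{n}\mathbf{1}_n$, so every entry is nondecreasing. If $0\neq S^{ij}_1(t^{ij}_0,t^{ij}_1)$, I would compute $\delta[\hat{v}_i(t^{ij}_1)+e_j]$ entrywise, using the convention that $\delta[x]=1$ precisely when $x=0$: for $\ell=j$ the sum $\hat{v}_{ij}(t^{ij}_1)+1$ is never $0$ (as $\hat{v}_{ij}\in\{0,\tfrac1n\}$), so $v^{ij}_j=1$ and $\hat{v}_{ij}(t^{ij}_1(+))=\frac1n$; for $\ell\neq j$ the $\ell$-th entry of $e_j$ is $0$, so $\delta[\hat{v}_i(t^{ij}_1)+e_j]_\ell=1$ exactly when $\hat{v}_{i\ell}(t^{ij}_1)=0$, which gives $v^{ij}_\ell=0$, and otherwise $v^{ij}_\ell=1$; in either sub-case $\hat{v}_{i\ell}(t^{ij}_1(+))=\hat{v}_{i\ell}(t^{ij}_1)$. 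Thus every entry of $\hat{v}_i$ is unchanged except possibly the $j$-th, which can only pass from $0$ to $\frac1n$.

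Combining these facts with (A7), each map $t\mapsto\hat{v}_{i\ell}(t)$ is nondecreasing with values in $\{0,\frac1n\}$, whence $t\mapsto\big(\hat{v}_{i\ell}(t)-\frac1n\big)^2$ is nonincreasing for all $(i,\ell)\in\mathcal{V}^2$, which is the assertion. The only mildly delicate point is handling the Kronecker-delta term $\delta[\hat{v}_i(t^{ij}_1)+e_j]$ correctly — in particular the $\ell=j$ entry and the $\delta[x]=1\iff x=0$ convention — but this is a short entrywise verification rather than a genuine obstacle.
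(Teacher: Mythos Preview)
Your proposal is correct and follows essentially the same approach as the paper: the paper's proof is the single sentence ``The result follows immediately from the normal consensus update $(\ref{aaaisvhat})$,'' and your argument is precisely the detailed entrywise verification that makes this immediate claim explicit. Your handling of the $\delta[\cdot]$ convention and the case split on $S_1^{ij}$ is accurate, and the invariant $\hat{v}_i(t)\in\mathbb{R}^n_{0,\frac{1}{n}}$ you record is exactly the content of $(\ref{aaaaais})$ established during the proof of Lemma~\ref{aaalemsda2}.
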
 \begin{proof} The result follows immediately from the normal consensus update $(\ref{aaaisvhat})$. \end{proof}

\begin{lem}\label{aaalemis2}(OH Normal Consensus Estimate Convergence to Average-Consensus) Under the OH algorithm, any node $i \in \mathcal{V}$ obtains average-consensus by time $t_1$ for some communication sequence $C_{[0,t_1)}$ iff at least one of the following two conditions holds, \begin{itemize} \item (C1i): there is a signal $S^{ij}( t^{ij}_0, t^{ij}_1) \in C_{[0,t_1)}$ for each $j \in \mathcal{V}_{-i}$. \item (C2i): there is a communication path $C^{i \ell}_{[t_0(i \ell),t_1(i \ell)]} \in C_{[0,t_1)}$ from at least one node $\ell$ such that $S^{\ell j}( t^{\ell j}_0, t^{\ell j}_1) \in C_{[0,t_0(i \ell))}$ for all $j \in \mathcal{V}_{-\ell}$.\end{itemize}\end{lem}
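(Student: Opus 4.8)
The plan is to reduce the lemma to a combinatorial reachability statement about the communication sequence by exploiting the explicit form of the OH normal consensus update $(\ref{aaaisvhat})$. First I would extract from $(\ref{aaaisvhat})$ and Lemma $\ref{aaalemis1}$ the following bookkeeping fact: each entry $\hat{v}_{i\ell}(t)$ is nondecreasing and valued in $\{0,\tfrac{1}{n}\}$ with $\hat{v}_{ii}(0)=\tfrac{1}{n}$ by Lemma $\ref{lembm0}$, and for $\ell\neq i$ the entry $\hat{v}_{i\ell}$ passes from $0$ to $\tfrac{1}{n}$ exactly when node $i$ receives a signal $S^{i\ell}$ whose first scalar equals $\ell$, or when node $i$ receives \emph{any} signal whose first scalar equals $0$ (which, by $(\ref{aaasdasig5})$, is transmitted by a node already holding $\tfrac{1}{n}\mathbf{1}_n$, and in that case every entry of $\hat{v}_i$ jumps to $\tfrac{1}{n}$ at once). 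From this I would conclude that $\hat{v}_i(t)=\tfrac{1}{n}\mathbf{1}_n$ for some $t\le t_1$ if and only if either (i) node $i$ receives, before $t_1$, some signal whose first scalar is $0$, or (ii) node $i$ receives, before $t_1$, at least one signal from every node in $\mathcal{V}_{-i}$. Since (ii) is exactly (C1i), the remaining task is to show that, for the given $C_{[0,t_1)}$, condition (i) is equivalent to (C2i); the degenerate case $n=1$ is handled separately, since then $\hat{v}_1(0)=\tfrac{1}{n}\mathbf{1}_n$ and (C1i) is vacuous.

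For (C2i)$\Rightarrow$(i), I would take the communication path $C^{i\ell}_{[t_0(i\ell),t_1(i\ell)]}$ supplied by (C2i); as node $\ell$ has received a signal from every node in $\mathcal{V}_{-\ell}$ before $t_0(i\ell)$, the bookkeeping fact forces $\hat{v}_\ell(t_0(i\ell))=\tfrac{1}{n}\mathbf{1}_n$, so the first path signal carries first scalar $0$, its recipient holds $\tfrac{1}{n}\mathbf{1}_n$ immediately afterward (hence still when it transmits the next path signal, by the temporal ordering built into $(\ref{pairs})$), and inductively every path signal carries first scalar $0$, the last reaching node $i$ before $t_1$, which is (i). For the converse (i)$\Rightarrow$(C2i), I would back-trace: node $i=:j_0$ received a signal with first scalar $0$ from a sender $j_1$ holding $\tfrac{1}{n}\mathbf{1}_n$ at transmission; applying the dichotomy above to $j_1$, either $j_1$ had received direct signals from all of $\mathcal{V}_{-j_1}$ before it first held $\tfrac{1}{n}\mathbf{1}_n$ (stop), or it acquired $\tfrac{1}{n}\mathbf{1}_n$ from a received ``first scalar $0$'' signal sent by some $j_2$ that held $\tfrac{1}{n}\mathbf{1}_n$ strictly earlier; recurse. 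Since a node passes through $\tfrac{1}{n}\mathbf{1}_n$ only once (Lemma $\ref{aaalemis1}$ and (A9)), the nodes $j_1,j_2,\dots$ are pairwise distinct, so in the finite set $\mathcal{V}$ the recursion terminates at some $j_K=:\ell$ that acquired $\tfrac{1}{n}\mathbf{1}_n$ via direct signals from all of $\mathcal{V}_{-\ell}$ received before it transmitted the first of the chained ``first scalar $0$'' signals; concatenating $S^{j_{K-1}\ell},S^{j_{K-2}j_{K-1}},\dots,S^{j_0j_1}$ produces a communication path from $\ell$ to $i$ that respects the ordering of $(\ref{pairs})$ and has all of $\ell$'s direct signals in $C_{[0,t_0(i\ell))}$, i.e.\ (C2i).

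The step I expect to be the main obstacle is this converse: verifying that the back-tracing recursion really terminates and assembles into a bona fide communication path. Termination rests on the ``attained only once'' monotonicity of Lemma $\ref{aaalemis1}$ (combined with (A9) to rule out two simultaneous acquiring signals at one node), which forces the traced nodes to be distinct in the finite network; the assembly needs a little care with causality at coincident time instants, e.g.\ a node receiving its last direct signal and retransmitting at nominally the same time, which is handled by the arbitrarily-small-update assumption (A8) and the single-signal assumption (A9). Once (i)$\Leftrightarrow$(C2i) is established, Lemma $\ref{aaalemis2}$ follows from the characterization of the first paragraph, and Theorem $\ref{aaacor5}$ is then obtained by applying Lemma $\ref{aaalemis2}$ to every node $i\in\mathcal{V}$.
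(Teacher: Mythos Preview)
Your proposal is correct and follows essentially the same approach as the paper: both argue sufficiency by forward-propagating the first-scalar-$0$ signal along the path in (C2i), and necessity by observing that $\hat v_{ij}$ can reach $\tfrac1n$ only via a direct signal from $j$ or via receiving a first-scalar-$0$ signal, which must ultimately trace back to some node $\ell$ that completed all direct receptions. The paper's necessity argument is stated tersely as an assertion, whereas your explicit back-tracing recursion with the strict-monotonicity termination argument supplies the detail the paper omits; your handling of the coincident-time edge cases via (A8)--(A9) is likewise more careful than the paper's treatment.
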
 \begin{proof} (Sufficiency.) If a communication sequence $C_{[0,t_1)}$ implies (C1i) for some node $i$ then under the update $(\ref{aaaisvhat})$ there will exist a time $t^{ij}_1 \in [0,t_1)$ such that $\hat{v}_{ij}(t^{i j}_1 (+)) = \frac{1}{n}$ for each $j \in \mathcal{V}_{-i}$, and thus Lemma $\ref{aaalemis1}$ together with $(\ref{initsol})$ imply $\hat{v}_{i}(t_1)= \frac{1}{n} \mathbf{1}_n$ and hence by $(\ref{a10})$ the node $i$ reaches average-consensus by time $t_1$. If a communication sequence $C_{[0,t_1)}$ implies (C2i) then by the previous reasoning the node $\ell$ will obtain average-consensus by time $t_0(i \ell)$, and thus by the update $(\ref{aaaisvhat})$ any node $j \in \mathcal{V}$ that $\ell$ sends a signal $S^{j \ell}( t^{j \ell}_0, t^{j \ell}_1) \in C_{[t_0(i \ell),t_1)}$ will obtain average-consensus at $t^{j \ell}_1(+)$. If the node $\ell$ has a communication path $C^{i \ell}_{[t_0(i \ell),t_1(i \ell)]} \in C_{[0,t_1)}$ to node $i$ it then follows that node $i$ will have obtained average-consensus by time $t_1(i \ell)(+)$, and thus by Lemma $\ref{aaalemis1}$ node $i$ obtains average-consensus at time $t_1$.

(Necessity.) Under the OH update $(\ref{aaaisvhat})$, if there is not a signal $S^{ij}( t^{ij}_0, t^{ij}_1) \in C_{[0,t_1)}$ then there will not exist a time $t \in [0, t_1)$ such that $\hat{v}_{ij}(t) = \frac{1}{n}$ unless there exists a communication path $C^{i \ell}_{[t_0(i \ell),t_1(i \ell)]} \in C_{[0,t_1)}$ from some node $\ell$ such that $S^{\ell j}( t^{\ell j}_0, t^{\ell j}_1) \in C_{[0,t_0(i \ell))}$ for all $j \in \mathcal{V}_{-\ell}$. It thus follows that node $i$ cannot obtain average-consensus by time $t_1$ for any communication sequence $C_{[0,t_1)}$ that does not imply either (C1i) or (C2i). \end{proof}

\emph{Theorem $\ref{aaacor5}$}(OH Network Convergence to Average-Consensus) \begin{proof} The condition (C) stated in Theorem $\ref{aaacor5}$ is equivalent to when either (C1i) or (C2i) hold for each node $i \in \mathcal{V}$, the result thus follows immediately from Lemma $\ref{aaalemis2}$. \end{proof}

\emph{Proof.}(Theorem $\ref{aaathm5}$) Lemma $\ref{aaalemdda0}$-$\ref{aaalemdda4d}$. \emph{Overview of Proof.} Similar to Theorem $\ref{thm6}$, we show that every normal consensus estimate $\hat{v}_i(t)$ satisfies the normalization property $(\ref{norm1})$, the ``zero local error'' property $(\ref{pp1})$, and furthermore the discretization $\hat{v}_i(t) \in \mathbb{R}^n_{0,\frac{1}{n}}$. The Lemma $\ref{aaalemdda0w1}$ and Lemma $\ref{aaalemdda3}$ shows that the update $(\ref{aaaddaveq})$ respectively implies the error of each normal consensus estimate is non-decreasing with time, and that the normal consensus estimate will not change unless there is a reduction in error. We then show, analogous to Lemma $\ref{lemkda4b}$, that the reduction in error that results from any signal will eventually vanish if $C_{[0,t_1]}$ satisfies $(\ref{star1g})$. Due to the discretization $\hat{v}_i(t) \in \mathbb{R}^n_{0,\frac{1}{n}}$ this implies the reduction in error that results from any signal will eventually strictly equal zero if $C_{[0,t_1]}$ satisfies $(\ref{star1g})$, see Lemma $\ref{aaalemdda4d}$. When this occurs we can show that $\hat{v}_i(t) = \frac{1}{n} \mathbf{1}_n$ by utilizing Lemma $\ref{aaalemdda4}$, Lemma $\ref{aaalemdda5}$, together with the ``zero local error'' property $(\ref{pp1})$, hence $(\ref{acdef1})$ holds and so by Definition $\ref{def11}$ a network average-consensus is obtained.

\begin{lem}\label{aaalemdda0}(DDA Normal Consensus Estimate Discretization) Every normal consensus estimate $\hat{v}_i(t)$ satisfies $\hat{v}_i(t) \in \mathbb{R}^n_{0,\frac{1}{n}}$ for all $i \in \mathcal{V}$ and $t \geq 0$. \end{lem}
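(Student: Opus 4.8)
The plan is to propagate the property from the initialization through every signal reception, exactly as the normalization and local-zero-error properties are established in Lemmas~\ref{lemkda1} and~\ref{lemkda3}. At $t = 0$, Lemma~\ref{lembm0} and the initialization $(\ref{ddainit1})$ give $\hat{v}_i(0) = \frac1n e_i \in \mathbb{R}^n_{0,\frac1n}$ for every $i \in \mathcal{V}$; and by (A7) the estimate $\hat{v}_i(t)$ changes only when a signal $S^{ij}(t^{ij}_0,t^{ij}_1)$ is received at node $i$, in which case it is updated via $(\ref{aaaddaveqa})$, $(\ref{aaaddaveqb})$. Hence it suffices to show that if $\hat{v}_i(t^{ij}_1) \in \mathbb{R}^n_{0,\frac1n}$ (node $i$'s estimate just before the reception) and $\hat{v}_j(t^{ij}_0) \in \mathbb{R}^n_{0,\frac1n}$ (node $j$'s estimate at the transmission time), then $\hat{v}_i(t^{ij}_1(+)) \in \mathbb{R}^n_{0,\frac1n}$. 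This reduces to the three-case analysis of $(\ref{aaaddaveqb})$, and every possibility --- including $\hat{v}_i(t^{ij}_1) = \frac1n \mathbf{1}_n$ or $\hat{v}_j(t^{ij}_0) = \frac1n \mathbf{1}_n$ --- falls under one of the three branches.

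In the first branch $\hat{v}^{-i}_i(t^{ij}_1)'\hat{v}^{-i}_j(t^{ij}_0) = 0$ and $(\hat{a},\hat{b},\hat{c}) = (1,1,-\hat{v}_{ji}(t^{ij}_0))$; since by hypothesis the entries of $\hat{v}^{-i}_i(t^{ij}_1)$ and $\hat{v}^{-i}_j(t^{ij}_0)$ are nonnegative and lie in $\{0,\frac1n\}$, the vanishing of their inner product forces $\hat{v}_{i\ell}(t^{ij}_1)\,\hat{v}_{j\ell}(t^{ij}_0) = 0$ for every $\ell \neq i$, so for such $\ell$ one has $\hat{v}_{i\ell}(t^{ij}_1(+)) = \hat{v}_{i\ell}(t^{ij}_1) + \hat{v}_{j\ell}(t^{ij}_0)$ with at most one nonzero summand, hence in $\{0,\frac1n\}$; and the $e_i$-coefficient $\hat{c} = -\hat{v}_{ji}(t^{ij}_0)$ exactly cancels the $\hat{v}_j$-contribution at coordinate $i$, leaving $\hat{v}_{ii}(t^{ij}_1(+)) = \hat{v}_{ii}(t^{ij}_1) \in \{0,\frac1n\}$ by the inductive hypothesis. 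In the second branch $(\hat{a},\hat{b},\hat{c}) = (0,1,\frac1n-\hat{v}_{ji}(t^{ij}_0))$, so $\hat{v}_i(t^{ij}_1(+)) = \hat{v}_j(t^{ij}_0) + \big(\frac1n-\hat{v}_{ji}(t^{ij}_0)\big)e_i$; for $\ell \neq i$ this equals $\hat{v}_{j\ell}(t^{ij}_0) \in \{0,\frac1n\}$, while at coordinate $i$ the correction term makes the value exactly $\frac1n$. In the third branch $(\hat{a},\hat{b},\hat{c}) = (1,0,0)$, so $\hat{v}_i$ is unchanged and the claim holds by the inductive hypothesis. This closes the induction.

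The only point requiring genuine care is the coordinate-$i$ bookkeeping in the first two branches: one must notice that the $e_i$-coefficient $\hat{c}$ in $(\ref{aaaddaveqb})$ is chosen precisely so that $\hat{v}_{ii}(t^{ij}_1(+))$ returns to $\{0,\frac1n\}$ --- in fact it retains the old $i$-th entry in the first branch and produces $\frac1n$ in the second --- which is where the DDA analogue of the ``local zero error'' property $(\ref{pp1})$ is implicitly enforced. One may equally well run the present induction jointly with $\hat{v}_{ii}(t) = \frac1n$, exactly as in Lemma~\ref{lemkda3}; the discretization statement itself, however, needs only the weaker inductive hypothesis $\hat{v}_i(t^{ij}_1),\hat{v}_j(t^{ij}_0) \in \mathbb{R}^n_{0,\frac1n}$. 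The remaining ingredient --- the disjoint-support observation in the first branch that prevents any entry from reaching $\frac2n$ --- is immediate once nonnegativity and $\{0,\frac1n\}$-valuedness are in hand, and everything else is a routine coordinatewise check.
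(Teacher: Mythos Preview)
Your proof is correct, but it takes a noticeably different route from the paper's. The paper's argument is a one-liner: the DDA update is, by construction, a solution of the constrained problem $(\ref{aaaoptdisc})$, whose feasible set already requires $\tilde{v}\in\mathbb{R}^n_{0,\frac1n}$; hence every update automatically lands in $\mathbb{R}^n_{0,\frac1n}$, and (A7) plus the initialization finish the job. You instead work directly with the explicit update $(\ref{aaaddaveqa})$--$(\ref{aaaddaveqb})$ and verify coordinatewise, in each of the three branches, that the output stays in $\{0,\frac1n\}^n$ whenever the inputs do. Your approach is longer but arguably more self-contained: the paper's short argument implicitly leans on the identification of $(\ref{aaaddaveqa})$--$(\ref{aaaddaveqb})$ with a solution of $(\ref{aaaoptdisc})$, which is only established later in Lemma~\ref{aaalemkda1} (and whose proof in turn invokes the present lemma), so your direct verification sidesteps any appearance of circularity. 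The disjoint-support observation you make in the first branch is exactly the right mechanism, and your handling of the $i$-th coordinate via the $\hat c$-term is clean.
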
 \begin{proof} Note that the initialization $(\ref{initsol})$ implies $\hat{v}_i(0) \in \mathbb{R}^n_{0,\frac{1}{n}}$. The optimization problem $(\ref{aaaoptdisc})$ requires that any solution $\hat{v}_i(t(+))$ satisfies $\hat{v}_i(t(+)) \in \mathbb{R}^n_{0,\frac{1}{n}}$. Under the DDA algorithm, the assumption (A7) implies every normal consensus estimate remains fixed unless updated via $(\ref{aaaoptdisc})$, it thus follows that $\hat{v}_i(t) \in \mathbb{R}^n_{0,\frac{1}{n}}$ for all $i \in \mathcal{V}$ and $t \geq 0$. \end{proof}

\begin{lem}\label{aaalemdda1}(DDA Consensus Estimate Normalization) Every normal consensus estimate $\hat{v}_i(t)$ satisfies $(\ref{norm1})$, and furthermore, \begin{equation}\label{aaanorm1a} \big( \hat{v}^{-i}_i(t) \big)^2 = \frac{1}{n} \hat{v}^{-i}_i (t) ' \mathbf{1}_{n-1} \ , \ \forall \ i \in \mathcal{V} \ , \ \forall \ t \geq 0 \ . \end{equation} \end{lem}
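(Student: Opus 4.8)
The plan is to derive both identities as immediate pointwise consequences of the discretization established in Lemma \ref{aaalemdda0}. First I would invoke Lemma \ref{aaalemdda0} to conclude that for every $i \in \mathcal{V}$ and every $t \geq 0$ each component satisfies $\hat{v}_{i\ell}(t) \in \{0, 1/n\}$. The key elementary observation is that any scalar $x \in \{0, 1/n\}$ obeys $x^2 = \frac{1}{n} x$: this is checked directly in the two cases $x = 0$ (both sides vanish) and $x = 1/n$ (both sides equal $1/n^2$).

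Next I would apply this pointwise identity to each coordinate of $\hat{v}_i(t)$ and sum. Summing $\hat{v}_{i\ell}(t)^2 = \frac{1}{n}\hat{v}_{i\ell}(t)$ over all $\ell = 1, \ldots, n$ yields $\hat{v}_i(t)^2 = \frac{1}{n}\hat{v}_i(t)'\mathbf{1}_n$, which is $(\ref{norm1})$. Restricting the same sum to the indices $\ell \in \mathcal{V}_{-i}$ — equivalently, deleting the $i$-th coordinate before squaring and summing — gives $\big(\hat{v}^{-i}_i(t)\big)^2 = \frac{1}{n}\hat{v}^{-i}_i(t)'\mathbf{1}_{n-1}$, which is $(\ref{aaanorm1a})$.

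There is no real obstacle here: the statement is a direct corollary of the discretization property, and the only point requiring care is the bookkeeping of the deleted coordinate in $(\ref{aaanorm1a})$. As an alternative that avoids relying on Lemma \ref{aaalemdda0}, one could mirror the induction used for the DA algorithm in Lemma \ref{lemkda1} — verifying that the initialization $(\ref{initsol})$ satisfies both identities and that each branch of the update $(\ref{aaaddaveqa})$–$(\ref{aaaddaveqb})$ preserves them — but since Lemma \ref{aaalemdda0} is already available this inductive route is unnecessary.
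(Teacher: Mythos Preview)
Your proposal is correct and essentially identical to the paper's own proof: both invoke Lemma~\ref{aaalemdda0} to get $\hat{v}_{i\ell}(t)\in\{0,1/n\}$ and then exploit the pointwise fact that such entries satisfy $x^2=\tfrac{1}{n}x$, summing over all coordinates for $(\ref{norm1})$ and over the coordinates $\ell\neq i$ for $(\ref{aaanorm1a})$. The only cosmetic difference is that the paper writes the sum by restricting to the indices where $\hat{v}_{i\ell}(t)=1/n$, whereas you state the scalar identity first and then sum; these are the same argument.
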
 \begin{proof} The Lemma $\ref{aaalemdda0}$ implies, \begin{equation}\label{aaanormda}\begin{array}{llll} \frac{1}{n} \hat{v}_i (t) ' \mathbf{1}_n & = \frac{1}{n} \sum_{\ell = 1}^n \hat{v}_{i \ell} (t) \ , \\ & = \frac{1}{n} \sum_{\ell \in \mathcal{V} \ : \ \hat{v}_{i \ell} (t) = \frac{1}{n} } \ \big( \frac{1}{n} \big) \ , \\ & = \sum_{\ell \in \mathcal{V} \ : \ \hat{v}_{i \ell} (t) = \frac{1}{n} } \ \big( \frac{1}{n^2} \big) \ , \\ & = \sum_{\ell = 1}^n \hat{v}_{i \ell} (t) ^2 \ , \\ & = \hat{v}_i (t) ^2 \ . \end{array}\end{equation} If $\hat{v}_i(t) \in \mathbb{R}^n_{0,\frac{1}{n}}$ then $\hat{v}^{-i}_i(t) \in \mathbb{R}^{n-1}_{0,\frac{1}{n}}$, thus a similar argument to $(\ref{aaanormda})$ implies $(\ref{aaanorm1a})$. \end{proof}

\begin{lem}\label{aaalemdda2}(DDA Local Zero Error Property) Every normal consensus estimate $\hat{v}_i(t)$ satisfies $(\ref{pp1})$. \end{lem}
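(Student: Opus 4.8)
The plan is to argue by induction on the number of signals received at node $i$, exactly as in the proof of Lemma~\ref{lemkda3} for the DA algorithm, but now invoking the explicit DDA update $(\ref{aaaddaveqa})$--$(\ref{aaaddaveqb})$ in place of the least-squares characterization used there. For the base case, Lemma~\ref{lembm0} gives $\hat{v}_i(0) = \frac{1}{n} e_i$, so $\hat{v}_{ii}(0) = \frac{1}{n}$ for every $i \in \mathcal{V}$. For the inductive step, recall that by (A7) the estimate $\hat{v}_i(t)$ changes only when a signal $S^{ij}(t^{ij}_0,t^{ij}_1)$ is received at node $i$, in which case Lemma~\ref{aaalemkda1} tells us it is updated by $\hat{v}_i(t^{ij}_1(+)) = \hat{a}\,\hat{v}_i(t^{ij}_1) + \hat{b}\,\hat{v}_j(t^{ij}_0) + \hat{c}\,e_i$ with the triple $(\hat{a},\hat{b},\hat{c})$ given by one of the three branches of $(\ref{aaaddaveqb})$.

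I would then read off the $i$-th component of this update, using that the $i$-th entry of $e_i$ equals $1$ and the inductive hypothesis $\hat{v}_{ii}(t^{ij}_1) = \frac{1}{n}$, and check the three branches in turn. In the first branch, $(\hat{a},\hat{b},\hat{c}) = (1,1,-\hat{v}_{ji}(t^{ij}_0))$, the $\hat{b}$ and $\hat{c}$ contributions cancel and $\hat{v}_{ii}(t^{ij}_1(+)) = \hat{v}_{ii}(t^{ij}_1) = \frac{1}{n}$. In the second branch, $(\hat{a},\hat{b},\hat{c}) = (0,1,\frac{1}{n}-\hat{v}_{ji}(t^{ij}_0))$, the $\hat{a}$ term drops out and $\hat{v}_{ii}(t^{ij}_1(+)) = \hat{v}_{ji}(t^{ij}_0) + \frac{1}{n} - \hat{v}_{ji}(t^{ij}_0) = \frac{1}{n}$. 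In the third branch, $(\hat{a},\hat{b},\hat{c}) = (1,0,0)$, the update acts as the identity on the $i$-th component, so $\hat{v}_{ii}(t^{ij}_1(+)) = \frac{1}{n}$. In every case the property $(\ref{pp1})$ is preserved, so the induction closes.

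There is essentially no obstacle here: the coefficient $\hat{c}$ in $(\ref{aaaddaveqb})$ is evidently chosen precisely so that the $i$-th component of $\hat{a}\,\hat{v}_i(t^{ij}_1) + \hat{b}\,\hat{v}_j(t^{ij}_0) + \hat{c}\,e_i$ collapses to $\frac{1}{n}$ independently of the value of $\hat{v}_{ji}(t^{ij}_0)$, so each branch is a one-line verification. The only point worth a remark is that, in contrast to the DA update, the DDA update $(\ref{aaaddaveqa})$--$(\ref{aaaddaveqb})$ need not be the unique minimizer of $(\ref{aaaoptdisc})$; this is immaterial, however, since the DDA algorithm commits to the specific formula $(\ref{aaaddaveqa})$--$(\ref{aaaddaveqb})$ and it is exactly that formula the induction uses. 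One could optionally also note, via Lemma~\ref{aaalemdda0}, that each $\hat{v}_i(t) \in \mathbb{R}^n_{0,\frac{1}{n}}$, which is consistent with $\hat{v}_{ii}(t) = \frac{1}{n}$, but this is not needed for the argument.
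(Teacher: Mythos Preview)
Your proof is correct and essentially identical to the paper's: both establish the base case via Lemma~\ref{lembm0}, invoke (A7) so that the estimate changes only upon signal reception, and then verify branch by branch that the $i$-th component of the DDA update $(\ref{aaaddaveqa})$--$(\ref{aaaddaveqb})$ equals $\frac{1}{n}$. The paper writes out the same three one-line computations you sketch, so there is no substantive difference in approach.
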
 \begin{proof} The Lemma $\ref{lembm0}$ implies $\hat{v}_{ii} (0) = \frac{1}{n}$ for each $i \in \mathcal{V}$. Next observe that under $(A7)$ the estimate $\hat{v}_i(t)$ will not change unless a signal $S^{i j} ( t^{i j}_0, t^{i j}_1 )$ is received at node $i$. If a signal is received then $\hat{v}_i(t)$ is updated by $(\ref{aaaddaveq})$ given below. We now show that under the assumption $(\ref{pp1})$, the solution $\hat{v}_i ( t^{ij}_1(+))$ specified by $(\ref{aaaddaveq})$ will imply $(\ref{pp1})$ for every set vectors $\{e_i, \hat{v}_i ( t^{ij}_1), \hat{v}_j ( t^{ij}_0) \}$. If $\hat{v}^{-i} _i ( t^{ij}_1 ) ' \hat{v}^{-i} _j ( t^{ij}_0 ) = 0$ then $\hat{v}_i ( t^{ij}_1 (+) ) = \hat{v} _i ( t^{ij}_1 ) + \hat{v}_j( t^{ij}_0 ) - \hat{v}_{ji} (t^{ij}_0 ) e_i$ and thus, $$ \begin{array}{llll} \hat{v}_{ii} ( t^{ij}_1 (+) ) & = \hat{v} _{ii} ( t^{ij}_1 ) + \hat{v}_{ji}( t^{ij}_0 ) - \hat{v}_{ji} (t^{ij}_0 ) \ , \\ & = \frac{1}{n} \ . \end{array} $$ If $\hat{v}^{-i} _i ( t^{ij}_1 ) ' \hat{v}^{-i} _j ( t^{ij}_0 ) > 0$ and $\hat{v}^{-i} _i ( t^{ij}_1 ) ^2 < \hat{v}^{-i} _j ( t^{ij}_0 )^2$ then $\hat{v}_i ( t^{ij}_1 (+) ) = \hat{v} _j ( t^{ij}_0 ) + e_i \big( \frac{1}{n} - \hat{v}_{ji} (t^{ij}_0 ) \big)$ and thus, $$ \begin{array}{llll} \hat{v}_i ( t^{ij}_1 (+) ) & = \hat{v} _{ji} ( t^{ij}_0 ) + \frac{1}{n} - \hat{v}_{ji} (t^{ij}_0 ) \ , \\ & = \frac{1}{n} \ . \end{array} $$ Finally, if $\hat{v}^{-i} _i ( t^{ij}_1 ) ' \hat{v}^{-i} _j ( t^{ij}_0 ) > 0$ and $\hat{v}^{-i} _i ( t^{ij}_1 ) ^2 \geq \hat{v}^{-i} _j ( t^{ij}_0 )^2$ then $\hat{v}_i ( t^{ij}_1 (+) ) = \hat{v} _i ( t^{ij}_1 )$ and thus $(\ref{pp1})$ follows by assumption. \end{proof}

\begin{lem}\label{aaalemdda01}(DDA Normal Consensus Estimate Magnitude Equivalence to Error) For any two normal consensus estimates $\hat{v}_i(t)$ and $\hat{v}_j(t)$, \begin{equation}\label{aaaerr1} \hat{v}_i(t) ^2 \geq \hat{v}_j(t) ^2 \ \ \Leftrightarrow \ \ E_{\frac{1}{n} \mathbf{1}_n} ^2 \big( \hat{v}_{i} (t) \big) \leq E_{\frac{1}{n} \mathbf{1}_n} ^2 \big( \hat{v}_{j} (t) \big) \ . \end{equation} \end{lem}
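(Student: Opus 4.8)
The plan is to reduce the claimed equivalence to the single identity
$E_{\frac{1}{n} \mathbf{1}_n}^2\big(\hat{v}_i(t)\big) = \frac{1}{n} - \hat{v}_i(t)^2$, valid for every normal consensus estimate under the DDA algorithm, and then observe that the map $x \mapsto \frac{1}{n} - x$ is order-reversing. First I would invoke Lemma~\ref{aaalemdda1}, which guarantees that each $\hat{v}_i(t)$ satisfies the normalization property $(\ref{norm1})$, namely $\hat{v}_i(t)^2 = \frac{1}{n}\hat{v}_i(t)'\mathbf{1}_n$. Given this, the computation $(\ref{gkda0})$ applies verbatim: expanding $E_{\frac{1}{n} \mathbf{1}_n}^2\big(\hat{v}_i(t)\big) = \big(\hat{v}_i(t) - \frac{1}{n}\mathbf{1}_n\big)^2 = \hat{v}_i(t)^2 - 2\tfrac{1}{n}\hat{v}_i(t)'\mathbf{1}_n + \tfrac{1}{n}$ and substituting $\tfrac{1}{n}\hat{v}_i(t)'\mathbf{1}_n = \hat{v}_i(t)^2$ yields $E_{\frac{1}{n} \mathbf{1}_n}^2\big(\hat{v}_i(t)\big) = \tfrac{1}{n} - \hat{v}_i(t)^2$, and identically for $\hat{v}_j(t)$.

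The equivalence $(\ref{aaaerr1})$ then follows immediately: $E_{\frac{1}{n} \mathbf{1}_n}^2\big(\hat{v}_i(t)\big) \leq E_{\frac{1}{n} \mathbf{1}_n}^2\big(\hat{v}_j(t)\big)$ holds iff $\tfrac{1}{n} - \hat{v}_i(t)^2 \leq \tfrac{1}{n} - \hat{v}_j(t)^2$, which holds iff $\hat{v}_i(t)^2 \geq \hat{v}_j(t)^2$. Both directions of the biconditional are obtained at once from this chain of equivalences, so no separate sufficiency/necessity argument is needed.

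There is no genuine obstacle here; the only point requiring care is that the normalization property must be available at the particular time $t$ in the statement, which is exactly what Lemma~\ref{aaalemdda1} provides (for all $t \geq 0$, not just at the instants a signal is received, since between receptions $(\mathrm{A7})$ keeps each estimate fixed). As an alternative self-contained route, one could instead appeal to Lemma~\ref{aaalemdda0}, which gives $\hat{v}_i(t) \in \mathbb{R}^n_{0,\frac{1}{n}}$: each entry $\hat{v}_{i\ell}(t)$ lies in $\{0, 1/n\}$, so $\hat{v}_{i\ell}(t)^2 = \tfrac{1}{n}\hat{v}_{i\ell}(t)$ entrywise, and summing gives $\hat{v}_i(t)^2 = \tfrac{1}{n}\hat{v}_i(t)'\mathbf{1}_n$ directly, reproducing the identity above without reference to Lemma~\ref{aaalemdda1}. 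Either way the proof is a couple of lines.
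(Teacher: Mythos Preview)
Your proposal is correct and follows essentially the same approach as the paper: the paper's proof also expands $E_{\frac{1}{n}\mathbf{1}_n}^2(\hat{v}_i(t))$, invokes Lemma~\ref{aaalemdda1} to obtain the identity $E_{\frac{1}{n}\mathbf{1}_n}^2(\hat{v}_i(t)) = \tfrac{1}{n} - \hat{v}_i(t)^2$, and then reads off the equivalence directly. Your alternative route via Lemma~\ref{aaalemdda0} is a nice observation but not needed, since the paper already packages that entrywise argument into Lemma~\ref{aaalemdda1}.
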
 \begin{proof} For any normal consensus estimate $\hat{v}_i(t)$, \begin{equation}\label{aaaerr1p}\begin{array}{llll} E_{\frac{1}{n} \mathbf{1}_n} ^2 \big( \hat{v}_{i} (t) \big) & = \big( \hat{v}_{i} (t) - \frac{1}{n} \mathbf{1}_n \big) ^2 \ , \\ & = \hat{v}_{i} (t) ^2 - 2 \frac{1}{n} \hat{v}_{i} (t) ' \mathbf{1}_n + \frac{1}{n} \ , \\ & = \frac{1}{n} - \hat{v}_{i} (t) ^2 \end{array}\end{equation} where the third line holds due to Lemma $\ref{aaalemdda1}$. The equivalence $(\ref{aaaerr1})$ then follows directly from $(\ref{aaaerr1p})$. \end{proof}

\begin{lem}\label{aaalemkda1}(DDA Normal Consensus Estimate Update) Applying $(\ref{dda1})$ and $(\ref{dda2})$ to the optimization problem $(\ref{aaaoptdisc})$ implies the DDA normal consensus estimate update $\hat{v}_i(t^{ij}_1(+))$ can be defined as in $(\ref{aaaddaveqa})-(\ref{aaaddaveqb})$. \end{lem}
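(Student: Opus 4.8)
\emph{Proof proposal.} The plan is to reduce the discretized problem (\ref{aaaoptdisc}) to a finite combinatorial optimization over supports, exactly as (\ref{optkda1})--(\ref{optkda3}) reduces (\ref{optllca}) for the DA algorithm, and then to solve that combinatorial problem by the three-way case split appearing in (\ref{aaaddaveqb}). Since the DDA signal specification (\ref{dda1}) and knowledge set update (\ref{dda2}) are identical to the DA rules (\ref{kda1})--(\ref{kda2}), the only data bearing on the constraint (\ref{a10}) are $\hat{s}_i(t^{ij}_1) = \mathbf{S}\hat{v}_i(t^{ij}_1)$, $\hat{s}_j(t^{ij}_0) = \mathbf{S}\hat{v}_j(t^{ij}_0)$ and $s_i(0) = \mathbf{S}e_i$, so the feasible set of (\ref{aaaoptdisc}) is $\mathbb{R}^n_{0,\frac{1}{n}} \cap \mbox{span}\{\hat{v}_i(t^{ij}_1),\hat{v}_j(t^{ij}_0),e_i\}$, exactly as in Lemma \ref{lemkda} but with the extra membership requirement. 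As in Lemma \ref{aaalemsda2}, I would run this as a self-consistent induction: by Lemma \ref{lembm0} the estimates satisfy $\hat{v}_i(0) \in \mathbb{R}^n_{0,\frac{1}{n}}$ and $\hat{v}_{ii}(0) = \frac{1}{n}$, and under the inductive hypothesis that the pre-update estimates $\hat{v}_i(t^{ij}_1), \hat{v}_j(t^{ij}_0)$ lie in $\mathbb{R}^n_{0,\frac{1}{n}}$ with $\hat{v}_{ii}(t^{ij}_1) = \frac{1}{n}$, I would exhibit the update claimed in (\ref{aaaddaveqa})--(\ref{aaaddaveqb}), check it again lies in $\mathbb{R}^n_{0,\frac{1}{n}}$ with $i$-th coordinate $\frac{1}{n}$ (thereby giving Lemmas \ref{aaalemdda0} and \ref{aaalemdda2} a consistent basis), and verify it is a global minimizer.

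For $\tilde{v} \in \mathbb{R}^n_{0,\frac{1}{n}}$ each coordinate contributes $0$ to $(\tilde{v}-\frac{1}{n}\mathbf{1}_n)^2$ when it equals $\frac{1}{n}$ and $\frac{1}{n^2}$ otherwise, so minimizing the objective over the feasible set is the same as maximizing the cardinality of the support of $\tilde{v}$. Put $A' = \{\ell \neq i : \hat{v}_{i\ell}(t^{ij}_1) = \frac{1}{n}\}$ and $B' = \{\ell \neq i : \hat{v}_{j\ell}(t^{ij}_0) = \frac{1}{n}\}$; a generic span element $a\hat{v}_i(t^{ij}_1) + b\hat{v}_j(t^{ij}_0) + c\,e_i$ has, for $\ell \neq i$, $\ell$-th coordinate equal to $\frac{a}{n}$ on $A'\setminus B'$, $\frac{b}{n}$ on $B'\setminus A'$, $\frac{a+b}{n}$ on $A'\cap B'$, and $0$ elsewhere, so membership in $\mathbb{R}^n_{0,\frac{1}{n}}$ forces $a\in\{0,1\}$ when $A'\setminus B'\neq\emptyset$, $b\in\{0,1\}$ when $B'\setminus A'\neq\emptyset$, and $a+b\in\{0,1\}$ when $A'\cap B'\neq\emptyset$; one checks that any admissible pair yields an off-$i$ support contained in one obtained from $(a,b)\in\{0,1\}^2$, so it suffices to range over these four pairs, with $(1,1)$ excluded precisely when $A'\cap B'\neq\emptyset$. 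Using $\hat{v}_{ii}(t^{ij}_1)=\frac{1}{n}$, the $i$-th coordinate of such an element is $\frac{a}{n}+b\,\hat{v}_{ji}(t^{ij}_0)+c$, which can be set to $\frac{1}{n}$ by a unique choice of $c$ once $(a,b)$ is fixed; hence the problem collapses to choosing $(a,b)$ from the admissible subset of $\{0,1\}^2$ to maximize the off-$i$ support, which is $\emptyset$, $A'$, $B'$, $A'\cup B'$ for $(0,0),(1,0),(0,1),(1,1)$ respectively.

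Reading off the cases then reproduces (\ref{aaaddaveqb}): when $\hat{v}^{-i}_i(t^{ij}_1)'\hat{v}^{-i}_j(t^{ij}_0)=0$, i.e.\ $A'\cap B'=\emptyset$, the maximizer is $(a,b)=(1,1)$ and the $i$-th coordinate condition forces $c=-\hat{v}_{ji}(t^{ij}_0)$, matching the first line; when $A'\cap B'\neq\emptyset$ the pair $(1,1)$ is inadmissible, so the larger of $\{A',B'\}$ is chosen, and since $(\hat{v}^{-i}_i(t^{ij}_1))^2<(\hat{v}^{-i}_j(t^{ij}_0))^2$ is exactly $|A'|<|B'|$ the maximizer is $(a,b)=(0,1)$ with $c=\frac{1}{n}-\hat{v}_{ji}(t^{ij}_0)$, the second line, while $|A'|\geq|B'|$ gives $(a,b)=(1,0)$ with $c=0$, i.e.\ no change, the third line. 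In each case the output lies in $\mathbb{R}^n_{0,\frac{1}{n}}$ and has $i$-th entry $\frac{1}{n}$, closing the induction, and it attains the maximal support, hence is a global solution of (\ref{aaaoptdisc}), so it may be defined by (\ref{aaaddaveqa})--(\ref{aaaddaveqb}). I expect the only real work to be bookkeeping: confirming the admissibility constraints are exactly the membership conditions for $\mathbb{R}^n_{0,\frac{1}{n}}$, handling the $i$-th coordinate uniformly across all linear-dependence patterns of $\{\hat{v}_i,\hat{v}_j,e_i\}$ (which, unlike in Lemma \ref{lemkda3}, never needs a separate subcase because the feasible set is described directly through the coefficients), and noting that when $|A'|=|B'|$ two choices are equally optimal, so the lemma asserts only that (\ref{aaaddaveqa})--(\ref{aaaddaveqb}) is \emph{a} global solution, not the unique one.
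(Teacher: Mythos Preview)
Your proposal is correct and arrives at the same conclusion as the paper, but via a genuinely cleaner route. The paper's proof writes out the constraint $a\hat v_{i\ell}+b\hat v_{j\ell}+c\frac{1}{n}e_{i\ell}\in\{0,\frac{1}{n}\}$ for each $\ell$, solves the $i$-th constraint for $c$ (showing the choice giving $i$-th coordinate $\frac{1}{n}$ dominates), and then performs an exhaustive six-way case split (labeled (M1A), (M1B), (M2A), (M2B), (M2C), (M3)) according to which of the three off-$i$ constraint patterns $\{a\in\{0,1\}\}$, $\{b\in\{0,1\}\}$, $\{a+b\in\{0,1\}\}$ are active; in each subcase it lists the finitely many admissible $(a,b)$, computes the resulting magnitude $\hat v_i(+)^2$, and picks the largest. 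Your reformulation as support maximization over the sets $A',B'$ collapses all of this: once you observe that the off-$i$ support of any admissible $(a,b)$ is always contained in one coming from $(a,b)\in\{0,1\}^2$ (with $(1,1)$ excluded exactly when $A'\cap B'\neq\emptyset$), the three lines of $(\ref{aaaddaveqb})$ fall out immediately by comparing $|A'|$ and $|B'|$. What you gain is brevity and a transparent reason for the tie-breaking remark at the end (when $|A'|=|B'|$ both $(1,0)$ and $(0,1)$ are optimal); what the paper's enumeration buys is that every candidate is written down explicitly, so there is no appeal to the containment claim ``any admissible pair yields an off-$i$ support contained in one obtained from $\{0,1\}^2$,'' which in your write-up you should still spell out case by case (it is short: when a region is empty its coefficient is unconstrained but also contributes nothing to the support, and when $A'\cap B'\neq\emptyset$ the pairs with $a+b=0$ or $a+b=1$ but $(a,b)\notin\{0,1\}^2$ give supports strictly contained in $A'$ or $B'$).
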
 \begin{proof} Note that $(\ref{aaaddaveqa})-(\ref{aaaddaveqb})$ implies, \begin{equation}\label{aaaddaveq} \hat{v}_i ( t^{ij}_1 (+) ) = \left\{ \begin{array}{l l} \hat{v} _i ( t^{ij}_1 ) + \hat{v}_j( t^{ij}_0 ) - \hat{v}_{ji} (t^{ij}_0 ) e_i \ & , \ \mbox{if $\hat{v}^{-i} _i ( t^{ij}_1 ) ' \hat{v}^{-i} _j ( t^{ij}_0 ) = 0$,} \\ \hat{v} _j ( t^{ij}_0 ) + e_i \big( \frac{1}{n} - \hat{v}_{ji} (t^{ij}_0 ) \big) \ & , \ \mbox{if $\hat{v}^{-i} _i ( t^{ij}_1 ) ' \hat{v}^{-i} _j ( t^{ij}_0 ) > 0 \ , \ \hat{v}^{-i} _i ( t^{ij}_1 ) ^2 < \hat{v}^{-i} _j ( t^{ij}_0 )^2 \ , $} \\ \hat{v} _i ( t^{ij}_1 ) \ & , \ \mbox{if $\hat{v}^{-i} _i ( t^{ij}_1 ) ' \hat{v}^{-i} _j ( t^{ij}_0 ) > 0 \ , \ \hat{v}^{-i} _i ( t^{ij}_1 ) ^2 \geq \hat{v}^{-i} _j ( t^{ij}_0 )^2 \ . $} \end{array} \right. \end{equation} We thus will prove that $(\ref{dda1})$ and $(\ref{dda2})$ imply optimization problem $(\ref{aaaoptdisc})$ yields the DDA normal consensus estimate update $\hat{v}_i(t^{ij}_1(+))$ defined by $(\ref{aaaddaveq})$. Under $(\ref{dda1})$ and $(\ref{dda2})$ we can re-write $(\ref{aaaoptdisc})$ as, \begin{equation}\label{aaaoptkda1}\begin{array}{llll} \hat{v}_i ( t^{ij}_1(+)) = & \arg_{\mbox{$\tilde{v}$}} \ \min \big( \tilde{v} - \frac{1}{n} \mathbf{1}_n \big)^2 , \\ & \mbox{s.t. $(\ref{a10})$ holds, $\tilde{v} \in \mathbb{R}^n_{0,\frac{1}{n}}$, given $\hat{s}_i(t^{ij}_1) = \mathbf{S} \hat{v}_i(t^{ij}_1)$,} \\ & \mbox{   $\hat{s}_j(t^{ij}_0) = \mathbf{S} \hat{v}_j(t^{ij}_0), s_i(0) = \mathbf{S} e_i.$} \end{array}\end{equation} Given that $\hat{s}_i(t^{ij}_1) = \mathbf{S} \hat{v}_i(t^{ij}_1) , \hat{s}_j(t^{ij}_0) = \mathbf{S} \hat{v}_j(t^{ij}_0)$, and $s_i(0) = \mathbf{S} e_i$ are known, the set of vectors $\hat{v}_i(t^{ij}_1(+))$ for which $(\ref{a10})$ holds is $\mbox{span} \{ \hat{v}_i(t^{ij}_1) , \hat{v}_j(t^{ij}_0) , e_i \}$, thus $(\ref{aaaoptkda1})$ can be re-written as \begin{equation}\label{aaaoptkda3}\begin{array}{llll} \hat{v}_i ( t^{ij}_1(+)) & = \arg_{\mbox{$\tilde{v}$}} \ \min_{\mbox{$\tilde{v} \in \mbox{span} \{ \hat{v}_i(t^{ij}_1) , \hat{v}_j(t^{ij}_0) , e_i \} \cap \mathbb{R}^n_{0,\frac{1}{n}}$}} \ \big( \tilde{v} - \frac{1}{n} \mathbf{1}_n \big)^2 , \\ & = \arg_{\mbox{$\tilde{v} = \big( a \hat{v}_i(t^{ij}_1) + b \hat{v}_j(t^{ij}_0) + c \frac{1}{n} e_i \big) \cap \mathbb{R}^n_{0,\frac{1}{n}}$}} \ \min_{(a,b,c)} \ \big( \tilde{v} - \frac{1}{n} \mathbf{1}_n \big)^2 \ . \end{array}\end{equation} For notational convenience denote $\hat{v}_i = \hat{v}_i ( t^{ij}_1)$, $\hat{v}_j = \hat{v}_j ( t^{ij}_0)$, and $\hat{v}_i(t^{ij}_1(+)) = \hat{v}_i(+)$. Note that the constraint in $(\ref{aaaoptkda3})$ can be expressed as follows, \begin{equation}\label{aaacons1} a \hat{v}_{i \ell} + b \hat{v}_{j \ell} + c \frac{1}{n} e_{i \ell} \in ( 0, \frac{1}{n} ) \ , \ \forall \ \ell \in \mathcal{V} . \end{equation} Due to Lemma $\ref{aaalemdda2}$ the $i^{th}$ constraint in $(\ref{aaacons1})$ is, $$ \begin{array}{llll} & a \frac{1}{n} + b \hat{v}_{j i} + c \frac{1}{n} \in ( 0, \frac{1}{n} ) \ , \ \Rightarrow \\ & c = \left\{ \begin{array}{l l} & \ \ - a + b \hat{v}_{j i} n \ , \\ & 1 - a - b \hat{v}_{j i} n \ . \end{array} \right. \end{array} $$ If $c = - a + b \hat{v}_{j i} n$ we then have the candidate update $\hat{v}^{(1)}_i(+)$, $$ \begin{array}{llll} \hat{v}^{(1)}_i (+) & = a \hat{v}_i + b \hat{v}_j + \frac{1}{n} e_i ( - a + b \hat{v}_{j i} n ) \ , \ \Rightarrow \\ \hat{v}^{(1)}_i(+) ^2 & = \frac{1}{n} \hat{v}^{(1)}_i(+) ' \mathbf{1}_n \\ & = a \hat{v}_i^2 + b \hat{v}_j ^2 - a \frac{1}{n^2} - b \hat{v}_{ji} / n \ , \end{array} $$ where the magnitude $\hat{v}^{(1)}_i(+) ^2$ is computed using Lemma $\ref{aaalemdda1}$. If $c = 1 - a + b \hat{v}_{j i} n$ we then have the candidate update $\hat{v}^{(2)}_i(+)$, \begin{equation}\label{aaacan2}\begin{array}{llll} \hat{v}^{(2)}_i(+) & = a \hat{v}_i + b \hat{v}_j + \frac{1}{n} e_i ( 1 - a + b \hat{v}_{j i} n ) \ , \ \Rightarrow \\ \hat{v}^{(2)}_i(+) ^2 & = \frac{1}{n} \hat{v}^{(1)}_i(+) ' \mathbf{1}_n \\ \quad \quad \ \ & = a \hat{v}_i^2 + b \hat{v}_j ^2 + \frac{1}{n^2} - a \frac{1}{n^2} - b \hat{v}_{ji}/n \ , \\ & = \hat{v}^{(1)}_i(+) ^2 + \frac{1}{n^2} \ . \end{array} \end{equation} Since $e_{i \ell} = 0$ for all $\ell \neq i$, the $i^{th}$ constraint of $(\ref{aaacons1}$ is the only constraint involving the optimization variable $c$, thus by applying Lemma $\ref{aaalemdda01}$ it follows from $(\ref{aaacan2})$ that if $c \neq 1 - a + b \hat{v}_{j i} n$ then the resulting solution $\hat{v}_i(+)$ cannot be a solution to $(\ref{aaaoptkda3})$, hence $\hat{c} = 1 - a + b \hat{v}_{j i} n$. Next observe that if $\hat{v}_{i \ell} = \hat{v}_{j \ell} = 0$ then the $\ell ^{th}$ constraint in $(\ref{aaacons1})$ places no restriction on $a$ or $b$, thus due to Lemma $\ref{aaalemdda0}$ and Lemma $\ref{aaalemdda2}$ we can consider the three possible scenarios posed by $(\ref{aaacons1})$ given the $i^{th}$ constraint is satisfied by $(\ref{aaacan2})$: one constraint, $$ \begin{array}{llll} & (M1A) \ \ a (0) + b( \frac{1}{n} ) \in (0, \frac{1}{n}) \ \Rightarrow \ b \in (0, 1) \ , \\ & (M1B) \ \ a (\frac{1}{n}) + b( \frac{1}{n} ) \in (0, \frac{1}{n}) \ \Rightarrow \ b \in (1 -a, -a) \ , \end{array} $$ two constraints, $$ \begin{array}{llll} & (M2A) \ \ a (0) + b( \frac{1}{n} ) \in (0, \frac{1}{n}) \ \Rightarrow \ b \in (0, 1) \\ & \ \ \ \ \ \ \ \ \ \ \ a (\frac{1}{n}) + b( 0 ) \in (0, \frac{1}{n}) \ \Rightarrow \ a \in (0, 1) \ , \\ & (M2B) \ \ a (0) + b( \frac{1}{n} ) \in (0, \frac{1}{n}) \ \Rightarrow \ b \in (0, 1) \\ & \ \ \ \ \ \ \ \ \ \ \ a (\frac{1}{n}) + b( \frac{1}{n} ) \in (0, \frac{1}{n}) \ \Rightarrow \ a \in (1 -b, -b) \ , \\ & (M2C) \ \ a ( \frac{1}{n}) + b( 0 ) \in (0, \frac{1}{n}) \ \Rightarrow \ a \in (0, 1) \\ & \ \ \ \ \ \ \ \ \ \ \ a (\frac{1}{n}) + b( \frac{1}{n} ) \in (0, \frac{1}{n}) \ \Rightarrow \ b \in (1 -a, -a) \ , \end{array} $$ or three constraints $$ \begin{array}{llll} & (M3) \ \ a (0) + b( \frac{1}{n} ) \in (0, \frac{1}{n}) \ \Rightarrow \ b \in (0, 1) \\ & \ \ \ \ \ \ \ \ \ a (\frac{1}{n} ) + b( 0 ) \in (0, \frac{1}{n}) \ \Rightarrow \ a \in (0, 1) \\ & \ \ \ \ \ \ \ \ \ a (\frac{1}{n}) + b( \frac{1}{n} ) \in (0, \frac{1}{n}) \ \Rightarrow \ b \in (1 -a, -a) \ . \end{array} $$ If $\hat{v}^{-i}_i \ ' \hat{v}^{-i}_j = 0$ and there is one constraint, then Lemma $\ref{aaalemdda2}$ implies $\hat{v}^{-i}_i = 0$ and hence $(M1A)$. In this case we have $\hat{v}_i = \frac{1}{n} e_i$ and the following candidate solutions, \begin{equation}\label{aaadd1} \hat{v}_i(+) = \left\{ \begin{array}{l l} a \frac{1}{n} e_i + \frac{1}{n} e_i (1 - a) = \frac{1}{n} e_i , \ \ \ \ \ \ \ (b = 0) \\ a \frac{1}{n} e_i + \hat{v}_j + \frac{1}{n} e_i (1 - a - \hat{v}_{ji} n) , \ (b = 1) \end{array} \right. \end{equation} where the second line in $(\ref{aaadd1})$ simplifies to $\hat{v}_j + \frac{1}{n} e_i (1 - \hat{v}_{ji} n )$. The solutions in $(\ref{aaadd1})$ possess the following magnitudes, $$ \hat{v}_i(+) ^2 = \left\{ \begin{array}{l l} & \frac{1}{n^2} \ , \ \ \ \ \ \ \ \ \ \ \ \ (b = 0) \\ & \hat{v}^{-i}_j \ ^2 + \frac{1}{n^2} \ , \ \ (b = 1) \end{array} \right. $$ and thus since Lemma $\ref{aaalemdda2}$ implies $\hat{v}^{-i}_j \ ^2 >0$, the optimal solution is, \begin{equation}\label{aaadd1b}\begin{array}{llll} \hat{v}_i(+) & = \hat{v}_j + \frac{1}{n} e_i ( 1 - \hat{v}_{ji} n ) \ , \\ & = \hat{v}_i + \hat{v}_j - e_i \hat{v}_{ji} \ , \end{array} \end{equation} where the last line follows under the given assumption $\hat{v}_i = \frac{1}{n} e_i$. If $\hat{v}^{-i}_i \ ' \hat{v}^{-i}_j = 0$ and there are two constraints then $(M2A)$ necessarily follows. In this case $\hat{v}^{-i}_i \ ^2 >0$ and we have the candidate solutions, \begin{equation}\label{aaadd2} \hat{v}_i(+) = \left\{ \begin{array}{l l} \frac{1}{n} e_i \ , \ \ \ \ \ \ \ \ \ \ \ \ \ \ \ \ \ \ \ \ ( a = 0, \ b = 0) \\ \hat{v}_i \ , \ \ \ \ \ \ \ \ \ \ \ \ \ \ \ \ \ \ \ \ \ \ ( a= 1, \ b = 0) \\ \hat{v}_j + \frac{1}{n} e_i (1 - \hat{v}_{ji} n ) \ , \ \ (a = 0 , \ b = 1) \\ \hat{v}_i + \hat{v}_j - e_i \hat{v}_{ji} \ , \ \ \ \ \ \ \ (a = 1 , \ b = 1) \ . \end{array} \right. \end{equation} The solutions in $(\ref{aaadd2})$ possess the following magnitudes, $$ \hat{v}_i(+) ^2 = \left\{ \begin{array}{l l} & \frac{1}{n^2} \ , \ \ \ \ \ \ \ \ \ \ \ \ \ \ \ \ \ \ \ \ \ \ \ (a = 0 , \ b = 0) \\ & \hat{v}^{-i}_i \ ^2 + \frac{1}{n^2} \ , \ \ \ \ \ \ \ \ \ \ \ \ \ (a = 1, \ b = 0) \\ & \hat{v}^{-i}_j \ ^2 + \frac{1}{n^2} \ , \ \ \ \ \ \ \ \ \ \ \ \ \ (a = 0, \ b = 1) \\ & \hat{v}^{-i}_i \ ^2 + \hat{v}^{-i}_j \ ^2 + \frac{1}{n^2} \ , \ \ \ (a = 1, \ b = 1) \end{array} \right. $$ and thus since Lemma $\ref{aaalemdda2}$ implies $\hat{v}^{-i}_j \ ^2 >0$, the optimal solution is, \begin{equation}\label{aaadd2b} \hat{v}_i(+) = \hat{v}_i + \hat{v}_j - e_i \hat{v}_{ji} \ . \end{equation} If $\hat{v}^{-i}_i \ ' \hat{v}^{-i}_j > 0$ and there is one constraint then we necessarily have $\hat{v}^{-i}_i = \hat{v}^{-i}_j$ and thus $(M1B)$. In this case the candidate solutions are, \begin{equation}\label{aaadd3} \hat{v}_i(+) = \left\{ \begin{array}{l l} a \hat{v}_i + ( 1 - a ) \hat{v}_j + \frac{1}{n} e_i ( 1- a - (1 - a) \hat{v}_{ji} n ) \\ a \hat{v}_i - a \hat{v}_j + \frac{1}{n} e_i (1 - a + a \hat{v}_{ji} n ) \ . \end{array} \right. \end{equation} Note that the first and second line in $(\ref{aaadd3})$ correspond respectively to when $b = 1 - a$ and $b = -a$, also note that each line simplifies respectively to $\hat{v}_j + \frac{1}{n} e_i (1 - \hat{v}_{ji} n )$ and $\frac{1}{n} e_i$. The solutions in $(\ref{aaadd3})$ thus possess the following magnitudes, $$ \hat{v}_i(+) ^2 = \left\{ \begin{array}{l l} \hat{v}^{-i}_j \ ^2 + \frac{1}{n^2} \ , \ \ (b = 1-a) \\ \frac{1}{n^2} \ , \ \ \ \ \ \ \ \ \ \ \ \ ( b = -a) \ , \end{array} \right. $$ and since Lemma $\ref{aaalemdda2}$ implies $\hat{v}^{-i}_j \ ^2 >0$, the optimal solution is, \begin{equation}\label{aaadd3b}\begin{array}{llll} \hat{v}_i(+) & = \hat{v}_j + \frac{1}{n} e_i ( 1 - \hat{v}_{ji} n ) \ , \\ & = \hat{v}_i \ , \end{array} \end{equation} where the last line follows under the assumption $\hat{v}^{-i}_i = \hat{v}^{-i}_j$. If $\hat{v}^{-i}_i \ ' \hat{v}^{-i}_j > 0$ and there are two constraints, then if $\hat{v}^{-i}_i \ ^2 < \hat{v}^{-i}_j \ ^2$ it follows that $(M2B)$ holds. The candidate solutions in this case are, \begin{equation}\label{aaadd4} \hat{v}_i(+) = \left\{ \begin{array}{l l} \frac{1}{n} e_i \ , \ \ \ \ \ \ \ \ \ \ \ \ \ \ \ \ \ \ \ \ \ \ \ \ \ \ \ ( a = 0, \ b = 0) \\ \hat{v}_i \ , \ \ \ \ \ \ \ \ \ \ \ \ \ \ \ \ \ \ \ \ \ \ \ \ \ \ \ \ \ ( a= 1, \ b = 0) \\ - \hat{v}_i + \hat{v}_j - \frac{1}{n} e_i (2 - \hat{v}_{ji} n ) \ , \ (a = -1 , \ b = 1) \\ \hat{v}_j + \frac{1}{n} e_i (1 - \hat{v}_{ji} n ) \ , \ \ \ \ \ \ \ \ \ (a = 0 , \ b = 1) . \end{array} \right. \end{equation} The solutions in $(\ref{aaadd4})$ possess the following magnitudes, $$ \hat{v}_i(+) ^2 = \left\{ \begin{array}{l l} & \frac{1}{n^2} \ , \ \ \ \ \ \ \ \ \ \ \ \ \ \ \ \ \ \ \ \ \ \ \ (a = 0 , \ b = 0) \\ & \hat{v}^{-i}_i \ ^2 + \frac{1}{n^2} \ , \ \ \ \ \ \ \ \ \ \ \ \ \ (a = 1, \ b = 0) \\ & - \hat{v}^{-i}_i \ ^2 + \hat{v}^{-i}_j \ ^2 + \frac{1}{n^2} \ , \ (a = -1, \ b = 1) \\ & \hat{v}^{-i}_j \ ^2 + \frac{1}{n^2} \ , \ \ \ \ \ \ \ \ \ \ \ \ \ (a = 0, \ b = 1) . \end{array} \right. $$ The assumption $\hat{v}^{-i}_j \ ^2 > \hat{v}^{-i}_i \ ^2$ then implies the optimal solution, \begin{equation}\label{aaadd4b} \hat{v}_i(+) = \hat{v}_j - \frac{1}{n} e_i ( 1 - \hat{v}_{ji} n) \ . \end{equation} If $\hat{v}^{-i}_i \ ' \hat{v}^{-i}_j > 0$ and there are two constraints, then if $\hat{v}^{-i}_i \geq \hat{v}^{-i}_j$ it follows that $(M2C)$ holds. The candidate solutions in this case are, \begin{equation}\label{aaadd5} \hat{v}_i(+) = \left\{ \begin{array}{l l} \frac{1}{n} e_i \ , \ \ \ \ \ \ \ \ \ \ \ \ \ \ \ \ \ \ \ ( a = 0, \ b = 0) \\ \hat{v}_j + \frac{1}{n} e_i (1 - \hat{v}_{ji} n ) \ , \ (a = 0 , \ b = 1) \\ \hat{v}_i - \hat{v}_j + \frac{1}{n} e_i \ , \ \ \ \ \ \ \ (a = 1 , \ b = -1) \\ \hat{v}_i \ , \ \ \ \ \ \ \ \ \ \ \ \ \ \ \ \ \ \ \ \ \ ( a= 1, \ b = 0) \ . \end{array} \right. \end{equation} The solutions in $(\ref{aaadd5})$ possess the following magnitudes, $$ \hat{v}_i(+) ^2 = \left\{ \begin{array}{l l} \frac{1}{n^2} \ , \ \ \ \ \ \ \ \ \ \ \ \ \ \ \ \ \ \ \ \ \ (a = 0 , \ b = 0) \\ \hat{v}^{-i}_j \ ^2 + \frac{1}{n^2} \ , \ \ \ \ \ \ \ \ \ \ \ (a = 0, \ b = 1) \\ \hat{v}^{-i}_i \ ^2 - \hat{v}^{-i}_j \ ^2 + \frac{1}{n^2} \ , \ (a = 1, \ b = -1) \\ \hat{v}^{-i}_i \ ^2 + \frac{1}{n^2} \ , \ \ \ \ \ \ \ \ \ \ \ (a = 1, \ b = 0) . \end{array} \right. $$ The assumption $\hat{v}^{-i}_j \ ^2 \leq \hat{v}^{-i}_i \ ^2$ then implies the global solution, \begin{equation}\label{aaadd5b} \hat{v}_i(+) = \hat{v}_i \ . \end{equation} If $\hat{v}^{-i}_i \ ' \hat{v}^{-i}_j > 0$ and there are three constraints then $(M3)$ necessarily follows. The candidate solutions in this case are, \begin{equation}\label{aaadd6} \hat{v}_i(+) = \left\{ \begin{array}{l l} \frac{1}{n} e_i \ , \ \ \ \ \ \ \ \ \ \ \ \ \ \ \ \ \ \ \ ( a = 0, \ b = 0) \\ \hat{v}_j + \frac{1}{n} e_i (1 - \hat{v}_{ji} n ) \ , \ (a = 0 , \ b = 1) \\ \hat{v}_i \ , \ \ \ \ \ \ \ \ \ \ \ \ \ \ \ \ \ \ \ \ \ ( a= 1, \ b = 0) . \end{array} \right. \end{equation} The solutions in $(\ref{aaadd6})$ possess the following magnitudes, $$ \hat{v}_i(+) ^2 = \left\{ \begin{array}{l l} \frac{1}{n^2} \ , \ \ \ \ \ \ \ \ \ \ \ (a = 0 , \ b = 0) \\ \hat{v}^{-i}_j \ ^2 + \frac{1}{n^2} \ , \ (a = 0, \ b = 1) \\ \hat{v}^{-i}_i \ ^2 + \frac{1}{n^2} \ , \ (a = 1, \ b = 0) . \end{array} \right. $$ The assumption $\hat{v}^{-i}_j \ ^2 > \hat{v}^{-i}_i \ ^2$ then implies the optimal solution, \begin{equation}\label{aaadd6ba} \hat{v}_i(+) = \hat{v}_j + \frac{1}{n} e_i (1 - \hat{v}_{ji} n ) \ . \end{equation} In contrast, the assumption $\hat{v}^{-i}_j \ ^2 \leq \hat{v}^{-i}_i \ ^2$ implies the global solution, \begin{equation}\label{aaadd6bb} \hat{v}_i(+) = \hat{v}_i \ . \end{equation} Combining $(\ref{aaadd1b})$, $(\ref{aaadd2b})$, $(\ref{aaadd3b})$, $(\ref{aaadd4b})$, $(\ref{aaadd5b})$, $(\ref{aaadd6ba})$, and $(\ref{aaadd6bb})$, together imply $(\ref{aaaddaveq})$. Note that if $\hat{v}^{-i}_j \ ^2 = \hat{v}^{-i}_i \ ^2$ and $\hat{v}^{-i}_j \neq \hat{v}^{-i}_i$ then there are two global solutions to $(\ref{aaaoptkda3})$. By specifying $(\ref{aaadd5b})$ and $(\ref{aaadd6bb})$ we have chosen the global solutions that are necessary for Lemma $\ref{aaalemdda3}$, which is in turn necessary for the DDA algorithm to obtain average-consensus under the sufficient communication condition stated in Theorem $\ref{aaathm5}$. \end{proof}

\begin{lem}\label{aaalemdda0w1}(DDA Normal Consensus Estimate Non-Decreasing Error) The error of every normal consensus estimate $\hat{v}_i(t)$ is a non-decreasing function of $t \geq 0$ for all $i \in \mathcal{V}$. \end{lem}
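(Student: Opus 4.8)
The plan is to prove the asserted monotonicity by tracking the squared magnitude $\hat{v}_i(t)^2$ and showing it is a non-decreasing function of $t$; this is the DDA counterpart of Lemma~\ref{lemkda2}, and $\hat{v}_i(t)^2$ is the monotone quantity on which the remaining convergence lemmas rely. By (A7) the estimate $\hat{v}_i(t)$ stays fixed except at the reception instants $t^{ij}_1$, so it suffices to show $\hat{v}_i(t^{ij}_1(+))^2 \geq \hat{v}_i(t^{ij}_1)^2$ for a single reception and then concatenate these inequalities starting from the initialization $(\ref{initsol})$. Throughout I would use the three structural properties already established for the DDA iterates: the discretization $\hat{v}_i(t) \in \mathbb{R}^n_{0,\frac{1}{n}}$ (Lemma~\ref{aaalemdda0}), the normalization $\hat{v}_i(t)^2 = \frac{1}{n}\hat{v}_i(t)'\mathbf{1}_n$ (Lemma~\ref{aaalemdda1}), and the zero-local-error identity $\hat{v}_{ii}(t) = \frac{1}{n}$ (Lemma~\ref{aaalemdda2}).

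The core is a case split along the three branches of the update $(\ref{aaaddaveq})$, abbreviating $\hat{v}_i = \hat{v}_i(t^{ij}_1)$ and $\hat{v}_j = \hat{v}_j(t^{ij}_0)$. When $\hat{v}_i^{-i}{}'\hat{v}_j^{-i} = 0$, the discretization forces the off-diagonal supports of $\hat{v}_i$ and $\hat{v}_j$ to be disjoint, so for each $\ell \neq i$ at most one of $\hat{v}_{i\ell}, \hat{v}_{j\ell}$ equals $\frac{1}{n}$; consequently $\hat{v}_i(+) = \hat{v}_i + \hat{v}_j - \hat{v}_{ji}e_i$ remains in $\mathbb{R}^n_{0,\frac{1}{n}}$, all cross terms in the squared norm drop out, and $\hat{v}_i(+)^2 = \hat{v}_i^2 + \big( \hat{v}_j^2 - \hat{v}_{ji}^2 \big) \geq \hat{v}_i^2$. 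When $\hat{v}_i^{-i}{}'\hat{v}_j^{-i} > 0$ and $(\hat{v}_i^{-i})^2 < (\hat{v}_j^{-i})^2$, the update $\hat{v}_i(+) = \hat{v}_j + e_i\big(\frac{1}{n} - \hat{v}_{ji}\big)$ copies the off-diagonal part of $\hat{v}_j$ and restores $\frac{1}{n}$ at coordinate $i$, so $\hat{v}_i(+)^2 = \frac{1}{n^2} + (\hat{v}_j^{-i})^2 > \frac{1}{n^2} + (\hat{v}_i^{-i})^2 = \hat{v}_i^2$. In the remaining branch $\hat{v}_i(+) = \hat{v}_i$, leaving the magnitude unchanged. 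Hence $\hat{v}_i(t^{ij}_1(+))^2 \geq \hat{v}_i(t^{ij}_1)^2$ in every case, which is the claimed non-decreasing behavior.

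The step I expect to be the main obstacle is the first branch. One must argue rigorously that $\hat{v}_i^{-i}{}'\hat{v}_j^{-i} = 0$ together with $\hat{v}_i, \hat{v}_j \in \mathbb{R}^n_{0,\frac{1}{n}}$ genuinely forces disjoint off-diagonal supports — this is what guarantees both that $\hat{v}_i + \hat{v}_j - \hat{v}_{ji}e_i$ stays feasible (no coordinate exceeds $\frac{1}{n}$) and that the cross terms vanish so the magnitudes simply add. The $i$-coordinate bookkeeping must also be handled carefully: since $\hat{v}_{ii} = \frac{1}{n}$ while $\hat{v}_{ji}$ may be nonzero, the correction term $-\hat{v}_{ji}e_i$ is precisely what cancels the overlap at coordinate $i$ and keeps that entry equal to $\frac{1}{n}$. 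Once feasibility and the vanishing of the cross terms are secured via the discretization, the magnitude identities in all three branches follow immediately from the normalization $v^2 = \frac{1}{n}v'\mathbf{1}_n$ applied to $\hat{v}_i$, $\hat{v}_j$, and $\hat{v}_i(+)$, and the per-reception inequalities chain together from $(\ref{initsol})$ to give the monotonicity on all of $[0,\infty)$.
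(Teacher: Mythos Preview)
Your proof is correct, but it takes a different route from the paper's. The paper gives a one-line abstract argument: by Lemma~\ref{aaalemkda1} the update $\hat{v}_i(t^{ij}_1(+))$ is a solution of the constrained minimization $(\ref{aaaoptkda3})$, whose feasible set $\mbox{span}\{\hat{v}_i(t^{ij}_1),\hat{v}_j(t^{ij}_0),e_i\}\cap\mathbb{R}^n_{0,\frac{1}{n}}$ contains the prior estimate $\hat{v}_i(t^{ij}_1)$ itself (since Lemma~\ref{aaalemdda0} gives $\hat{v}_i(t^{ij}_1)\in\mathbb{R}^n_{0,\frac{1}{n}}$); hence the minimizer cannot have larger error than $\hat{v}_i(t^{ij}_1)$, and monotonicity follows. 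You instead verify the inequality branch-by-branch from the explicit formula $(\ref{aaaddaveq})$, using the discretization to get disjoint off-diagonal supports in the first branch and the normalization identity to turn everything into linear sums. The paper's argument is shorter and conceptually cleaner, since it never needs to unpack the update rule; your approach buys you more: the strict inequalities you obtain in branches one and two (using $\hat{v}_{jj}=\frac{1}{n}$ to force $(\hat{v}_j^{-i})^2>0$, and the hypothesis $(\hat{v}_i^{-i})^2<(\hat{v}_j^{-i})^2$ respectively) are exactly the content of Lemma~\ref{aaalemdda3} and the computations underlying Lemma~\ref{aaalemdda4}, so your single case analysis could in principle replace all three proofs at once.

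One small remark: the lemma title in the paper reads ``Non-Decreasing Error,'' which is evidently a misprint for non-increasing error (equivalently non-decreasing magnitude); you interpreted this correctly without comment, proving $\hat{v}_i(t)^2$ non-decreasing and relying on the identity $E^2_{\frac{1}{n}\mathbf{1}_n}(\hat{v}_i(t))=\frac{1}{n}-\hat{v}_i(t)^2$ from Lemma~\ref{aaalemdda01}.
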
 \begin{proof} The Lemma $\ref{aaalemkda1}$ implies that upon reception of any signal $S^{ij}( t^{ij}_0, t^{ij}_1)$, the normal consensus estimate $\hat{v}_i(t^{ij}_1)$ is updated to a solution of $(\ref{aaaoptkda3})$, thus \begin{equation}\label{neweq1} \hat{v}_i(t^{ij}_1(+)) \in \mbox{span} \{ \hat{v}_i(t^{ij}_1), \hat{v}_j(t^{ij}_0), e_i \} \cap \mathbb{R}^n_{0,\frac{1}{n}} \ . \end{equation} The Lemma $\ref{aaalemdda0}$ implies $\hat{v}_i(t^{ij}_1) \in \mathbb{R}^n_{0,\frac{1}{n}}$, thus $(\ref{neweq1})$ implies that any candidate solution $\hat{v}^{(1)}(t^{ij}_1(+))$ that does not satisfy $E_{\frac{1}{n} \mathbf{1}_n} \big(\hat{v}^{(1)}(t^{ij}_1(+)) \big) \leq E_{\frac{1}{n} \mathbf{1}_n} \big(\hat{v}(t^{ij}_1) \big)$ cannot be a solution $(\ref{aaaoptkda3})$. \end{proof}

\begin{lem}\label{aaalemdda3}(DDA Fixed Normal Consensus Estimate Given No Reduction in Error) Under the DDA algorithm, for any signal $S^{i j} ( t^{i j}_0, t^{i j}_1 )$ we have, \begin{equation}\label{aaaimpl1} E_{\frac{1}{n}}^2 \big( \hat{v}_i ( t^{ij}_1 (+) ) \big) = E_{\frac{1}{n}}^2 \big( \hat{v}_i ( t^{ij}_1 ) \big) \ \Rightarrow \ \hat{v}_i ( t^{ij}_1 (+) ) = \hat{v}_i ( t^{ij}_1 ) . \end{equation} \end{lem}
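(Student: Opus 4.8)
The plan is to convert the error identity in the hypothesis of $(\ref{aaaimpl1})$ into an identity of magnitudes, and then to run through the three branches of the DDA normal consensus update $(\ref{aaaddaveq})$ established in Lemma~$\ref{aaalemkda1}$, ruling out the first two and observing that the third gives the conclusion immediately. First I would invoke Lemma~$\ref{aaalemdda1}$: every normal consensus estimate satisfies the normalization $(\ref{norm1})$, so the computation $(\ref{aaaerr1p})$ from the proof of Lemma~$\ref{aaalemdda01}$ gives $E^2_{\frac{1}{n} \mathbf{1}_n}(\hat{v}) = \frac{1}{n} - \hat{v}^2$ for $\hat{v} \in \{ \hat{v}_i(t^{ij}_1), \hat{v}_i(t^{ij}_1(+)) \}$. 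Hence the hypothesis of $(\ref{aaaimpl1})$ is equivalent to $\hat{v}_i(t^{ij}_1(+))^2 = \hat{v}_i(t^{ij}_1)^2$, and it suffices to show that this magnitude equality forces $\hat{v}_i(t^{ij}_1(+)) = \hat{v}_i(t^{ij}_1)$.

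Next, writing $\hat{v}_i = \hat{v}_i(t^{ij}_1)$, $\hat{v}_j = \hat{v}_j(t^{ij}_0)$, and $\hat{v}_i(+) = \hat{v}_i(t^{ij}_1(+))$, I would examine the three cases of $(\ref{aaaddaveq})$. In the third case $\hat{v}_i(+) = \hat{v}_i$, so there is nothing to prove. In the first case $\hat{v}_i(+) = \hat{v}_i + \hat{v}_j - \hat{v}_{ji} e_i$; its defining condition is $(\hat{v}^{-i}_i)' \hat{v}^{-i}_j = 0$ and $\hat{v}_{ii} = \frac{1}{n}$ by Lemma~$\ref{aaalemdda2}$, so a short coordinate-wise computation (the $i$-th coordinate is $\frac{1}{n}$, the remaining coordinates are $\hat{v}_{i\ell}+\hat{v}_{j\ell}$, and all cross terms $\hat{v}_{i\ell}\hat{v}_{j\ell}$ vanish) gives $\hat{v}_i(+)^2 = \hat{v}_i^2 + (\hat{v}^{-i}_j)^2$, a value already recorded in $(\ref{aaadd2})$. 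Since $\hat{v}_{jj} = \frac{1}{n}$ by Lemma~$\ref{aaalemdda2}$ and $j \neq i$, we have $(\hat{v}^{-i}_j)^2 \geq \frac{1}{n^2} > 0$, so $\hat{v}_i(+)^2 > \hat{v}_i^2$, contradicting the magnitude equality; thus this case is impossible under the hypothesis. In the second case $\hat{v}_i(+) = \hat{v}_j + e_i ( \frac{1}{n} - \hat{v}_{ji} )$, whose $i$-th coordinate is $\frac{1}{n}$, so $\hat{v}_i(+)^2 = (\hat{v}^{-i}_j)^2 + \frac{1}{n^2}$ while $\hat{v}_i^2 = (\hat{v}^{-i}_i)^2 + \frac{1}{n^2}$; the defining inequality $(\hat{v}^{-i}_i)^2 < (\hat{v}^{-i}_j)^2$ of this case then gives $\hat{v}_i(+)^2 > \hat{v}_i^2$, again a contradiction. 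Therefore under the hypothesis only the third case can occur, and so $\hat{v}_i(+) = \hat{v}_i$, which is $(\ref{aaaimpl1})$.

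I do not expect a serious obstacle; the computations are all elementary. The one point requiring a little care is deriving the two magnitude formulas — $\hat{v}_i(+)^2 = \hat{v}_i^2 + (\hat{v}^{-i}_j)^2$ in the first case and $\hat{v}_i(+)^2 = (\hat{v}^{-i}_j)^2 + \frac{1}{n^2}$ in the second — cleanly from the local-zero-error property $(\ref{pp1})$ (Lemma~$\ref{aaalemdda2}$) and the discretization $\hat{v}_i,\hat{v}_j \in \mathbb{R}^n_{0,\frac{1}{n}}$ (Lemma~$\ref{aaalemdda0}$), noting in the first case that the orthogonality $(\hat{v}^{-i}_i)' \hat{v}^{-i}_j = 0$ is a sum of nonnegative products of entries of $\{0,\frac{1}{n}\}$ and hence forces each cross term to vanish; alternatively one can simply quote the magnitude values already computed for the relevant candidate solutions in $(\ref{aaadd2})$ and $(\ref{aaadd4})$ within the proof of Lemma~$\ref{aaalemkda1}$.
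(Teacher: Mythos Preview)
Your proposal is correct and follows essentially the same approach as the paper's proof: convert the error equality to a magnitude equality via the normalization identity, then run through the three branches of $(\ref{aaaddaveq})$ and show that the first two yield a strict magnitude increase (hence are excluded by the hypothesis) while the third is trivially $\hat{v}_i(+)=\hat{v}_i$. The only cosmetic difference is that the paper computes the magnitudes in the first two cases via the normalization $\hat v^2=\tfrac{1}{n}\hat v'\mathbf 1_n$ rather than by direct coordinate expansion, and frames the argument explicitly as the contrapositive $(\ref{aaacimpl1})$; your direct case-elimination is logically equivalent.
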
 \begin{proof} Applying Lemma $\ref{aaalemdda01}$ to the LHS of $(\ref{aaaimpl1})$ implies, $$ \begin{array}{llll} & E_{\frac{1}{n}}^2 \big( \hat{v}_i ( t^{ij}_1 (+) ) \big) = E_{\frac{1}{n}}^2 \big( \hat{v}_i ( t^{ij}_1 ) \big) \\ & \ \ \ \ \ \ \ \Leftrightarrow \ \ \big( {v}_i ( t^{ij}_1 (+) ) \big)^2 = \big( \hat{v}_i ( t^{ij}_1 ) \big)^2 \ . \end{array} $$ To prove $(\ref{aaaimpl1})$ we thus have only to show, \begin{equation}\label{aaacimpl1} \hat{v}_i ( t^{ij}_1 (+) ) \neq \hat{v}_i ( t^{ij}_1 ) \Rightarrow \big( \hat{v}_i ( t^{ij}_1 (+) ) \big)^2 \neq \big( \hat{v}_i ( t^{ij}_1 ) \big)^2 \ . \end{equation} Under the update $(\ref{aaaddaveq})$, to prove $(\ref{aaacimpl1})$ we need only to show that either of the two cases, \begin{equation}\label{aaacimpl2}\begin{array}{llll} & \hat{v}_i ( t^{ij}_1 (+) ) = \hat{v} _i ( t^{ij}_1 ) + \hat{v}_j( t^{ij}_0 ) - \hat{v}_{ji} (t^{ij}_0 ) e_i \ , \\ & \ \ \ \ \ \ \ \ \ \ \ \mbox{if $\hat{v}^{-i} _i ( t^{ij}_1 ) ' \hat{v}^{-i} _j ( t^{ij}_0 ) = 0$,} \\ & \hat{v}_i ( t^{ij}_1 (+) ) = \hat{v} _j ( t^{ij}_0 ) + e_i \big( \frac{1}{n} - \hat{v}_{ji} (t^{ij}_0 ) \big) \ , \\ & \ \ \ \ \ \ \ \ \ \ \ \mbox{if $\hat{v}^{-i} _i ( t^{ij}_1 ) ' \hat{v}^{-i} _j ( t^{ij}_0 ) > 0 , \hat{v}^{-i} _i ( t^{ij}_1 ) ^2 < \hat{v}^{-i} _j ( t^{ij}_0 )^2 $} \end{array} \end{equation} will imply $\hat{v}^2_{i} (t^{ij}_1(+) ) > \hat{v}^2_{i} (t^{ij}_1 )$. If $\hat{v}^{-i} _i ( t^{ij}_1 ) ' \hat{v}^{-i} _j ( t^{ij}_0 ) = 0$ then $(\ref{aaacimpl2})$ together with $(\ref{norm1})$ imply, $$ \begin{array}{llll} \hat{v}^2_{i} (t^{ij}_1(+) ) & = \frac{1}{n} \hat{v}_{i} (t^{ij}_1(+) ) ' \mathbf{1}_n \ , \\ & = \frac{1}{n} \big( \hat{v} _i ( t^{ij}_1 ) + \hat{v}_j( t^{ij}_0 ) - \hat{v}_{ji} (t^{ij}_0 ) e_i \big) ' \mathbf{1}_n \ , \\ & = \frac{1}{n} \big( \hat{v} _i ( t^{ij}_1 )' \mathbf{1}_n + \hat{v} _j ( t^{ij}_0 ) ' \mathbf{1}_n - \hat{v} _{ji} ( t^{ij}_0 ) \big) \ , \\ & = \hat{v}^2 _i ( t^{ij}_1 ) + \frac{1}{n} \hat{v}^{-i} _j ( t^{ij}_0 ) ' \mathbf{1}_{n-1} \ , \\ & > \hat{v}^2 _i ( t^{ij}_1 ) \ , \end{array} $$ where the last inequality holds because $(\ref{pp1})$ implies $\hat{v}^{-i}_j(t) ' \mathbf{1}_n \geq \frac{1}{n}$ for all $j \in \mathcal{V}_{-i}$ and $t \geq 0$. If $\hat{v}^{-i} _i ( t^{ij}_1 ) ' \hat{v}^{-i} _j ( t^{ij}_0 ) > 0$ and $\hat{v}^{-i} _i ( t^{ij}_1 ) ^2 < \hat{v}^{-i} _j ( t^{ij}_0 )^2$, then due to $(\ref{aaacimpl2})$ it follows that, $$ \begin{array}{llll} \hat{v}^2_{i} (t^{ij}_1(+) ) & = \frac{1}{n} \hat{v}_{i} (t^{ij}_1(+) ) ' \mathbf{1}_n \ , \\ & = \frac{1}{n} \Big( \hat{v} _j ( t^{ij}_0 ) + e_i \big( \frac{1}{n} - \hat{v}_{ji} (t^{ij}_0 ) \big) \Big) ' \mathbf{1}_n \ , \\ & = \frac{1}{n} \big( \hat{v} _j ( t^{ij}_0 ) ' \mathbf{1}_n + \frac{1}{n} - \hat{v}_{ji} (t^{ij}_0) \big) \ , \\ & = \frac{1}{n} \big( \hat{v}^{-i} _j ( t^{ij}_0 ) ' \mathbf{1}_{n-1} + \frac{1}{n} \big) \ , \\ & > \frac{1}{n} \big( \hat{v}^{-i} _i ( t^{ij}_1 ) ' \mathbf{1}_{n-1} + \frac{1}{n} \big) \ , \\ & = \frac{1}{n} \hat{v} _i ( t^{ij}_1 ) ' \mathbf{1}_{n} = \hat{v}^2_i ( t^{ij}_1 ) \ , \end{array} $$ where the final inequality holds under the given assumption that $\hat{v}^{-i} _i ( t^{ij}_1 ) ^2 < \hat{v}^{-i} _j ( t^{ij}_0 )^2$, and the second last equality holds due to Lemma $\ref{aaalemdda2}$. \end{proof}

\begin{lem}\label{aaalemdda4}(DDA Lower Bound on Increase in Normal Consensus Magnitude) Under the DDA algorithm, for any signal $S^{ij} ( t^{i j}_0, t^{i j}_1 )$ we have, $$ \big( \hat{v}^{-i}_i ( t^{ij}_1 (+) ) \big) ^2 \geq \max \{ \big( \hat{v}^{-i}_i ( t^{ij}_1 ) \big) ^2 \ , \ \big( \hat{v}^{-i}_j ( t^{ij}_0 ) \big) ^2 \} \ . $$ \end{lem}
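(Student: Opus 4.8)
The plan is to prove the inequality by a direct case analysis on the three branches of the DDA normal consensus update $(\ref{aaaddaveq})$ established in Lemma~\ref{aaalemkda1}, using throughout the discretization $\hat{v}_i(t) \in \mathbb{R}^n_{0,\frac{1}{n}}$ guaranteed by Lemma~\ref{aaalemdda0} (which, as noted in the proof of Lemma~\ref{aaalemdda1}, also gives $\hat{v}^{-i}_i(t) \in \mathbb{R}^{n-1}_{0,\frac{1}{n}}$). For notational convenience I would write $\hat{v}_i = \hat{v}_i(t^{ij}_1)$, $\hat{v}_j = \hat{v}_j(t^{ij}_0)$, and $\hat{v}_i(+) = \hat{v}_i(t^{ij}_1(+))$.

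The key elementary fact I would record first is that, since every entry of $\hat{v}^{-i}_i$ and $\hat{v}^{-i}_j$ lies in $\{0, \tfrac{1}{n}\}$ (hence is nonnegative), the condition $\hat{v}^{-i}_i {}' \hat{v}^{-i}_j = 0$ is equivalent to these two vectors having disjoint supports. Consequently, in the disjoint-support case, $\hat{v}^{-i}_i + \hat{v}^{-i}_j$ again lies in $\mathbb{R}^{n-1}_{0,\frac{1}{n}}$, and the Pythagorean identity $\big(\hat{v}^{-i}_i + \hat{v}^{-i}_j\big)^2 = \big(\hat{v}^{-i}_i\big)^2 + \big(\hat{v}^{-i}_j\big)^2$ holds; I expect this to be the only step with any content.

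Then I would carry out the three cases of $(\ref{aaaddaveq})$. In the first branch ($\hat{v}^{-i}_i {}' \hat{v}^{-i}_j = 0$) we have $\hat{v}_i(+) = \hat{v}_i + \hat{v}_j - \hat{v}_{ji} e_i$; deleting coordinate $i$ annihilates the $e_i$ term, so $\hat{v}^{-i}_i(+) = \hat{v}^{-i}_i + \hat{v}^{-i}_j$, and the Pythagorean identity gives $\big(\hat{v}^{-i}_i(+)\big)^2 = \big(\hat{v}^{-i}_i\big)^2 + \big(\hat{v}^{-i}_j\big)^2 \geq \max\{\big(\hat{v}^{-i}_i\big)^2, \big(\hat{v}^{-i}_j\big)^2\}$. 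In the second branch $\hat{v}_i(+) = \hat{v}_j + e_i\big(\tfrac{1}{n} - \hat{v}_{ji}\big)$, so $\hat{v}^{-i}_i(+) = \hat{v}^{-i}_j$ and $\big(\hat{v}^{-i}_i(+)\big)^2 = \big(\hat{v}^{-i}_j\big)^2$; since this branch is selected exactly when $\big(\hat{v}^{-i}_i\big)^2 < \big(\hat{v}^{-i}_j\big)^2$, the right-hand maximum equals $\big(\hat{v}^{-i}_j\big)^2$ and the bound holds with equality. In the third branch $\hat{v}_i(+) = \hat{v}_i$, so $\big(\hat{v}^{-i}_i(+)\big)^2 = \big(\hat{v}^{-i}_i\big)^2$, and this branch is selected exactly when $\big(\hat{v}^{-i}_i\big)^2 \geq \big(\hat{v}^{-i}_j\big)^2$, so again the maximum is attained. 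Combining the three cases yields the claim.

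I do not expect a genuine obstacle here; the only things to get exactly right are the bookkeeping between the full-vector update in $(\ref{aaaddaveq})$ and its restriction to the index set $\mathcal{V}_{-i}$ (the $e_i$ terms simply drop out), and the observation that disjoint supports are precisely what make squared norms additive in the first branch — without the discretization of Lemma~\ref{aaalemdda0} that additivity would fail and the argument would not go through.
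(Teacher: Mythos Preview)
Your proposal is correct and follows essentially the same three-case analysis on the branches of $(\ref{aaaddaveq})$ as the paper's proof. The only cosmetic difference is that in the first branch the paper computes $\big(\hat{v}^{-i}_i(+)\big)^2$ via the normalization identity $(\ref{aaanorm1a})$ rather than your direct Pythagorean expansion, and your closing remark slightly overstates the role of discretization there --- the additivity in Case~1 needs only the orthogonality hypothesis $\hat{v}^{-i}_i{}'\hat{v}^{-i}_j=0$, not the full discretization.
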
 \begin{proof} If $\hat{v}^{-i} _i ( t^{ij}_1 ) ' \hat{v}^{-i} _j ( t^{ij}_0 ) = 0$ then $\hat{v}_i ( t^{ij}_1 (+) ) = \hat{v} _i ( t^{ij}_1 ) + \hat{v}_j( t^{ij}_0 ) - \hat{v}_{ji} (t^{ij}_0 ) e_i$ and thus applying $(\ref{aaanorm1a})$ implies, $$ \begin{array}{llll} \big( \hat{v}^{-i}_{i} ( t^{ij}_1 (+) ) \big)^2 & = \frac{1}{n} \hat{v}^{-i}_{i} ( t^{ij}_1 (+) ) ' \mathbf{1}_{n-1} \ , \\ & = \frac{1}{n} \big( \hat{v}^{-i} _i ( t^{ij}_1 ) ' \mathbf{1}_{n-1} + \hat{v}^{-i}_j( t^{ij}_0 ) ' \mathbf{1}_{n-1} \big) \ , \\ & \geq \max \{ \big( \hat{v}^{-i}_i ( t^{ij}_1 ) \big) ^2 \ , \ \big( \hat{v}^{-i}_j ( t^{ij}_0 ) \big) ^2 \} \ . \end{array} $$ If $\hat{v}^{-i} _i ( t^{ij}_1 ) ' \hat{v}^{-i} _j ( t^{ij}_0 ) > 0$ and $\hat{v}^{-i} _i ( t^{ij}_1 ) ^2 < \hat{v}^{-i} _j ( t^{ij}_0 )^2$ then $\hat{v}_i ( t^{ij}_1 (+) ) = \hat{v} _j ( t^{ij}_0 ) + e_i \big( \frac{1}{n} - \hat{v}_{ji} (t^{ij}_0 ) \big)$ and thus, $$ \begin{array}{llll} \big( \hat{v}^{-i}_i ( t^{ij}_1 (+) ) \big) ^2 & = \frac{1}{n} \Big( \hat{v}^{-i} _j ( t^{ij}_0 ) + e^{-i}_i \big( \frac{1}{n} - \hat{v}_{ji} (t^{ij}_0 ) \big) \Big) \mathbf{1}_{n-1} \\ & = \frac{1}{n} \hat{v}^{-i} _j ( t^{ij}_0 ) ' \mathbf{1}_{n-1} \\ & \geq \max \{ \big( \hat{v}^{-i}_i ( t^{ij}_1 ) \big) ^2 \ , \ \big( \hat{v}^{-i}_j ( t^{ij}_0 ) \big) ^2 \} \end{array} $$ where the last inequality holds by the assumption $\hat{v}^{-i} _i ( t^{ij}_1 ) ^2 < \hat{v}^{-i} _j ( t^{ij}_0 )^2$. Finally, if $\hat{v}^{-i} _i ( t^{ij}_1 ) ' \hat{v}^{-i} _j ( t^{ij}_0 ) > 0$ and $\hat{v}^{-i} _i ( t^{ij}_1 ) ^2 \geq \hat{v}^{-i} _j ( t^{ij}_0 )^2$ then $\hat{v}_i ( t^{ij}_1 (+) ) = \hat{v} _i ( t^{ij}_1 )$ and thus $$ \begin{array}{llll} \big( \hat{v}^{-i}_{i} ( t^{ij}_1 (+) ) \big)^2 & = \frac{1}{n} \hat{v}^{-i}_{i} ( t^{ij}_1 (+) ) ' \mathbf{1}_{n-1} \ , \\ & = \frac{1}{n} \hat{v}^{-i}_{i} ( t^{ij}_1 ) ' \mathbf{1}_{n-1} \ , \\ & \geq \max \{ \big( \hat{v}^{-i}_i ( t^{ij}_1 ) \big) ^2 \ , \ \big( \hat{v}^{-i}_j ( t^{ij}_0 ) \big) ^2 \} \ , \end{array} $$ where the last inequality holds by the assumption $\hat{v}^{-i} _i ( t^{ij}_1 ) ^2 \geq \hat{v}^{-i} _j ( t^{ij}_0 )^2$. \end{proof}

\begin{lem}\label{aaalemdda5}(DDA Upper Bound on the Magnitude $\big( \hat{v}^{-i}_i ( t) \big) ^2$) If $ \hat{v}^2_i ( t) = \hat{v}^2_i ( t') $ for all $(t, t') \in [t_0, t_1]$ then $$ \big( \hat{v}^{-i}_i ( t ) \big) ^2 \leq \big( \hat{v}^{-j}_i ( t' ) \big) ^2 \ , \ \forall \ j \in \mathcal{V}_{-i} \ , \ \forall \ (t, t') \in [t_0, t_1] . $$ \end{lem}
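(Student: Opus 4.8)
The plan is to reduce the claim to the elementary fact that every component of a vector in $\mathbb{R}^n_{0,\frac{1}{n}}$ has square at most $\frac{1}{n^2}$, combined with the zero--local--error property already established for the DDA algorithm. No genuine analysis is needed; the two separate time indices $t$ and $t'$ are tied together purely by the constancy hypothesis on $\hat{v}_i(\cdot)^2$.

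First I would record two consequences of earlier lemmas. By Lemma~\ref{aaalemdda0} every normal consensus estimate lies in $\mathbb{R}^n_{0,\frac{1}{n}}$, so in particular each scalar $\hat{v}_{i\ell}(t)$ satisfies $\hat{v}_{i\ell}(t)^2 \in \{0,\frac{1}{n^2}\}$, hence $\hat{v}_{i\ell}(t)^2 \leq \frac{1}{n^2}$. By Lemma~\ref{aaalemdda2} we have $\hat{v}_{ii}(t) = \frac{1}{n}$ for all $t \geq 0$, so deleting the $i^{th}$ component removes exactly $\frac{1}{n^2}$ of the squared magnitude:
\[ \big( \hat{v}^{-i}_i(t) \big)^2 = \hat{v}_i(t)^2 - \tfrac{1}{n^2}. \]

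Next, for an arbitrary $j \in \mathcal{V}_{-i}$ I would instead split off the $j^{th}$ component of $\hat{v}_i(t')$, obtaining
\[ \big( \hat{v}^{-j}_i(t') \big)^2 = \hat{v}_i(t')^2 - \hat{v}_{ij}(t')^2 \geq \hat{v}_i(t')^2 - \tfrac{1}{n^2}, \]
where the inequality is the component bound from the first step. Finally, the hypothesis $\hat{v}_i(t)^2 = \hat{v}_i(t')^2$ for all $(t,t')\in[t_0,t_1]$ equates the right-hand sides of the two displays, giving $\big( \hat{v}^{-i}_i(t) \big)^2 = \hat{v}_i(t')^2 - \frac{1}{n^2} \leq \big( \hat{v}^{-j}_i(t') \big)^2$, which is the assertion.

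The only point requiring care — hardly an obstacle — is to invoke the discretization (Lemma~\ref{aaalemdda0}) and the zero--local--error property (Lemma~\ref{aaalemdda2}) at the right places, and to observe that constancy of $\hat{v}_i(\cdot)^2$ over $[t_0,t_1]$ is precisely what makes the comparison between the index $t$ on the left and $t'$ on the right legitimate. This bound will then feed, together with Lemma~\ref{aaalemdda4}, into the endgame of the proof of Theorem~\ref{aaathm5}, where it is used to force $\hat{v}_i(t) = \frac{1}{n}\mathbf{1}_n$ once the reduction in error from every signal has strictly vanished.
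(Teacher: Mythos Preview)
Your argument is correct and essentially identical to the paper's own proof: both use the discretization to bound $\hat{v}_{ij}(t')^2 \le \tfrac{1}{n^2}$, the zero--local--error property to write $(\hat{v}^{-i}_i(t))^2 = \hat{v}_i(t)^2 - \tfrac{1}{n^2}$, and then the constancy hypothesis to pass from $t$ to $t'$. The only difference is cosmetic---you cite Lemmas~\ref{aaalemdda0} and~\ref{aaalemdda2} explicitly, whereas the paper invokes them implicitly in a single displayed chain of (in)equalities.
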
 \begin{proof} The result follows since $\hat{v}_i ( t ) \in \mathbb{R}^n_{0,\frac{1}{n}}$ implies $\hat{v}_{ij}(t) \in (0, \frac{1}{n})$ for all $j \in \mathcal{V}$ and $t \geq 0$, thus, $$ \begin{array}{llll} \big( \hat{v}^{-i}_i ( t ) \big) ^2 & = \hat{v}^2_i ( t ) - \frac{1}{n^2} \ , \\ & = \hat{v}^2_i ( t' ) - \frac{1}{n^2} \ , \\ & \leq \hat{v}^2_i ( t' ) - \hat{v}_{ij} ^2 (t') \ , \\ & = \big( \hat{v}^{-j}_i ( t' ) \big) ^2 \ . \end{array} $$ \end{proof}

\begin{lem}\label{aaalemdda4a}(Error Expression for $C_{[0,t_1]}$ satisfying $(\ref{star1g})$) For any communication sequence $C_{[0,t_1]}$ satisfying $(\ref{star1g})$ the total reduction in normal consensus error from $t=0$ to $t=t_1(+)$ is, \begin{equation}\label{aaaerrred}\begin{array}{llll} \mathbf{E}^2 ( C_{[0,t_1]} ) & = \sum_{i=1}^n \Big( E^2_{\frac{1}{n} \mathbf{1}_n} \big( \hat{v}_i(0) \big) - E^2_{\frac{1}{n} \mathbf{1}_n} \big( \hat{v}_i(t_1(+)) \big) \Big) \\ & = \frac{n-1}{n} - \sum_{i=1}^n E^2_{\frac{1}{n} \mathbf{1}_n} \big( \hat{v}_i(t_1(+)) \big) \\ & = \sum_{\ell \in \mathbb{N}} \mathbf{E}^2 ( C_{[t^{\ell}_0,t^{\ell}_1 ]} ) \ , \ C_{[t^{\ell}_0,t^{\ell}_1 ]} \in \mbox{S$\mathcal{V}$CC} \\ & \leq \frac{n-1}{n} \ . \end{array}\end{equation} \end{lem}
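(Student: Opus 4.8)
\emph{Proof proposal.} The plan is to follow the proof of Lemma~\ref{lemkda4a} essentially verbatim, establishing the four displayed equalities (and the concluding inequality) of $(\ref{aaaerrred})$ one line at a time. The only inputs needed are: (i) under (A7) a normal consensus estimate $\hat{v}_i$ changes only when node $i$ receives a signal, and by (A8),(A9) each such reception updates it instantaneously through the DDA rule $(\ref{aaaoptkda3})$; (ii) the initialization $(\ref{initsol})$ from Lemma~\ref{lembm0}; (iii) the normalization identity and the discretization for DDA estimates (Lemmas~\ref{aaalemdda0}--\ref{aaalemdda1}); and (iv) the non-decreasing-error property of Lemma~\ref{aaalemdda0w1}, which guarantees that every per-signal drop $\mathbf{E}^2\big(S^{ij}(t^{ij}_0,t^{ij}_1)\big)$ defined in $(\ref{newdef1})$ is non-negative.

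First I would prove the opening equality. Fix a node $i$ and list in order the signals it receives on $[0,t_1]$; by (i) the sequence of values $\hat{v}_i(\cdot)$ is piecewise constant with jumps exactly at those reception instants, so summing the drops $E^2_{\frac{1}{n}\mathbf{1}_n}\big(\hat{v}_i(t^{ij}_1)\big) - E^2_{\frac{1}{n}\mathbf{1}_n}\big(\hat{v}_i(t^{ij}_1(+))\big)$ telescopes to $E^2_{\frac{1}{n}\mathbf{1}_n}\big(\hat{v}_i(0)\big) - E^2_{\frac{1}{n}\mathbf{1}_n}\big(\hat{v}_i(t_1(+))\big)$. Summing over $i\in\mathcal{V}$ and regrouping the resulting terms by signal reproduces $\mathbf{E}^2(C_{[0,t_1]})$ as defined in $(\ref{newdef0})$, which is the first line. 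For the second line I substitute $\hat{v}_i(0)=\frac{1}{n}e_i$ from $(\ref{initsol})$: then $E^2_{\frac{1}{n}\mathbf{1}_n}\big(\hat{v}_i(0)\big) = \big(\frac{1}{n}e_i - \frac{1}{n}\mathbf{1}_n\big)^2 = \frac{n-1}{n^2}$, so $\sum_{i=1}^n E^2_{\frac{1}{n}\mathbf{1}_n}\big(\hat{v}_i(0)\big) = \frac{n-1}{n}$, which is exactly the quantity subtracted in passing from line one to line two.

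The third line is the structural step. By the definition of the ordered pairs $\{(t^\ell_0,t^\ell_1):\ell\in\mathbb{N}\}$ attached to the I$\mathcal{V}$CC condition, $(\ref{star1g})$ expresses $C_{[0,t_1]}$ as a disjoint union of the sub-sequences $C_{[t^\ell_0,t^\ell_1]}$, each of which is $\mbox{S$\mathcal{V}$CC}$. Because each per-signal drop is non-negative by Lemma~\ref{aaalemdda0w1}, the sum $(\ref{newdef0})$ over $C_{[0,t_1]}$ may be reorganized into the sum over $\ell\in\mathbb{N}$ of the analogous sums over $C_{[t^\ell_0,t^\ell_1]}$, i.e. $\mathbf{E}^2(C_{[0,t_1]}) = \sum_{\ell\in\mathbb{N}} \mathbf{E}^2\big(C_{[t^\ell_0,t^\ell_1]}\big)$ with every $C_{[t^\ell_0,t^\ell_1]}\in\mbox{S$\mathcal{V}$CC}$. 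Finally, the concluding inequality holds because $E^2_{\frac{1}{n}\mathbf{1}_n}\big(\hat{v}_i(t_1(+))\big)\geq 0$ for each $i$, so the second line is at most $\frac{n-1}{n}$; in particular the non-negative series in the third line converges.

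I do not expect a genuine obstacle here: unlike the DA lower-bound lemmas it parallels, this statement is pure bookkeeping. The only point requiring care is the legitimacy of the countable regrouping in the third line, which rests on combining three facts already in hand --- non-negativity of each term via Lemma~\ref{aaalemdda0w1}, disjointness and exhaustiveness of $\{C_{[t^\ell_0,t^\ell_1]}\}$ via $(\ref{star1g})$, and the uniform upper bound $\frac{n-1}{n}$ (the monotone-sequence argument later invoked explicitly in Lemma~\ref{aaalemdda4d}).
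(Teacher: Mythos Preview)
Your proposal is correct and takes essentially the same approach as the paper: the paper's proof simply states that it is identical to Lemma~\ref{lemkda4a} with $(\ref{errred})$, $(\ref{star1ga})$, and S$\mathcal{V}$SC replaced by $(\ref{aaaerrred})$, $(\ref{star1g})$, and S$\mathcal{V}$CC, which is precisely what you set out to do. Your write-up is more explicit than the paper's one-line cross-reference (in particular you spell out the telescoping, the computation $\sum_i E^2_{\frac{1}{n}\mathbf{1}_n}(\hat v_i(0))=\frac{n-1}{n}$, and the role of non-negativity in the countable regrouping), but the underlying argument is the same.
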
 \begin{proof} The proof is identical to Lemma $\ref{lemkda4a}$ with $(\ref{errred})$, $(\ref{star1ga})$, and $\mbox{S$\mathcal{V}$SC}$ replaced by $(\ref{aaaerrred})$, $(\ref{star1g})$, and $\mbox{S$\mathcal{V}$CC}$ respectively. \end{proof}

\begin{lem}\label{aaalemdda4b}(DDA Vanishing Reduction in Error for $C_{[0,t_1]}$ satisfying $(\ref{star1g})$) For any communication sequence $C_{[0,t_1]}$ satisfying $(\ref{star1g})$ there exists an integer $\ell_\varepsilon$ such that, $$ \mathbf{E}^2 ( C_{[t^{\ell}_0,t^{\ell}_1 ]} ) \leq \varepsilon \ , \ \forall \ \ell \geq \ell_\varepsilon \ , $$ for any $\varepsilon > 0$. \end{lem}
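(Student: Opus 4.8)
The plan is to observe that this lemma is the DDA counterpart of Lemma~\ref{lemkda4b} and to run the same monotone-sequence argument, with Lemma~\ref{aaalemdda4a} playing the role that Lemma~\ref{lemkda4a} played in the DA proof. First I would appeal to the third line of the error expression $(\ref{aaaerrred})$: for any $C_{[0,t_1]}$ satisfying $(\ref{star1g})$, the total reduction in normal consensus squared error decomposes as the infinite series
\[
\mathbf{E}^2 ( C_{[0,t_1]} ) = \sum_{\ell \in \mathbb{N}} \mathbf{E}^2 ( C_{[t^{\ell}_0,t^{\ell}_1 ]} ) \ , \qquad C_{[t^{\ell}_0,t^{\ell}_1 ]} \in \mbox{S$\mathcal{V}$CC} \ ,
\]
which is legitimate because $(\ref{star1g})$ partitions $C_{[0,t_1]}$ into disjoint $\mbox{S$\mathcal{V}$CC}$ blocks and (A7)--(A8) make the telescoping of $(\ref{newdef0})$--$(\ref{newdef1})$ valid across block boundaries.

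Next I would verify the two hypotheses of the monotonic sequence theorem. Non-negativity of each term: $\mathbf{E}^2 ( C_{[t^{\ell}_0,t^{\ell}_1 ]} )$ is by $(\ref{newdef0})$ a sum of the per-signal reductions $\mathbf{E}^2 ( S^{ij} ( t^{ij}_0, t^{ij}_1 ) )$, and each of those is non-negative by Lemma~\ref{aaalemdda0w1} (upon reception of any signal the error of the receiving node's normal consensus estimate cannot increase); hence $\mathbf{E}^2 ( C_{[t^{\ell}_0,t^{\ell}_1 ]} ) \geq 0$ for every $\ell$. Boundedness: the fourth line of $(\ref{aaaerrred})$ gives $\mathbf{E}^2 ( C_{[0,t_1]} ) \leq \frac{n-1}{n}$, so the partial sums of the series are uniformly bounded. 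A series of non-negative terms with bounded partial sums converges, so its general term tends to $0$; concretely, for any $\varepsilon > 0$ there is an integer $\ell_\varepsilon$ with $\mathbf{E}^2 ( C_{[t^{\ell}_0,t^{\ell}_1 ]} ) \leq \varepsilon$ for all $\ell \geq \ell_\varepsilon$, which is exactly the assertion.

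I do not expect a genuine obstacle in this lemma; it is essentially a bookkeeping corollary of Lemma~\ref{aaalemdda4a} and Lemma~\ref{aaalemdda0w1}. The only point requiring a little care --- and the place where the DDA setting could in principle differ from the DA setting --- is confirming that the series identity $(\ref{aaaerrred})$ really does hold over the full interval $[0,t_1]$ when $C_{[0,t_1]}$ satisfies $(\ref{star1g})$, i.e.\ that nothing in the discretized update $(\ref{aaaddaveqa})$--$(\ref{aaaddaveqb})$ spoils the additivity of the error reductions; but since Lemma~\ref{aaalemdda4a} (which the excerpt proves by reducing to Lemma~\ref{lemkda4a}) already delivers this, the remaining argument is immediate.
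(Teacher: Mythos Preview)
Your proposal is correct and mirrors the paper's own proof, which simply states that the argument is identical to Lemma~\ref{lemkda4b} with $(\ref{errred})$ and $(\ref{star1ga})$ replaced by $(\ref{aaaerrred})$ and $(\ref{star1g})$. The only difference is that you spell out the non-negativity of each term via Lemma~\ref{aaalemdda0w1}, whereas the paper leaves this implicit; otherwise the two arguments coincide.
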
 \begin{proof} The proof is identical to Lemma $\ref{lemkda4b}$ with $(\ref{errred})$ and $(\ref{star1ga})$ replaced by $(\ref{aaaerrred})$ and $(\ref{star1g})$ respectively. \end{proof}

\begin{lem}\label{aaalemdda4c}(DDA Lower Bound on Non-Zero Reduction in Error) For any communication sequence $C_{[t_0,t_1]}$ we have, \begin{equation}\label{aaanonre1} \mathbf{E}^2 ( C_{[t_0,t_1 ]} ) > 0 \ \Rightarrow \ \mathbf{E}^2 ( C_{[t_0,t_1 ]} ) \geq \frac{1}{n^2} \ . \end{equation} \end{lem}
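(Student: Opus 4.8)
\emph{Proof proposal.} The plan is to exploit the discretization of the normal consensus estimates: under the DDA algorithm Lemma~\ref{aaalemdda0} guarantees $\hat v_i(t)\in\mathbb R^n_{0,\frac1n}$ for all $i$ and $t$, and this forces every attainable value of the normal consensus error, and hence every attainable per-signal reduction in error, to be an integer multiple of $1/n^2$. Once that is in place, the statement $(\ref{aaanonre1})$ is immediate: a positive sum of non-negative integer multiples of $1/n^2$ is at least $1/n^2$.

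In detail, the first step is to quantize the error. Fix any node $i$ and time $t$, and let $m_i(t)\in\{0,1,\dots,n\}$ denote the number of coordinates of $\hat v_i(t)$ equal to $1/n$ (the remaining coordinates being $0$, by Lemma~\ref{aaalemdda0}). Then $\hat v_i(t)^2 = m_i(t)/n^2$ and $\hat v_i(t)'\mathbf 1_n = m_i(t)/n$, so the computation $(\ref{aaaerr1p})$ in Lemma~\ref{aaalemdda01} gives
\begin{equation}\label{aaadiscerr}
E^2_{\frac1n\mathbf 1_n}\big(\hat v_i(t)\big) \;=\; \frac1n - \hat v_i(t)^2 \;=\; \frac{n-m_i(t)}{n^2}\;\in\;\Big\{0,\tfrac1{n^2},\tfrac2{n^2},\dots,\tfrac{n-1}{n^2}\Big\}.
\end{equation}
The second step records that each per-signal reduction is a non-negative integer multiple of $1/n^2$. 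By $(\ref{newdef1})$, for any signal $S^{ij}(t^{ij}_0,t^{ij}_1)\in C_{[t_0,t_1]}$ the quantity $\mathbf E^2\big(S^{ij}(t^{ij}_0,t^{ij}_1)\big)$ equals $E^2_{\frac1n\mathbf 1_n}(\hat v_i(t^{ij}_1)) - E^2_{\frac1n\mathbf 1_n}(\hat v_i(t^{ij}_1(+)))$, which by $(\ref{aaadiscerr})$ is a difference of two integer multiples of $1/n^2$ and hence an integer multiple of $1/n^2$; and by Lemma~\ref{aaalemdda0w1} (the error of a normal consensus estimate is not increased by the reception of a signal) this difference is non-negative.

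The third and final step assembles these facts. By definition $(\ref{newdef0})$, $\mathbf E^2(C_{[t_0,t_1]})$ is the sum of the terms $\mathbf E^2\big(S^{ij}(t^{ij}_0,t^{ij}_1)\big)$ over the signals of $C_{[t_0,t_1]}$, each a non-negative integer multiple of $1/n^2$. If $\mathbf E^2(C_{[t_0,t_1]})>0$ then at least one summand is strictly positive, hence at least $1/n^2$, and since all the other summands are $\geq 0$ we conclude $\mathbf E^2(C_{[t_0,t_1]})\geq 1/n^2$, which is $(\ref{aaanonre1})$. I expect no real obstacle here beyond the bookkeeping; the only point worth a word is that $C_{[t_0,t_1]}$ could in principle contain infinitely many signals, but this is harmless, since positivity of the sum already supplies a single summand $\geq 1/n^2$ and no convergence argument is needed.
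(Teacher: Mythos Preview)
Your proof is correct and follows essentially the same route as the paper's: both arguments hinge on the discretization $\hat v_i(t)\in\mathbb R^n_{0,1/n}$ from Lemma~\ref{aaalemdda0}, which forces the squared error (equivalently, the magnitude $\hat v_i(t)^2$) to take values in $\tfrac1{n^2}\mathbb Z$, so that any strictly positive per-signal reduction is at least $1/n^2$; the non-negativity of the remaining summands (Lemma~\ref{aaalemdda0w1}) then gives the conclusion. The paper phrases the quantization via $\hat v_i^2$ and Lemma~\ref{aaalemdda01}, while you go directly to the error via~$(\ref{aaaerr1p})$, but the content is identical.
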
 \begin{proof} Applying Lemma $\ref{aaalemdda01}$ to the LHS of $(\ref{aaanonre1})$ implies there exists some signal $S^{ij} ( t^{i j}_0, t^{i j}_1 ) \in C_{[t_0,t_1 ]}$ such that, \begin{equation}\label{aaanonre1a} \hat{v}^2_i(t^{ij}_1(+)) > \hat{v}^2_i(t^{ij}_1) \ . \end{equation} We now show that $(\ref{aaanonre1a})$ implies $\hat{v}^2_i(t^{ij}_1(+)) \geq \hat{v}^2_i(t^{ij}_1) + \frac{1}{n^2}$, the result $(\ref{aaanonre1})$
then follows directly by Lemma $\ref{aaalemdda0w1}$ together with Lemma $\ref{aaalemdda01}$. The Lemma $\ref{aaalemdda0}$ implies $\hat{v}_i(t) \in \mathbb{R}^n_{0,\frac{1}{n}}$ for all $i \in \mathcal{V}$ and $t \geq 0$, thus under $(\ref{aaanonre1a})$ it follows that, \begin{equation}\label{aaanoreb}\begin{array}{llll} \hat{v}^2_i(t^{ij}_1(+)) & = \frac{1}{n} \hat{v}_i(t^{ij}_1(+)) ' \mathbf{1}_n \ , \\ & = \frac{1}{n^2} | \{ \ell \ : \ \hat{v}_{i \ell } (t^{ij}_1(+)) \} | \ , \\ & > \hat{v}^2_i(t^{ij}_1) \ , \\ & = \frac{1}{n} \hat{v}_i(t^{ij}_1) ' \mathbf{1}_n \ , \\ & = \frac{1}{n^2} | \{ \ell \ : \ \hat{v}_{i \ell } (t^{ij}_1) \} | \ . \end{array}\end{equation} From $(\ref{aaanoreb})$ it then follows, \begin{equation}\label{aaanoreba}\begin{array}{llll} | \{ \ell \ : \ \hat{v}_{i \ell } (t^{ij}_1(+)) \} | & > | \{ \ell \ : \ \hat{v}_{i \ell } (t^{ij}_1) \} | \ \Rightarrow \\ | \{ \ell \ : \ \hat{v}_{i \ell } (t^{ij}_1(+)) \} | - 1 & \geq | \{ \ell \ : \ \hat{v}_{i \ell } (t^{ij}_1) \} | \ . \end{array}\end{equation} Under the constraint $\hat{v}_i(t) \in \mathbb{R}^n_{0,\frac{1}{n}}$ the last inequality in $(\ref{aaanoreba})$ implies $\hat{v}^2_i(t^{ij}_1(+)) \geq \hat{v}^2_i(t^{ij}_1) + \frac{1}{n^2}$. \end{proof}

\begin{lem}\label{aaalemdda4d}(DDA Existence of a Time for Zero Reduction in Error) For any communication sequence $C_{[0,t_1]}$ satisfying $(\ref{star1g})$ there exists an integer $\ell_{\frac{1}{n^2}}$ such that, \begin{equation}\label{aaanew1} \mathbf{E}^2 ( C_{[t^{\ell}_0,t^{\ell}_1 ]} ) = 0 \ , \ \forall \ \ell \geq \ell_{\frac{1}{n^2}} \ . \end{equation} \end{lem}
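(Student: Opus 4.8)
\emph{Proof proposal.} The plan is to obtain $(\ref{aaanew1})$ as a one-step consequence of the two preceding lemmas, exploiting the fact that the DDA normal-consensus error takes values in a discrete set with a uniform ``gap.'' Lemma~\ref{aaalemdda4b} tells us that for a sequence $C_{[0,t_1]}$ satisfying $(\ref{star1g})$ and for \emph{any} tolerance $\varepsilon>0$ there is an integer $\ell_\varepsilon$ with $\mathbf{E}^2(C_{[t^\ell_0,t^\ell_1]})\le\varepsilon$ for all $\ell\ge\ell_\varepsilon$. Lemma~\ref{aaalemdda4c}, via $(\ref{aaanonre1})$, tells us that whenever such a block reduction is nonzero it is in fact at least $1/n^2$. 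Choosing $\varepsilon$ strictly below $1/n^2$ forces these two facts to coexist only when the reduction is exactly zero.

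Concretely, I would set $\ell_{\frac{1}{n^2}}:=\ell_\varepsilon$ with $\varepsilon=\tfrac{1}{2n^2}$, where $\ell_\varepsilon$ is the integer supplied by Lemma~\ref{aaalemdda4b}. Then for every $\ell\ge\ell_{\frac{1}{n^2}}$ we have
\begin{equation*}
0\le \mathbf{E}^2(C_{[t^\ell_0,t^\ell_1]}) \le \tfrac{1}{2n^2} < \tfrac{1}{n^2}\ ,
\end{equation*}
where nonnegativity holds because each summand $\mathbf{E}^2(S^{ij}(t^{ij}_0,t^{ij}_1))$ in $(\ref{newdef0})$ is nonnegative by Lemma~\ref{aaalemdda0w1}. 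The contrapositive of the implication $(\ref{aaanonre1})$ states that a block reduction lying strictly below $1/n^2$ cannot be positive; combined with nonnegativity this gives $\mathbf{E}^2(C_{[t^\ell_0,t^\ell_1]})=0$ for all $\ell\ge\ell_{\frac{1}{n^2}}$, which is exactly $(\ref{aaanew1})$.

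I expect essentially no obstacle in this lemma itself: it is a short quantitative pigeonhole/gap argument. The substantive work has already been carried out in establishing the discretization $\hat{v}_i(t)\in\mathbb{R}^n_{0,\frac{1}{n}}$ (Lemma~\ref{aaalemdda0}) and the resulting quantization of the normal-consensus magnitude that underlies Lemma~\ref{aaalemdda4c}; without that $O(1/n^2)$ gap there would be no way to upgrade ``vanishingly small reduction'' to ``exactly zero reduction,'' which is the whole point of passing to the discretized update. The only care needed is bookkeeping: the tolerance $\varepsilon$ must be fixed once, before $\ell_{\frac{1}{n^2}}$ is named, and the partition $\{(t^\ell_0,t^\ell_1):\ell\in\mathbb{N}\}$ used here must be the same S$\mathcal{V}$CC partition appearing in $(\ref{star1g})$ and in Lemmas~\ref{aaalemdda4a}--\ref{aaalemdda4b}. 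This lemma then feeds the remaining step, where a zero-reduction S$\mathcal{V}$CC block, together with Lemmas~\ref{aaalemdda4}, \ref{aaalemdda5} and the local-zero-error property $(\ref{pp1})$, is used to conclude $\hat{v}_i(t)=\tfrac{1}{n}\mathbf{1}_n$ and hence $(\ref{acdef1})$.
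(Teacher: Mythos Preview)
Your proposal is correct and follows essentially the same route as the paper: the paper's proof is a one-line appeal to Lemmas~\ref{aaalemdda0w1}, \ref{aaalemdda4c}, and \ref{aaalemdda4b}, which is precisely the gap-plus-vanishing-tail argument you spell out. Your explicit choice $\varepsilon=\tfrac{1}{2n^2}$ is a clean way to make the strict inequality work, whereas the paper leaves this implicit in the subscript $\ell_{\frac{1}{n^2}}$.
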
 \begin{proof} The result follows immediately by applying Lemma $\ref{aaalemdda0w1}$ and Lemma $\ref{aaalemdda4c}$ to Lemma $\ref{aaalemdda4b}$. \end{proof}

\emph{Theorem $\ref{aaathm5}$}(DDA Network Convergence to Average-Consensus) \begin{proof} For any communication sequence $C_{[0,t_1]}$ satisfying $(\ref{star1g})$ the Lemma $\ref{aaalemdda4d}$ implies there exists an integer $\ell_{\frac{1}{n^2}}$ such that $(\ref{aaanew1})$ holds. We now show that the condition, \begin{equation}\label{aaacon1} \mathbf{E}^2 ( C_{[t^{\ell}_0,t^{\ell}_1 ]} ) = 0 \ , \ C_{[t^{\ell}_0,t^{\ell}_1 ]} \in \mbox{S$\mathcal{V}$CC} \end{equation} implies $(\ref{acdef1})$ 
at $t = t^\ell_0$, and hence by Definition $\ref{def11}$ a network average-consensus is obtained at $t^\ell_0$. Applying Lemma $\ref{aaalemdda4}$ to $(\ref{aaacon1})$ implies, \begin{equation}\label{aaacon2}\begin{array}{llll} \big( \hat{v}^{-\hat{i}}_{\hat{i}} (t^{\hat{i} j}_1(+)) \big) ^2 & \geq \big( \hat{v}^{-\hat{i}}_j (t^{\hat{i} j}_0) \big) ^2 \ , \ \forall \ j \in \mathcal{V}_{- \hat{i}} \ , \\ & \geq \big( \hat{v}^{-j}_j (t^{\hat{i} j}_0) \big) ^2 \ , \ \forall \ j \in \mathcal{V}_{- \hat{i}} \end{array}\end{equation} where the last inequality holds due to Lemma $\ref{aaalemdda5}$. Likewise, applying Lemma $\ref{aaalemdda4}$ and Lemma $\ref{aaalemdda5}$ to $(\ref{aaacon1})$ implies, \begin{equation}\label{aaacon3}\begin{array}{llll} \big( \hat{v}^{-j}_j (t^{j \hat{i}}_1(+)) \big) ^2 & \geq \big( \hat{v}^{-j}_{\hat{i}} (t^{j \hat{i}}_0) \big) ^2 \ , \ \forall \ j \in \mathcal{V}_{- \hat{i}} \ , \\ & \geq \big( \hat{v}^{-\hat{i}}_{\hat{i}} (t^{j \hat{i}}_0) \big) ^2 \ , \ \forall \ j \in \mathcal{V}_{- \hat{i}} \ . \end{array}\end{equation} Note that applying Lemma $\ref{aaalemdda3}$ to $(\ref{aaacon1})$ implies that $(\ref{aaacon2})$ and $(\ref{aaacon3})$ can be combined to obtain, \begin{equation}\label{aaacon4}\begin{array}{llll} & \big( \hat{v}^{-\hat{i}}_{\hat{i}} (t^\ell_0) \big) ^2 = \big( \hat{v}^{-j}_{\hat{i}} (t^\ell_0) \big) ^2 \ , \ \forall \ j \in \mathcal{V}_{- \hat{i}} \ \ \Rightarrow \\ & \hat{v}_{\hat{i}} ^2 (t^\ell_0) - \frac{1}{n^2} = \hat{v}_{\hat{i}} ^2 (t^\ell_0) - \hat{v}_{\hat{i} j } ^2 (t^\ell_0) \ , \ \forall \ j \in \mathcal{V}_{- \hat{i}} \ . \end{array}\end{equation} The second line in $(\ref{aaacon4})$ together with Lemma $\ref{aaalemdda2}$ implies, $$ \hat{v}_{\hat{i} j } (t^\ell_0) = \frac{1}{n} \ , \ \forall \ j \in \mathcal{V} \ , $$ and thus $\hat{v}_{\hat{i}} (t^\ell_0) = \frac{1}{n} \mathbf{1}_n$. Since $C_{[t^{\ell}_0,t^{\ell}_1 ]} \in \mbox{S$\mathcal{V}$CC}$, the Lemma $\ref{aaalemdda3}$ together with the update $(\ref{aaaddaveq})$ implies $\hat{v}_j (t^\ell_0) = \frac{1}{n} \mathbf{1}_n$ for each $j \in \mathcal{V}_{-\hat{i}}$, thus $(\ref{acdef1})$ holds and so by Definition $\ref{def11}$ average-consensus is obtained at time $t^\ell_0 < t_1$. \end{proof}

\begin{rem}\label{rem1} We observe that if the $\mbox{S$\mathcal{V}$CC}$ condition is defined by the condition (C) stated in Theorem $\ref{aaacor5}$, then using the definition $(\ref{star1g})$ for the $\mbox{I$\mathcal{V}$CC}$ condition will not imply Theorem $\ref{aaathm5}$. In other words, using condition (C) to define an $\mbox{S$\mathcal{V}$CC}$ sequence will imply there exist examples of $\mbox{I$\mathcal{V}$CC}$ sequences for which the DDA algorithm will not obtain average-consensus. This is why we have defined $\mbox{S$\mathcal{V}$CC}$ only by $(\ref{symcondef2})$, which is actually a special case of the condition (C) stated in Theorem $\ref{aaacor5}$. Furthermore, the DDA algorithm normal consensus update $(\ref{aaaddaveq})$ is only a global solution to $(\ref{aaaoptdisc})$, it is not a unique solution. Under $(\ref{aaasdasig5})$ and $(\ref{aaais2})$, the alternative global solution to $(\ref{aaaoptdisc})$ is, \begin{equation}\label{aaaddaveq2} \hat{v}_i ( t^{ij}_1 (+) ) = \left\{ \begin{array}{l l} \hat{v} _i ( t^{ij}_1 ) + \hat{v}_j( t^{ij}_0 ) - \hat{v}_{ji} (t^{ij}_0 ) e_i \ & , \ \mbox{if $\hat{v}^{-i} _i ( t^{ij}_1 ) ' \hat{v}^{-i} _j ( t^{ij}_0 ) = 0 \ ,$} \\ \hat{v} _j ( t^{ij}_0 ) + e_i \big( \frac{1}{n} - \hat{v}_{ji} (t^{ij}_0 ) \big) \ & , \ \mbox{if $\hat{v}^{-i} _i ( t^{ij}_1 ) ' \hat{v}^{-i} _j ( t^{ij}_0 ) > 0 \ , \ \hat{v}^{-i} _i ( t^{ij}_1 ) ^2 < \hat{v}^{-i} _j ( t^{ij}_0 )^2 \ , $} \\ \hat{v} _i ( t^{ij}_1 ) \ & , \ \mbox{if $\hat{v}^{-i} _i ( t^{ij}_1 ) ' \hat{v}^{-i} _j ( t^{ij}_0 ) > 0 \ , \ \hat{v}^{-i} _i ( t^{ij}_1 ) ^2 > \hat{v}^{-i} _j ( t^{ij}_0 )^2 \ , $}  \\ \hat{v} _j ( t^{ij}_0 ) \ & , \ \mbox{if $\hat{v}^{-i} _i ( t^{ij}_1 ) ' \hat{v}^{-i} _j ( t^{ij}_0 ) > 0 \ , \ \hat{v}^{-i} _i ( t^{ij}_1 ) ^2 = \hat{v}^{-i} _j ( t^{ij}_0 )^2 \ . $} \end{array} \right. \end{equation} The above remark still holds even when the alternative global solution $(\ref{aaaddaveq2})$ is used to update the normal consensus estimate; however, by randomly switching between the updates $(\ref{aaaddaveq})$ and $(\ref{aaaddaveq2})$ leads to the following conjecture.

\begin{conj}\label{conject4} Let the $\mbox{S$\mathcal{V}$CC}$ condition be defined by the condition (C) stated in Theorem $\ref{aaacor5}$. Suppose upon reception of each signal the normal consensus estimate update is defined by $(\ref{aaaddaveq})$ with probability $p \in (0,1)$ and defined by $(\ref{aaaddaveq2})$ with probability $1-p$. Then $(\ref{acdef})$ holds at time $t = t_1(+)$ almost surely for any communication sequence $C_{[0,t_1]}$ satisfying $(\ref{star1g})$. \end{conj}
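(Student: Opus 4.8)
The plan is to keep the skeleton of the proof of Theorem~\ref{aaathm5} for everything that depends only on the \emph{optimality} of the update, and to replace the single deterministic ``pinning'' step — Lemma~\ref{aaalemdda3} — by a probabilistic one. Since both $(\ref{aaaddaveq})$ and $(\ref{aaaddaveq2})$ are global minimizers of $(\ref{aaaoptkda3})$ (Lemma~\ref{aaalemkda1} and Remark~\ref{rem1}), every property whose proof uses only that the update is feasible and error-minimizing survives \emph{pathwise}, i.e.\ for every realization of the random choices: each $\hat v_i(t)\in\mathbb{R}^n_{0,\frac{1}{n}}$ with $\hat v_{ii}(t)=\frac{1}{n}$ (Lemmas~\ref{aaalemdda0}--\ref{aaalemdda2}), the magnitude lower bound of Lemma~\ref{aaalemdda4} and the support inequality of Lemma~\ref{aaalemdda5} hold, and the network normal-consensus error $\sum_i E^2_{\frac{1}{n}\mathbf{1}_n}(\hat v_i(t))$ is non-increasing in $t$ (Lemma~\ref{aaalemdda0w1}) and takes values in the discrete set $\{k/n^2:0\le k\le n(n-1)\}$. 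Hence along any fixed sequence satisfying $(\ref{star1g})$ the error is eventually constant, equal to a random limit $L$, and by the quantization Lemma~\ref{aaalemdda4c} — which also holds pathwise — there is a random index $\ell^*$ with $\mathbf{E}^2(C_{[t^\ell_0,t^\ell_1]})=0$ for all $\ell\ge\ell^*$. The only lemma that fails is Lemma~\ref{aaalemdda3}: in the tie case $\hat v^{-i}_i{}'\hat v^{-i}_j>0$, $\hat v^{-i}_i{}^2=\hat v^{-i}_j{}^2$ the update can replace $\hat v_i$ by a \emph{different} vector of the same magnitude, and this is exactly the mechanism by which the randomized recursion leaves the degenerate configurations of Remark~\ref{rem1}.

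\textbf{Reduction to a reachability estimate.} Since $L$ takes finitely many values it suffices to show $P(L=L_0)=0$ for each $L_0>0$. Work on $\{L=L_0\}$ and condition on the history up to block $\ell^*$ and on the configuration $(\hat v_i(t^{\ell^*}_0))_i$. For $\ell\ge\ell^*$ no signal raises a magnitude, so each signal is a no-op or a tie-swap $\hat v_i:=\hat v_j$ (which, for zero reduction, forces $\hat v_{ji}(t^{ij}_0)=\frac{1}{n}$ and $\hat v_i^2=\hat v_j^2$), the swap using $(\ref{aaaddaveq2})$ and occurring with probability $1-p$. The key claim is that there are $K=K(n)$ and $\rho=\rho(n,p)>0$ such that, from \emph{any} configuration with positive network error, over \emph{any} $K$ consecutive communication blocks each satisfying condition (C), the event ``the error strictly decreases within those $K$ blocks'' has probability at least $\rho$, uniformly in the blocks. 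Granting this, partition $\{\ell\ge\ell^*\}$ into consecutive windows of $K$ blocks; conditionally on the past each window yields a strict decrease with probability $\ge\rho$, so by the conditional Borel--Cantelli lemma a strict decrease occurs almost surely, contradicting the constancy of the error beyond $\ell^*$. Thus $P(L=L_0)=0$, whence $L=0$ a.s.\ and, by Definition~\ref{def11}, average-consensus holds at $t=t_1(+)$.

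\textbf{Proof of the reachability estimate.} If the current maximum-magnitude node $\hat\ell$ equals $\frac{1}{n}\mathbf{1}_n$, the estimate is \emph{deterministic} with $K=1$: condition (C) contains an $\mbox{S$\mathcal{V}$SC}$ sub-sequence, so $\hat\ell$ has a communication path to some non-consensus node, and one reads off from $(\ref{aaaddaveq})$--$(\ref{aaaddaveq2})$ (cases $\hat v^{-k}_k{}'\hat v^{-k}_j=0$ and $\hat v^{-k}_k{}^2<\hat v^{-k}_j{}^2$) that a node receiving $\frac{1}{n}\mathbf{1}_n$ and not yet at consensus is driven to $\frac{1}{n}\mathbf{1}_n$, strictly raising its magnitude. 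The substantive case is when all magnitudes are $<\frac{1}{n}$. Here the plan is to use condition (C) to fix, inside the first block, a hub receiving directly from every other node, and then to argue that a bounded-length string of tie-swap choices — using $(\ref{aaaddaveq2})$ to \emph{route} the off-diagonal part of vectors along communication paths, which is legitimate precisely because a swap $\hat v_i:=\hat v_j$ is admissible exactly when $\hat v_{ji}=\frac{1}{n}$ — produces two nodes $a,b$ and a signal $a\leftarrow b$ in a subsequent (C) block for which $\hat v^{-a}_a{}'\hat v^{-a}_b=0$ or $\hat v^{-a}_a{}^2<\hat v^{-a}_b{}^2$, so that the update lands in a branch of $(\ref{aaaddaveq})$--$(\ref{aaaddaveq2})$ that strictly raises a magnitude.

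The main obstacle is the combinatorics of this routing. I must bound, in terms of $n$ alone, the number of blocks the routing needs (so that the number of favorable coin flips, and hence $\rho$, is uniform in $\ell$), and I must show that the admissibility constraint $\hat v_{ji}=\frac{1}{n}$ never blocks the routing as long as the network is not already at consensus. I expect this to require an induction on a combinatorial potential — for instance $\sum_{(i,j)}\mathbf{1}\{\hat v_{ij}=\frac{1}{n}\}$, or the size of the largest support occurring among the vectors reachable by admissible swaps — showing that this potential can always be strictly increased, with positive probability and within $O(1)$ condition-(C) blocks, whenever consensus has not yet been reached; the S$\mathcal{V}$SC connectivity that every condition-(C) block supplies should drive this induction, and establishing it rigorously is the crux of the whole argument.
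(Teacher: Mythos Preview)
The paper does not prove this statement: it is explicitly labeled a \emph{conjecture} (Conjecture~\ref{conject4}) and the surrounding text says only ``If Conjecture~\ref{conject4} holds, then \ldots''. There is therefore no paper proof to compare your proposal against.

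As for the proposal itself, your overall architecture is sensible --- carry over every lemma that relies only on the update being a global minimizer of $(\ref{aaaoptkda3})$, isolate Lemma~\ref{aaalemdda3} as the one casualty, use discreteness to freeze the error at a random level $L$, and then kill $\{L>0\}$ by a uniform-in-$\ell$ reachability bound plus conditional Borel--Cantelli. But the proposal is a plan, not a proof: you say so yourself when you write that the combinatorial routing argument ``is the crux of the whole argument'' and leave it as an expected induction. Two concrete issues you will have to confront. First, you assert that Lemmas~\ref{aaalemdda0}--\ref{aaalemdda2} hold pathwise for both updates, but the tie branch of $(\ref{aaaddaveq2})$ sets $\hat v_i(+)=\hat v_j$, whose $i$-th entry is $\hat v_{ji}$, not $1/n$; the local-zero-error property and even optimality can fail there unless $\hat v_{ji}=1/n$, so you must either argue that this case never arises along the recursion or reinterpret $(\ref{aaaddaveq2})$ as $\hat v_j+e_i(\tfrac{1}{n}-\hat v_{ji})$ before invoking those lemmas. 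Second, the reachability estimate under condition (C) alone (rather than the stronger $(\ref{symcondef2})$) is exactly the point the paper flags as open in Remark~\ref{rem1}; your admissibility constraint for a swap is $\hat v_{ji}=1/n$, and you have not shown that from an arbitrary frozen, non-consensus configuration one can always find, within a number of (C)-blocks bounded in $n$ only, a chain of admissible swaps terminating in a strict-increase branch. Until that combinatorial lemma is written down, the conjecture remains a conjecture.
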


The significance of the above result is due to the fact that condition (C) in Theorem $\ref{aaacor5}$ is considerably weaker than $(\ref{symcondef2})$. If Conjecture $\ref{conject4}$ holds, then by defining the DDA algorithm using the above randomized protocol, and defining the $\mbox{S$\mathcal{V}$CC}$ condition by the condition (C) stated in Theorem $\ref{aaacor5}$, the Venn diagram in Fig.$\ref{simpfig2}$ will then be completely accurate. We note there are alternative definitions of the $\mbox{S$\mathcal{V}$CC}$ condition and OH algorithm that will also lead to the same Venn diagram as Fig.$\ref{simpfig2}$. However, we know of no \emph{weaker} sufficient condition than that stated in Theorem $\ref{aaathm5}$ for convergence of the DDA algorithm $(\ref{dda1})-(\ref{ddainit1})$, and this sufficient condition is based entirely on the $\mbox{S$\mathcal{V}$CC}$ condition $(\ref{symcondef2})$. \end{rem}

\subsection{Comparison Algorithms: Gossip and ARIS}\label{sec:compalg} In this section we define the two comparison algorithms, Gossip and ARIS, in terms of the class of distributed algorithms $(\ref{update0}),(\ref{update1})$.
\subsubsection{Comparison Algorithm 1 (Gossip)}\label{sec:gossip} The Gossip algorithm proposed in \cite{SH06} implies a signal specification and knowledge set update defined respectively as $(\ref{gossipsig})$ and $(\ref{gossipknow})$ below. \begin{framed} $$ \mbox{Gossip Algorithm:}  $$ \begin{equation}\label{gossipsig} \mbox{Signal Specification:} \ \ \ \ \ \ \ \ \ \ \ \ \ \ \ \ \ \ \ \ \ \ \ S^{ij}(t^{ij}_0,t^{ij}_1 ) = \mathcal{K}_j(t^{ij}_0) \ \ \ \ \ \ \ \ \ \ \ \ \ \ \ \ \ \ \ \ \ \ \ \ \ \ \ \ \ \ \ \ \ \ \ \ \ \ \ \ \end{equation} \begin{equation}\label{gossipknow} \mbox{Knowledge Set Update:} \ \ \ \ \ \ \ \ \ \ \begin{array}{llll} & \ \ \ \mathcal{K}_i(t^{ij}_1(+)) = \{ \hat{s}_i(t^{ij}_1(+)) \} \\ &  \hat{s}_i(t^{ij}_1(+)) = \frac{1}{2} \big( \hat{s}_i(t^{ij}_1) + \hat{s}_j(t^{ij}_0) \big) 
\end{array} \ \ \ \ \ \ \ \ \ \ \ \ \ \ \ \ \ \ \ \ \ \ \ \ \ \ \ \ \ \ \end{equation} \begin{equation}\label{gossipinit1} \mbox{Initialization:} \ \ \ \ \ \ \ \ \ \ \ \ \ \ \ \ \ \ \ \ \ \mathcal{K}_i(0) = \{ \hat{s}_i(0) \} \ , \ \hat{s}_i(0) \ = s_i(0) \ , \ \forall \ i \in \mathcal{V} \ . \ \ \ \ \ \ \ \ \ \ \ \ \ \ \ \ \ \ \ \ \ \ \ \end{equation} \end{framed} We note that under the Gossip algorithm the only communication conditions proven to ensure average-consensus require instantaneous and bi-directional communication, thus implying \begin{equation}\label{gossipassume}\begin{array}{llll} & S^{ij}(t^{ij}_0, t^{ij}_1 ) \in C_{[0,t_1]} \Leftrightarrow S^{ji}(t^{ji}_0,t^{ji}_1 ) \in C_{[0,t_1]} \ , \\ & \ \ \ \ \ \ \ \ \ \ \ \ \ \ t^{ij}_0 = t^{ij}_1 = t^{ji}_0 = t^{ji}_1 \ , \end{array}\end{equation} for any signal $S^{ij}(t^{ij}_0,t^{ij}_1 ) \in C_{[0,t_1]}$. Under the assumption $(\ref{gossipassume})$, any $C_{[0 , t_1]}$ satisfying the I$\mathcal{V}$SC condition $(\ref{star1ga})$ will imply the Gossip algorithm obtains average-consensus at time $t= t_1(+)$, that is $(\ref{acdef})$ is satisfied at $t= t_1(+)$. We note that in some works (e.g. \cite{mo06}, \cite{bl05}) the Gossip algorithm is referred to as ``pairwise averaging''.

\subsubsection{Comparison Algorithm 2 (ARIS)}\label{sec:ris} The adaptation of the randomized information spreading algorithm proposed in \cite{SH08} that we will consider can be defined by the signal specification $(\ref{arissig})$ and knowledge set update $(\ref{arisknow})$ below. \begin{framed} $$ \mbox{ARIS Algorithm:}  $$ \begin{equation}\label{arissig}\mbox{Signal Specification:} \ \ \ \ \ \ \ \ \ \ \ \ \ \ \ \ \  S^{ij}(t^{ij}_0,t^{ij}_1 ) = \mathcal{K}_j(t^{ij}_0) \setminus \{s_i(0), i, n \} \ \ \ \ \ \ \ \ \ \ \ \ \ \ \ \ \ \ \ \ \ \ \ \ \ \ \ \ \ \ \ \ \ \ \ \ \end{equation}\begin{equation}\label{arisknow}\mbox{Knowledge Set Update:} \ \ \ \ \ \ \begin{array}{llll} & \mathcal{K}_i(t^{ij}_1(+)) = \{ k_i( t^{ij}_1(+) ) , \hat{s}_i(t^{ij}_1(+)) , \\ & \ \ \ \ \ \ \ \ \ \ \ \ \ W^i (t^{ij}_1(+)) , \hat{w}_i( t^{ij}_1(+) ), s_i(0), i, n \}  \end{array}  \ \ \ \ \ \ \ \ \ \ \ \ \ \ \ \ \ \ \ \ \ \ \ \ \ \ \end{equation} \begin{equation}\label{gossipinit}  \mbox{Initialization:} \ \ \ \ \ \ \ \ \ \ \ \ \ \ \ \ \ \ \ \begin{array}{llll} & \mathcal{K}_i(0) = \{ k_i( 0 ) , \hat{s}_i(0) , W^i (0) , \hat{w}_i( 0), s_i(0), i, n \} \\ & k_i( 0 ) = 0 \ , \ \hat{s}_i(0) = s_i(0)/n \ , \ \hat{w}_i( 0 ) = e_i \ , \\ & W^i (0) \in \mathbb{R}^{d \times r} \ , \ W^i_{\ell q} (0) \ \sim \ \mbox{exp $\{s_{i \ell}(0)\}$} \ , \\ & \ \ \ \ \ \ \forall \ \ell = 1,2, \ldots, d \ , \ q = 1,2,\ldots,r \ . \end{array} \ \ \ \ \ \ \ \ \ \ \ \ \ \ \ \ \ \ \ \ \ \ \ \end{equation} \end{framed}

We clarify that the $(\ell q)^{th}$ element in the matrix $W^i(0)$ is an independent realization of a random variable from an exponential distribution with rate parameter $s_{i \ell}(0)$, this is why the elements of each initial consensus vector $s_i(0)$ are required to be positive valued under any version of the RIS algorithm. We next define the ARIS update for each term in the knowledge set $\mathcal{K}_i(t^{ij}_1(+))$, we will omit the time indices for convenience. \begin{framed} $$ \mbox{ARIS Knowledge Set Update Procedure:} $$ $$ \begin{array}{llll} & \hat{w}^1 = \left\{ \begin{array}{l l} \mathbf{1}_n - \delta [ \hat{w}_i + \hat{w}_j ] , & \ \mbox{if $k_j = k_i,$} \\ \mathbf{1}_n - \delta [ e_i + \hat{w}_j ] , & \ \mbox{if $k_j > k_i$,} \\ 0 , & \ \mbox{if $k_j < k_i$,} \end{array} \right. \end{array} $$ \begin{equation}\label{gossipupdate1} k_i(+) = \left\{ \begin{array}{l l} k_j, \ \ \ \ \ \ \mbox{if $k_j > k_i , \hat{w}^1 \neq \mathbf{1}_n$} \\ k_j + 1 , \ \mbox{if $k_j > k_i , \hat{w}^1 = \mathbf{1}_n$} \\ k_i , \ \ \ \ \ \ \mbox{if $k_j = k_i , \hat{w}^1 \neq \mathbf{1}_n$} \\ k_i + 1 , \ \mbox{if $k_j = k_i , \hat{w}^1 = \mathbf{1}_n$} \\ k_i , \ \ \ \ \ \ \mbox{if $k_j < k_i,$} \end{array} \right. \end{equation} \begin{equation}\label{gossipupdate1new} \hat{w}_i(+) = \left\{ \begin{array}{l l} \hat{w}^1 , & \ \mbox{if $\hat{w}^1 \notin \{\mathbf{1}_n , 0 \}$} \\ e_i , & \ \mbox{if $\hat{w}^1 = \mathbf{1}_n$} \\ \hat{w}_i , & \ \mbox{if $\hat{w}^1 = 0$,} \end{array} \right. \end{equation} $$ \begin{array}{llll} & \hat{W}^1_{\ell q} = \left\{ \begin{array}{l l} \ \sim \ \mbox{exp $\{s_{i \ell}(0)\}$} , \ \mbox{if $k_i(+) > k_i$} \\ \min \{ W^j_{\ell q} , W^i_{\ell q} \} , \ \ \mbox{if $k_i(+) = k_i = k_j$} \\ W^i_{\ell q} , \ \ \ \ \ \ \ \ \ \ \ \ \ \ \ \ \mbox{if $k_j < k_i$} \end{array} \right. \\ & \ \ \ \ \ \ \ell = 1, 2, \ldots,d, \ \ q = 1,2,\ldots,r, \end{array} $$ \begin{equation}\label{risupdate1a}\begin{array}{llll} & W^i_{\ell q}(+) = \left\{ \begin{array}{l l} \ \sim \ \mbox{exp $\{s_{i \ell}(0)\}$} , \ \mbox{if $k_j > k_i , \hat{w}^1 = \mathbf{1}_n$} \\ \min \{ W^j_{\ell q} , \hat{W}^1_{\ell q} \} , \ \ \mbox{if $k_j = k_i , \hat{w}^1 \neq \mathbf{1}_n$} \\ \hat{W}^1_{\ell q} , \ \ \ \ \ \ \ \ \ \ \ \ \ \ \ \ \mbox{if $k_j \leq k_i$} \end{array} \right. \\ & \ \ \ \ \ \ \ell = 1, 2, \ldots,d, \ \ q = 1,2, \ldots ,r, \end{array}\end{equation} $$ \begin{array}{llll} & \bar{w}_{i \ell} = \frac{1}{r} \sum_{q = 1}^r \min \{ W^j_{\ell q} (t^{ij}_0) , W^i_{\ell q} (t^{ij}_1) \} \ , \ \ell = 1, 2, \ldots,d, \\ & \underbar{w}_{i \ell} = \frac{1}{r} \sum_{q = 1}^r \min \{ W^j_{\ell q} (t^{ij}_0) , \hat{W}^1_{\ell q} \} \ \ \ , \ \ell = 1, 2, \ldots,d. \end{array} $$ \begin{equation}\label{rislast} \begin{array}{llll} & \hat{s}_i(+) = \left\{ \begin{array}{l l} \hat{s}_i , \ \ \ \ \ \ \ \ \ \ \ \ \mbox{if $k_j < k_i ,$} \\ \hat{s}_i , \ \ \ \ \ \ \ \ \ \ \ \ \mbox{if $k_j = k_i , \hat{w}^1 \neq \mathbf{1}_n$} \\ \frac{k_i \hat{s}_i + \bar{w}_i ^{-1} /n }{k_i+1} , \ \mbox{if $k_j = k_i , \hat{w}^1 = \mathbf{1}_n$} \\ \hat{s}_j , \ \ \ \ \ \ \ \ \ \ \ \ \mbox{if $k_j > k_i , \hat{w}^1 \neq \mathbf{1}_n$} \\ \frac{k_j \hat{s}_j + \underbar{w}_i ^{-1} /n }{k_j+1} , \ \mbox{if $k_j > k_i , \hat{w}^1 = \mathbf{1}_n$} \end{array} \right. \end{array} \end{equation}  \end{framed}

The integer $r$ is an RIS algorithm parameter that affects the convergence error of the algorithm; as $r$ increases the algorithm is expected to converge closer to the true average-consensus vector $\bar{s}(0)$. Based on the strong law of large numbers we make the following conjecture.
\begin{conj}\label{conject1} For any $C_{[0 , t_1]}$ satisfying the S$\mathcal{V}$SC condition $(\ref{scdef})$, both the RIS algorithm proposed in \cite{SH08} and the ARIS algorithm $(\ref{arissig})-(\ref{rislast})$ imply $(\ref{acdef})$ at time $t = t_1(+)$ almost surely in the limit as $r$ approaches infinity. \end{conj}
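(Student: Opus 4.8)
\emph{Proof proposal.} The plan is to split the argument into a \emph{deterministic bookkeeping} part and a \emph{probabilistic limit} part. For the bookkeeping part I would first establish, by an induction over communication paths entirely parallel to the flooding argument behind Lemma \ref{lembm} and the one-hop argument of Lemma \ref{aaalemis2}, that the counter variables $(k_i,\hat w_i)$ governed by $(\ref{gossipupdate1})$--$(\ref{gossipupdate1new})$ correctly detect the completion of a ``round'': that $\hat w_i$ first equals $\mathbf 1_n$ (triggering $k_i\mapsto k_i+1$) exactly when node $i$ has transitively received, within the current round, a contribution originating from every node in $\mathcal V$, and that at that instant the accumulator $W^i_{\ell q}$ equals the coordinate-wise minimum $\min_{i'\in\mathcal V}\widetilde W^{i'}_{\ell q}$ of the fresh exponential draws $\widetilde W^{i'}_{\ell q}$ made by all $n$ nodes at the start of that round (for round $1$ these are the initializations $W^{i'}(0)$). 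Idempotency of $\min$ makes the accumulator robust to repeated mixing of the same variable, while the ``jump-ahead'' rule $k_j>k_i$ together with the reset branch of $(\ref{risupdate1a})$ keeps the rounds synchronized under arbitrary delays; verifying this carefully is the main obstacle, since this is the step where the update rules $(\ref{arissig})$--$(\ref{rislast})$ must be shown to behave exactly as a ``flooding with periodic re-seeding''.

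Given that lemma, the $\mbox{S$\mathcal{V}$SC}$ hypothesis $(\ref{scdef})$ supplies, for every ordered pair of nodes, a communication path contained in $C_{[0,t_1]}$, so every node completes at least one round by time $t_1$ and, by $(\ref{rislast})$, its estimate $\hat s_i(t_1(+))$ is a running (weighted) average of finitely many independent per-round estimates, each of the form $(\bar w)^{-1}/n$ with $\bar w_\ell=\frac1r\sum_{q=1}^r\min_{i'\in\mathcal V}\widetilde W^{i'}_{\ell q}$ and with the fresh draws of distinct rounds mutually independent. The probabilistic limit is then immediate: the minimum of independent exponentials with rates $\{s_{i'\ell}(0)\}_{i'\in\mathcal V}$ is exponential with rate $\sum_{i'}s_{i'\ell}(0)=n\bar s_\ell(0)$, and an exponential of rate $\mu$ has mean $1/\mu$, so the strong law of large numbers gives $\bar w_\ell\to (n\bar s_\ell(0))^{-1}$ almost surely as $r\to\infty$. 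By the continuous-mapping theorem (the limit is nonzero since $s_{i'\ell}(0)>0$, as required by RIS) each per-round estimate converges a.s.\ to $\bar s_\ell(0)$, and a finite weighted average of finitely many sequences converging a.s.\ to the same limit converges a.s.\ to that limit. Hence $\hat s_i(t_1(+))\to\bar s(0)$ a.s.\ for every $i\in\mathcal V$ and every coordinate, which is exactly $(\ref{acdef})$ at $t=t_1(+)$ in the limit $r\to\infty$.

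The statement for ARIS then follows directly, since ARIS is defined by $(\ref{arissig})$--$(\ref{rislast})$; for the original RIS of \cite{SH08} the distributional identities are identical, so one may either repeat the bookkeeping argument or invoke the in-probability correctness of \cite{SH08} and upgrade it to almost-sure convergence along $r\to\infty$ using the monotone (sample-mean) form of the SLLN together with a Borel--Cantelli argument. Besides the bookkeeping lemma, the only remaining delicate points are (i) confirming that the several per-round summands in $(\ref{rislast})$ are functions of disjoint, hence independent, batches of fresh exponentials, and (ii) checking that the weights $k_i$ in $(\ref{rislast})$ accumulate consistently with the number of completed rounds so that the running average is a genuine average; both become routine once the round structure from the bookkeeping lemma is available.
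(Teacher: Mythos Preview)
The paper does not prove this statement: it is explicitly labelled a \emph{Conjecture} and is introduced with the sentence ``Based on the strong law of large numbers we make the following conjecture.'' There is therefore no proof in the paper to compare your proposal against; what you have written is an attempt to settle an open question the authors deliberately left unresolved.

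On the merits of your sketch: the probabilistic half is standard and correct --- the minimum of independent exponentials with rates $\{s_{i'\ell}(0)\}_{i'}$ is exponential with rate $\sum_{i'}s_{i'\ell}(0)=n\bar s_\ell(0)$, and the SLLN plus continuous mapping gives $\bar w_\ell^{-1}/n\to\bar s_\ell(0)$ almost surely as $r\to\infty$. The deterministic ``bookkeeping lemma'' is, as you say, the whole difficulty, and here your sketch is too optimistic. Two concrete obstacles: first, the ARIS update for $W^i_{\ell q}(+)$ as written in $(\ref{risupdate1a})$ does not cover the case $k_j>k_i,\ \hat w^1\neq\mathbf 1_n$, so the invariant ``$W^i$ equals the min of the round-$k_i$ draws over the nodes recorded in $\hat w_i$'' cannot even be checked without first deciding how to complete the specification. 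Second, under the jump-ahead branch a node $i$ discards its own partial accumulation and adopts $\hat s_j$; its eventual estimate is then not a running average of per-round estimates \emph{computed at $i$} from disjoint batches, but may be an estimate computed elsewhere and merely propagated. This does not break the limiting statement (the inherited $\hat s_j$ converges for the same reason), but it does invalidate the clean ``finite weighted average of independent per-round summands'' picture you rely on in point~(i), and it means the weights $k_i$ need not count rounds completed \emph{at $i$}. A correct proof would have to replace your per-node accounting by a global one: show that under $\mbox{S$\mathcal V$SC}$ some node computes a round-$0$ estimate from the full minimum $\min_{i'}W^{i'}(0)$, and that every other node's final $\hat s_i(t_1(+))$ is either that quantity or a later running average that dominates it in accuracy.
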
 

As explained in Sec.$\ref{sec:rescost}$, the total resource cost of the RIS algorithm increases on the order $O(rd)$, and likewise, the total resource cost of the ARIS algorithm increases on the order $O(rd + n)$, thus it is not practical to assume $r$ can be made arbitrarily large as the Conjecture $\ref{conject1}$ requires. Note that due to the resource costs of the RIS and ARIS algorithms, the Conjecture $\ref{conject1}$ does not contradict Conjecture $\ref{conject0}$, even though both assume the same communication conditions.

An informal description of the ARIS algorithm is as follows: at $t =0$ each node generates $r$ random variables independently from an exponential distribution with rate parameter $s_{i\ell}(0)$ for each $\ell = 1, 2,\ldots,d.$ A local counter $k_i$ is set to zero and a local vector $\hat{w}_i$ is set to $e_i$. Upon reception of any signal, if the transmitted counter $k_j$ is less than the local counter $k_i$, then the signal is ignored, if $k_j = k_i$ then the receiving node records the minimum between the received $r$ random variables and local $r$ random variables for each $\ell = 1,2,\ldots,d$. The vector $\hat{w}_i$ maintains a record of the set of nodes that have a communication path with the node $i$ for any given counter value. Whenever $\hat{w}_i(t)$ is updated to $\mathbf{1}_n$, then the counter $k_i$ is increased by one, the consensus estimate $\hat{s}_i(t)$ is updated as a running average of the inverse of the currently recorded mean of the minimum random variables for each $\ell = 1,2, \ldots, d$, $\hat{w}_i(t)$ is reset to $e_i$, and a new set of $rd$ random variables are locally generated. If $k_j > k_i$ then the local counter $k_i$ is set to $k_j$, the consensus estimate $\hat{s}_i(t)$ is set to the received estimate $\hat{s}_j$, $\hat{w}_i(t)$ is reset to $e_i$ and updated by $\hat{w}_j$, a new set of $rd$ random variables are locally generated, and the node $i$ then records the minimum between the received $r$ random variables and the newly generated local $r$ random variables for each $\ell = 1,2,\ldots,d$. It is not difficult to see that the local counter of each node will approach infinity iff $C_{[0 , t_1]}$ satisfies the I$\mathcal{V}$SC condition $(\ref{star1ga})$, thus each element in the consensus estimate $\hat{s}_i(t)$ of each node $i \in \mathcal{V}$ will approach the inverse of the mean of the minimum of infinitely many random variables. By a well-known property of the minimum of a set of independently generated exponential random variables, together with the strong law of large numbers, we thus make the following conjecture. \begin{conj}\label{conject2} For any positive integer $r$, the ARIS algorithm $(\ref{arissig})-(\ref{rislast})$ implies $(\ref{acdef})$ at time $t = t_1(+)$ iff the communication sequence $C_{[0 , t_1]}$ satisfies the I$\mathcal{V}$SC condition $(\ref{star1ga})$. \end{conj}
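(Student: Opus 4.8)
The plan is to decompose the analysis of ARIS into a purely combinatorial study of the counter process $\{k_i(t)\}$ and a conditionally probabilistic study of the running average $\hat s_i(t)$, and then to glue the two with the strong law of large numbers. First I would nail down the counter dynamics. Using $(\ref{gossipupdate1})$ and the tracking vector $\hat w_i(t)$ from $(\ref{gossipupdate1new})$, induction over the ordered signals of $C_{[0,t_1]}$ should show that, at a fixed counter level $K$, the support of $\hat w_i(t)$ is exactly the set of nodes possessing a communication path to $i$ \emph{along level-$K$ signals}; that a node passes from $K$ to $K+1$ precisely when this support first becomes $\mathcal{V}$; and that the only other way to leave level $K$ is a jump to $k_j>K$ upon receiving from a node already above level $K$. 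The key deterministic lemma is then: $\min_{i\in\mathcal{V}}k_i(t)\to\infty$ as $t\to t_1$ iff $C_{[0,t_1]}$ satisfies $(\ref{star1ga})$. The ``if'' half parallels Theorem \ref{bmthm}: on each S$\mathcal{V}$SC sub-interval produced by $(\ref{star1ga})$ every node gains a level-$K$ path from every other node at the current minimum level $K$, forcing $\min_i k_i$ up by at least one, and there are infinitely many such sub-intervals inside $[0,t_1]$. The ``only if'' half parallels Theorem \ref{bmcor}: if $(\ref{star1ga})$ fails there is a time $t$ and a pair $(\hat i,\hat j)$ with no communication path from $\hat i$ to $\hat j$ in $(t,t_1)$, and concatenation of paths shows $\hat j$ can neither receive $\hat i$'s coordinate of $\hat w$ at any level it enters after $t$ nor legitimately jump past such a level, so $k_{\hat j}(t)$ stays bounded on $[0,t_1)$.

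Second, conditioned on the counter trajectory I would establish the invariant that whenever $k_i(t)=K$ the vector $\hat s_i(t)$ is the coordinatewise average of $K$ independent copies of an estimator $\hat Z\in\mathbb{R}^d$ whose $\ell$-th coordinate is a fixed function of $r$ i.i.d.\ $\exp(n\bar s_\ell(0))$ variables with $\mathbb{E}[\hat Z_\ell]=\bar s_\ell(0)$. Here one uses that the minimum of $n$ independent $\exp(s_{m\ell}(0))$ variables is $\exp(n\bar s_\ell(0))$, that the level-$K$ path characterisation turns the minima accumulated in $W^i$ over a full level into genuine minima over all $n$ nodes of independently generated exponentials, and the standard unbiasedness of the inverse-of-sum estimator of an exponential rate (as in \cite{SH08}); a point to check is that the normalisation in $(\ref{risupdate1a})$--$(\ref{rislast})$ indeed yields this unbiased estimator. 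The invariant is proved by induction on the events of $C_{[0,t_1]}$, verifying all five branches of $(\ref{rislast})$: a merge at equal level that completes $\hat w$ to $\mathcal{V}$ appends one fresh independent estimator block, adopting a higher-level neighbour's $\hat s_j$ copies an already valid average, and the remaining branches leave $\hat s_i$ untouched; the crucial bookkeeping is that the fresh block of $rd$ variables generated at each increment or jump keeps the $K$ blocks summarised in $\hat s_i$ disjoint, hence independent.

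Granting both pieces, the conjecture follows. If $(\ref{star1ga})$ holds then $k_i(t)\to\infty$ for every $i$ as $t\to t_1$, so by the invariant and the strong law $\hat s_i(t)\to\mathbb{E}[\hat Z]=\bar s(0)$, whence $(\ref{acdef})$ holds at $t=t_1(+)$; if $(\ref{star1ga})$ fails then $k_{\hat j}(t)$ is bounded on $[0,t_1)$, so $\hat s_{\hat j}(t_1(+))$ is an average of finitely many continuously distributed estimators and hence differs from the single point $\bar s(0)$, so $(\ref{acdef})$ fails. I expect the main obstacle to be the ``only if'' part of the deterministic counter lemma --- excluding unbounded counter growth at a node cut off from another node, in particular ruling out spurious jumps to ever-higher levels --- followed by the independence bookkeeping needed to invoke the strong law. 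Two caveats should be recorded in the final version: the conclusion holds \emph{almost surely} (as in Conjecture \ref{conject1}), and the estimator normalisation forces $r\ge 2$ (for $r=1$ the inverse-of-sum has infinite mean), so the statement should be restricted to $r\ge 2$ or the update in $(\ref{rislast})$ adjusted accordingly.
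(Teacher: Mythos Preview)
The paper presents this statement as a \emph{conjecture} and offers no proof; the only justification is the heuristic paragraph immediately preceding it, which sketches exactly your decomposition (the counters diverge iff I$\mathcal{V}$SC, then apply the strong law to the running average of inverse-minimum estimators). Your plan is a faithful formalisation of that heuristic, so there is no alternative argument in the paper to compare against.

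That said, the caveat you flag about the estimator normalisation is not a side issue but the crux: it actually falsifies the conjecture as written. With $X_1,\dots,X_r$ i.i.d.\ $\exp(\lambda)$, $\lambda=n\bar s_\ell(0)$, one has $\mathbb{E}\big[(\tfrac{1}{r}\sum_q X_q)^{-1}\big]=\tfrac{r\lambda}{r-1}$ for $r\ge 2$, so each block $\bar w_{i\ell}^{-1}/n$ in $(\ref{rislast})$ has mean $\tfrac{r}{r-1}\bar s_\ell(0)$, not $\bar s_\ell(0)$. Carried through, your SLLN step therefore gives $\hat s_i(t_1(+))=\tfrac{r}{r-1}\bar s(0)$ almost surely under I$\mathcal{V}$SC, which violates $(\ref{acdef})$ for every finite $r\ge 2$; for $r=1$ the block has infinite mean and the running average does not converge to $\bar s(0)$ either. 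In other words, your argument done carefully \emph{disproves} the conjecture for the ARIS update exactly as specified; the obvious repair is to replace $\bar w_i^{-1}/n$ by $\tfrac{r-1}{r}\,\bar w_i^{-1}/n$ in $(\ref{rislast})$, after which your programme should go through for $r\ge 2$ with the almost-sure qualifier you already note. Your other flagged obstacles --- the ``only if'' direction of the counter lemma and the independence bookkeeping across increments and jumps --- are the right ones, and you should also be aware that the case analysis in $(\ref{risupdate1a})$ does not explicitly cover $k_j>k_i$ with $\hat w^1\neq\mathbf 1_n$, so the intended update there must be pinned down before the induction can be made rigorous.
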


The above conjecture is significant because, besides the flooding technique and the DA algorithm $(\ref{kda1})-(\ref{dainit1})$, there is no other consensus protocol in the literature that guarantees average-consensus for every communication sequence $C_{[0 , t_1]}$ that satisfies the I$\mathcal{V}$SC condition $(\ref{star1ga})$.

\subsection{Resource Cost Derivations}\label{sec:rescostapp} In this section we explain how the entries of Table I in Sec.$\ref{sec:rescost}$ are obtained. We will assume an arbitrary vector $v \in \mathbb{R}^m$ requires $2m$ scalars to be defined, and similarly, an unordered set of scalars $S$ with cardinality $|S|$ requires $|S|$ scalars to be defined, that is \begin{equation}\label{rescosteq}\begin{array}{llll} & \mbox{Resource cost of $v \in \mathbb{R}^m = 2m$,} \\ & \ \ \ \mbox{Resource cost of $S = |S|$.} \end{array}\end{equation} The rationale for $(\ref{rescosteq})$ is that each element in $v$ requires one scalar to define the location of the element, and one scalar to define the value of the element itself. The location of each scalar in $S$ is irrelevant because $S$ is an unordered set, thus $S$ can be defined using only $|S|$ scalars. An alternative to $(\ref{rescosteq})$ is to assume that an arbitrary vector $v \in \mathbb{R}^m$ requires $m$ scalars to be defined. Although this alternative will imply different entries for the \emph{exact} values in Table I, the \emph{order} of the storage and communication costs under each algorithm would remain the same. We adhere to $(\ref{rescosteq})$ for our resource cost computations due to its relative precision.

Next observe that under the BM, OH, and DDA algorithms each normal consensus estimate $\hat{v}_i(t)$ will contain only elements belonging to the set $\{0, \frac{1}{n}\}$, similarly each vector $\hat{w}_i(t)$ under the ARIS algorithm contains only elements belonging to the set $\{0, 1\}$. Under the BM, OH, and DDA algorithms we can thus define each $\hat{v}_i(t)$ based only on the set $\hat{V}_i(t)$, \begin{equation}\label{deflocset} \hat{V}_i(t) = \left \{ \begin{array}{ll} \{ - \ell \ : \ \hat{v}_{i \ell}(t) = 0 \} \ \ \mbox{if $n \hat{v}_i(t) ' \mathbf{1}_n \geq 1/2 $,} \\ \{ \ \ \ell \ : \ \hat{v}_{i \ell}(t) = \frac{1}{n} \} \ \ \mbox{if $n \hat{v}_i(t) ' \mathbf{1}_n < 1/2$.} \end{array} \right. \end{equation} A set $\hat{W}_i(t)$ can be defined analogous to $(\ref{deflocset})$ to specify $\hat{w}_i(t)$. Note that both $\hat{V}_i(t)$ and $\hat{W}_i(t)$ are unordered sets of scalars, and thus, assuming average-consensus has not been obtained, we have from $(\ref{deflocset})$, $(\ref{initsol})$, and $(\ref{gossipinit})$, \begin{equation}\label{cost1} 1 \leq \{ |\hat{V}_i(t)| , |\hat{W}_i(t)| \} \leq \lfloor n/2 \rfloor \ . \end{equation} Under the condition $(\ref{initsol})$, applying $(\ref{rescosteq})$ and $(\ref{cost1})$ to the algorithm update equations $(\ref{bm1})$, $(\ref{bm2})$, $(\ref{aaasdasig5})$, $(\ref{aaais2})$, $(\ref{dda1})$, $(\ref{dda2})$, then yield the respective upper and lower resource costs for the BM, OH, and DDA algorithms stated in Table I in Sec.$\ref{sec:rescost}$. Likewise, applying $(\ref{rescosteq})$ and $(\ref{cost1})$ to $(\ref{arissig})$-$(\ref{risupdate1a})$ yield the respective upper and lower resource costs for the ARIS algorithm. As detailed in $(\ref{gossipsig})-(\ref{gossipinit1})$, the Gossip algorithm proposed in \cite{SH06} requires only the local consensus estimate $\hat{s}_i(t)$ to be communicated and stored at each node, thus the communication and storage cost are both fixed at $2d$ under this algorithm.

We next observe that under the DA algorithm $(\ref{kda1})-(\ref{dainit1})$, if a normal consensus estimate $\hat{v}_i(t)$ contains any elements equal to zero then these elements may be omitted from the signal and knowledge set. Given the condition $(\ref{initsol})$ it follows from $(\ref{rescosteq})$ that the minimum number of scalars required to define $\hat{v}_i(t)$ under the DA algorithm is $2$, while the maximum number of scalars required to define $\hat{v}_i(t)$ is $2n$. Together with these upper and lower limits on the resource cost of $\hat{v}_i(t)$, applying $(\ref{rescosteq})$ to $(\ref{kda1})$ and $(\ref{kda2})$ then yields the upper and lower resource costs for the DA algorithm stated in Table I in Sec.$\ref{sec:rescost}$.

\subsection{Example $\mbox{S$\mathcal{V}$SC}$ Sequence}\label{sec:examp}

The sequence defined in $(\ref{double})$ below is an $\mbox{S$\mathcal{V}$SC}$ sequence that implies the DA, DDA, and BM algorithm all obtain average-consensus at the same instant. \begin{equation}\label{double} \begin{array}{llll} & C_{[0,4n-5]} = \{ S_1, S_2, \ldots, S_{2(n-1)} \} \ , \\ & S_i = S^{i+1,i}( 2(i-1),2i -1) \ , \ i = 1,2, \ldots, n-1, \\ & S_n = S^{1,n}(2(n-1),2n-1) \ , \\ & S_i = S^{i-n+1,i-n}(2(i-1),2i-1) \ , \ i = n+1 ,n+2, \ldots, 2n-2 \ . \end{array} \end{equation} It is a ``unit-delay'' sequence because each signal sent at time $t$ is received at time $t+1$, and if a signal is received at time $t+1$ then the next signal is sent at time $t+2$. Together with Theorem $\ref{bmcor}$, the example $(\ref{double})$ implies that the DA and DDA algorithms possess the weakest possible necessary conditions for average-consensus that any algorithm can have. In contrast, the OH algorithm does not achieve average-consensus under $(\ref{double})$.

\end{document}